\newcommand\sdag{{\stretchrel*{\dag}{X}}}
\newcounter{note}[section]
\newcommand{\hubert}[1]{\refstepcounter{note}$\ll${\sf Hubert's
        Comment~\thenote:} {\sf \textcolor{blue}{#1}}$\gg$\marginpar{\tiny\bf HC~\thenote}}
\newcommand{\quan}[1]{\refstepcounter{note}$\ll${\sf Quan's
		Comment~\thenote:} {\sf \textcolor{red}{#1}}$\gg$\marginpar{\tiny\bf Quan~\thenote}}
\newcommand{\yuetodo}[1]{{\large\color{green}[Yue todo: #1]}}
\definecolor{yue}{rgb}{0.7, 0, 0}
\renewcommand{\yuetodo}[1]{}
\newcommand{\msf}[1]{\ensuremath{{\mathsf {#1}}}}
\newcommand{\mf}[1]{\ensuremath{\mathfrak {#1}}}
\newcommand{\mcal}[1]{\ensuremath{\mathcal {#1}}}
\definecolor{darkgreen}{rgb}{0,0.5,0}
\definecolor{lightblue}{RGB}{0,176,240}
\definecolor{darkblue}{RGB}{0,112,192}
\definecolor{lightpurple}{RGB}{124, 66, 168}
\definecolor{grey}{RGB}{139, 137, 137}
\definecolor{maroon}{RGB}{178, 34, 34}
\definecolor{green}{RGB}{34, 139, 34}
\definecolor{types}{RGB}{72, 61, 139}
\definecolor{gold}{rgb}{0.8, 0.33, 0.0}
\definecolor{darkgray}{gray}{0.3}
\definecolor{darkred}{rgb}{0.5, 0, 0}
\definecolor{darkgreen}{rgb}{0, 0.5, 0}
\definecolor{darkblue}{rgb}{0,0,0.5}
\newcommand\markx[2]{}
\newcommand{\R}{\mathbb{R}}
\newcommand{\B}{\mathbb{B}}
\newcommand{\side}{\iota}
\newcommand{\Ib}{\mcal{I}^{(\side)}}
\newcommand{\Iob}{\mcal{I}^{(\overline{\side})}}
\newcommand{\ob}{\overline{\side}}
\newcommand{\Izero}{\mcal{I}^{(0)}}
\newcommand{\Ione}{\mcal{I}^{(1)}}
\newcommand{\LP}{\mathsf{LP}}
\newcommand{\DP}{\mathsf{DP}}
\newcommand{\ignore}[1]{}
\newcounter{task}
\newtheorem{theorem}{Theorem}[section]
\newtheorem{corollary}[theorem]{Corollary}
\newtheorem{fact}[theorem]{Fact}
\newtheorem{lemma}[theorem]{Lemma}
\theoremstyle{definition}
\newtheorem{definition}[theorem]{Definition}
\newtheorem{remark}[theorem]{Remark}
\newtheorem{goal}[theorem]{Goal}
\newenvironment{proofof}[1]{\noindent \emph{Proof of }#1.}{\hfill \qed}
\newcommand{\Pow}{\mathsf{Pow}}
\newcounter{cnt:challenge}
\newcommand*\samethanks[1][\value{footnote}]{\footnotemark[#1]}
\newcommand{\renyi}{R\'{e}nyi\xspace}
\def\moverlay{\mathpalette\mov@rlay}
\def\mov@rlay#1#2{\leavevmode\vtop{%
   \baselineskip\z@skip \lineskiplimit-\maxdimen
   \ialign{\hfil$\m@th#1##$\hfil\cr#2\crcr}}}
\newcommand{\charfusion}[3][\mathord]{
    #1{\ifx#1\mathop\vphantom{#2}\fi
        \mathpalette\mov@rlay{#2\cr#3}
      }
    \ifx#1\mathop\expandafter\displaylimits\fi}
\newcommand{\cupdot}{\charfusion[\mathbin]{\cup}{\cdot}}
\begin{document}
\title{Symmetric Splendor: Unraveling Universally Closest Refinements and Fisher Market Equilibrium through Density-Friendly Decomposition}

\author{}
\author{T-H. Hubert Chan\thanks{Department of Computer Science, the University of Hong Kong.} \and Quan Xue\samethanks}

\date{}

\begin{titlepage}

\maketitle

\begin{abstract}

We investigate the closest distribution refinements problem, which involves a vertex-weighted bipartite graph as input, where the vertex weights on each side sum to 1 and represent a probability distribution. A refinement of one side's distribution is an edge distribution that corresponds to distributing the weight of each vertex from that side to its incident edges. The objective is to identify a pair of distribution refinements for both sides of the bipartite graph such that the two edge distributions are as close as possible with respect to a specific divergence notion. This problem is a generalization of transportation, in which the special case occurs when the two closest distributions are identical. The problem has recently emerged in the context of composing differentially oblivious mechanisms.

Our main result demonstrates that a universal refinement pair exists, which is simultaneously closest under all divergence notions that satisfy the data processing inequality. Since differential obliviousness can be examined using various divergence notions, such a universally closest refinement pair offers a powerful tool in relation to such applications.

We discover that this pair can be achieved via locally verifiable optimality conditions. Specifically, we observe that it is equivalent to the following problems, which have been traditionally studied in distinct research communities: (1) hypergraph density decomposition, and (2) symmetric Fisher Market equilibrium.

We adopt a symmetric perspective of hypergraph density decomposition, in which hyperedges and nodes play equivalent roles. This symmetric decomposition serves as a tool for deriving precise characterizations of optimal solutions for other problems and enables the application of algorithms from one problem to another. This connection allows existing algorithms for computing or approximating the Fisher market equilibrium to be adapted for all the aforementioned problems. For example, this approach allows the well-known iterative proportional response process to provide approximations for the corresponding problems with multiplicative error in distributed settings, whereas previously, only absolute error had been achieved in these contexts. Our study contributes to the understanding of various problems within a unified framework, which may serve as a foundation for connecting other problems in the future.

%
%
%
%

\end{abstract}

\thispagestyle{empty}
\end{titlepage}

\section{Introduction}
\label{sec:intro}

In this work, we explore several existing and new problems 
under the same unifying framework.  By unraveling 
previously unknown connections between these seemingly unrelated problems,
we offer deeper insights into them.
Moreover, existing solutions for one problem can readily
offer improvements to other problems.  Input instances
for these problems all have the following form.

\noindent \emph{Input Instance.} An instance
is a vertex-weighted bipartite graph
with vertex bipartition $\mcal{I} = \Izero \cupdot \Ione$,
edge set $\mcal{F}$ and positive\footnote{Vertices with
zero weights are removed and identified with the empty set~$\emptyset$.} vertex weights $w: \mcal{I} \to \R_{> 0}$.
For some problems, it is convenient to normalize the vertex weights
into distributions, i.e., the vertex weights on each side sum to 1;
in this case, we call this a \emph{distribution} instance.
We use $\B := \{0,1\}$ to index the two sides.
Given a side~$\side \in \B$, we use $\ob := \side \oplus 1$ to denote the other side.

\noindent \emph{Solution Concept.}  We shall see that in each
problem, \emph{refinement} is the underlying solution concept.
Given a side $\side \in \B$, a refinement\footnote{Since vertex weights on one
side can be interpreted as a probability distribution,
allocating the weight of each vertex to its incident edges
can be viewed as a refinement of the original distribution.} $\alpha^{(\side)}$
on vertex weights on side~$\side$ indicates how each vertex~$i \in
\Ib$ distributes its weight $w(i)$ among its incident edges in $\mcal{F}$.
In other words, $\alpha^{(\side)}$ is a vector of edge weights in $\mcal{F}$
such that for each~$i \in \Ib$, the sum of weights of edges incident on~$i$
equals $w(i)$.  While some problems may require refinements of vertex weights
from one side only, it will still be insightful to consider
a \emph{refinement pair} $\vec{\alpha} = (\alpha^{(0)}, \alpha^{(1)})$
of vertex weights on both sides.

The main problem we focus on is the closest distribution refinement problem,
which we show is intricately related to other problems.
In probability theory, a \emph{divergence} notion quantifies how close two distributions are to each other,
where a smaller quantity indicates that they are closer.  Common examples
of divergences include total variation distance and KL-divergence.
Observe that a divergence notion~$\msf{D}$ needs not be symmetric, i.e.
for two distributions $P$ and $Q$ on the same sample space, it is possible that $\msf{D}(P \| Q) \neq \msf{D}(Q \| P)$.

\begin{definition}[Closest Distribution Refinement Problem]
\label{defn:refine_problem}
Given a distribution instance~$(\Izero, \Ione; \mcal{F}; w)$
and a divergence notion $\msf{D}$,
the goal is to find a distribution refinement pair $\vec{\alpha} = (\alpha^{(0)}, \alpha^{(1)})$ such that
the distributions $\alpha^{(0)}$ and 
$\alpha^{(1)}$ on the edges $\mcal{F}$ are as close to each other as possible
under~$\msf{D}$, i.e., 
$\msf{D}(\alpha^{(0)} \| \alpha^{(1)})$ is minimized.
\end{definition}

\noindent \emph{Universally Closest Refinement Pair.}
It is not clear \emph{a priori} if a refinement pair achieving the minimum
for one divergence notion would also achieve the minimum for another divergence notion.
For a non-symmetric divergence $\msf{D}$,
a pair $\vec{\alpha} = (\alpha^{(0)}, \alpha^{(1)})$ minimizing $\msf{D}(\alpha^{(0)} \| \alpha^{(1)})$
may not necessarily achieve the minimum $\msf{D}(\alpha^{(1)} \| \alpha^{(0)})$.
The special case when $\vec{\alpha}$ consists of a pair of identical refinements is also known as $w$-transportation
or perfect fractional 
matching~\cite{brualdi2006combinatorial,bhcombinatorial,schrijver2003combinatorial,behrend2013fractional},
in which case $\vec{\alpha}$ is obviously the minimizer for all divergences.

In general, it would seem ambitious to request a \emph{universally closest} refinement
pair that simultaneously minimizes all ``reasonable'' divergence notions.
In the literature, all commonly used divergence notions satisfy the \emph{data processing inequality},
which intuitively says that the divergence of two distributions cannot increase after post-processing.
Formally, for any deterministic function $\varphi: \Omega \to \widehat{\Omega}$, the induced
distribution $\varphi(P)$ and $\varphi(Q)$ on $\widehat{\Omega}$
satisfy the
 inequality: $\msf{D}(\varphi(P) \| \varphi(Q)) \leq \msf{D}(P \| Q)$.
Surprisingly, we show that such a universally closest pair exists.

\begin{theorem}[Universally Closest Refinement Pair]
\label{th:main_universal}
Given a distribution instance~$(\Izero, \Ione; \mcal{F}; w)$,
there exists
a distribution refinement pair $\vec{\alpha} = (\alpha^{(0)}, \alpha^{(1)})$  
that minimizes $\msf{D}(\alpha^{(0)} \| \alpha^{(1)})$ simultaneously
for all divergence notions $\msf{D}$ that satisfy the data processing inequality.

Moreover, this pair can be returned using
the near-linear time algorithm of Chen et al.~\cite{DBLP:conf/focs/ChenKLPGS22}
for the minimum quadratic-cost flow problem.
\end{theorem}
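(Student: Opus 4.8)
The plan is to exhibit one explicit refinement pair $\vec{\alpha}^{\ast}=(\alpha^{\ast(0)},\alpha^{\ast(1)})$ together with a combinatorial optimality certificate, and then argue that the certificate is strong enough that $\vec{\alpha}^{\ast}$ is at least as close as every competitor under \emph{every} divergence obeying the data processing inequality. First note that the set $\mcal{P}$ of refinement pairs is nonempty, convex and compact: it is the product $\mcal{P}^{(0)}\times\mcal{P}^{(1)}$ of two transportation polytopes, one per side, each nonempty because every vertex of a proper instance has an incident edge. Hence a minimizer over $\mcal{P}$ of any continuous convex objective exists; take $\vec{\alpha}^{\ast}$ to minimize the $\chi^{2}$-divergence $\chi^{2}(\alpha^{(0)}\,\|\,\alpha^{(1)})=\sum_{e\in\mcal{F}}\alpha^{(0)}(e)^{2}/\alpha^{(1)}(e)$, a jointly convex perspective function. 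Any minimizer has finite value (choose $\alpha^{(1)}$ fully supported), so $\operatorname{supp}\alpha^{\ast(0)}\subseteq\operatorname{supp}\alpha^{\ast(1)}$, and the structural analysis below will show that the likelihood-ratio distribution $\nu_{\vec{\alpha}^{\ast}}$ it induces is canonical --- independent of which minimizer, and of which strictly convex objective, one uses to single it out.

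\textbf{Step 1 (the decomposition and its locally verifiable optimality conditions).}
Writing out the KKT conditions, I expect $\operatorname{supp}\vec{\alpha}^{\ast}$ to split into blocks $B_{1},\dots,B_{k}\subseteq\mcal{F}$ carrying strictly decreasing densities $\rho_{1}>\dots>\rho_{k}>0$, with $\alpha^{\ast(0)}(e)/\alpha^{\ast(1)}(e)=\rho_{m}$ constant on each $B_{m}$, edges where $\alpha^{\ast(0)}$ vanishes absorbed into low-density blocks and edges where $\alpha^{\ast(1)}$ vanishes into high-density blocks. Equivalently --- and this is the locally verifiable form, matching the symmetric density-friendly decomposition and the symmetric Fisher-market equilibrium advertised in the introduction --- each prefix $S_{t}:=B_{1}\cup\dots\cup B_{t}$ is \emph{extremal}: for $c\in(\rho_{t+1},\rho_{t})$ it maximizes $T\mapsto\alpha^{\ast(0)}(T)-c\,\alpha^{\ast(1)}(T)$ over $T\subseteq\mcal{F}$, and --- crucially --- no refinement pair $\vec{\beta}$ can make the quantity $H_{\vec{\beta}}(c):=\max_{T\subseteq\mcal{F}}\big(\beta^{(0)}(T)-c\,\beta^{(1)}(T)\big)=\sum_{e\in\mcal{F}}\big(\beta^{(0)}(e)-c\,\beta^{(1)}(e)\big)_{+}$ smaller than $H_{\vec{\alpha}^{\ast}}(c)$. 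This last assertion is the min-cut optimality of the densest prefix, and I would prove it by a max-flow/min-cut (LP-duality) argument parametrized by the threshold $c$; it is the technical heart, and is where the ``locally verifiable conditions'' genuinely get used.

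\textbf{Step 2 (from the certificate to universality).}
For a refinement pair $\vec{\beta}$ let $\nu_{\vec{\beta}}$ be the law of the likelihood ratio $\alpha^{(0)}(e)/\alpha^{(1)}(e)$ with $e\sim\beta^{(1)}$; for $\vec{\alpha}^{\ast}$ this is $\sum_{m}\alpha^{\ast(1)}(B_{m})\,\delta_{\rho_{m}}$, a probability measure on $\R_{\ge 0}$ of mean $1$. Since $H_{\vec{\beta}}(c)=\int(t-c)_{+}\,d\nu_{\vec{\beta}}(t)$ and all the $\nu_{\vec{\beta}}$ have the same mean, the inequalities $H_{\vec{\alpha}^{\ast}}(c)\le H_{\vec{\beta}}(c)$ for all $c\ge 0$ from Step 1 say exactly that $\nu_{\vec{\alpha}^{\ast}}$ is dominated by $\nu_{\vec{\beta}}$ in the convex order. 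Then: (i) for an $f$-divergence, $\msf{D}_{f}(\beta^{(0)}\,\|\,\beta^{(1)})=\int f\,d\nu_{\vec{\beta}}$ with $f$ convex, and for a R\'enyi divergence $\msf{D}$ is a monotone function of $\int t^{\alpha}\,d\nu_{\vec{\beta}}(t)$, so convex dominance immediately gives $\msf{D}(\alpha^{\ast(0)}\,\|\,\alpha^{\ast(1)})\le\msf{D}(\beta^{(0)}\,\|\,\beta^{(1)})$; (ii) for a general divergence satisfying the data processing inequality, Blackwell's comparison-of-experiments theorem (two-hypothesis case) turns convex dominance of likelihood-ratio distributions into the statement that the experiment $(\alpha^{\ast(0)},\alpha^{\ast(1)})$ is a randomized post-processing of $(\beta^{(0)},\beta^{(1)})$, and the data processing inequality --- which extends from deterministic to randomized post-processing for all standard divergences, by joint convexity --- again yields $\msf{D}(\alpha^{\ast(0)}\,\|\,\alpha^{\ast(1)})\le\msf{D}(\beta^{(0)}\,\|\,\beta^{(1)})$. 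As $\vec{\beta}$ was arbitrary, $\vec{\alpha}^{\ast}$ is universally closest. (Case (ii) subsumes (i); (i) is worth recording because it shows that for the classes used in differential obliviousness the whole content is just the convex order.)

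\textbf{Step 3 (computation) and the main obstacle.}
The blocks $B_{m}$, densities $\rho_{m}$ and prefixes $S_{t}$ of Step 1 are the optimum of the convex program underlying the (symmetric) density-friendly decomposition, which can be cast as a minimum quadratic-cost flow instance of size linear in $|\mcal{I}|+|\mcal{F}|$; running the near-linear-time algorithm of Chen et al.~\cite{DBLP:conf/focs/ChenKLPGS22} recovers it. Given the decomposition, $\vec{\alpha}^{\ast}$ is reconstructed by choosing, inside each $B_{m}$, $\alpha^{\ast(0)}$ and $\alpha^{\ast(1)}$ proportional with ratio $\rho_{m}$; such a choice exists because the block-local transportation constraints are balanced (again by extremality) and is found by one more flow computation, giving the claimed running time. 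I expect the real obstacle to be the parametric min-cut claim inside Step 1 --- that the \emph{single} pair $\vec{\alpha}^{\ast}$ minimizes $H_{\vec{\beta}}(c)$ over all refinement pairs $\vec{\beta}$ \emph{simultaneously for every threshold $c$} --- which is precisely where the nested extremal sets $S_{1}\subsetneq\dots\subsetneq S_{k}$ and their min-cut certificates have to be constructed and shown to work; the passage from that claim to all divergences, and the reduction of the computation to quadratic-cost flow, are comparatively routine.
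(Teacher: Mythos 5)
Your proposal follows essentially the same route as the paper: your canonical pair is the locally maximin / proportional-response pair of the symmetric density decomposition (your $\chi^{2}$ program, after optimizing the inner variable by Cauchy--Schwarz, collapses to the paper's one-sided quadratic program), your parametric min-cut claim that $H_{\vec{\beta}}(c)$ is minimized simultaneously for every threshold $c$ is exactly Lemma~\ref{lem:matching-opt} combined with Lemma~\ref{lemma:hs_matching} (proved, as you anticipate, by an explicit primal--dual certificate built from the nested decomposition), and your convex-order/Blackwell lift to all data-processing divergences is the same comparison-of-experiments step that the paper phrases via power functions (Theorem~\ref{th:local_power} and Fact~\ref{fact:div_pow}). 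The one place where you are slightly more exposed than the paper is the passage from Blackwell dominance back to an arbitrary divergence satisfying only the \emph{deterministic} data processing inequality of Definition~\ref{defn:dpi} --- ``by joint convexity'' is not available for an abstract $\msf{D}$ --- which the paper handles by citing the fact that any such divergence is a monotone functional of the power function.
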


Since Definition~\ref{defn:refine_problem} gives
a clean mathematical problem, we first describe the technical contributions
and approaches
of this work, and we describe its motivation and application
in Section~\ref{sec:universal_motivation}.

While the technical proofs are given in Sections~\ref{sec:univeral_matching}
and~\ref{sec:dist_refine},
we give some motivation for why the problem is studied in the first place and sketch the key ideas for our proofs.

\subsection{Our Technical Contributions and Approaches}

In addition to the problem in Definition~~\ref{defn:refine_problem},
we shall see that the following solution properties
are also the key to solving other problems as well.

\noindent \emph{Desirable Solution Properties.}  Two desirable
solution properties are \emph{local maximin} and \emph{proportional
response}.

\begin{compactitem}

\item \textbf{Local Maximin.} 
A refinement~$\alpha^{(\side)}$ on vertex weights
on side~$\side$ can also be interpreted as how each vertex in~$\Ib$
allocates its weight to its neighbors on the other side~$\Iob$ via the edges in $\mcal{F}$.
Specifically, for $i \in \Ib$ and $j \in \Iob$,
$\alpha^{(\side)}(ij)$ is the \emph{payload} received by $j$ from $i$, where
we rely on the superscript in $\alpha^{(\side)}$
to indicate that the payload is sent from side~$\side$ to side~$\ob$ along an edge in $\mcal{F}$.
Consequently, $\alpha^{(\side)}$ induces
a payload vector $p \in \R^{\Iob}$, where $p(j) = \sum_i \alpha^{(\side)}(ij)$ is the total payload
received by~$j$ in the allocation~$\alpha^{(\side)}$.
Recalling that vertex~$j$ also has a weight, the payload density
(resulting from $\alpha^{(\side)}$) of $j$ is $\rho(j) := \frac{p(j)}{w(j)}$.

The refinement~$\alpha^{(\side)}$ is \emph{locally maximin}
if for each $i \in \Ib$, vertex~$i$ distributes a non-zero payload $\alpha^{(\side)}(ij) > 0$
to vertex~$j$ only if $j$ is among the neighbors of~$i$ with minimum payload densities (resulting from $\alpha^{(\side)}$).

The intuition of the term ``maximin'' is that vertex~$i$ tries to maximize the minimum payload density among
its neighbors by giving its weight to only neighbors that achieve the minimum payload density.

\item \textbf{Proportional Response.} Given a refinement~$\alpha^{(\side)}$ on vertex weights
on side~$\side$, its proportional response~$\alpha^{(\ob)}$
is a refinement on vertex weights from the other side~$\ob$.
Specifically, vertex $j \in \Iob$ distributes its weight $w(j)$ proportionately
according to its received payloads: $\alpha^{(\ob)}(ij) = \frac{\alpha^{(\side)}(ij)}{p(j)} \cdot w(j)$.

\end{compactitem}

Several problems in the literature studied
in different contexts can be expressed with the same goal.

\begin{goal}[High-Level Goal]
\label{goal}
Given an instance $(\Izero, \Ione; \mcal{F}; w)$,
find a refinement pair $\vec{\alpha} = (\alpha^{(0)}, \alpha^{(1)})$
such that both refinements are locally maximin and proportional response to each other.
\end{goal}

This abstract problem have been studied extensively in the literature under
different contexts,
and various algorithms to exactly achieve or \emph{approximate} (whose
meaning we will explain) Goal~\ref{goal} have been designed.  One contribution
of this work is to unify all these problems under our abstract framework.
Consequently, we readily offer new approximation results for some problems that are unknown before this connection is discovered.

We next describe how our abstract framework is related to
each considered problem and what new insights or results are achieved,
which also provide the tools to solve or approximate 
the problem in Definition~\ref{defn:refine_problem}.
 

\subsubsection{Hypergraph Density Decomposition}
\label{sec:hypergraph}

An instance $(\Izero, \Ione; \mcal{F}; w)$ is interpreted as a \emph{hypergraph}~$H$.
Each element in $i \in \Izero$ is a hyperedge and each element in $j \in \Ione$ is a node,
where $\{i, j\} \in \mcal{F}$ means that node~$j$ is contained
in hyperedge~$i$.  Observe that in this hypergraph, both the hyperedges and the nodes are weighted.

\noindent \textbf{Densest Subset Problem.}
Given a subset $S \subseteq \Ione$ of nodes in a (weighted) hypergraph,
its density is $\rho(S) := \frac{w(H[S])}{w(S)}$,
where $H[S]$ is the collection of hyperedges containing only nodes inside $S$.
The densest subset problem requests a subset $S$ with the maximum density.

\noindent \textbf{Density Decomposition.} The decomposition
can be described by an iterative process that generates
a density vector $\rho_* \in \R^{\Ione}$ for the nodes in the hypergraph by repeated peeling off densest subsets.
First, the (unique) maximal densest subset~$S_1$ is obtained,
and each $j \in S_1$ will be assigned the density $\rho_*(j) = \frac{w(H[S_1])}{w(S_1)}$
of~$S_1$.  Then, the hyperedges $H[S_1]$ and the subset $S_1$ are removed
from the instance; note that any hyperedge containing a node outside $S_1$ will
remain. The process is applied to the remaining instance
until eventually every node in $\Ione$ is removed and assigned its density.
The collection of subsets found in the iterative process
forms the density decomposition $\Ione = \cupdot_{k \geq 1} S_k$.

As mentioned in~\cite{harb2022faster},
the earliest variant of this decomposition was introduced by
Fujishige~\cite{fujishige1980lexicographically} in the context of polymatroids.
We describe the related work in more detail in Section~\ref{sec:related}.

\noindent \textbf{Connection with the Local Maximin Condition.}
Charikar~\cite{10.1007/3-540-44436-X_10} gave an LP (that can be easily
generalized to weighted hypergraphs) which solves the
densest subset problem exactly,
and a refinement~$\alpha^{(0)}$ on the weights of hyperedges $\Izero$
corresponds exactly to a feasible solution of the dual LP.
Replacing the linear function in this dual LP with a quadratic function
while keeping the same feasible variable~$\alpha^{(0)}$,
Danisch et al.~\cite{DBLP:conf/www/DanischCS17}
considered a quadratic program (that has also 
appeared in~\cite{fujishige1980lexicographically}).
They showed that a refinement 
corresponds to an optimal solution to the quadratic program
\emph{iff}
it is locally maximin\footnote{Instead of using the term ``maximin'',
the same concept is called ``stability'' in~\cite{DBLP:conf/www/DanischCS17}.}.
Moreover,
a locally maximin refinement also induces the aforementioned density vector 
$\rho_*$.

As observed in~\cite{harb2022faster}, the quadratic program can be expressed
as a minimum quadratic-cost flow problem, which can be solved optimally
using the nearly-linear time algorithm of Chen et al.~\cite{DBLP:conf/focs/ChenKLPGS22}.
As we shall show that our considered problems
are all equivalent to Goal~\ref{goal},
this implies the algorithmic result in Theorem~\ref{th:main_universal}.

\noindent \textbf{Symmetry of Density Decomposition.}  
We observe that the density decomposition
is symmetric between the hyperedges and the nodes.  Recall 
that in iteration~$k$ of the density decomposition,
a subset~$S_k$ is produced, together with its corresponding
collection $\widehat{S}_k$ of hyperedges that determines
the density $\frac{w(\widehat{S}_k)}{w(S_k)}$.
Therefore, in addition to a partition of nodes,
there is a corresponding decomposition of the hyperedges $\Izero = \cupdot_{k \geq 1} \widehat{S}_k$.

Given an abstract instance $(\Izero, \Ione; \mcal{F}; w)$,
one could have interpreted it differently with $\Izero$ as nodes and $\Ione$ as hyperedges.
When the density decomposition is constructed with the roles
reversed, the exact sequence $\{(\widehat{S}_k, S_k)\}_k$ of pairs is obtained,
but in reversed order.  Intuitively, $(\widehat{S}_1, S_1)$ is the densest from the perspective of $S_1 \subseteq \Ione$, but it is the least dense from the perspective of~$\widehat{S}_1 \subseteq \Izero$.

The symmetric nature of the density decomposition
has been mentioned recently by Harb et al.~\cite[Theorem 13]{DBLP:conf/esa/HarbQC23},
but we shall explain later this fact may be inferred when different
earlier works are viewed together.
Hence, this observation may be considered as part of the folklore.
Since we need some crucial properties of the symmetric density decomposition
to prove Theorem~\ref{th:main_universal},  for the sake of completeness,
the connection of Goal~\ref{goal} and this folklore result is described in 
Section~\ref{sec:sym_decomp}.

\ignore{
Essentially we show
that Goal~\ref{goal} can be achieved by first obtaining a 
locally maximin refinement~$\alpha^{(0)}$ on vertex weights on side $\Izero$
using the aforementioned quadratic program and then consider
its proportional response~$\alpha^{(1)}$ to achieve
the desired refinement pair $\vec{\alpha} = (\alpha^{(0)}, \alpha^{(1)})$.
}

\noindent \textbf{Approximation Notions.}
Even though the exact density decomposition can be achieved
via LP or maximum flow, it can be computationally expensive in practice.
In the literature, efficient approximation algorithms have been considered.
While there are different attempts to define approximation notions
for density decompositions, it is less clear what they mean when they are
applied to different problems.

Observe that a refinement pair $\vec{\alpha} = (\alpha^{(0)}, \alpha^{(1)})$
achieving Goal~\ref{goal} may not be unique, but
the induced payload vectors $(p^{(0)}_*, p^{(1)}_*)$
and density vectors $(\rho^{(0)}_*, \rho^{(1)}_*)$ are unique.
When we say \emph{approximating} Goal~\ref{goal},
we mean finding a refinement pair such that the induced
payload and density vectors have some approximation guarantees.
We shall see that approximation guarantees\footnote{See Definition~\ref{defn:error} for approximation notions
of \emph{muliplicative} vs \emph{absolute} errors on vectors.} on these vectors
can be translated readily to approximation notions for various problems.

\subsubsection{Unraveling Universally Closest Distribution Refinements via Symmetric Density Decomposition}
\label{sec:intro_universal}

\noindent \textbf{Restricted Class of
Divergence.} As a first step towards
proving Theorem~\ref{th:main_universal},
we consider a restricted class of divergences known
as \emph{hockey-stick} divergence.  For each $\gamma \geq 0$,

$$\mathsf{D}_{\gamma}(P \| Q) := \sup_{S \subseteq \Omega} Q(S) - \gamma \cdot P(S).$$

The hockey-stick divergence is related to the
well-known $(\epsilon, \delta)$-differential privacy inequality.
Specifically, $\mathsf{D}_{e^\epsilon}(P \| Q) \leq \delta$ \emph{iff}
for all subsets $S \subseteq \Omega$,

$$Q(S) \leq e^\epsilon \cdot P(S) + \delta.$$

We first consider the universal closest distribution refinements
with respect to the class of hockey-stick divergences $\mcal{D}_{\textrm{HS}} := \{\mathsf{D}_\gamma: \gamma \geq 0\}$.
In Lemma~\ref{lemma:hs_matching},
we show that the restricted variant of
Theorem~\ref{th:main_universal} to the class
$\mcal{D}_{\textrm{HS}}$ is equivalent
to the following problem.

\noindent \textbf{Universal Refinement Matching Problem.}
Given an instance $G = (\Izero, \Ione; \mcal{F}; w)$
and two non-negative weights $\vec{c} = (c^{(0)}, c^{(1)}) \in \R^{\B}$, 
we use $G^{(\vec{c})}$ to denote the instance
of (fractional) maximum matching on the bipartite graph
$(\Izero, \Ione; \mcal{F})$ with vertex capacity constraints.
For side~$\side$ and $i \in \Ib$,
the capacity constraint at vertex~$i$ is $c^{(\side)} \cdot w(i)$.
A fractional solution consists of assigning non-negative weights
to the edges in $\mcal{F}$ such that the weighted degree of every vertex
does not exceed its capacity constraint.
Observe that for any refinement pair $\vec{\alpha} = (\alpha^{(0)}, \alpha^{(1)})$,
the fractional matching $\vec{c} \bullet \vec{\alpha}$ defined below
is a feasible solution to $G^{(\vec{c})}$:

$\vec{c} \bullet \vec{\alpha}(ij) := \min \{c^{(0)} \cdot \alpha^{(0)}(ij), 
c^{(1)} \cdot \alpha^{(1)}(ij)\}$
for all $\{i, j\} \in \mcal{F}$.

The goal is to find
a refinement pair $\vec{\alpha} = (\alpha^{(0)}, \alpha^{(1)})$
such that for any weight pair
$\vec{c}$,
the edge weights $\vec{c} \bullet \vec{\alpha}$
form an optimal (fractional) solution to
 the maximum matching instance $G^{(\vec{c})}$.

The structure of the symmetric density decomposition described in Section~\ref{sec:hypergraph}
makes it convenient to apply a primal-dual analysis
to achieve the
following in Section~\ref{sec:univeral_matching}.

\begin{theorem}[Universal Maximum Refinement Matching]
\label{th:main_matching}
Given an instance~$(\Izero, \Ione; \mcal{F}; w)$,
suppose a distribution refinement pair $\vec{\alpha} = (\alpha^{(0)}, \alpha^{(1)})$  satisfies Goal~\ref{goal}.
Then, for any 
$\vec{c} = (c^{(0)}, c^{(1)})$,
the (fractional) solution
$\vec{c} \bullet \vec{\alpha}$ is optimal for the maximum matching instance $G^{(\vec{c})}$.
\end{theorem}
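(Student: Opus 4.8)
The plan is to use LP duality for the vertex-capacitated fractional matching instance $G^{(\vec{c})}$. First I would write down the primal LP: maximize $\sum_{ij} x(ij)$ subject to $\sum_{j} x(ij) \le c^{(0)} w(i)$ for $i \in \Izero$, $\sum_{i} x(ij) \le c^{(1)} w(j)$ for $j \in \Ione$, and $x \ge 0$. Its dual assigns a nonnegative potential $y(v)$ to each vertex $v \in \mcal{I}$, minimizing $\sum_{i \in \Izero} c^{(0)} w(i) y(i) + \sum_{j \in \Ione} c^{(1)} w(j) y(j)$ subject to $y(i) + y(j) \ge 1$ for every edge $\{i,j\} \in \mcal{F}$. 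The strategy is: (1) exhibit the feasible primal solution $x = \vec{c}\bullet\vec{\alpha}$ (already noted feasible in the excerpt), (2) construct an explicit feasible dual solution $y$ from the density vectors $(\rho^{(0)}_*, \rho^{(1)}_*)$ induced by $\vec\alpha$, and (3) verify complementary slackness between $x$ and $y$, which certifies optimality of both.

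For step (2), the natural guess, given the symmetric density decomposition, is to set $y(i)$ as a function of $\rho^{(0)}_*(i)$ (the density that side-$0$ vertex $i$ gets under the proportional-response reading) and symmetrically $y(j)$ a function of $\rho^{(1)}_*(j)$, calibrated so that $c^{(0)}\rho^{(0)}_*$ and $c^{(1)}\rho^{(1)}_*$ interact correctly. I would use the key structural fact from the symmetric decomposition: along any edge $\{i,j\}$ actually carrying weight, the densities on the two sides are "dual" — in decomposition block $k$ the node side sees density $d_k$ and the hyperedge side sees $1/d_k$ (after normalization), so $\rho^{(0)}_*(i)\cdot\rho^{(1)}_*(j) = 1$ on support edges and the product is $\le 1$ or $\ge 1$ in the appropriate direction on all edges. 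Concretely I expect $y(i) = \frac{1}{1 + c^{(1)}\rho^{(1)}_*(\text{block of }i)/\,\dots}$-type closed forms; the clean way is to define, for each decomposition block $k$ with node-density $d_k$, a single threshold $t_k$ and set $y$ on that block's vertices accordingly so that edges within a block are tight ($y(i)+y(j)=1$) and cross-block edges are slack in the safe direction. Feasibility $y(i)+y(j)\ge 1$ for \emph{all} edges then follows because an edge from a higher (node-)density block to a lower one only helps the constraint, using the locally maximin property that no weight is sent "uphill" in density.

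For step (3), complementary slackness requires: whenever $x(ij) = (\vec c\bullet\vec\alpha)(ij) > 0$ the dual edge constraint is tight, and whenever a vertex potential $y(v) > 0$ the corresponding capacity constraint is tight. The first holds because $x(ij)>0$ forces both $\alpha^{(0)}(ij)>0$ and $\alpha^{(1)}(ij)>0$, and local maximinity on each side forces $i$ and $j$ to lie in the same decomposition block, where we arranged $y(i)+y(j)=1$. The second — primal tightness at vertices with positive potential — is where I expect the real work: I must show that $\vec c\bullet\vec\alpha$ saturates the capacity $c^{(\side)}w(v)$ at exactly the vertices in the blocks where $y>0$. This should reduce to showing that within a block the $\min$ in $\vec c\bullet\vec\alpha(ij)=\min\{c^{(0)}\alpha^{(0)}(ij), c^{(1)}\alpha^{(1)}(ij)\}$ is attained consistently on one side (determined by whether $c^{(0)} d_k \lessgtr c^{(1)}$), so that the saturated side gets its full capacity used; the proportional-response relation $\alpha^{(1)}(ij) = \frac{\alpha^{(0)}(ij)}{p^{(0)}(j)}w(j)$ converts one side's degree sum exactly into the other's.

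The main obstacle will be step (2)/(3) at the boundary between decomposition blocks and, relatedly, getting the direction of the inequality $\rho^{(0)}_*(i)\rho^{(1)}_*(j)\le 1$ versus $\ge 1$ right for cross-block edges so that dual feasibility is never violated; handling the $\min$ in $\vec c\bullet\vec\alpha$ piecewise according to the sign of $c^{(0)}d_k - c^{(1)}$ and checking that complementary slackness survives on every block simultaneously is the delicate bookkeeping. I would first prove it for a single-block instance (where the whole decomposition is one densest set, so $\rho^{(0)}_*, \rho^{(1)}_*$ are constant), nail the threshold formula there, and then argue the general case block-by-block, using that removing a saturated densest block leaves a valid smaller instance — exactly the peeling structure of the symmetric density decomposition recalled in Section~\ref{sec:hypergraph}.
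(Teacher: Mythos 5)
Your proposal is correct and follows essentially the same route as the paper's proof (Lemma~\ref{lem:matching-opt}): exhibit $\vec{c}\bullet\vec{\alpha}$ as a feasible primal, build an explicit dual certificate block-by-block from the symmetric density decomposition using the fact that cross-block edges only exist in one direction, and match the objective values. The only simplification you are missing is that the dual can be taken to be $0/1$-valued: for each block $\ell$ one simply compares $c^{(0)}w(S^{(0)}_\ell)$ with $c^{(1)}w(S^{(1)}_\ell)$ and assigns $\lambda=1$ to the cheaper side and $\lambda=0$ to the other, which makes your ``threshold formula'' and the tightness bookkeeping immediate.
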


\noindent \textbf{Extension to General Case.} To extend Theorem~\ref{th:main_universal}
beyond the restricted divergence class
$\mcal{D}_{\textrm{HS}}$,
we consider the closest distribution refinement
problem under a more general probability distance
notion.

\noindent \emph{Capturing All Data Processing Divergences via Power Functions.}
As opposed to using a single number by a divergence notion
to measure the closeness of two distributions,
the \emph{power function}
$\Pow(P\|Q): [0,1] \to [0,1]$ between distributions
$P$ and $Q$ captures enough information to recover all divergence notions that satisfy
the data processing inequality.  
We will give a formal explanation in Section~\ref{sec:dist_refine}.
A main result is given in Theorem~\ref{th:local_power}, which states that
any distribution refinement pair satisfying Goal~\ref{goal}
achieves the minimal power function,
from which Theorem~\ref{th:main_universal} follows immediately.

\noindent \textbf{Approximation Results.}  As mentioned in Section~\ref{sec:hypergraph},
we also show that a refinement that induces a payload with multiplicative error
can lead to approximation guarantees for other problems.
For the universal refinement matching problem, 
a natural approximation notion is based on the standard approximation ratio
for the maximum matching problem.

On the other hand, achieving a meaningful approximation notion
for the universally closest distribution refinement problem in terms of divergence values is more difficult.
The reason is that any strictly increasing function on a data processing divergence
still satisfies the data processing inequality.  Therefore, any non-zero deviation
from the correct answer may be magnified arbitrarily.
Instead of giving approximation guarantees directly on the divergence values,
we consider a technical notion of power function stretching,
and the approximation quality of a refinement pair is quantified
by how much the optimal power function is stretched.  The formal statement
is given in Theorem~\ref{th:min_power}.

\subsubsection{Achieving Multiplicative Error via Proportional Response in Fisher Markets}

We next interpret an instance~$(\Izero, \Ione; \mcal{F}; w)$
as a special symmetric case of linear Fisher markets~\cite{fisher1892},
which is itself a special case of the more general Arrow-Debreu markets~\cite{arrow1954existence}.  This allows us to
transfer any algorithmic results for Fisher/Arrow-Debreu market equilibrium
to counterparts for Goal~\ref{goal}.  Specifically,
we can use a quick iterative procedure \emph{proportional response}~\cite{wu2007proportional,DBLP:conf/icalp/Zhang09}
to approximate the densities of vertices in the density decomposition
in Section~\ref{sec:hypergraph} with multiplicative error,
where only absolute error has been achieved 
before~\cite{DBLP:conf/www/DanischCS17,harb2022faster}.
In addition to being interesting in its own right,
multiplicative error is also used to 
achieve the approximation results in Section~\ref{sec:intro_universal}.

\noindent \textbf{Market Interpretation.}
Even though in an instance, the roles of the two sides 
are symmetric, it will be intuitive to think of
$\Izero$ as buyers and $\Ione$ as sellers.
For buyer $i \in \Izero$, the weight $w(i)$ is its \emph{budget}.
In the symmetric variant, the utilities of buyers have a special form.
Each seller~$j \in \Ione$ has one divisible unit of good~$j$
for which a buyer~$i$ has a positive utility exactly when  
$\{i, j\} \in \mcal{F}$, in which case the utility per unit of good~$j$
is $w(j)$ (that is common for all buyers that have a positive utility for~$j$).
For a general Fisher market, a buyer~$i$ can have
an arbitrary utility $w_i(j)$ for one unit of good~$j$.

\noindent \emph{Market Allocation and Equilibrium.}
For the symmetric variant, an allocation 
can be represented by a refinement pair $\vec{\alpha} = (\alpha^{(0)}, \alpha^{(1)})$,
in which $\alpha^{(1)}$ must be a proportional response to $\alpha^{(0)}$.
For the refinement $\alpha^{(0)}$ on the buyers' budgets,
the induced payload vector $p^{(1)}$ on the sellers corresponds to the \emph{price} vector
on goods.
For the refinement $\alpha^{(1)}$ on the sellers' weights,
the induced payload vector $p^{(0)}$ on the buyers corresponds
to the utilities received by the buyers.
It is well-known (explained in Fact~\ref{fact:fisher_local})
that
a market equilibrium is characterized by a locally maximin
refinement~$\alpha^{(0)}$ of buyers' budgets
together with its proportional response~$\alpha^{(1)}$.
From our theory of symmetric density decomposition in Section~\ref{sec:hypergraph},
this is equivalent to Goal~\ref{goal}.

\noindent \textbf{Alternative Characterization.}
The fact that the local maximin
condition on the buyers' budget allocation
is equivalent to Fisher market equilibrium
readily allows us to borrow any algorithmic results
for market equilibrium to solve or approximate
Goal~\ref{goal}.  However, our framework allows us to
make the following observations.

\begin{compactitem}

\item In linear Fisher markets (not necessarily symmetric), we show in Theorem~\ref{th:fisher_local_seller}
that
the local maximin condition on sellers' goods allocation to buyers
can be used to characterize a market equilibrium.
This is 
an alternative characterization of linear Fisher market equilibrium
that involves only goods allocation and buyers' utilities, but not
involving good prices.

\item On the other hand, we show in Theorem~\ref{th:counterexample} that
there exists an Arrow-Debreu market such that
there is a goods allocation satisfying the local maximin condition,
but it is not an equilibrium.
\end{compactitem}

\ignore{
demonstrated that computing the decomposition can be formulated as solving a quadratic program that extends the LP of Charikar~\cite{10.1007/3-540-44436-X_10}, and they applied the Frank-Wolfe algorithm. Their algorithm requires $O(\frac{|E|\Delta(G)}{\epsilon^2})$ iterations (each taking $O(|E|)$ time) to converge to an $\epsilon$-approximate solution, where $\Delta(G)$ is the degree of graph $G$. Their work demonstrates that the Frank-Wolfe algorithm provides a good approximation to the densest subgraph with a relatively small number of iterations based on experimental results.

Subsequently, Elfarouk et al.~\cite{harb2022faster} proposed an algorithm based on FISTA that takes $O(\frac{\sqrt{|E|\Delta(G)}}{\epsilon})$ iterations, each taking $O(|E|)$ time, to converge to an $\epsilon$-approximate solution. 
}

\subsection{Motivation and Application for Universally Closest Distribution Refinements}
\label{sec:universal_motivation}

The problem in Definition~\ref{defn:closest_refine} arises in the
study of the composition of differentially oblivious 
mechanisms~\cite{DBLP:conf/eurocrypt/ZhouSCM23,DBLP:conf/innovations/ZhouZCS24},
which is a generalization of the more well-known concept
of differential privacy (DP)~\cite{DBLP:conf/icalp/Dwork06}.

\ignore{
Since differentially oblivious mechanisms form a wide research area~\cite{DBLP:journals/jacm/ChanCMS22}
and the background explanation does not affect the technical understanding of this work,
we will only highlight the aspects that are the most relevant to the universally closest
refinement problem, which may be regarded  
as an interesting mathematical problem in its own right.
Readers interested in learning more about the background
may study the relevant references in more detail.
}

\noindent \textbf{A Quick Introduction to Differential Privacy/Obliviousness.}
Consider a (randomized) mechanism $\mcal{M}$ that takes an input from space $\mcal{X}$
and returns an output from space~$\mcal{Y}$ that will be passed to a second mechanism.
During the computation of $\mcal{M}$, an adversary can observe some \emph{view} from space~$\mcal{V}$ that might not necessarily include the output in general.
Suppose at first we just consider the privacy/obliviousness of $\mcal{M}$ on its own, which measures how much information the adversary can learn about the input from its view.
Intuitively, it should be difficult for the adversary to distinguish between two similar inputs
from their corresponding views produced by the mechanism.
Formally, similar inputs in $\mcal{X}$ are captured by a \emph{neighboring} symmetric relation, and the privacy/obliviousness of the mechanism $\mcal{M}$
requires that the two distributions of views produced by
neighboring inputs are ``close'', where closeness can be quantified
by some notion of divergence.  As mentioned in Section~\ref{sec:intro_universal},
the popular notion of $(\epsilon, \delta)$-DP is captured
using hockey-stick divergences.

\noindent \textbf{Composition of Differentially Oblivious Mechanisms.}
Suppose the output of mechanism~$\mcal{M}$ from space $\mcal{Y}$ is passed as an input
to a second mechanism~$\mcal{M}'$, which is differentially oblivious with respect
to some neighboring relation on $\mcal{Y}$.
Hence, given input~$x$ to $\mcal{M}$, one should consider
the joint distribution $\mcal{M}(x)$ on the space
$\mcal{V} \times \mcal{Y}$ of views and outputs.

The notion of neighbor-preserving differential obliviousness (NPDO)
has been introduced~\cite{DBLP:conf/eurocrypt/ZhouSCM23,DBLP:conf/innovations/ZhouZCS24}
to give sufficient conditions on $\mcal{M}$ that allows composition with 
such a mechanism $\mcal{M}'$ so that the overall composition
preserves differential obliviousness.


Specifically, given neighboring inputs~$x_0 \sim x_1$,
conditions are specified on the two distributions $\mcal{M}(x_0)$
and $\mcal{M}(x_1)$.  Unlike the case for differential privacy/obliviousness, 
it is not appropriate to just require that the two distributions are close.
This is because
the adversary can only observe the view component
of $\mcal{V} \times \mcal{Y}$.  Moreover, one also needs to incorporate
the neighboring relation on the output space~$\mcal{Y}$, which
is needed for the privacy guarantee for the second mechanism~$\mcal{M}'$.

The definition of NPDO is more conveniently described
via the terminology offered by our abstract instance $(\Izero, \Ione; \mcal{F}; w)$.
Here, the sets $\Izero$ and $\Ione$ are two copies of $\mcal{V} \times \mcal{Y}$.
For each $\side \in \B$, the weights $w$ on $\Ib$ corresponds
to the distribution $\mcal{M}(x_\side)$.
Finally, $(v_0, y_0) \in \Izero$ and $(v_1, y_1) \in \Ione$ are neighbors in $\mcal{F}$
\emph{iff} $v_0 = v_1$ and $y_0 \sim y_1$ are neighboring in $\mcal{Y}$.

In the original definition~\cite{DBLP:conf/innovations/ZhouZCS24}, 
the mechanism~$\mcal{M}$ is said to be $d$-NPDO with respect to
divergence~$\mathsf{D}$ if 
for the constructed instance $(\Izero, \Ione; \mcal{F}; w)$,
there exists a distribution refinement pair $\vec{\alpha} = (\alpha^{(0)}, \alpha^{(1)})$,
which may depend on the divergence $\mathsf{D}$,
such that $\mathsf{D}(\alpha^{(0)} \| \alpha^{(1)}) \leq d$.
Examples of considered divergences include the standard hockey-stick divergence and \renyi divergence~\cite{DBLP:conf/csfw/Mironov17}.

Since it only makes sense to consider divergences satisfying 
the data processing inequality in the privacy context,
the universally closest refinement pair guaranteed
in Theorem~\ref{th:main_universal} makes the notion of NPDO conceptually simpler,
because the closest refinement pair does not need to
specifically depend on the considered divergence.

We believe that this simplification will be a very useful tool.
In fact, to achieve advanced composition in~\cite{DBLP:conf/innovations/ZhouZCS24},
a very complicated argument has been used to 
prove the following fact, which is just an easy corollary of
Theorem~\ref{th:main_universal}.

\begin{fact}[Lemma~B.1 in~\cite{DBLP:conf/innovations/ZhouZCS24}]
Given a distribution instance $(\Izero, \Ione; \mcal{F}; w)$,
suppose there exist two refinement pairs
$\vec{\alpha}_s =  (\alpha^{(0)}_s, \alpha^{(1)}_s)$ and 
$\vec{\alpha}_t = (\alpha^{(0)}_t, \alpha^{(1)}_t)$ such that
the hockey-stick divergences $\mathsf{D}_\gamma(\alpha^{(0)}_s \| \alpha^{(1)}_s)$
and $\mathsf{D}_\gamma(\alpha^{(1)}_t \| \alpha^{(0)}_t)$ are both at most~$\delta$.

Then, there exists a single refinement pair
$\vec{\alpha} = (\alpha^{(0)}, \alpha^{(1)})$
such that both
$\msf{D}_\gamma(\alpha^{(0)} \| \alpha^{(1)})$
and $\msf{D}_\gamma(\alpha^{(1)} \| \alpha^{(0)})$ are at most $\delta$.
\end{fact}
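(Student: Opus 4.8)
The plan is to derive this Fact as an immediate corollary of Theorem~\ref{th:main_universal}. First I would invoke Theorem~\ref{th:main_universal} to obtain a single universally closest refinement pair $\vec{\alpha} = (\alpha^{(0)}, \alpha^{(1)})$ for the given distribution instance $(\Izero, \Ione; \mcal{F}; w)$, i.e., a pair that simultaneously minimizes $\msf{D}(\alpha^{(0)} \| \alpha^{(1)})$ over all refinement pairs for every divergence $\msf{D}$ satisfying the data processing inequality. The key observation is that each hockey-stick divergence $\msf{D}_\gamma$ satisfies the data processing inequality (this is standard, and it is consistent with the discussion in Section~\ref{sec:intro_universal} where $\mcal{D}_{\textrm{HS}}$ is treated as a class of data processing divergences), so $\vec{\alpha}$ is a minimizer for $\msf{D}_\gamma(\cdot \| \cdot)$ in particular.

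The next step handles the asymmetry. The hypothesis gives us a pair $\vec{\alpha}_s$ with $\msf{D}_\gamma(\alpha^{(0)}_s \| \alpha^{(1)}_s) \le \delta$ and a pair $\vec{\alpha}_t$ with $\msf{D}_\gamma(\alpha^{(1)}_t \| \alpha^{(0)}_t) \le \delta$. I would apply the universal minimality of $\vec{\alpha}$ twice, once to each ordering of its two components. On one hand, since $\vec{\alpha}$ minimizes $\msf{D}_\gamma(\alpha^{(0)} \| \alpha^{(1)})$ over all refinement pairs and $\vec{\alpha}_s$ is such a pair, $\msf{D}_\gamma(\alpha^{(0)} \| \alpha^{(1)}) \le \msf{D}_\gamma(\alpha^{(0)}_s \| \alpha^{(1)}_s) \le \delta$. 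On the other hand, observe that swapping the two sides of an instance is a symmetry: if $\vec{\beta} = (\beta^{(0)}, \beta^{(1)})$ ranges over all refinement pairs, so does its swap $(\beta^{(1)}, \beta^{(0)})$, viewed as a refinement pair for the same instance with the roles of $\Izero$ and $\Ione$ interchanged (the neighbor set $\mcal{F}$ is symmetric). Hence the swapped pair $(\alpha^{(1)}, \alpha^{(0)})$ is also universally closest, and applying its minimality against the (swapped) competitor $(\alpha^{(1)}_t, \alpha^{(0)}_t)$ gives $\msf{D}_\gamma(\alpha^{(1)} \| \alpha^{(0)}) \le \msf{D}_\gamma(\alpha^{(1)}_t \| \alpha^{(0)}_t) \le \delta$. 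Together these two bounds establish the claim with the single pair $\vec{\alpha}$.

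Alternatively, and perhaps more cleanly, I would avoid the swap argument by noting that for fixed $\gamma$, the ``reverse'' map $\msf{D}'(P \| Q) := \msf{D}_\gamma(Q \| P)$ is itself a divergence satisfying the data processing inequality (post-processing applied to $Q$ and $P$ simultaneously still contracts). Then $\vec{\alpha}$, being universally closest, minimizes both $\msf{D}_\gamma(\alpha^{(0)} \| \alpha^{(1)})$ and $\msf{D}'(\alpha^{(0)} \| \alpha^{(1)}) = \msf{D}_\gamma(\alpha^{(1)} \| \alpha^{(0)})$ over all refinement pairs, and comparing against $\vec{\alpha}_s$ and $\vec{\alpha}_t$ respectively finishes the proof in one line each.

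The only mild obstacle is the bookkeeping point that the minimization in Theorem~\ref{th:main_universal} is over \emph{all} distribution refinement pairs of the instance, so that both $\vec{\alpha}_s$ and $\vec{\alpha}_t$ are legitimate competitors; this is immediate from the definition of refinement pair. There is no real difficulty here — the entire content of the Fact is absorbed into Theorem~\ref{th:main_universal}, which is exactly the point being made in the surrounding text (that the complicated argument of~\cite{DBLP:conf/innovations/ZhouZCS24} is subsumed). I would keep the write-up to a few lines, emphasizing the two applications of universal minimality and the trivial observation that reversing the arguments of a data-processing divergence yields another data-processing divergence.
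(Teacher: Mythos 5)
Your proposal is correct and is exactly the argument the paper intends: the Fact is presented there as an immediate corollary of Theorem~\ref{th:main_universal} with no further proof, and your second variant (observing that $(P,Q)\mapsto \msf{D}_\gamma(Q\|P)$ is itself a data-processing divergence, so the single universally closest pair is a minimizer for both orderings and can be compared against $\vec{\alpha}_s$ and $\vec{\alpha}_t$ respectively) is the clean way to write it down. No gaps.
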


%
%
%
%
%

\ignore{
In Computer Science community, \cite{deng2002complexity} first presented a polynomial time algorithm to approximate the market equilibrium in the linear market with bounded number of goods.

While algorithmic results in a centralized model of computation provide insights into
equilibrium concepts, they do not directly address the question of market dynamics: how agents interacting in a market can reach an equilibrium. Distributed algorithms are particularly relevant in scenarios involving automated agents, where a prescribed protocol guides their interactions. Networking applications and markets like search advertising often require distributed algorithms that achieve \textit{proportional fairness}\cite{kelly1997charging, kelly1998rate}, which is equivalent to market equilibrium in many settings. Designing simple and fast distributed algorithms for market dynamics remains an active area of research. 

Most dynamics considered in the literature are variants of the tatonnement process introduced by Leon Walras\cite{walras1900elements}. While various versions of t\^{a}tonnement have been analyzed, they often lack simplicity, speed, and distributed nature. Some early work in economics ignored convergence rates entirely, while others achieved polynomial time convergence but required global information or specific starting states. Distributed and asynchronous algorithms, such as the one proposed by~\cite{cole2008fast}, have shown fast convergence for markets with weak gross substitutes property. However, designing t\^{a}tonnement-style processes for linear Fisher markets is challenging due to non-uniqueness of demand, requiring additional specifications at equilibrium. Proportional response (PR) dynamics elegantly handle the problem of non-uniqueness, as bids uniquely determine prices and allocations. 
}

\subsection{Related Work}
\label{sec:related}

The densest subset problem is a classical problem in combinatorial optimization with numerous real-world applications in data mining, network analysis, and machine learning \cite{goldberg1984finding, 10.1007/3-540-44436-X_10, khuller2009finding, ma2020efficient, angel2012dense, shin2016corescope, li2020flowscope}, and it is the basic subroutine in defining the density decomposition.

\noindent \textbf{Density Decomposition.}
As mentioned in Section~\ref{sec:hypergraph},
the earliest reference to the density decomposition and its
associated quadratic program were introduced by Fujishige~\cite{fujishige1980lexicographically}.
Even though this decomposition was defined for the more general submodular or supermodular case,
the procedure has been rediscovered independently many times in the literature
for the special bipartite graph case (or equivalently its hypergraph interpretation).

Earlier we described the decomposition procedure by repeatedly
peeling off maximal densest subsets in a hypergraph.
This was the approach by Tatti and Gionis~\cite{tatti2015density, tatti2019density}
and Danisch et al.~\cite{DBLP:conf/www/DanischCS17} .
Moreover, the same procedure has been used to define hypergraph Laplacians
in the context of the spectral properties of hypergraphs~\cite{DBLP:journals/jacm/ChanLTZ18}.

In the literature, there is another line of research that defines
the decomposition by repeatedly peeling off least densest subsets instead.
However, given a subset~$S \subseteq V$ of nodes in a
weighted hypergraph $H = (V,E)$,
a different notion of density is considered: $\mathfrak{r}(S) := \frac{w(E(S))}{w(S)}$,
where $E(S)$ is the collection of hyperedges that have
non-empty intersection with $S$.
The difference is that before we consider $E[S]$ that
is the collection of hyperedges totally contained inside~$S$.
Then, an alternative procedure can be defined by peeling off a
maximal least densest subset under $\mathfrak{r}$ and performing
recursion on the remaining graph.

This was the approach used by Goel et al.~\cite{DBLP:conf/soda/GoelKK12}
to consider the communication and streaming complexity of maximum
bipartite matching.
The same procedure has been used by Bernstein et al.~\cite{DBLP:journals/jacm/BernsteinHR19}
to analyze the replacement cost of online maximum bipartite matching maintenance.
Lee and Singla~\cite{DBLP:journals/talg/LeeS20}
also used this decomposition to investigate the
the batch arrival model of online bipartite matching.
Bansal and Cohen~\cite{DBLP:conf/waoa/BansalC21} have independently discovered this 
decomposition procedure to consider maximin allocation of hyperedge weights to incident nodes.

\noindent \textbf{Symmetric Nature of Density Decomposition.} At this point, the reader may ask how we know that the two procedures will give the same decomposition (but in reversed order).  The answer is that exactly the same quadratic program
has been used in~\cite{DBLP:conf/www/DanischCS17}
and~\cite{DBLP:journals/jacm/BernsteinHR19} to analyze the corresponding approaches.
This observation has also been made recently by Harb et al.~\cite[Theorem 13]{DBLP:conf/esa/HarbQC23}.  It is not too difficult to see that the least densest subset
using the modified density $\mathfrak{r}$ is equivalent
to the densest subset with the roles of vertices and hyperedges reversed.
Therefore, Theorem~\ref{th:folklore} can also be obtained indirectly through these observations.

\noindent \textbf{Connection between
Density Decomposition and Fisher Market Equilibrium.}
Even before this work in which we highlight
the local maximin condition,
density decomposition has already surreptitiously
appeared in algorithms computing Fisher market equilibrium.
Jain and Vazirani~\cite{DBLP:journals/geb/JainV10}
considered submodular utility allocation (SUA) markets,
which generalize linear Fisher markets.  Their
procedure~\cite[Algorithm 9]{DBLP:journals/geb/JainV10}
for computing an SUA market equilibrium 
can be viewed as a masqueraded variant of the density decomposition by Fujishige~\cite{fujishige1980lexicographically}.
Buchbinder et al.~\cite{DBLP:conf/soda/BuchbinderGHKS24}
used a variant of SUA market equilibrium to analyze 
online maintenance of matroid intersections,
where equilibrium prices are also obtained by the density
decomposition procedure.

In this work, we observe conversely that
algorithms for approximating market equilibrium
can also be directly used for approximating density decompositions.
For the simple special case in which the input graph contains a perfect fractional matching, Motwani et al.~\cite{DBLP:conf/approx/MotwaniPX06} showed that several randomized balls-into-bins
procedures attempting to maintain the locally maximin condition
can achieve $(1+\epsilon)$-approximation.


\noindent \textbf{Distributed Iterative Approximation Algorithms.}
Even though efficient exact algorithms
are known to achieve Goal~\ref{goal},
distributed iterative approximation algorithms have been considered for practical scenarios.

\noindent \emph{Density Decomposition.}
By applying the first-order iterative
approach of Frank-Wolfe to the quadratic program,
Danisch et al. \cite{DBLP:conf/www/DanischCS17} 
showed that $T$ iterations, each of which takes $O(|\mcal{F}|)$ work and may be parallelized, can achieve an absolute
error
of $O(\frac{1}{\sqrt{T}})$ on the density vector~$\rho_*$.
Harb et al.~\cite{DBLP:conf/esa/HarbQC23} showed that the method~$\textsc{Greedy++}$
can be viewed as a noisy variant of Frank-Wolfe
and hence has a similar convergence result.

By using a first-order method augmented with Nesterov momentum~\cite{Nesterov1983AMF}
such as accelerated FISTA~\cite{beck2009fast},
Elfarouk et al.~\cite{harb2022faster} readily concluded
that $T$ iterations can achieve an absolute error of $O(\frac{1}{T})$.
However, observe that if there is a vertex~$i$ such that $\rho_*(i) \leq \epsilon$,
an absolute error of $\epsilon$ may produce an estimate $\rho(i)$ that is arbitrarily small.

Inspired by estimates of $\rho_*$,
several notions of approximate density decompositions have been considered in previous works~\cite{tatti2015density,DBLP:conf/www/DanischCS17,harb2022faster}.
However, it is not clear if there 
is any meaningful interpretations of these approximate
decompositions in the context of
closest distribution refinements or market equilibrium.

\noindent \emph{Fisher Market Equilibrium.}
Market dynamics have been considered to
investigate how agents' mutual interaction can lead
to convergence to a market equilibrium.
The proportional response (PR, \emph{aka} tit-for-tat) protocol
has been used by Wu and Zhang~\cite{wu2007proportional}
to analyze bandwidth trading in a peer-to-peer network $G = (V,E)$
with vertex weights~$w$.  This can be seen as a special case of
our instance
$(\Izero, \Ione; \mcal{F}; w)$, where $\Izero$ and $\Ione$ are two copies
of $V$, and $i \in \Izero$ and $j \in \Ione$ are neighbors
in $\mcal{F}$ \emph{iff} the corresponding pair $\{i, j\} \in E$
are neighbors in $G$.
Interestingly, 
they consider a \emph{bottleneck} decomposition
of the graph~$G$ which can be recovered from our symmetric density decomposition.
They showed that the PR protocol converges to some equilibrium under this special model.

Subsequently, 
Zhang~\cite{DBLP:conf/icalp/Zhang09} showed that the PR protocol
can be applied in Fisher markets between buyers and sellers,
and analyzed the convergence rate to the market equilibrium
in terms of multiplicative error.
Birnbaum et al.~\cite{birnbaum2011distributed} later
improved the analysis of the PR protocol by demonstrating that it can be viewed as a generalized gradient descent algorithm.

For completeness, in Section~\ref{sec:mult_error}, we summarize
the analysis of how the PR protocol can estimate payloads from Goal~\ref{goal} with multiplicative error.

\ignore{
\color{blue}
\noindent \textbf{Fractional Perfect b-Matching. }Given a vertex weighted graph $G$, a fractional perfect b-matching of $G$ is a non-negative assignment of edge weights such that for each vertex $u$, the sum of weights on its incident edges is exactly equals to the vertex weight $b_u$. 
A refinement pair $\vec{\alpha} = (\alpha^{(0)}, \alpha^{(1)})$ is a generalized version of fractional perfect b-matching when $G$ is bipartite. Actually, when $\alpha^{(0)} = \alpha^{(1)}$, then either of them is a fractional perfect b-matching. 
The fractional perfect b-matching polytope of $G$ is the polytope of all fractional perfect b-matching of $G$. Certain fractional perfect $b$-matching polytopes have been studied and used in the contexts of combinatorial matrix classes (see, for example, the book by Brualdi \cite{brualdi2006combinatorial}), and combinatorial optimization (see, for example, the books by Korte and Vygen \cite{bhcombinatorial}, or Schrijver \cite{schrijver2003combinatorial}). Theorems concerning the properties of such polytope can be found in \cite{behrend2013fractional}. 
In some special cases, an approximate fractional perfect b-matching can be found. For example, when each vertex in $G$ has weight $1$, a near-perfect fractional matching where each right node has load within $1\pm \epsilon$ can be computed in $O(\frac{m\log n}{\epsilon})$ time.\cite{DBLP:conf/approx/MotwaniPX06}.

\color{black}
}

\subsection{Potential Impacts and Paper Organization}
\label{sec:future}

This work unifies diverse problems under a single framework, revealing connections and providing new approximation results. Future research directions are abundant. Examining the implications of various market models or equilibrium concepts on the densest subset problem and density decomposition opens the door for further exploration of the relationships between economic markets and combinatorial optimization problems. This line of inquiry could also lead to new insights into economics, particularly regarding market participants' behavior and the efficiency of different market mechanisms. 

\noindent \emph{Paper Organization.}  
\ignore{
To emphasize our technical contributions, we have included the innovative aspects of this work in the main body, while background materials and fresh insights on existing results are located in the appendix. However, to facilitate comprehension, we suggest the following reading order.
}  
In Section~\ref{sec:prelim}, we introduce the precise notation that will be used throughout the paper.  In Section~\ref{sec:sym_decomp}, we summarize the structural results regarding
the symmetric density decomposition, which is used as a technical tool to analyze other problems.
As mentioned in the introduction,
Section~\ref{sec:univeral_matching} investigates the
universal refinement matching problem which is the precursor
for tackling the more challenging universally
closest distribution refinements problem in Section~\ref{sec:dist_refine}.
In Section~\ref{sec:hockey-stick}, we first consider
the special case of hockey-stick divergences; by considering power functions,
we prove Theorem~\ref{th:main_universal} in Section~\ref{sec:dp_div};
in Section~\ref{sec:approx_power},
we present approximation results for the universal closest refinements problem
by defining a stretching operation on power functions.
In Section~\ref{sec:market}, 
we recap the connection
between the symmetric density decomposition and 
the symmetric Fisher market equilibrium;
moreover, we give an alternative characterization of the Fisher market equilibrium
in terms of the local maximin condition on sellers' goods allocation to buyers.  In Section~\ref{sec:mult_error},
we give details on the convergence rates of 
how the iterative proportional response protocol
can achieve an approximation with multiplicative error
for Goal~\ref{goal}.  
For completeness,
in Section~\ref{sec:additive_error},
we give a review of how absolute error
for Goal~\ref{goal} is achieved
using techniques in previous works.

\section{Preliminaries}
\label{sec:prelim}

We shall consider several problems, whose inputs
are in the form of vertex-weighted bipartite graphs.

\begin{definition}[Input Instance]
\label{defn:input}
An instance $(\Izero, \Ione; \mcal{F}; w)$ of the input consists of the following:
\begin{compactitem}
\item Two disjoint vertex sets 
$\Izero$ and $\Ione$.

For the side~$\side \in \B := \{0,1\}$, we use $\ob := \side \oplus 1$ to indicate the other side different from $\side$.

\item A collection $\mcal{F}$ of unordered pairs
of vertices that is interpreted as a
bipartite graph between vertex sets $\Izero$ and $\Ione$;
in other words, if $f = \{i,j\} \in \mcal{F}$,
then exactly one element of $f$ is in $\Izero$ and the
other is in $\Ione$. 
In this case, we also denote $i \sim_{\mcal{F}} j$
or just $i \sim j$.

A vertex is isolated if it has no neighbor.


\item Positive\footnote{Vertices with zero weight do not
play any role and may be excluded.} vertex weights $w: \Izero \cup \Ione \rightarrow \mathbb{R}_{>0}$;
for a vertex~$i$, we use $w_i$ or $w(i)$ to denote its weight.

For $\side \in \B$,
we also use $w^{(\side)} \in \R^{\Ib}$ 
to denote the weight vector for each side.
\end{compactitem}

This notation highlights the symmetry between the two sets
$\Izero$ and $\Ione$.
\end{definition}

\begin{remark}
In this work, we focus on the case that $\Izero$ and $\Ione$ are finite.
Even though it is possible to extend the results
to continuous and uncountable sets (which are relevant
when those sets are interpreted as continuous probability spaces),
the mathematical notations and tools involved are
outside the scope of the typical computer science community.
\end{remark}

\begin{definition}[Allocation Refinement and Payload]
\label{defn:refinement}
Given an input instance as in Definition~\ref{defn:input},
for side $\side \in \B$, an allocation refinement (or simply refinement) $\alpha^{(\side)} \in \R^{\mcal{F}}_{\geq 0}$
of $w^{(\side)} \in \R^{\Ib}$ can be interpreted as edge weights
on~$\mcal{F}$ such that for each non-isolated vertex $i \in \Ib$, 

$$w^{(\side)}(i) = \sum_{f \in \mcal{F}: i \in f} \alpha^{(\side)}(f).$$

In other words, if a vertex $i$ has at least one neighbor, it distributes its weight $w_i$
among its neighboring edges $\{i,j\}$ such that $i \sim j$.
These edge weights due to the refinement $\alpha^{(\side)}$
can be interpreted as received by vertices on the other side $\Iob$;
this induces a payload vector 
$p^{(\ob)} \in \R^{\Iob}$ defined by:

$$\forall j \in \Iob, 
p^{(\ob)}(j) := \sum_{f: j \in f} \alpha^{(\side)}(f).
$$

With respect to the weights of vertices in $\Iob$,
the (payload) density $\rho^{(\ob)} \in \R^{\Iob}$
is defined by:

$$\rho^{(\ob)}(j) := \frac{p^{(\ob)}(j)}{w(j)}.$$

When there is no risk of ambiguity,
the superscripts $(\side)$ and $(\ob)$ may be omitted.
\end{definition}

\begin{definition}[Proportional Response]
\label{def:PR-refinement}
Suppose for side~$\side \in \B$,
$\alpha^{(\side)}$ is a refinement of the vertex weights on $\Ib$.
Then, a proportional response to
$\alpha^{(\side)}$ is a refinement 
$\alpha^{(\ob)}$ of the vertex weights on the other side $\Iob$ satisfying
the following:

\begin{compactitem}
\item If $j \in \Iob$ receives a positive payload $p^{(\ob)}(j) > 0$
from $\alpha^{(\side)}$,
then for each $i \sim j$, 

$\alpha^{(\ob)}(ij) = \frac{\alpha^{(\side)}(ij)}{p^{(\ob)}(j)} \cdot w(j)
=\frac{\alpha^{(\side)}(ij)}{\rho^{(\ob)}(j)}$.

\item If a non-isolated vertex $j \in \Iob$
receives zero payload 
from $\alpha^{(\side)}$, then
in $\alpha^{(\ob)}$, vertex $j$ may distributes its weight $w(j)$ arbitrarily
among its incident edges in $\mcal{F}$.

\end{compactitem}

\end{definition}

\begin{definition}[Local Maximin Condition]
\label{defn:maximin}
Given an input instance as in Definition~\ref{defn:input},
for $\side \in \B$, suppose a refinement $\alpha^{(\side)} \in \R^{\mcal{F}}_+$
of $w^{(\side)} \in \R^{\Ib}$
induces a payload density vector $\rho^{(\ob)} \in \R^{\Iob}$
as in Definition~\ref{defn:refinement}.
Then,
$\alpha^{(\side)}$  is locally maximin if,

$$\forall i \in  \Ib, \quad
\alpha^{(\side)}(ij) > 0  \implies j \in \arg\min_{\ell \in \Iob: \ell \sim i} \rho^{(\ob)}(\ell).$$

In other words, 
the local maximin condition states that
each vertex $i \in \Ib$
attempts to send a positive weight to a neighbor~$j \sim i$ only if
$j$ achieves the minimum $\rho^{(\ob)}(\ell)$ among the neighbors~$\ell$ of $i$.
\end{definition}

\begin{remark}
\label{remark:prop_resp}
Observe that if $\alpha^{(\side)}$ satisfies the local maximin condition,
then every non-isolated vertex in $\Iob$ must receive a positive payload.
This means that its proportional response
$\alpha^{(\ob)}$ is uniquely determined.
\end{remark}

\begin{definition}[Notions of Error]
\label{defn:error}
Given a vector $\rho$, we consider two notions of error
for an approximation vector $\widehat{\rho}$.

\begin{compactitem}

\item \emph{Absolute error~$\epsilon$ with respect
to norm $\| \cdot \|$.}
We have: $\|\rho - \widehat{\rho}\| \leq \epsilon$.

A common example is the standard $\ell_2$-norm.

\item \emph{Multiplicative error~$\epsilon$.}
For all coordinates~$i$,
$|\rho(i) - \widehat{\rho}(i)| \leq \epsilon \cdot |\rho(i)|$.

\end{compactitem}

\end{definition}

\ignore{
Given a bipartite graph $G = (\mathcal{I}, \mathcal{J}; \mathcal{F})$, where $n = |\mathcal{I}|$ and $m = |\mathcal{J}|$. All vertices are weighted by a weight function $w: \mathcal{I} \cup \mathcal{J} \rightarrow \mathbb{R}_+$. 

Two types of allocations $\alpha, \beta \in \mathbb{R}_+^{n \times m}$ are defined in terms of $G$. 
Allocation $\alpha_{ij}$ defines the amount of weight vertex $i \in \mathcal{I}$ distribute to its neighbor $j$. It holds that $w(i) = \sum_j \alpha_{ij}$. Allocation $\beta_{ij}$ defines the amount of weight vertex $j \in \mathcal{J}$ distribute to its neighbor $i$. It holds that $w(j) = \sum_i \beta_{ij}$. 

}


\ignore{
\subsection{Fisher market model with linear utilities}

In the Fisher market model with linear utilities, we have $n$ buyers and $m$ perfectly divisible goods. Each good has a unit supply. Every buyer $i$ has a budget $B_i$, and for each item $j$, the utility of that buyer for one unit of item $j$ is represented by $a_{ij}$. Given an allocation vector $x \in \mathbb{R}_+^{n \times m}$, the utility of buyer $i$ is the sum of $a_{ij} \cdot x_{ij}$ over all items $j$.

An allocation $x \in \mathbb{R}_+^{n \times m}$ and a price vector $p \in \mathbb{R}^m_+$ are considered an equilibrium if two conditions are satisfied. The first condition, known as \textit{buyer optimality}, requires that given the price vector $p$, each player maximizes their utility while adhering to their budget constraint. In other words, the allocation $x$ optimizes the following program (where $p$ is a constant):

\begin{align*}
	\text{Maximize } \sum_j a_{ij} \cdot x_{ij}, \\
	\text{Subject to } \sum_j p_j \cdot x_{ij} \leq B_i, \\
	x_{ij} \geq 0 \text{ for all } j.
\end{align*}

The second equilibrium condition, \textit{market clearance}, is that $\sum_i x_{ij} = 1$ for all $j$.

The problem we are looking at is a special case of Fisher market model with linear utilities. 

\noindent \textbf{Problem Definition}(\textit{Special case of fisher market model with linear utilities}).  Let $U$ be the set of buyers and $V$ is the set of goods. Each buyer $u \in U$ has budget $w_u$ and the utility $a_{uv} = w_v$. Find an equilibrium. 
}

\section{Folklore: Symmetric Density Decomposition}
\label{sec:sym_decomp}

Given an instance $(\Izero, \Ione; \mathcal{F}; w)$
as in Definition~\ref{defn:input},
instead of interpreting it as a bipartite graph between
$\Izero$ and $\Ione$,
one can interpret it as a hypergraph~$H$.
In this case, we have the vertex set $V = \Ione$
and the hyperedge set $E = \Izero$,
where  $v \in V$ and $e \in E$ are neighbors in $\mcal{F}$ \emph{iff} the hyperedge $e$ contains the vertex $v$.  Moreover, $w$ assigns weights
to both $V$ and $E$ in the hypergraph.

Under the hypergraph interpretation,
the density of a subset $S \subseteq V$
of vertices is $\rho(S) := \frac{w(H[S]))}{w(S)}$,
where $H[S]$ is the subset of edges that are totally contained in $S$.
Inspired by the densest subset problem,
density decomposition 
(which
is applicable to hypergraphs) has been studied.

With the hypergraph interpretation,
one can readily define the density decomposition
for an instance $(\Izero, \Ione; \mathcal{F}; w)$.
However, under this approach, the roles of $\Izero$ 
and $\Ione$ are not symmetric: one of $\Izero$ and $\Ione$
plays the role of vertices and the other plays the role of hyperedges.
We show that it is possible to reinterpret the density
decomposition such that the two sets 
$\Izero$ and $\Ione$ play symmetric roles.
This symmetric interpretation will be a useful tool
when we study the universally closest distribution refinements problem in Section~\ref{sec:dist_refine}.

\noindent \textbf{Notation: Ground and Auxiliary Sets.}
For side~$\side \in \B$,
the \emph{open neighborhood} 
of a non-empty subset $S \subseteq \Ib$
consists of vertices on the other side that
have at least one neighbor in $S$, i.e.,
$\mcal{F}(S) := \{j \in \Iob:
\exists i \in S, i \sim j\}$.
Its \emph{closed neighborhood}
is $\mcal{F}[S] := \{j \in \mcal{F}(S): \mcal{F}(\{j\}) \subseteq S\}$,
i.e., a vertex~$j$ is in $\mcal{F}[S]$
\emph{iff} it has at least one neighbor and all its neighbors
are contained in~$S$.
Its density is defined as $\rho^{(\side)}(S) := \frac{w(\mcal{F}[S])}{w(S)}$,
where the superscript~$\side$ may be omitted when the context is clear.
When we consider the densities of subsets
in $\Ib$, the set $\Ib$ plays the role of the
\emph{ground set} and the set $\Iob$ plays the role of
the \emph{auxiliary set}.
Observe that this definition is consistent
with the hypergraph $(V, H)$ interpretation when $\side = 1$ plays the role of
the vertex set~$V$.

\noindent \emph{Extension to Empty Set.}
In some of our applications, it makes sense to have isolated vertices
in the input instance.
For $\emptyset \subseteq \Ib$, 
we extend $\mcal{F}(\emptyset) = \mcal{F}[\emptyset] :=
\{j \in \Iob: \forall i \in \Ib, 
j \not\sim i\}$.
For the purpose of defining $\rho(\emptyset)$,
we use the convention that $\frac{0}{0} = 0$ and 
$\frac{x}{0} = + \infty$ for $x > 0$.

\begin{definition}[Densest Subset]
\label{defn:densest}
When $\Ib$ plays the role of the ground set,
a subset $S \subseteq \Ib$ is a densest subset,
if it attains $\max_{S \subseteq \Ib} \rho^{(\side)}(S)$.
\end{definition}

The following well-known fact characterizes a property of maximal densest subset
(see~\cite[Lemma 4.1]{DBLP:conf/wsdm/BalalauBCGS15} for a proof).

\begin{fact}
With respect to set inclusion,
the maximal densest subset in $\Ib$ is unique
and contains all densest subsets.
%
\end{fact}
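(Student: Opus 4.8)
The plan is to prove the slightly stronger fact that the collection of densest subsets of $\Ib$ is closed under unions; uniqueness of the maximal densest subset and the property that it contains all densest subsets then follow at once. Write $\rho^* := \max_{S \subseteq \Ib} \rho^{(\side)}(S)$ for the optimal density and recall $\rho^{(\side)}(S) = w(\mcal{F}[S])/w(S)$. First I would record two elementary set identities for the closed-neighborhood operator: for all $S, T \subseteq \Ib$,
\[
\mcal{F}[S] \cap \mcal{F}[T] = \mcal{F}[S \cap T], \qquad \mcal{F}[S] \cup \mcal{F}[T] \subseteq \mcal{F}[S \cup T].
\]
Both are immediate from unfolding $\mcal{F}[U] = \{\, j \in \Iob : \mcal{F}(\{j\}) \neq \emptyset,\ \mcal{F}(\{j\}) \subseteq U \,\}$: a vertex whose (nonempty) neighborhood lies inside both $S$ and $T$ has it inside $S \cap T$, and one whose neighborhood lies inside $S$ (or inside $T$) has it inside $S \cup T$. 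Since $w$ is a nonnegative measure, the inclusion--exclusion identity $w(A) + w(B) = w(A \cup B) + w(A \cap B)$ combined with the two relations above yields the supermodular inequality $w(\mcal{F}[S \cup T]) + w(\mcal{F}[S \cap T]) \ge w(\mcal{F}[S]) + w(\mcal{F}[T])$, whereas the vertex weights are exactly modular, $w(S \cup T) + w(S \cap T) = w(S) + w(T)$.

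Given this, the ``uncrossing'' step is routine. Suppose $S$ and $T$ are both densest, so $w(\mcal{F}[S]) = \rho^* w(S)$ and $w(\mcal{F}[T]) = \rho^* w(T)$; by optimality of $\rho^*$ we also have $w(\mcal{F}[S \cup T]) \le \rho^* w(S \cup T)$ and $w(\mcal{F}[S \cap T]) \le \rho^* w(S \cap T)$. Chaining these with the two (in)equalities from the previous paragraph gives
\[
\rho^*\bigl(w(S)+w(T)\bigr) = w(\mcal{F}[S]) + w(\mcal{F}[T]) \le w(\mcal{F}[S \cup T]) + w(\mcal{F}[S \cap T]) \le \rho^*\bigl(w(S \cup T)+w(S \cap T)\bigr) = \rho^*\bigl(w(S)+w(T)\bigr),
\]
so every inequality is an equality; in particular $w(\mcal{F}[S \cup T]) = \rho^* w(S \cup T)$, i.e.\ $S \cup T$ is densest. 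Since $\Ib$ is finite, there are finitely many densest subsets, so their union $M$ is a finite union and hence, by iterating the closure property just established, itself densest. By construction $M$ contains every densest subset, and it is the unique maximal one: any maximal densest subset is contained in $M$, which is densest, and therefore equals $M$.

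The only points needing care are the degenerate corners of the definition. When $S \cap T = \emptyset$ the set $\mcal{F}[\emptyset]$ is the collection of isolated vertices of $\Iob$, which is empty under the standing assumption that the instance has no isolated vertices, so $w(\mcal{F}[\emptyset]) \le \rho^* w(\emptyset)$ reads $0 \le 0$ and the chain is unaffected; if isolated vertices were nonetheless allowed, the convention $x/0 = +\infty$ forces $\rho^* = +\infty$ and the statement is vacuous. Likewise $w(S \cup T) > 0$ since $S$ and $T$ are nonempty with positive vertex weights, so all ratios are well defined. I do not expect a genuine obstacle here: the one assertion worth spelling out is the supermodularity of $S \mapsto w(\mcal{F}[S])$ (equivalently, the two displayed set identities), and everything after that is the standard argument that the maximizers of a supermodular-over-modular ratio are closed under union.
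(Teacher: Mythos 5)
Your proof is correct. The paper does not prove this fact itself but defers to a citation (\cite[Lemma~4.1]{DBLP:conf/wsdm/BalalauBCGS15}), and your argument --- supermodularity of $S \mapsto w(\mcal{F}[S])$ via the identities $\mcal{F}[S]\cap\mcal{F}[T]=\mcal{F}[S\cap T]$ and $\mcal{F}[S]\cup\mcal{F}[T]\subseteq\mcal{F}[S\cup T]$, followed by the standard uncrossing of two maximizers --- is exactly the standard route, and you correctly dispose of the $S\cap T=\emptyset$ corner case under the paper's convention for $\rho(\emptyset)$ and its standing assumption that isolated vertices are removed.
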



\begin{definition}[Sub-Instance]
Given an instance $(\Izero, \Ione; \mcal{F}; w)$, the
subsets $S^{(0)} \subseteq \Izero$ and $S^{(1)} \subseteq \Ione$
naturally induces a sub-instance
$(S^{(0)}, S^{(1)}; \mcal{F}'; w')$,
where $\mcal{F}'$ and $w'$ are restrictions of $\mcal{F}$ and $w$, respectively,
to $S^{(0)} \cup S^{(1)}$.
\end{definition}


The following definition is essentially the same as the density decomposition of a hypergraph.


\begin{definition}[Density Decomposition with Ground Set]\label{def:decomposition}
	Given an instance $(\Izero, \Ione; \mcal{F}; w)$,
	suppose for $\side \in \B$, we consider $\Ib$ as the ground set.
	The density decomposition with $\Ib$ as the ground set is a sequence of pairs
	$\{(S^{(0)}_\ell, S^{(1)}_\ell)\}_{\ell \geq 1}$, together with 
	a density vector $\rho^{(\side)}_*: \Ib \rightarrow \R_+$, defined by the following iterative process.
	
	Initially, we set $G_0 := (\Izero, \Ione; \mcal{F}; w)$ to be the original given
	instance, and initialize $\ell = 1$.
	
	\begin{enumerate}
	
	\item If $G_{\ell-1}$ is an instance where the vertex sets on both sides are empty, the process stops.
	
	\item Otherwise, let $S^{(\side)}_\ell$ be the maximal densest subset in $G_{\ell-1}$ with side~$\side$ being the ground set, where $\rho_\ell$ is the corresponding density;
	let $S^{(\ob)}_\ell := \mcal{F}[S^{(\side)}_\ell]$ be its closed neighborhood
	in~$G_{\ell-1}$.
	
	For each $u \in S^{(\side)}_\ell$, define $\rho^{(\side)}_*(u) := \rho_\ell = \frac{w(S^{(\ob)}_\ell)}{w(S^{(\side)}_\ell)}$.
	
	\item Let $G_\ell$ be the sub-instance induced by removing
	$S^{(\side)}_\ell$ and	$S^{(\ob)}_\ell$ from $G_{\ell-1}$;
	note that edges in $\mcal{F}$ incident to removed vertices are also removed.
	
	Increment $\ell$ and go to the next iteration.
	
	\end{enumerate}
\end{definition} 

\begin{remark}[Isolated Vertices]
Observe that if each side~$\side$ has isolated vertices $U^{(\side)}$ that
have no neighbors, then they will also appear at the beginning and the end
of the decomposition sequence as $(U^{(0)}, \emptyset)$ and $(\emptyset, U^{(1)})$.

When $\Ib$ is the ground set,
the density $\rho^{(\side)}$ of vertices in $U^{(\side)}$ is zero.
In the rest of this section, we may assume that isolated vertices are removed from the instance.
\end{remark}

\noindent \textbf{Symmetry in Density Decomposition.}
In Definition~\ref{def:decomposition},
one side~$\side \in \B$ plays the role of the ground set.
One of our major discoveries is that if we switch the ground set to the
other side~$\ob$,
the same sequence $\{(S^{(0)}_\ell, S^{(1)}_\ell)\}_{\ell \geq 1}$
will be produced in the density decomposition, but in the reversed order.
To show this, we need a result that relates
the local maximin condition in Definition~\ref{defn:maximin}
with the density decomposition in Definition~\ref{def:decomposition}.
This is already known in a previous work~\cite{DBLP:conf/www/DanischCS17},
but we include the details in Section~\ref{sec:maximin_decomp}.

\subsection{Local Maximin Condition Implies Density Decomposition}
\label{sec:maximin_decomp}

The following fact has been proved in~\cite{DBLP:conf/www/DanischCS17}
when the vertices in the ground set have uniform weights.
For completeness, we are going to extend the approach
where both sides have general positive weights.

\begin{fact}[Local Maximin Condition Gives Density Decomposition]
\label{fact:maximin_decomp}
Given an instance $(\Izero, \Ione; \mcal{F}; w)$,
consider side~$\side \in \B$ as the ground set.
Suppose $\alpha^{(\ob)} \in \R^{\mcal{F}}$
is a refinement of the vertex weights $w^{(\ob)}$ on side~$\ob$,
which induces the payload density vector $\rho^{(\side)}$ on side~$\side$
as in Definition~\ref{defn:refinement}.

Suppose further that $\alpha^{(\ob)}$ satisfies 
the local maximin condition in Definition~\ref{defn:maximin}.
Then, $\rho^{(\side)}$ coincides exactly with the density vector $\rho^{(\side)}_*$
achieved by the density decomposition in Definition~\ref{def:decomposition}
with side~$\side$ being the ground set.

Moreover, 
if $(S^{(\ob)}_\ell, S^{(\side)}_\ell)$ is a pair in the density decomposition,
then $S^{(\ob)}_\ell$ is exactly the collection of vertices
that send positive payload to $S^{(\side)}_\ell$ in $\alpha^{(\ob)}$.
\end{fact}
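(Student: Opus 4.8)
The plan is to prove this by induction on the number of iterations of the density decomposition, with the key being to identify which vertices on the ground side receive the \emph{minimum} payload density under a locally maximin refinement. First I would establish the base case: let $\rho_1 = \min_{u \in \Ib} \rho^{(\side)}(u)$ be the minimum payload density induced by $\alpha^{(\ob)}$, and let $T := \{u \in \Ib : \rho^{(\side)}(u) = \rho_1\}$ be the set achieving it. I claim $T$ is precisely the maximal densest subset $S^{(\side)}_1$ with value $\rho_1$. The two directions use the ``maximin'' structure. For the inequality $\rho^{(\side)}(T) \geq \rho_1$: by the local maximin condition, every vertex $j \in \Iob$ that sends positive payload into $\Ib$ must send it only to neighbors in $T$ (since those achieve the global minimum density, hence certainly the minimum among $j$'s neighbors), provided $j$ has a neighbor in $T$; but one must argue that such $j$ in fact has \emph{all} its positive-payload recipients inside $T$, so that the full weight $w(j)$ is counted. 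Summing $\alpha^{(\ob)}$ over edges incident to $T$, the total payload into $T$ is $\sum_{u \in T} p^{(\side)}(u) = \rho_1 \cdot w(T)$, and this payload comes entirely from vertices $j$ all of whose (positive-payload) neighbors lie in $T$; showing this set of $j$'s contains $\mcal{F}[T]$ — or rather that $w$ of the contributing $j$'s is at least $w(\mcal{F}[T])$ — gives $\rho_1 \cdot w(T) \geq w(\mcal{F}[T])$, i.e. $\rho^{(\side)}(T) \geq \rho_1$. For the reverse, that no subset beats $\rho_1$: for any $S \subseteq \Ib$, the payload into $S$ is at most $w(\mcal{F}[S]) + (\text{contributions from boundary vertices } j \in \mcal{F}(S)\setminus\mcal{F}[S])$, wait—more carefully, the payload into $S$ is at most $w(\mcal{F}(S))$ but we want it bounded by something involving $\mcal{F}[S]$. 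The right bound: each $u \in S$ has $p^{(\side)}(u) \geq \rho_1 w(u)$ is the wrong direction; instead observe payload into $S$ equals $\sum_{j} (\text{weight } j \text{ sends into } S) \leq \sum_{j \in \mcal{F}[S]} w(j) + \sum_{j \in \mcal{F}(S) \setminus \mcal{F}[S]} (\text{partial})$, and combined with $\sum_{u\in S} p^{(\side)}(u) \ge \rho_1 w(S)$ only if all densities in $S$ are $\ge \rho_1$, which holds by choice of $\rho_1$ as the global min. Chaining these: $\rho_1 w(S) \le \sum_{u \in S} p^{(\side)}(u) = (\text{payload into }S)$, and I need payload into $S \le w(\mcal{F}[S]) + \rho_1 \cdot(\text{something})$ — the cleanest route is the LP/flow duality already invoked in Section~\ref{sec:hypergraph}, but since the excerpt wants a self-contained combinatorial argument I would instead directly use that any $j \notin \mcal{F}[S]$ with a neighbor outside $S$ either sends nothing into $S$ (if its minimum-density neighbor is outside $S$) — this is where local maximin bites.

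Next, having shown $S^{(\side)}_1 = T$ and $\rho^{(\side)}_*(u) = \rho_1$ for $u \in T$, I would identify $S^{(\ob)}_1 = \mcal{F}[T]$ and argue it is exactly the set of $j \in \Iob$ sending positive payload to $T$: every such $j$ lies in $\mcal{F}[T]$ (all its positive-payload neighbors, hence by tightness all its neighbors, are minimum-density, so in $T$) — and conversely every $j \in \mcal{F}[T]$, having only neighbors in $T$, must send all of $w(j)$ into $T$. Then I would verify that restricting $\alpha^{(\ob)}$ to the edges \emph{not} incident to $S^{(\side)}_1 \cup S^{(\ob)}_1$ yields a refinement of the sub-instance $G_1$ that is still locally maximin there: the payload densities of the remaining ground vertices are unchanged (they received nothing from $\mcal{F}[T]$, which is exactly what's removed), and any remaining $j \in \Iob$ still sends its weight only to its minimum-density neighbors, all of which survive in $G_1$ since $T$ consisted of the strictly-lower-density vertices. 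Applying the induction hypothesis to $G_1$ finishes the claim, including the ``moreover'' part about $S^{(\ob)}_\ell$ being the set of positive-payload senders into $S^{(\side)}_\ell$.

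The main obstacle I anticipate is the reverse inequality in the base case — proving that \emph{no} subset $S \subseteq \Ib$ has density exceeding $\rho_1 = \min_u \rho^{(\side)}(u)$, and hence that $T$ (the minimizer set) is actually a \emph{densest} subset rather than merely a candidate. The delicate point is accounting for the payload that boundary vertices $j \in \mcal{F}(S) \setminus \mcal{F}[S]$ contribute to $S$: the local maximin condition must be used to show that such a $j$ contributes to $S$ only when \emph{all} of $j$'s minimum-density neighbors lie in $S$, but then those neighbors have density $\rho^{(\side)}(\cdot) \le \rho^{(\side)}(\text{any node in }S)$, and a careful double-counting of $\sum_{u \in S} p^{(\side)}(u)$ splitting the incoming payload by whether the source is in $\mcal{F}[S]$ or not should yield $w(\mcal{F}[S]) \ge \rho^{(\side)}(S) \cdot w(S) - (\text{slack that is } \le 0$ when densities in $S$ exceed $\rho_1)$. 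Getting the inequality to point the right way — and in particular handling the edge case where $S$ contains vertices of density exactly $\rho_1$ as well as strictly larger — is the part that needs the most care; everything else is bookkeeping that the induction carries cleanly. I would double-check this step against the known proof in~\cite{DBLP:conf/www/DanischCS17} for the uniform-weight case and verify the weighted generalization introduces no new difficulty beyond replacing cardinalities with $w(\cdot)$.
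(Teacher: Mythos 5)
There is a genuine gap, and it is directional: you identify the first peeled-off set $S^{(\side)}_1$ (the \emph{maximal densest} subset) with the set $T$ of ground vertices attaining the \emph{minimum} payload density $\rho_1 = \min_u \rho^{(\side)}(u)$. This is backwards. Under Definition~\ref{def:decomposition} the first set has the \emph{largest} density value, and under a locally maximin refinement it is the set $S = \arg\max_{u} \rho^{(\side)}(u)$ of \emph{maximum} payload density that forms the maximal densest subset (this is exactly the paper's Lemma~\ref{lemma:maximin_densest}). Your own first computation already refutes your claim: if $j \in \Iob$ has any neighbor in $T$, then since $T$ attains the global minimum density, local maximin forces $j$ to send \emph{all} of $w(j)$ into $T$; hence $\rho_1 \cdot w(T) = \sum_{u \in T} p^{(\side)}(u) = w(\mcal{F}(T)) \geq w(\mcal{F}[T])$, which gives $\rho^{(\side)}(T) = w(\mcal{F}[T])/w(T) \leq \rho_1$. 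So $T$ has density \emph{at most} the minimum payload density, whereas the maximum subset density equals $\max_u \rho^{(\side)}(u)$; unless all payload densities coincide, $T$ is not densest. (Concretely, in the instance of Figure~\ref{fig:counterexample} the minimum-density vertex is $j_3$ with $\rho(j_3)=\tau(1+\tau)$, but the maximal densest subset is $\{j_1,j_2\}$ with density $1$.) The ``reverse inequality'' you flag as the hard step is hard because it is false as you have set it up. The same directional error breaks your inductive step: a sender $j \in \mcal{F}(T)\setminus\mcal{F}[T]$ survives the removal of $T \cup \mcal{F}[T]$ but has sent its entire weight into the removed set, so the restriction of $\alpha^{(\ob)}$ to the sub-instance is no longer a refinement of $w(j)$. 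Peeling the \emph{maximum}-density set $S$ avoids this: local maximin implies a sender with any neighbor outside $S$ sends nothing into $S$, so the restriction to the sub-instance remains a genuine locally maximin refinement.

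For comparison, the paper peels off $S = \arg\max_u \rho^{(\side)}(u)$ and proves it is the maximal densest subset by exhibiting the refinement as a feasible solution to the dual of Charikar's densest-subset LP with objective value $\rho_G(S)$, matching the primal value of $S$; maximality is obtained by a weak-duality contradiction on the sub-instance. It then inducts on the number of distinct payload-density values, essentially as you do but from the correct end. If you reverse your choice of $T$ to the maximum-density set, your combinatorial accounting (payload into $S$ comes exactly from $\mcal{F}[S]$, so $\rho(u)=\rho_G(S)$ on $S$) does go through and gives an LP-free proof of the ``densest'' part, but you would still need a separate argument for maximality of $S$.
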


\noindent \textbf{LP Relaxation of Densest Subset.}
For ease of illustration,
we will consider $\side=1$ as the ground set.
In terms of the hypergraph interpretation,
the reader can treat side~$\ob = 0$
as the hyperedges and side~$\side$ as vertices.
To simplify the notation,
we will omit some superscripts
and denote $\alpha = \alpha^{(0)}$ as the refinement of $w^{(0)}$
and $\rho = \rho^{(1)}$ as the density vector for $\Ione$.
An LP relaxation for the densest subset problem has
been considered~\cite{10.1007/3-540-44436-X_10}
and we rephrase it with $\Ione$ being the ground set
in an instance $G = (\Izero, \Ione; \mcal{F}; w)$ as follows.
It is known that the optimal objective value of $\LP(G)$ is exactly the 
maximum density of a subset.

\begin{align*}
	\LP(G): \quad \max \sum_{i \in \Izero} w(i) \cdot x_i\\
	\text{s.t.} \quad x_i & \leq y_j, & \forall i \in \Izero \sim j \in \Ione \\
	\sum_{j \in \Ione} w(j) \cdot y_j & = 1 & \\
	x_i, y_j & \geq 0, &\forall i \in \Izero, j \in \Ione
\end{align*}

The corresponding dual $\DP$ can be expressed as
follows, where the dual variables $\alpha$ naturally corresponds
to a refinement of $w^{(0)} \in \R^{\Izero}$.
Because $\DP$ is a minimization LP, in an optimal solution,
equality should hold for the first constraint.

\begin{align*}
	\DP(G): \quad \min \quad  r \\
	\text{s.t.} \quad \sum_{j \in \Ione: i\sim j} \alpha(ij) & \geq w(i), &\quad \forall i \in \Izero\\
	 w(j) \cdot r  & \geq p(j) := \sum_{i \in \Izero: i \sim j} \alpha(ij), &\quad \forall j \in \Ione \\
	\alpha(ij) & \geq 0, &\quad \forall i \in \Izero \sim j \in \Ione \\
	r & \in \R &
\end{align*}

\begin{lemma}[Maximin Condition Recovers Maximal Densest Subset]
\label{lemma:maximin_densest}
Suppose $\alpha \in \R^\mcal{F}$ is a refinement
of vertex weights $w^{(0)}$ in $\Izero$, which induces a payload density vector $\rho$
on the vertices~$\Ione$.
Suppose $S := \arg \max_{j \in \Ione} \rho(j)$ are the vertices
with the maximum payload density.  

In addition, if $\alpha$ satisfies
the local maximin condition in Definition~\ref{defn:maximin}, then $S$ is the maximal densest
subset in $\Ione$, whose density equals to the payload density $\rho(j)$ for $j \in S$.
\end{lemma}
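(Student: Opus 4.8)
The plan is to prove Lemma~\ref{lemma:maximin_densest} by exhibiting, from the locally maximin refinement $\alpha$, a matching pair of primal and dual solutions to $\LP(G)$ and $\DP(G)$ that certifies $\rho(S)$ is the maximum density and that $S$ is the densest subset. Let $\rho_{\max} := \max_{j \in \Ione} \rho(j)$ and $S := \arg\max_{j \in \Ione} \rho(j)$. First I would observe that $\alpha$, being a refinement, is a feasible dual solution together with $r := \rho_{\max}$: the first family of constraints holds with equality since a refinement distributes \emph{all} of $w(i)$, and the second family holds because $w(j) \cdot \rho_{\max} \geq w(j)\cdot\rho(j) = p(j)$ for every $j$. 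Hence the optimal density is at most $\rho_{\max}$.

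\textbf{Building the matching primal solution.} The core of the argument is to build a primal solution of value exactly $\rho_{\max}$. Take the indicator of $S$, scaled: set $y_j := \frac{1}{w(S)}$ for $j \in S$ and $y_j := 0$ otherwise, and set $x_i := \frac{1}{w(S)}$ if all neighbors of $i$ lie in $S$ (i.e. $i \in \mcal{F}[S]$), and $x_i := 0$ otherwise. The normalization constraint $\sum_j w(j) y_j = 1$ is immediate. For the covering constraint $x_i \leq y_j$ whenever $i \sim j$: if $x_i > 0$ then $i \in \mcal{F}[S]$ so every neighbor $j$ of $i$ is in $S$, giving $y_j = \frac{1}{w(S)} = x_i$; if $x_i = 0$ the constraint is trivial since $y_j \geq 0$. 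The objective value is $\frac{w(\mcal{F}[S])}{w(S)} = \rho^{(1)}(S)$ by definition. It remains to show this equals $\rho_{\max}$, which is where the local maximin condition enters: I claim $p(S) := \sum_{j \in S} p(j) = w(\mcal{F}[S])$. The inequality $p(S) \leq w(\mcal{F}[S])$ is not automatic — a priori some $i \notin \mcal{F}[S]$ could send payload into $S$ — so here one uses that $i$ sends positive payload only to its minimum-density neighbors; since any $i$ with a neighbor in $S$ (where density is maximal $=\rho_{\max}$) either has \emph{all} neighbors in $S$ (and then its whole weight $w(i)$ goes into $S$) or has a neighbor outside $S$ of strictly smaller density (and then it sends \emph{nothing} into $S$). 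Thus $p(S) = \sum_{i \in \mcal{F}[S]} w(i) = w(\mcal{F}[S])$, and since also $p(S) = \sum_{j\in S} w(j)\rho(j) = \rho_{\max}\cdot w(S)$, we get $\rho^{(1)}(S) = \rho_{\max}$. So the primal objective matches the dual objective, proving both are optimal and $\rho_{\max}$ is the maximum density; in particular $S$ is a densest subset.

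\textbf{Maximality of $S$.} Finally I would argue $S$ is the \emph{maximal} densest subset. By the cited Fact, the maximal densest subset $S^\star$ is unique and contains every densest subset, so $S \subseteq S^\star$. For the reverse inclusion, suppose some $j^\star \in S^\star \setminus S$, so $\rho(j^\star) < \rho_{\max}$. Using complementary slackness (or directly re-running the density/payload bookkeeping on $S^\star$): since $S^\star$ is densest, $w(\mcal{F}[S^\star]) = \rho_{\max} \cdot w(S^\star)$, while the payload argument gives $p(S^\star) \leq w(\mcal{F}[S^\star])$ and also $p(S^\star) = \sum_{j \in S^\star} w(j)\rho(j) \leq \rho_{\max} w(S^\star)$ with the last inequality \emph{strict} because $\rho(j^\star) < \rho_{\max}$ and $w(j^\star) > 0$. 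But actually we need a lower bound on $p(S^\star)$: again by local maximin, every $i \in \mcal{F}[S^\star]$ sends its entire weight into $S^\star$, so $p(S^\star) \geq w(\mcal{F}[S^\star]) = \rho_{\max} w(S^\star)$, contradicting the strict inequality. Hence $S^\star \setminus S = \emptyset$ and $S = S^\star$ is the maximal densest subset, with common density $\rho(j) = \rho_{\max}$ for $j \in S$.

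I expect the main obstacle to be the payload-accounting step — carefully using the local maximin condition to pin down $p(S) = w(\mcal{F}[S])$ (and the analogous lower bound $p(S^\star) \geq w(\mcal{F}[S^\star])$), since this is precisely the place where "a vertex sends weight only to minimum-density neighbors" has to be converted into an exact conservation identity; the LP-duality scaffolding around it is routine.
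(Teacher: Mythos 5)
Your proof is correct. The first half (certifying that $S$ is a densest subset) is essentially the paper's argument: both of you observe that $\alpha$ with $r=\rho_{\max}$ is dual-feasible, that local maximin forces every vertex of $S$ to receive payload only from $\mcal{F}[S]$ (so $p(S)=w(\mcal{F}[S])$ and hence $\rho_G(S)=\rho_{\max}$), and then close the gap with the primal solution induced by $S$; you merely write out the primal vector $(x,y)$ explicitly where the paper leaves it implicit. Where you genuinely diverge is the maximality step. The paper assumes a disjoint nonempty $B$ with $\rho_G(S\cup B)=\rho_G(S)$, passes to the sub-instance $\widehat{G}$ obtained by deleting $S$ and $\mcal{F}[S]$, and derives a contradiction with weak duality there (the restricted $\widehat{\alpha}$ is dual-feasible for $\DP(\widehat{G})$ with objective strictly below $\rho_G(S)$, while $B$ witnesses a primal value of at least $\rho_G(S)$). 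You instead run a direct counting argument on the maximal densest subset $S^\star \supseteq S$: every $i\in\mcal{F}[S^\star]$ pours its entire weight into $S^\star$, so $p(S^\star)\ge w(\mcal{F}[S^\star])=\rho_{\max}\,w(S^\star)$, whereas the presence of any $j^\star\in S^\star\setminus S$ with $\rho(j^\star)<\rho_{\max}$ forces $p(S^\star)=\sum_{j\in S^\star}w(j)\rho(j)<\rho_{\max}\,w(S^\star)$. This is more elementary (no second invocation of LP duality, and the lower bound on $p(S^\star)$ uses only the definition of a refinement, not local maximin), at the cost of leaning on the cited uniqueness/containment fact to know $S\subseteq S^\star$; the paper's sub-instance route has the side benefit of setting up the induction used immediately afterwards in the proof of Fact~\ref{fact:maximin_decomp}.
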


\begin{proof}
We first show that $S$ is a densest subset.
Note that the refinement $\alpha$ of $w^{(0)}$ is a feasible
dual solution to $\DP(G)$. 
Because $\alpha$ satisfies the local maximin condition in Definition~\ref{defn:maximin},
it follows that any vertex~$j \in S = \arg \max_{j \in \Ione} \rho(j)$
can receive positive payload $\alpha(ij) > 0$ only from $i$ in the closed neighborhood $\mcal{F}[S]$.  Hence, it follows that for all $j \in S$, $\rho(j) = \rho_G(S) := \frac{w(\mcal{F}[S])}{w(S)}$.

Moreover, since $S$ are the vertices in $\Ione$ with 
the maximum payload density, it follows that we can set
the objective value $r = \rho_G(S)$ for the dual feasible solution
$\alpha$.  On the other hand, $\rho_G(S)$ is the density of the subset~$S \subseteq \Ione$,
which induces a feasible primal solution.
Since we have exhibited a primal-dual pair with the same objective value,
it follows that both solutions are optimal, i.e., $S$ is a densest subset.

We next show that $S$ is the maximal densest subset.
For the sake of contradiction, assume that
there exists a non-empty $B \subseteq \Ione$ disjoint from $S$
such that the density of $S \cup B$ in $G$ is also $\rho_G(S)$, i.e,
we have 

$\rho_G(S \cup B) = \frac{w(\mcal{F}[S]) + w(\mcal{F}[S \cup B] \setminus \mcal{F}[S])}{w(S) + w(B)} = \rho_G(S)$, which implies that
$\frac{w(\mcal{F}[S \cup B] \setminus \mcal{F}[S])}{w(B)} = \rho_G(S)$.

Consider the sub-instance $\widehat{G}$ obtained
by removing $S$ and $\mcal{F}[S]$ from $G$.
Then, we can conclude that the objective
value of $\LP(\widehat{G})$ is at least $\rho_G(S)$, because we have $\rho_{\widehat{G}}(B) = \rho_G(S)$.

Next, consider the dual $\DP(\widehat{G})$.
Because $\alpha$ satisfies the local maximin condition,
this means that for any $i \notin \mcal{F}[S]$ and $j \in S$,
$\alpha(ij) = 0$.  Therefore,
the natural restriction $\widehat{\alpha}$ of $\alpha$ to $\widehat{G}$
is a feasible dual solution to $\LP(\widehat{G})$.
However, observe that since all vertices $S$ with the maximum
payload density has been removed,
the corresponding objective value of the dual solution $\widehat{\alpha}$
will drop strictly below $\rho_G(S)$.  This violates the weak duality
between $\LP(\widehat{G})$ and $\DP(\widehat{G})$.

Hence, we conclude that $S$ is the maximal densest subset in $G$.
\end{proof}

\noindent \textbf{Proof of Fact~\ref{fact:maximin_decomp}.}
We can prove the result by induction on the
number of distinct values appearing in
the payload density vector $\rho$ on $\Ione$ induced
by the refinement $\alpha$ of the vertex weights $w^{(0)}$ on $\Izero$.
For the base case when all vertices~$j \in \Ione$ have the same
payload density $\rho(j)$, Lemma~\ref{lemma:maximin_densest}
implies that the decomposition consists of only one pair $(\Izero, \Ione)$.

For the inductive case, suppose $S = \arg \max_{j \in \Ione} \rho(j)$
as defined in Lemma~\ref{lemma:maximin_densest}.  Then,
the first pair in the decomposition is $(\mcal{F}[S], S)$.
As argued in the proof of Lemma~\ref{lemma:maximin_densest},
the result follows from the induction hypothesis
on the sub-instance $\widehat{G}$ by removing $(\mcal{F}[S], S)$ from the original instance.
The reason this works is because $\alpha$ satisfies
 the local maximin condition,
for $i \notin \mcal{F}[S] \sim j \in S$,
$\alpha(ij) = 0$.  Hence,
 the restriction $\widehat{\alpha}$
of $\alpha$ to the sub-instance $\widehat{G}$ is
also a refinement of the vertex weights on side~$0$ that
satisfies the local maximin condition.
\hfill \qed

Fact~\ref{fact:maximin_decomp} states the desirable
properties achieved by a locally maximin refinement.
It is easy to show that a locally maximin refinement exists.

\begin{fact}[Existence of Locally Maximin Refinement]
\label{fact:exist_maximin}
There exists a refinement $\alpha^{(0)}$
of the vertex weights in $\Izero$ that is locally maximin.
\end{fact}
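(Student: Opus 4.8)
The plan is to obtain a locally maximin refinement as the optimizer of a strictly convex potential, thereby getting existence (and in fact essential uniqueness of the induced densities) for free. Concretely, I would consider the quadratic program already alluded to in Section~\ref{sec:hypergraph}: over the polytope of refinements $\alpha^{(0)} \in \R^{\mcal{F}}_{\geq 0}$ of the weights $w^{(0)}$ on $\Izero$, minimize $\Phi(\alpha^{(0)}) := \sum_{j \in \Ione} w(j) \cdot \rho^{(1)}(j)^2 = \sum_{j \in \Ione} \frac{p^{(1)}(j)^2}{w(j)}$, where $p^{(1)}(j) = \sum_{i \sim j} \alpha^{(0)}(ij)$ is the induced payload. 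The feasible region is a nonempty (each non-isolated $i$ can split $w(i)$ arbitrarily among its incident edges, e.g.\ evenly), closed and bounded polytope, and $\Phi$ is continuous, so a minimizer $\alpha^{(0)}$ exists by compactness.

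The main step is to verify that any minimizer satisfies the local maximin condition of Definition~\ref{defn:maximin}. I would argue by a local exchange argument: suppose at the optimum there is a vertex $i \in \Izero$ and neighbors $j, \ell \sim i$ with $\alpha^{(0)}(ij) > 0$ but $\rho^{(1)}(j) > \rho^{(1)}(\ell)$. Shift an infinitesimal amount $\eta > 0$ of weight from edge $ij$ to edge $i\ell$; this keeps $\alpha^{(0)}$ a feasible refinement (the constraint at $i$ is preserved, all entries stay nonnegative for small $\eta$), and the first-order change in $\Phi$ is $2\eta\big(\rho^{(1)}(\ell) - \rho^{(1)}(j)\big) < 0$, contradicting optimality. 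Hence every edge carrying positive weight goes to a neighbor of $i$ with minimum payload density, which is exactly the local maximin condition. (Computing the gradient: $\partial \Phi / \partial \alpha^{(0)}(ij) = 2 p^{(1)}(j)/w(j) = 2\rho^{(1)}(j)$, so the directional derivative along the exchange direction $e_{i\ell} - e_{ij}$ is $2(\rho^{(1)}(\ell) - \rho^{(1)}(j))$.)

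The only genuinely delicate point is handling vertices $j \in \Ione$ that receive zero payload, where $\rho^{(1)}(j) = 0$ and the gradient argument still applies without trouble (zero is the minimum possible density), and isolated vertices on side~$0$, which carry no edges and are vacuously fine; the Remark on isolated vertices already lets us assume these away if desired. I do not expect a real obstacle here — the argument is a textbook first-order optimality / convexity argument — so the ``hard part'' is merely being careful that the exchange direction stays inside the polytope and that the density ordering referenced in Definition~\ref{defn:maximin} is over the \emph{neighbors of $i$}, which is precisely what the local perturbation sees. An alternative, purely combinatorial proof would run a continuous ``water-filling'' / flow-augmentation process that repeatedly reroutes weight from high-density to low-density neighbors and argue termination via the monotone decrease of $\Phi$; I would mention this as a remark but present the convex-optimization version as the clean proof.
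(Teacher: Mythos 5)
Your proof is correct, but it takes a genuinely different route from the paper's. The paper proves Fact~\ref{fact:exist_maximin} by induction on the number of pairs in the density decomposition: at each level it takes an optimal solution $\alpha^*$ to the densest-subset dual $\DP(G)$, uses strong LP duality to argue that the payload densities must be uniform on the maximal densest subset $S$ (else the dual objective would exceed $\frac{w(\mcal{F}[S])}{w(S)}$), extracts from this a refinement of the weights in $\mcal{F}[S]$, and recurses on the sub-instance obtained by deleting $(\mcal{F}[S], S)$. You instead minimize the quadratic potential $\Phi(\alpha^{(0)}) = \sum_{j} p^{(1)}(j)^2/w(j)$ over the compact polytope of refinements and verify local maximin-ness of any minimizer by a first-order exchange argument; your gradient computation and the sign of the directional derivative along $e_{i\ell} - e_{ij}$ are right, and the degenerate cases (zero-payload and isolated vertices) are handled correctly. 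Your argument is shorter, non-inductive, and dovetails with the convex program $\mathsf{CP}(H)$ that the paper itself invokes later (Section~\ref{sec:additive_error}) and attributes to Fujishige and Danisch et al.; indeed the paper explicitly notes in the introduction that optimality for that quadratic program is \emph{equivalent} to the local maximin condition, so you are essentially supplying the easy direction of that equivalence plus compactness. What the paper's inductive LP-duality proof buys in exchange is that it constructs the refinement level-by-level along the density decomposition, making the correspondence between the refinement and the decomposition structure explicit in the construction itself — though the paper establishes that correspondence separately in Fact~\ref{fact:maximin_decomp} anyway, so nothing is lost by your approach.
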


\begin{proof}
We prove by induction on the number~$\ell$ of pairs
in the density decomposition of the instance~$G$.

Consider an optimal dual solution $\alpha^*$ to 
the dual $\DP(G)$.  By strong LP duality,
we know that the objective value of the dual
equals to the density $\frac{w(\mcal{F}[S])}{w(S)}$
of the maximal densest subset $S \subseteq \Ione$.

Hence, when we consider the restriction
of $\alpha^*$ to the edges between $\mcal{F}[S] \subseteq \Izero$
and $S \subseteq \Ione$,
it must be the case that the payload density
of every vertex in $S$ is the same;
otherwise, some vertex in $S$ will have payload
density strictly larger than $\frac{w(\mcal{F}[S])}{w(S)}$,
which makes the dual objective value also
strictly larger than $\frac{w(\mcal{F}[S])}{w(S)}$.
Therefore, we can use this restriction to define
a refinement $\alpha'$ for the vertex weights in $\mcal{F}[S]$.

For the case $\ell = 1$ where $\mcal{F}[S] = \Izero$ and $S = \Ione$,
the base case is completed.

For $\ell \geq 2$,
we consider the sub-instance $\widehat{G}$
by removing $\mcal{F}[S]$ and $S$ from the original instance~$G$,
and apply the induction hypothesis to $\widehat{G}$
and obtain a locally maximin refinement $\widehat{\alpha}$
of vertex weights in $\Izero \setminus \mcal{F}[S]$.

Observe that we can combine $\alpha'$ and $\widehat{\alpha}$
to get a refinement~$\alpha^{(0)}$ for the vertex weights in $\Izero$.
We check that $\alpha^{(0)}$ is locally maximin,
because the payload density of every vertex in $S$
is strictly larger than that of every vertex in $\Ione \setminus S$,
by the property of density decomposition.
\end{proof}

\subsection{Density Decomposition Is Symmetric}
\label{sec:intro_decomp}

Since we have shown in
Lemma~\ref{lemma:maximin_densest} that a refinement that
satisfies the local maximin condition is related
to the density decomposition in Definition~\ref{def:decomposition},
we will illustrate the symmetry of the density decomposition
by considering the proportional response to
a locally maximin refinement.

\begin{lemma}[Proportional Response to Locally Maximin Refinement]
\label{lem:local_maximin_sym}
Let $\alpha^{(0)}$ be a refinement
of vertex weights in $\Izero$
 that satisfies the local maximin condition. 
By Remark~\ref{remark:prop_resp}, 
suppose the refinement $\alpha^{(1)}$ on vertex weights in $\Ione$
is the unique proportional response to $\alpha^{(0)}$.
Moreover, for each side $\side \in \B$,
the refinement $\alpha^{(\side)}$ induces
the payload density vector $\rho^{(\ob)}$ on vertices on
the other side~$\ob$.
Then, the following holds:

\begin{compactitem}

\item For any $i \in \Izero$ and $j \in \Ione$ such that $i \sim j$ and $\alpha^{(0)}(ij) > 0$, their received payload densities due to
the refinements satisfy $\rho^{(0)}(i) = \frac{1}{\rho^{(1)}(j)}$.

\item  The refinement  $\alpha^{(1)}$ also satisfies the local maximin condition.

\item We get back $\alpha^{(0)}$ as the proportional response to $\alpha^{(1)}$.

\end{compactitem}
	
\end{lemma}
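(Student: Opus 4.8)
The plan is to reduce all three bullet points to a single short computation that determines, for each non-isolated $i \in \Izero$, the payload density $\rho^{(0)}(i)$ that the proportional response $\alpha^{(1)}$ induces, expressed in terms of the data of $\alpha^{(0)}$.

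First, I would establish the key density-inversion identity. Fix a non-isolated $i \in \Izero$ and set $\mu_i := \min_{\ell \in \Ione:\, \ell \sim i} \rho^{(1)}(\ell)$, the least payload density among $i$'s neighbours. Because $\alpha^{(0)}$ is locally maximin, $\alpha^{(0)}(ij) > 0$ forces $\rho^{(1)}(j) = \mu_i$. Substituting the proportional-response formula $\alpha^{(1)}(ij) = \alpha^{(0)}(ij)/\rho^{(1)}(j)$ (with $\alpha^{(1)}(ij) = 0$ whenever $\alpha^{(0)}(ij) = 0$) into the definition of the payload on $\Izero$ gives
\[
p^{(0)}(i) = \sum_{j \sim i} \alpha^{(1)}(ij) = \frac{1}{\mu_i}\sum_{j \sim i:\, \alpha^{(0)}(ij) > 0} \alpha^{(0)}(ij) = \frac{1}{\mu_i}\sum_{j \sim i} \alpha^{(0)}(ij) = \frac{w(i)}{\mu_i},
\]
hence $\rho^{(0)}(i) = 1/\mu_i$. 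Everything here is finite and strictly positive: since $\alpha^{(0)}$ is locally maximin, every non-isolated vertex of $\Ione$ receives a strictly positive payload (Remark~\ref{remark:prop_resp}), so every neighbour $\ell$ of $i$ has $\rho^{(1)}(\ell) > 0$ and therefore $\mu_i > 0$.

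With that in hand, the three claims follow quickly. The first bullet is immediate: if $\alpha^{(0)}(ij) > 0$ then $\rho^{(1)}(j) = \mu_i$, so $\rho^{(0)}(i) = 1/\mu_i = 1/\rho^{(1)}(j)$. For the second bullet, suppose $\alpha^{(1)}(ij) > 0$; then $\alpha^{(0)}(ij) > 0$, so $\rho^{(0)}(i) = 1/\rho^{(1)}(j)$ by the first bullet. For any neighbour $k \sim j$ of $j$, the identity above applied to $k$ gives $\rho^{(0)}(k) = 1/\mu_k$, and since $j$ is itself one of $k$'s neighbours we have $\mu_k \le \rho^{(1)}(j)$, whence $\rho^{(0)}(k) \ge 1/\rho^{(1)}(j) = \rho^{(0)}(i)$; this says exactly that $i \in \arg\min_{k \sim j} \rho^{(0)}(k)$, i.e. $\alpha^{(1)}$ is locally maximin. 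For the third bullet, local maximinality of $\alpha^{(1)}$ (plus Remark~\ref{remark:prop_resp} with the roles of the two sides swapped) makes its proportional response $\beta^{(0)}$ unique, with $\beta^{(0)}(ij) = \alpha^{(1)}(ij)/\rho^{(0)}(i)$; substituting $\alpha^{(1)}(ij) = \alpha^{(0)}(ij)/\mu_i$ and $\rho^{(0)}(i) = 1/\mu_i$ when $\alpha^{(0)}(ij) > 0$ (both sides are $0$ otherwise) yields $\beta^{(0)}(ij) = \alpha^{(0)}(ij)$, so $\beta^{(0)} = \alpha^{(0)}$.

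The computation is brief, so the only routine care needed is bookkeeping: confirming that the relevant proportional responses exist and are unique and that no density is $0$ or $\infty$, all of which follow once Remark~\ref{remark:prop_resp} is invoked. The one conceptually non-obvious point — and what I would flag as the crux — is the second bullet: the local maximin condition is a one-directional implication (a vertex need not route weight to a minimum-density neighbour), so one cannot hope for $\mu_k = \rho^{(1)}(j)$, only $\mu_k \le \rho^{(1)}(j)$; it is precisely the inversion identity $\rho^{(0)}(k) = 1/\mu_k$ that turns this inequality into the desired $\rho^{(0)}(k) \ge \rho^{(0)}(i)$.
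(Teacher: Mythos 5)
Your proof is correct and follows essentially the same route as the paper's: the same payload computation yielding the inversion identity $\rho^{(0)}(i) = 1/\rho^{(1)}(j)$ for $\alpha^{(0)}(ij)>0$, the same use of that identity to get the local maximin property of $\alpha^{(1)}$ (the paper phrases it contrapositively, showing a lower-density vertex cannot be a neighbour of $j$, which is equivalent to your direct inequality $\mu_k \le \rho^{(1)}(j)$), and the same two-line cancellation for the third bullet. No gaps.
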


\begin{proof}
By the definition of proportional response,
for any $i \in \Izero$ and $j \in \Ione$,
$\alpha^{(0)}(ij) > 0$ \emph{iff} $\alpha^{(1)}(ij) > 0$.

We prove the first statement.
For $i \in \Izero$, denote $F_i := \{j \in \Ione: \alpha^{(0)}(ij) > 0\}$.
Since $\alpha^{(0)}$ satisfies the local maximin condition,
all $j \in F_i$ has the same payload density
$\rho^{(1)}(j)$.
Now, consider the payload received by $i$ due to the
proportional response $\alpha^{(1)}$:

$p^{(0)}(i) := \sum_{j \in F_i} \alpha^{(1)}(ij)
= \sum_{j \in F_i} \alpha^{(0)}(ij) \cdot \frac{w(j)}{p^{(1)}(j)}
= \frac{w(i)}{\rho^{(1)}(j)}$,

where in the last equality, we use that all $j \in F_i$
have the same payload density $\rho^{(1)}(j)$,
and $w(i) = \sum_{j \in F_i} \alpha^{(0)}(ij)$.
This implies that for any $j \in F_i$,

$\rho^{(0)}(i) := \frac{p^{(0)}(i) }{w(i)} = \frac{1}{\rho^{(1)}(j)}.$

We next prove the second statement.
Fix some $j \in \Ione$ and suppose $\alpha^{(1)}(ij) > 0$,
which is equivalent to $\alpha^{(0)}(ij) > 0$ and $j \in F_i$.  If some vertex $i' \in \Izero$
has payload density $\rho^{(0)}(i') > \rho^{(0)}(i)$,
then our first statement implies that
$\rho^{(0)}(i') \cdot \rho^{(1)}(j) > \rho^{(0)}(i)
\cdot \rho^{(1)}(j) = 1$,
which means that both $\alpha^{(0)}(i'j) = \alpha^{(1)}(i'j) = 0$.

Hence, it suffices to show that
if a vertex $\widehat{i} \in \Izero$
has payload density $\rho^{(0)}(\widehat{i}) < 
\rho^{(0)}(i)$, then $\widehat{i}$ and $j$ are not neighbors in $\mcal{F}$.

Consider some $\widehat{j} \in F_{\widehat{i}}$
that receives positive payload from $\widehat{i}$ in the 
refinement $\alpha^{(0)}$.
By our first statement,
we have the payload density $\rho^{(1)}(\widehat{j}) 
= \frac{1}{\rho^{(0)}(\widehat{i})}
> \frac{1}{\rho^{(0)}(i)} = \rho^{(1)}(j)$.
Since $\alpha^{(0)}$ satisfies the local maximin
condition, it follows that $\widehat{i}$ and $j$
cannot be neighbors in $\mcal{F}$.

Hence, $\alpha^{(1)}$ also satisfies the local maximin condition.

Finally, for $i \in \Izero$ and $j \in \Ione$,
the proportional response to $\alpha^{(1)}$
is $\widehat{\alpha}(ij) = \frac{\alpha^{(1)}(ij)}{\rho^{(0)}(i)}
= \frac{\alpha^{(0)}(ij)}{\rho^{(1)}(j) \cdot \rho^{(0)}(i)} =
\alpha^{(0)}(ij)$.
\end{proof}

Now, we are ready to prove that density decomposition is symmetric.

\begin{theorem}[Folklore: Symmetry of Density Decomposition]
\label{th:folklore}
Given an instance 
$(\Izero, \Ione; \mcal{F}; w)$,
suppose the sequence $\{(S^{(0)}_\ell, S^{(1)}_\ell)\}_{\ell \geq 1}$ of pairs
is the density decomposition using $\Ione$ as the ground set.

Then, the density decomposition using $\Izero$ as the ground set
produces the same sequence of pairs in reversed order.
\end{theorem}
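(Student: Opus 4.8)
The plan is to read off \emph{both} density decompositions from one and the same object --- a locally maximin refinement together with its proportional response --- and then to use the reciprocal-density identity of Lemma~\ref{lem:local_maximin_sym} to see that the two orderings are mirror images of each other. So I would first fix a refinement $\alpha^{(0)}$ of the weights $w^{(0)}$ on $\Izero$ that is locally maximin (Fact~\ref{fact:exist_maximin}) and let $\alpha^{(1)}$ be its proportional response, which is unique by Remark~\ref{remark:prop_resp}. By Lemma~\ref{lem:local_maximin_sym}, $\alpha^{(1)}$ is itself locally maximin, $\alpha^{(0)}(ij)>0 \iff \alpha^{(1)}(ij)>0$, and $\rho^{(0)}(i) = 1/\rho^{(1)}(j)$ on every edge with $\alpha^{(0)}(ij)>0$.

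Next I would invoke Fact~\ref{fact:maximin_decomp} twice. With $\Ione$ as the ground set and the locally maximin refinement $\alpha^{(0)}$ of $w^{(0)}$, it tells us that the payload density $\rho^{(1)}$ induced by $\alpha^{(0)}$ coincides with $\rho^{(1)}_*$, and --- writing $\{(S^{(0)}_\ell, S^{(1)}_\ell)\}_{\ell=1}^{L}$ for this ($\Ione$-ground) decomposition, with block densities $\rho_1 > \dots > \rho_L$ (strictly decreasing, a standard property also extractable from Lemma~\ref{lemma:maximin_densest}) --- that $S^{(0)}_\ell$ is precisely the set of vertices sending positive payload to $S^{(1)}_\ell$ in $\alpha^{(0)}$. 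Since the $S^{(1)}_\ell$ partition $\Ione$, it follows that every edge in the common support of $\alpha^{(0)}$ and $\alpha^{(1)}$ joins $S^{(0)}_\ell$ to $S^{(1)}_\ell$ for one and the same $\ell$; consequently each (non-isolated) $j \in S^{(1)}_\ell$ routes all of its weight under $\alpha^{(1)}$ into $S^{(0)}_\ell$, and conversely only vertices of $S^{(1)}_\ell$ feed $S^{(0)}_\ell$. Moreover, for $i \in S^{(0)}_\ell$ pick $j \in S^{(1)}_\ell$ with $\alpha^{(0)}(ij) > 0$; then $\rho^{(0)}(i) = 1/\rho^{(1)}(j) = 1/\rho^{(1)}_*(j) = 1/\rho_\ell$.

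Now I would apply Fact~\ref{fact:maximin_decomp} a second time, with $\Izero$ as the ground set and the locally maximin refinement $\alpha^{(1)}$ of $w^{(1)}$: it gives $\rho^{(0)}_* = \rho^{(0)}$, so by the last computation $\rho^{(0)}_*$ is the constant $1/\rho_\ell$ on $S^{(0)}_\ell$, and these $L$ values are pairwise distinct with $1/\rho_L > \dots > 1/\rho_1$. Hence the maximal densest subsets peeled off by Definition~\ref{def:decomposition} for the $\Izero$-ground decomposition are exactly the level sets of $\rho^{(0)}_*$ in decreasing order, namely $S^{(0)}_L, S^{(0)}_{L-1}, \dots, S^{(0)}_1$; and by the support characterization of Fact~\ref{fact:maximin_decomp} again, the auxiliary block paired with $S^{(0)}_{L-\ell+1}$ is the set of vertices sending positive payload to it in $\alpha^{(1)}$, which by the previous paragraph is $S^{(1)}_{L-\ell+1}$. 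Therefore the $\Izero$-ground decomposition is $\{(S^{(0)}_{L-\ell+1}, S^{(1)}_{L-\ell+1})\}_{\ell=1}^{L}$, the reverse of the $\Ione$-ground one, which is exactly the assertion.

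The step I expect to be the main obstacle is the middle one: showing that the \emph{sets} $S^{(0)}_\ell$ themselves --- not merely the density values $1/\rho_\ell$ --- are the level sets of $\rho^{(0)}_*$. This requires the common edge support of $\alpha^{(0)}$ and $\alpha^{(1)}$ to split cleanly along the blocks of the $\Ione$-ground decomposition, and the block densities $\rho_\ell$ to be strictly decreasing. A slightly cleaner alternative replaces the ``level-set'' bookkeeping by an induction on $L$: one checks that $S^{(0)}_L$ is the maximal densest subset of $\Izero$ (by the symmetric form of Lemma~\ref{lemma:maximin_densest}, since it carries the largest value of $\rho^{(0)}$), that its closed neighborhood is $S^{(1)}_L$, and that deleting this pair leaves a sub-instance on which the restrictions of $\alpha^{(0)}$ and $\alpha^{(1)}$ are again a locally maximin refinement and its proportional response, so the induction hypothesis (together with Fact~\ref{fact:maximin_decomp}) finishes it.
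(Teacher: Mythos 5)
Your proposal is correct and follows essentially the same route as the paper's proof: fix a locally maximin refinement $\alpha^{(0)}$ via Fact~\ref{fact:exist_maximin}, use Lemma~\ref{lem:local_maximin_sym} to get that its proportional response $\alpha^{(1)}$ is locally maximin with the reciprocal-density identity, and apply Fact~\ref{fact:maximin_decomp} on both sides to identify the two decompositions and conclude the ordering is reversed. The level-set bookkeeping you flag as the delicate step is handled identically in the paper, using the strict monotonicity of the block densities and the support characterization from Fact~\ref{fact:maximin_decomp}.
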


\begin{proof}
By Facts~\ref{fact:maximin_decomp} and~\ref{fact:exist_maximin},
the density decomposition $\{(S^{(0)}_\ell, S^{(1)}_\ell)\}_{\ell \geq 1}$ using $\Ione$ as the ground set
is associated with a locally maximin refinement $\alpha^{(0)}$ of the vertex weights in $\Izero$.
	
We next apply Lemma~\ref{lem:local_maximin_sym} to the proportional response $\alpha^{(1)}$ to $\alpha^{(0)}$, and conclude that the refinement $\alpha^{(1)}$ of vertices weights in $\Ione$	is also locally maximin. 
By Fact~\ref{fact:maximin_decomp},
the payload density vector $\rho^{(0)} \in \R^{\Izero}$
induced by $\alpha^{(1)}$ coincides with
the density vector in the density decomposition
$\{(\widehat{S}^{(0)}_k, \widehat{S}^{(1)}_k)\}_{k \geq 1}$ with $\Izero$ as the ground set.

By Lemma~\ref{lem:local_maximin_sym},
for all $\ell$, for all $i \in S^{(0)}_\ell$
and for all $j \in S^{(1)}_\ell$,
$\rho^{(0)}(i) \cdot \rho^{(1)}(j) = 1$.

Because $\{(S^{(0)}_\ell, S^{(1)}_\ell)\}_{\ell \geq 1}$
is the density decomposition with $\Ione$ being the 
ground set,
this means that the densities $\rho^{(0)}$
are the same for vertices within the same 
$S^{(0)}_\ell$ and different for $\ell \neq \ell'$.
Moreover, since the density $\rho^{(1)}$
 of vertices in $S^{(1)}_\ell$ decreases as $\ell$ increases,
this implies that the sequence
$\{\widehat{S}^{(0)}_k\}_{k \geq 1}$
is the reverse of
$\{{S}^{(0)}_\ell\}_{\ell \geq 1}$.

Finally,
observe that proportional response
has the property that $\alpha^{(0)}(ij) > 0$
\emph{iff} $\alpha^{(1)}(ij) > 0$.
From Fact~\ref{fact:maximin_decomp},
$\widehat{S}^{(1)}_k$ is exactly
the collection of vertices that send
positive payload
to $\widehat{S}^{(0)}_k$ in $\alpha^{(1)}$.
Therefore,
$S^{(0)}_\ell = \widehat{S}^{(0)}_k$
\emph{iff} $S^{(1)}_\ell = \widehat{S}^{(1)}_k$.
This finishes the proof.
\end{proof}

\section{Universal Refinement Matching Problem}
\label{sec:univeral_matching}

As mentioned in Section~\ref{sec:intro},
an instance in Definition~\ref{defn:input}
can be interpreted as inputs in various problems.
As a precursor to the universally closest distribution
refinements problem studied in Section~\ref{sec:dist_refine},
we consider a matching problem.
Here, an instance 
is interpreted as the input of
the vertex-constrained fractional matching problem,
where the vertex constraints on each side
is scaled by the same multiplicative factor.

\begin{definition}[Vertex-Constrained Matching Instance]
\label{defn:matching}
Given an instance $G = (\Izero, \Ione; \mcal{F}; w)$
as in Definition~\ref{defn:input}
and two non-negative weights $\vec{c} = (c^{(0)}, c^{(1)}) \in \R^{\B}$, 
we use $G^{(\vec{c})}$ to denote the instance
of (fractional) matching on the bipartite graph
$(\Izero, \Ione; \mcal{F})$ with vertex capacity constraints.
For side~$\side$ and $i \in \Ib$,
the capacity constraint at vertex~$i$ is $c^{(\side)} \cdot w(i)$.
\end{definition}

\noindent \textbf{LP Interpretation.}
The maximum (fractional) matching problem instance $G^{(\vec{c})}$
is captured by the following linear program.

	\begin{align*}
	\LP(G^{(\vec{c})}) \quad \quad \quad	\max \quad \sum_{i \sim j} x_{ij}\\
		\text{s.t.} \quad \sum_{j':i \sim j'} x_{ij'} \leq c^{(0)} \cdot w(i), &\quad \forall i \in \Izero\\
		\sum_{i': i' \sim j} x_{i'j} \leq c^{(1)} \cdot w(j), &\quad \forall j \in \Ione\\
		x_{ij} \geq 0, &\quad \forall i \in \Izero \sim j \in \Ione
	\end{align*}

We will also consider the dual in our analysis:

	\begin{align*}
	\DP(G^{(\vec{c})}) \quad \quad \quad	 \quad \min \quad c^{(0)}\sum_{i \in \Izero} w(i) \cdot \lambda_i &+ c^{(1)} \sum_{j \in \Ione} w(j) \cdot \lambda_j \\
	\text{s.t.} \quad \lambda_i + \lambda_j \geq 1, &\quad \forall i \in \Izero \sim j \in \Ione\\
	\lambda_i \geq 0, &\quad \forall i \in \Izero \cup \Ione
	\end{align*}

\begin{definition}[Fractional Matching Induced by Refinement Pair]
Given an instance $G = (\Izero, \Ione; \mcal{F}; w)$
and a non-negative weight pair $\vec{c} = (c^{(0)}, c^{(1)})$, 
a refinement pair $\vec{\alpha} = (\alpha^{(0)}, \alpha^{(1)})$,
where $\alpha^{(\side)}$ is a refinement on vertex weights in $\Ib$,
induces a feasible fractional matching 
for $G^{(\vec{c})}$ that is denoted as follows:

$\vec{c} \bullet \vec{\alpha}(ij) := \min \{c^{(0)} \cdot \alpha^{(0)}(ij), 
c^{(1)} \cdot \alpha^{(1)}(ij)\}$,
for all $i \sim j$.
\end{definition}

\begin{remark}
The feasibility of the fractional matching $\vec{c} \bullet \vec{\alpha}$ to
the instance $G^{(\vec{c})}$ follows directly because
each $\alpha^{(\side)}$ is a refinement of the vertex weights in $\Ib$.
\end{remark}

\begin{definition}[Universal Refinement Pair for Matching]
Given an instance $G = (\Izero, \Ione; \mcal{F}; w)$
and $\tau \in [0, 1]$,
a refinement pair $\vec{\alpha}$
is $(1 - \tau)$-universal for matching in $G$,
if for any non-negative weight pair $\vec{c}$,
the solution $\vec{c} \bullet \vec{\alpha}$
achieves $(1 - \tau)$-approximation for the 
maximum matching problem on $G^{(\vec{c})}$.

For $\tau = 0$, we simply say $\vec{\alpha}$
is universal (for matching in $G$).
\end{definition}

\noindent \textbf{Intuition for Universal Refinement Pair.}
Besides being a technical tool for later sections,
the notion of a universal refinement pair is interesting
in terms of understanding the structure of vertex-constrained
bipartite matching.  Suppose it is known that the vertex weights
on each side are correlated and may vary but follow some fixed 
relative ratios among themselves.  The idea of a universal refinement
pair is that some pre-computation can be performed such that 
when the vertex weights on each side change proportionally,
the matching can be updated quickly to get a (nearly) optimal solution
 without computing from scratch.

Observe that the density decomposition in Definition~\ref{def:decomposition}
does not change if the vertex weights on one side are multiplied by
the same scalar.  Informally, the decomposition identifies
different levels of ``congestion'' when a matching is performed
between vertices on the two sides.  The following result formalizes
the connection between the density decomposition and universal refinement pairs.

\begin{lemma}[Local Maximin Achieves Universal for Matching]
\label{lem:matching-opt}
Suppose in an instance $G$, for side~$\side \in \B$,
the refinement $\alpha^{(\side)}$ on vertex weights in $\Ib$
is locally maximin, and the refinement $\alpha^{(\ob)}$ is the (unique)
proportional response to $\alpha^{(\side)}$.
Then, the refinement pair $(\alpha^{(\side)}, \alpha^{(\ob)})$
is universal for matching in $G$.

Moreover, suppose in the density decomposition, for side~$\side$,
$\rho^{(\side)}_*$ is the payload density vector
and $p^{(\side)}_*$ is the payload vector.
Then, for any $c = (c^{(\side)}, c^{(\ob)})$,
the weight of the maximum matching
in $G^{(\vec{c})}$ is:

$\sum_{j \in \Ib} \min\{\frac{c^{(\side)}}{\rho^{(\side)}_*(j)}, c^{(\ob)} \} \cdot p^{(\side)}_*(j).$
\end{lemma}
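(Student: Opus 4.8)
The plan is to use LP duality, exploiting the symmetric structure of the density decomposition established in Section~\ref{sec:sym_decomp}. Without loss of generality take $\side = 1$ as the ground set, so $\alpha^{(1)}$ is locally maximin on $\Ione$ and $\alpha^{(0)}$ is its proportional response on $\Izero$; by Lemma~\ref{lem:local_maximin_sym}, $\alpha^{(0)}$ is then also locally maximin, and for every edge $ij$ with $\alpha^{(1)}(ij) > 0$ we have $\rho^{(1)}(i)\cdot\rho^{(0)}(j) = 1$ (where I write $\rho^{(1)} = \rho^{(1)}_*$ on $\Izero$ and $\rho^{(0)} = \rho^{(0)}_*$ on $\Ione$). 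First I would write down the feasible primal solution $x = \vec{c}\bullet\vec{\alpha}$ to $\LP(G^{(\vec{c})})$ and compute its objective value. On an edge $ij$ with positive flow we have $\vec{c}\bullet\vec{\alpha}(ij) = \min\{c^{(0)}\alpha^{(0)}(ij), c^{(1)}\alpha^{(1)}(ij)\}$; using $\alpha^{(0)}(ij) = \alpha^{(1)}(ij)/\rho^{(1)}(i)$ from the proportional-response relation, this equals $\alpha^{(1)}(ij)\cdot\min\{c^{(0)}/\rho^{(1)}(i), c^{(1)}\}$. Summing over all $j \sim i$ and using that $\alpha^{(1)}$ is a refinement of $w^{(1)}$, the total flow out of a fixed $j \in \Ione$ reorganizes — actually it is cleaner to sum the incoming flow at each $j$: since the coefficient $\min\{c^{(0)}/\rho^{(1)}(i), c^{(1)}\}$ is constant across all $i$ sending flow to $j$ (they all lie in the same decomposition block, hence share $\rho^{(1)}(i)$), the payload into $j$ is $\min\{c^{(0)}/\rho^{(1)}_*(\cdot), c^{(1)}\}\cdot p^{(1)}_*(j)$. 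Re-expressing everything in terms of the ground-set density $\rho^{(\side)}_*$ and payload $p^{(\side)}_*$ (using the reciprocal relation to convert between the two sides' densities) yields exactly the claimed sum $\sum_{j}\min\{c^{(\side)}/\rho^{(\side)}_*(j), c^{(\ob)}\}\cdot p^{(\side)}_*(j)$.

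Next I would exhibit a dual solution to $\DP(G^{(\vec{c})})$ of matching value, which simultaneously proves $\vec{c}\bullet\vec{\alpha}$ is optimal and pins down the max-matching weight. The natural guess, guided by complementary slackness, is a threshold-type dual: for each decomposition block $S^{(0)}_\ell \cupdot S^{(1)}_\ell$, set $\lambda_i$ and $\lambda_j$ to values summing to $1$ on block edges, chosen according to whether $c^{(0)}/\rho^{(1)}_*(i)$ or $c^{(1)}$ is the binding quantity — e.g. on a block where $c^{(0)}/\rho^{(1)}_*(i) \le c^{(1)}$, put $\lambda_i = 1$ on the $\Izero$-side and $\lambda_j = 0$ on the $\Ione$-side, and the reverse on blocks where the inequality goes the other way (with care on the boundary block where they are equal). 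I then need to verify (i) dual feasibility $\lambda_i + \lambda_j \ge 1$ on every edge of $\mcal{F}$ — this is where the structure of the decomposition matters: an edge $ij$ of $\mcal{F}$ can only go "downward" in congestion level, i.e. if $i \in S^{(0)}_\ell$ and $j \in S^{(1)}_k$ then the block indices are ordered so that the $\lambda$ assignment never makes both endpoints $0$; and (ii) that the dual objective $c^{(0)}\sum_i w(i)\lambda_i + c^{(1)}\sum_j w(j)\lambda_j$ equals the primal value computed above. Matching the two objectives block by block should reduce to the identity $w(S^{(0)}_\ell) = \rho^{(1)}_\ell\cdot w(S^{(1)}_\ell)$ together with $p^{(1)}_*(j) = \rho^{(1)}_*(j)\cdot w(j)$.

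The main obstacle I anticipate is the careful handling of the "boundary" decomposition block where $c^{(0)}/\rho^{(1)}_\ell$ exactly equals $c^{(1)}$ (and more generally checking dual feasibility across edges spanning two different blocks): one must confirm that for every edge $\{i,j\}\in\mcal{F}$ with $i\in S^{(0)}_\ell$, $j\in S^{(1)}_k$, the decomposition forces $k$ and $\ell$ to be related in the right direction (by Fact~\ref{fact:maximin_decomp} and Theorem~\ref{th:folklore}, $\alpha^{(\cdot)}$ puts zero on any "upward" edge, and the remaining edges go from higher-density hyperedge blocks to lower-density node blocks), so that the chosen $0/1$ dual never violates $\lambda_i + \lambda_j \ge 1$. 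A secondary subtlety is the case of isolated vertices on either side, which contribute $0$ to the matching and can be set aside up front using the Remark on isolated vertices. Once these structural facts are in place, complementary slackness (flow is positive only on block-internal edges, where $\lambda_i+\lambda_j = 1$, and a vertex's capacity is saturated exactly when its $\lambda > 0$) makes optimality immediate and the formula falls out.
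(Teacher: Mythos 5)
Your proposal follows essentially the same primal--dual argument as the paper: exhibit the feasible primal $\vec{c}\bullet\vec{\alpha}$, construct a block-wise $0/1$ dual to $\DP(G^{(\vec{c})})$ determined by which of $c^{(\side)}/\rho^{(\side)}_*$ and $c^{(\ob)}$ is binding on each decomposition block, verify feasibility using the fact that cross-block edges of $\mcal{F}$ only go in one direction relative to the monotone block densities, and match objectives block by block via $p^{(\side)}_* = \rho^{(\side)}_* \cdot w$. The subtleties you flag (the boundary block and the direction of cross-block edges) are exactly the ones the paper resolves, the former by placing the equality case into one of the two index sets and the latter by the closed-neighborhood property of the decomposition.
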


\begin{proof}
By Lemma~\ref{lem:local_maximin_sym},
$\alpha^{(\side)}$ and $\alpha^{(\ob)}$ are proportional responses
to each other and both are locally maximin.
Fix some $\vec{c} = (c^{(0)}, c^{(1)})$
and we show that $\vec{c} \bullet \vec{\alpha}$
is an optimal solution to $G^{(\vec{c})}$.

Let $\{(S^{(0)}_\ell, S^{(1)}_\ell\}_{\ell \geq 1}$
be the density decomposition with $\Ione$ being the ground set.
(Recall that the density decomposition is symmetric between $\Izero$ and
$\Ione$, where the forward or backward order of the sequence
depends on which side is the ground set.)

From Fact~\ref{fact:maximin_decomp},
recall that $\alpha^{(0)}(ij)$ and
$\alpha^{(1)}(ij)$ are non-zero only if there exists 
some $\ell$ such that $(i,j) \in \mcal{F}_\ell :=
\mcal{F} \cap (S^{(0)}_\ell \times S^{(1)}_\ell)$.

Let $L^{(1)} := \{\ell: c^{(1)} \cdot w(S^{(1)}_\ell) \leq 
c^{(0)} \cdot w(S^{(0)}_\ell)\}$
and $L^{(0)} := 
\{\ell: c^{(1)} \cdot w(S^{(1)}_\ell) >
c^{(0)} \cdot w(S^{(0)}_\ell)\}$.

Observe that 
for $i \in S^{(1)}_\ell$,
$\rho^{(1)}_*(i) = \frac{w(S^{(0)}_\ell)}{w(S^{(1)}_\ell)}$.
Therefore, for all $\ell \in L^{(1)}$ and $\ell' \in L^{(0)}$,
$\ell < \ell'$.

Moreover, for all $\ell < \ell'$,
for all $i \in S^{(0)}_\ell$ and $j \in S^{(1)}_{\ell'}$,
$i$ and $j$ are not neighbors in $\mcal{F}$,
from the definition of the density decomposition.

Hence, we can construct a feasible dual solution $\lambda$
to $\DP(G^{(\vec{c})})$ as follows.
\begin{compactitem}
\item For $\ell \in L^{(1)}$: for $i \in S^{(0)}_\ell$, $\lambda_i = 0$;
for $j \in S^{(1)}_\ell$, $\lambda_j = 1$.

\item For $\ell \in L^{(0)}$: for $i \in S^{(0)}_\ell$, $\lambda_i = 1$;
for $j \in S^{(1)}_\ell$, $\lambda_j = 0$.
\end{compactitem}

The dual feasibility of $\lambda$ holds because
if $i \in \Izero$ and $j \in \Ione$ such
that $\lambda_i = \lambda_j = 0$,
then it must be the case that $i$ and $j$ are not neighbors in $\mcal{F}$.

Next, we check that the dual solution $\lambda$
has the same objective value as the primal
solution $\vec{c} \bullet \vec{\alpha}$ to $\LP(G^{\vec{c}})$.

Observe that for $\ell \in L^{(1)}$,
for $i \in S^{(0)}_\ell \sim j \in S^{(1)}_\ell$,
we have 

$c^{(1)} \cdot \alpha^{(1)}(ij) = c^{(1)} \cdot \alpha^{(0)}(ij) \cdot \frac{w(S^{(1)}_\ell)}{w(S^{(0)}_\ell)} \leq c^{(0)} \cdot \alpha^{(0)}(ij)$.

Hence, $c^{(1)}\sum_{j \in S^{(1)}_\ell}  w(j) \cdot \lambda_j 
= c^{(1)} \sum_{(i,j) \in \mcal{F}_\ell}  \alpha^{(1)}(ij)
=  \sum_{(i,j) \in \mcal{F}_\ell}  \vec{c} \bullet \vec{\alpha}(ij)$,
where the first equality holds because 
$\alpha^{(1)}$ restricted to $\mcal{F}_\ell$ is a refinement
of vertex weights in $S^{(1)}_\ell$.

Similarly, 
for $\ell \in L^{(0)}$,
we have 
$c^{(0)}\sum_{i \in S^{(0)}_\ell}  w(i) \cdot \lambda_i
= \sum_{(i,j) \in \mcal{F}_\ell}  \vec{c} \bullet \vec{\alpha}(ij)$.

Therefore, the feasible primal solution $\vec{c} \bullet \vec{\alpha}$
and the feasible dual solution $\lambda$ have the same objective value,
which implies that both are optimal.

Finally, we compute the weight of the maximum matching in $G^{(\vec{c})}$.
For simpler notation, it is understood that in each sum below,
we include only pairs $(i,j)$ such that both
$\alpha^{(0)}(ij)$ and $\alpha^{(1)}(ij)$ are non-zero.

\begin{align*}
 & \sum_{(i,j) \in (\Iob \times \Ib) \cap \mcal{F}: \alpha(ij) > 0} \min \{c^{(\ob)} \cdot \alpha^{(\ob)}(ij), 
c^{(\side)} \cdot \alpha^{(\side)}(ij)\} \\
= & \sum_{(i,j)}  \min \{c^{(\ob)}, 
c^{(\side)} \cdot \frac{\alpha^{(\side)}(ij)}{\alpha^{(\ob)}(ij)}\} \cdot
  \alpha^{(\ob)}(ij) \\
= & \sum_{(i,j)}  \min \{c^{(\ob)}, 
c^{(\side)} \cdot \frac{1}{\rho^{(\side)}_*(j)} \} \cdot
  \alpha^{(\ob)}(ij) \\
= & \sum_{j \in \Ib} \min \{c^{(\ob)}, 
 \frac{c^{(\side)}}{\rho^{(\side)}_*(j)} \} \sum_{i: i \sim j} \alpha^{(\ob)}(ij) \\
= & \sum_{j \in \Ib} \min\{\frac{c^{(\side)}}{\rho^{(\side)}_*(j)}, c^{(\ob)} \} \cdot p^{(\side)}_*(j).
\end{align*}
\end{proof}

\subsection{Approximate Universal Matching}

Although locally maximin refinements (on one side~$\side$) may not be unique, Fact~\ref{fact:maximin_decomp} states that all such refinements correspond to
the same payload density vector (on the other side~$\ob$).
Hence, it is natural to define the notion
of approximation for refinements based on the
induced payload density vector.

\begin{definition}[$\tau$-Approximate Refinement]
\label{defn:approx_refine}
For $\tau \geq 0$,
a refinement $\alpha^{(\side)}$ on vertex weights in $\Ib$
achieves $\tau$-multiplicative error
if the induced payload density vector $\rho^{(\ob)}$
achieves $\tau$-multiplicative error
with respect to the density vector $\rho^{(\ob)}_*$
in the density decomposition.

In other words, for each $i \in \Iob$,
we have $|{\rho}^{(\ob)}(i) - \rho^{(\ob)}_*(i)| \leq \tau \cdot \rho^{(\ob)}_*(i)$.
\end{definition}

We now demonstrate how to translate the density vector approximation into approximations for the universal refinement matching problem. 

\begin{theorem}[Approximate Refinements Give Approximate
Universal Matching]\label{th:approx-flow}
Suppose given an instance $G$, for $0 \leq \tau < 1$,
for side $\ob \in \B$,
$\alpha^{(\ob)}$ is a refinement
on vertex weights in $\Iob$ that achieves $\tau$-multiplicative error,
and $\alpha^{(\side)}$ is the proportional response
to $\alpha^{(\ob)}$.
Then, the pair $\vec{\alpha} = (\alpha^{(\ob)}, \alpha^{(\side)})$
is $\frac{1-\tau}{1+ \tau}$-universal for matching in $G$.

\end{theorem}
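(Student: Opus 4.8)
The plan is to fix an arbitrary non-negative weight pair $\vec{c} = (c^{(0)}, c^{(1)})$ and show that the feasible fractional matching $\vec{c} \bullet \vec{\alpha}$ has weight at least $\tfrac{1-\tau}{1+\tau}$ times the optimum of $G^{(\vec{c})}$; since $\vec{c} \bullet \vec{\alpha}$ is feasible (by the remark following its definition), this immediately gives the claimed universality. By Fact~\ref{fact:exist_maximin} there is a locally maximin refinement of $w^{(\side)}$ on $\Ib$, so Lemma~\ref{lem:matching-opt} (with the two sides named as in the present statement) tells us that the optimum of $G^{(\vec{c})}$ equals $\mathrm{OPT} := \sum_{j \in \Ib} \min\{\tfrac{c^{(\side)}}{\rho^{(\side)}_*(j)},\, c^{(\ob)}\} \cdot p^{(\side)}_*(j)$, where $\rho^{(\side)}_*$ and $p^{(\side)}_*$ are the density and payload vectors on $\Ib$ from the density decomposition with $\Ib$ as ground set. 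On the other hand, since $\alpha^{(\ob)}$ achieves $\tau$-multiplicative error, the payload density $\rho := \rho^{(\side)}$ it induces on $\Ib$ satisfies $(1-\tau)\,\rho^{(\side)}_*(j) \le \rho(j) \le (1+\tau)\,\rho^{(\side)}_*(j)$ for every $j \in \Ib$; writing $p(j) := \rho(j)\, w(j)$ for the induced payload, this also gives $(1-\tau)\, p^{(\side)}_*(j) \le p(j) \le (1+\tau)\, p^{(\side)}_*(j)$. (Isolated vertices contribute $0$ to every sum below and may be ignored, as in the remarks of Section~\ref{sec:sym_decomp}.)

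Next I would compute the weight of $\vec{c} \bullet \vec{\alpha}$. Since $\alpha^{(\side)}$ is the proportional response to $\alpha^{(\ob)}$, for every edge $\{i,j\} \in \mcal{F}$ with $i \in \Iob$, $j \in \Ib$ and $\alpha^{(\ob)}(ij) > 0$ we have $\alpha^{(\side)}(ij) = \alpha^{(\ob)}(ij)/\rho(j)$, and every other edge contributes $0$ to $\vec{c} \bullet \vec{\alpha}$. Hence, exactly as in the closing computation of the proof of Lemma~\ref{lem:matching-opt} — which only uses this proportional-response identity and not local maximinality —
\begin{align*}
\mathrm{weight}(\vec{c} \bullet \vec{\alpha})
&= \sum_{(i,j):\, \alpha^{(\ob)}(ij) > 0} \min\{c^{(\ob)} \cdot \alpha^{(\ob)}(ij),\, c^{(\side)} \cdot \alpha^{(\side)}(ij)\} \\
&= \sum_{j \in \Ib} \min\{c^{(\ob)},\, \tfrac{c^{(\side)}}{\rho(j)}\} \sum_{i:\, i \sim j} \alpha^{(\ob)}(ij)
= \sum_{j \in \Ib} \min\{c^{(\ob)},\, \tfrac{c^{(\side)}}{\rho(j)}\} \cdot p(j).
\end{align*}

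Finally I would compare the two sums term by term. For each $j \in \Ib$, the bound $\rho(j) \le (1+\tau)\,\rho^{(\side)}_*(j)$ gives $\tfrac{c^{(\side)}}{\rho(j)} \ge \tfrac{1}{1+\tau}\cdot\tfrac{c^{(\side)}}{\rho^{(\side)}_*(j)}$, and trivially $c^{(\ob)} \ge \tfrac{1}{1+\tau}\,c^{(\ob)}$; by monotonicity of $\min$ together with $\tfrac{1}{1+\tau}\min\{a,b\} = \min\{\tfrac{a}{1+\tau},\tfrac{b}{1+\tau}\}$, this yields $\min\{c^{(\ob)}, \tfrac{c^{(\side)}}{\rho(j)}\} \ge \tfrac{1}{1+\tau}\min\{c^{(\ob)}, \tfrac{c^{(\side)}}{\rho^{(\side)}_*(j)}\}$. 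Combining with $p(j) \ge (1-\tau)\, p^{(\side)}_*(j)$ and summing over $j \in \Ib$ gives $\mathrm{weight}(\vec{c}\bullet\vec{\alpha}) \ge \tfrac{1-\tau}{1+\tau}\,\mathrm{OPT}$, which is what we need. I expect the only real subtlety to be the side bookkeeping, together with the observation that the weight formula for $\vec{c}\bullet\vec{\alpha}$ from Lemma~\ref{lem:matching-opt} survives verbatim when the exact density $\rho^{(\side)}_*$ is replaced by the approximate density $\rho$ — because that derivation never invoked optimality; once that is in place, the remaining estimate is elementary.
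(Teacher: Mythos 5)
Your proposal is correct and follows essentially the same route as the paper's proof: fix $\vec{c}$, use the proportional-response identity to write the weight of $\vec{c}\bullet\vec{\alpha}$ as $\sum_{j}\min\{c^{(\ob)},c^{(\side)}/\rho(j)\}\cdot p(j)$, apply the two sides of the multiplicative-error bound (one to the density inside the $\min$, one to the payload), and invoke Lemma~\ref{lem:matching-opt} for the value of the optimum. The only cosmetic difference is that you bound the two factors term by term at the end, whereas the paper threads the same two inequalities through a single chain of displayed estimates.
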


\begin{proof}
Without loss of generality,
we prove the result for $\ob = 0$.
Fix positive weights $\vec{c} = (c^{(0)}, c^{(1)})$,
and we will show
that $\vec{c} \bullet \vec{\alpha}$
achieves $\frac{1-\tau}{1+ \tau}$-approximation
for the maximum fractional matching instance $G^{(\vec{c})}$.

Since $\alpha^{(0)}$ achieves $\tau$-multiplicative error,
the induced payload density $\rho^{(1)}$ satisfies,
for all $j \in \Ione$,

\begin{equation} \label{eq:density}
(1 - \tau) \cdot \rho^{(1)}_*(j) \leq
\rho^{(1)}(j) \leq  (1 + \tau) \cdot \rho^{(1)}_*(j).
\end{equation}

Equivalently, the induced payload $p^{(1)}(j) = w(j) \cdot \rho^{(1)}(j)$ also satisfies:

\begin{equation}  \label{eq:payload}
(1 - \tau) \cdot p^{(1)}_*(j) \leq
p^{(1)}(j) \leq  (1 + \tau) \cdot p^{(1)}_*(j).
\end{equation}

Moreover, because $\tau < 1$,
the proportional response $\alpha^{(1)}$ to $\alpha^{(0)}$
is unique, and we have for $i \in \Izero$,
$\alpha^{(1)}(ij) = \frac{\alpha^{(0)}(ij)}{\rho^{(1)}(j)}$.
By considering only pairs such that
$\alpha^{(0)}(ij) > 0$, the weight of the solution
$\vec{c} \bullet \vec{\alpha}$ is:

\begin{align*}
 & \sum_{(i,j) \in (\Izero \times \Ione) \cap \mcal{F}: \alpha^{(0)}(ij) > 0} \min \{c^{(0)} \cdot \alpha^{(0)}(ij), 
c^{(1)} \cdot \alpha^{(1)}(ij)\} \\
= & \sum_{(i,j)}  \min \{c^{(0)}, 
c^{(1)} \cdot \frac{\alpha^{(1)}(ij)}{\alpha^{(0)}(ij)}\} \cdot
  \alpha^{(0)}(ij) \\
= & \sum_{(i,j)}  \min \{c^{(0)}, 
c^{(1)} \cdot \frac{1}{\rho^{(1)}(j)} \} \cdot
  \alpha^{(0)}(ij) \\
\geq & \sum_{j \in \Ione} \min \{c^{(0)}, 
 \frac{c^{(1)}}{(1 + \tau) \cdot \rho^{(1)}_*(j)} \} \cdot \sum_{i \in \Izero: i \sim j} \alpha^{(0)}(ij)  & (\textrm{upper bound in (\ref{eq:density})}) \\
\geq & \frac{1}{1 + \tau} \sum_{j \in \Ione} \min\{\frac{c^{(1)}}{\rho^{(1)}_*(j)}, c^{(0)} \} \cdot p^{(1)}(j) \\
\geq & \frac{1-\tau}{1 + \tau} \sum_{j \in \Ione} \min\{\frac{c^{(1)}}{\rho^{(1)}_*(j)}, c^{(0)} \} \cdot p^{(1)}_*(j) & \textrm{(lower bound in (\ref{eq:payload}))} \\
= & \frac{1-\tau}{1 + \tau} \cdot \LP(G^{(\vec{c})}),
\end{align*}

where the last equality comes from Lemma~\ref{lem:matching-opt}.

\ignore{

	Let $\hat{x}$ be the matching obtained from $\hat{\alpha}^{(\side)}$ and $\hat{\alpha}^{(\ob)}$ given $s$ and $t$. In other words, $\hat{x}_{ij} = \min[s \cdot \hat{\alpha}^{(\side)(ij)}, t \cdot \hat{\alpha}^{(\ob)}(ij)]$. Suppose $\alpha^{(\ob)}$ is an allocation refinement that satisfies the local maximin condition, and let $\alpha^{(\side)}$ be its proportional response. According to Lemma~\ref{lem:matching-opt}, the solution $x$ with $x_{ij} = \min[s\alpha^{(\side)}(ij), t \alpha^{(\ob)}(ij)]$ is an optimal solution for the universal refinement matching problem.
	
	To prove the lemma, it is sufficient to show that for any $s$ and $t$, $\sum_{i \sim j} \hat{x}_{ij} \geq \frac{1-\tau}{1+\tau} \sum_{i \sim j} x_{ij}$.
	
	Let $\tilde{x}$ be the matching such that $\tilde{x}_{ij} = \min[s(1+\tau)\hat{\alpha}^{(\side)}(ij), t\hat{\alpha}^{(\ob)}(ij)]$. Next, we will show that $\sum_{i \sim j} \tilde{x}_{ij} \geq (1-\tau) \sum_{i \sim j} x_{ij}$. 
	
	By the definition of proportional response, it holds that:
	
	\begin{align*}
		\tilde{x}_{ij} = \min[s(1+\tau)\hat{\alpha}^{(\side)}(ij), t\hat{\alpha}^{(\ob)}(ij)] = \min[\frac{s(1+\tau)}{\hat{\rho}^{(\side)}(i)} \hat{\alpha}^{(\ob)}(ij), t \hat{\alpha}^{(\ob)}(ij)] = \hat{\alpha}^{(\ob)}(ij) \cdot \min[\frac{s(1+\tau)}{\hat{\rho}^{(\side)}(i)},t].
	\end{align*}

	Similarly, 
	
	\begin{align*}
		x_{ij} = \alpha^{(\ob)}(ij) \cdot \min[\frac{s}{\rho^{(\side)}(i)}, t] = \alpha^{(\ob)}(ij) \cdot\min[\frac{s}{\rho^{(\side)}_*(i)}, t].
	\end{align*}

	Since $\hat{\rho}^{(\side)}(i) \leq (1+\tau)\rho^{(\side)}_*(i)$, therefore, 
	 $\frac{s}{\rho^{(\side)}_*(i)} \leq \frac{s(1+\tau)}{\hat{\rho}^{(\side)}(i)}$, And hence, $\min[\frac{s}{\rho^{(\side)}_*(i)}, t] \leq \min[\frac{s(1+\tau)}{\hat{\rho}^{(\side)}(i)},t]$. In addition, since $\hat{\rho}^{(\side)}(i) \geq (1-\tau)\rho^{(\side)}_*(i)$, $\sum_j \hat{\alpha}^{(\ob)}(ij) \geq (1-\tau)  \sum_j \alpha^{(\ob)}(ij)$. 
	
	Therefore, 
	
	\begin{align*}
	\sum_{i \sim j} \tilde{x}_{ij} &= \sum_i \min[\frac{s(1+\tau)}{\hat{\rho}^{(\side)}(i)},t] \sum_j \hat{\alpha}^{(\ob)}(ij) \\ &\geq \sum_i \min[\frac{s}{\rho^{(\side)}_*(i)}, t] \sum_j \hat{\alpha}^{(\ob)}(ij) \\&\geq (1-\tau) \sum_i \min[\frac{s}{\rho^{(\side)}_*(i)}, t] \sum_j \alpha^{(\ob)}(ij)\\&=(1-\tau)\sum_{i \sim j} x_{ij}. 
	\end{align*}

	We have shown that $\sum_{i \sim j} \tilde{x}_{ij} \geq (1-\tau) \sum_{i \sim j} x_{ij}$, to prove the lemma, it sufficient to show that $(1+\tau)\sum_{i \sim j} \hat{x}_{ij} \geq \sum_{i \sim j} \tilde{x}_{ij}$:
	
	\begin{align*}
		(1+\tau)\sum_{i \sim j} \hat{x}_{ij} &= \sum_{i \sim j} (1+\tau) \min[\frac{s}{\hat{\rho}^{(\side)}(i)},t] \hat{\alpha}^{(\ob)}(ij) \\&= \sum_{i \sim j}  \min[\frac{s(1+\tau)}{\hat{\rho}^{(\side)}(i)},t(1+\tau)] \hat{\alpha}^{(\ob)}(ij) \\&\geq \sum_{i \sim j}  \min[\frac{s(1+\tau)}{\hat{\rho}^{(\side)}(i)},t] \hat{\alpha}^{(\ob)}(ij) \\&= \sum_{i \sim j} \tilde{x}_{ij}. 
	\end{align*}
	}
\end{proof}

\subsection{Negative Result: Approximation Non-Transferable to Proportional Response}

Contrary to Lemma~\ref{th:approx-flow},
we show that given a refinement $\alpha^{(\ob)}$
to vertex weights on side~$\ob$ that achieves a certain multiplicative error,
there is not necessarily any guarantee on its proportional response.

\begin{lemma}[Approximation Guarantees Non-Transferable to Proportional Response]
For any $\tau \in (0, \frac{1}{2})$,
there exists an instance $G$
such that there exists a refinement $\alpha^{(0)}$ 
for vertex weights on $\Izero$
achieving $\tau$-multiplicative error,
but its proportional response $\alpha^{(1)}$ 
has a multiplicative error of at least $1 - \tau$.
\end{lemma}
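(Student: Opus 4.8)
The plan is to build an explicit counterexample rather than argue abstractly. The mechanism I want to exploit is that a vertex $u \in \Izero$ whose two neighbours $v_1,v_2 \in \Ione$ have true densities differing by the factor $\tau$ can be made to ``switch'' which neighbour it feeds: in the locally maximin refinement $u$ sends all of its (tiny) weight to its lower-density neighbour, but by assigning $u$ a very small weight I can let it redirect all of that weight to the higher-density neighbour while perturbing the payloads on $\Ione$ by only a $\tau$-multiplicative amount, so the refinement stays $\tau$-approximate. After passing to the proportional response, however, essentially all the weight that flows back to $u$ comes from whichever neighbour $u$ currently feeds, so the density $\rho^{(0)}(u)$ jumps from $\approx 1/\rho^{(1)}_*(v_2)$ down to $\approx 1/\rho^{(1)}_*(v_1)$, a multiplicative change of roughly $1-\tau$.

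Concretely, I would take $\Izero = \{u, x_1, x_2\}$, $\Ione = \{v_1, v_2\}$ with edge set $\mcal{F} = \{\{u,v_1\},\{u,v_2\},\{x_1,v_1\},\{x_2,v_2\}\}$ and weights $w(v_1) = w(v_2) = w(x_1) = 1$, $w(u) = \tau^2$, $w(x_2) = \tau - \tau^2$; all weights are positive since $\tau \in (0,1/2)$. First I would read off the density decomposition. Peeling with $\Ione$ as the ground set, the maximal densest subset is $\{v_1\}$ (density $1$, closed neighbourhood $\{x_1\}$); after removing $\{v_1\}$ and $\{x_1\}$ the remaining instance has $\{v_2\}$ as densest subset with density $(\tau^2 + (\tau-\tau^2))/1 = \tau$ and closed neighbourhood $\{u,x_2\}$. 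Hence $\rho^{(1)}_* = (1,\tau)$ on $(v_1,v_2)$ and, by the symmetry of the decomposition (Theorem~\ref{th:folklore}) or a direct computation with $\Izero$ as the ground set, $\rho^{(0)}_* = (1/\tau,\,1,\,1/\tau)$ on $(u,x_1,x_2)$; the corresponding locally maximin refinement $\alpha^{(0)}_*$ sends $u$'s weight entirely to $v_2$, its unique minimum-density neighbour.

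Next I would define the perturbed refinement $\alpha^{(0)}$ that agrees with $\alpha^{(0)}_*$ on $x_1$ and $x_2$ but routes all of $w(u)=\tau^2$ to $v_1$. This induces $\rho^{(1)}(v_1) = 1+\tau^2$ and $\rho^{(1)}(v_2) = \tau-\tau^2$, i.e.\ multiplicative errors $\tau^2 \le \tau$ and $\tau^2/\tau = \tau \le \tau$, so $\alpha^{(0)}$ does achieve $\tau$-multiplicative error. Both $v_1$ and $v_2$ receive strictly positive payload from $\alpha^{(0)}$, so its proportional response $\alpha^{(1)}$ is unique: $v_2$ returns its whole weight $1$ to $x_2$, while $v_1$ splits its weight $1$ between $u$ and $x_1$ in the ratio $\tau^2 : 1$. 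Consequently $p^{(0)}(u) = \tau^2/(1+\tau^2)$, hence $\rho^{(0)}(u) = 1/(1+\tau^2)$, whereas $\rho^{(0)}_*(u) = 1/\tau$. The multiplicative error of $\alpha^{(1)}$ at the coordinate $u$ is therefore $1 - \tau/(1+\tau^2)$, which is at least $1-\tau$ (indeed strictly larger, since $1/(1+\tau^2) < 1$), and this proves the lemma.

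The one step that needs care is calibrating $w(u)$: redirecting all of it changes $\rho^{(1)}(v_2) = \tau$ by exactly a $w(u)/\tau$ fraction, so $\tau$-approximation of $\alpha^{(0)}$ forces $w(u) \le \tau^2$, and $w(u) = \tau^2$ sits exactly at this threshold (any smaller positive value works equally well). What makes the argument close is that, for every positive $w(u)$, the returned density $\rho^{(0)}(u) = 1/(1+w(u))$ still differs from the true value $1/\tau$ by the full factor $1-\tau/(1+w(u)) \ge 1-\tau$. Everything else is a short direct verification of densities and payloads, and I do not anticipate any conceptual obstacle once the constants are pinned down.
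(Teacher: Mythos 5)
Your proposal is correct and follows essentially the same strategy as the paper's proof: an explicit small instance in which a tiny-weight vertex on side~$0$ (your $u$, the paper's $i_2$) misroutes its weight between two neighbours lying in different levels of the density decomposition, perturbing the side-$1$ densities by only a $\tau$-factor while collapsing that vertex's own returned density from $1/\tau$ to about $1$ after the proportional response. All of your density-decomposition and payload computations check out, so only the choice of gadget (your $3{+}2$-vertex tree versus the paper's $3{+}3$-vertex path) differs.
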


\begin{proof}

Our counterexample is given in Figure~\ref{fig:counterexample}.
Observe that there are 3 vertices on each side.  Moreover,
the only dependence on $\tau$ are the weights of $i_2 \in \Izero$
and $j_3 \in \Ione$.

The density decomposition is: $(\{i_1\}, \{j_1, j_2\})$ and $(\{i_2, i_3\}, \{j_3\})$.
Moreover, for vertices in $\Ione$,
the correct payload densities
are $\rho^{(1)}_*(j_1) = \rho^{(1)}_*(j_2) = 1$
and $\rho^{(1)}_*(j_3) = \tau(1 + \tau)$.

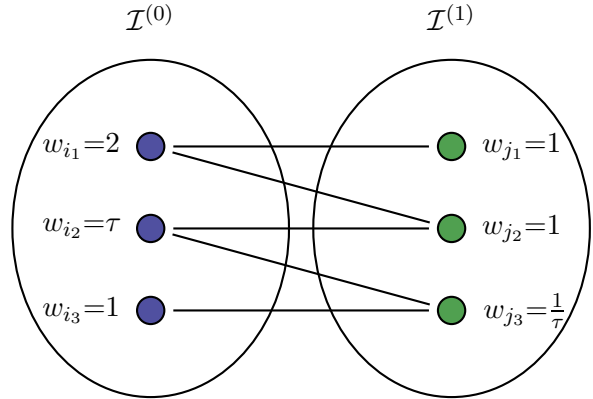
\begin{figure}[H]
	\begin{center}
		\definecolor{myblue}{RGB}{80,80,160}
		\definecolor{mygreen}{RGB}{80,160,80}
		
		\begin{tikzpicture}[thick,
			every node/.style={draw,circle},
			fsnode/.style={fill=myblue},
			ssnode/.style={fill=mygreen},
			every fit/.style={ellipse,draw,inner sep=-2pt,text width=2cm},
			->,shorten >= 3pt,shorten <= 3pt
			]
			
			\begin{scope}[xshift=0cm,yshift=0.5cm,start chain=going below,node distance=7mm]
				
				\node[fsnode, on chain] (u1) [label=left: $w_{i_1}{=}2$] {};
				\node[fsnode, on chain] (u2) [label=left: $w_{i_2}{=}\tau$] {};
				\node[fsnode, on chain] (u3) [label=left: $w_{i_3}{=}1$] {};

			\end{scope}
			\begin{scope}[xshift=4cm,yshift=0.5cm,start chain=going below,node distance=7mm]
				
				\node[ssnode,on chain] (v1) [label=right: $w_{j_1}{=}1$] {};
				\node[ssnode,on chain] (v2) [label=right: $w_{j_2}{=}1$] {};
				\node[ssnode,on chain] (v3) [label=right: $w_{j_3}{=}\frac{1}{\tau}$] {};
				
			\end{scope}
			
			\node [draw, circle, fit=(u1) (u2) (u3), inner sep=3mm] (circle) [label=above: $\Izero$]{};
			\node [draw, circle, fit=(v1) (v2) (v3), inner sep=3mm] (circle) [label=above: $\Ione$]{};
			
			\draw[-] (u1) -- (v1);
			\draw[-] (u1) -- (v2);
			\draw[-] (u2) -- (v2);
			\draw[-] (u2) -- (v3);
			\draw[-] (u3) -- (v3);
			
		\end{tikzpicture}
	\end{center}
	\caption{Counterexample}
	\label{fig:counterexample}
\end{figure}

Now, let us consider
the following refinement $\alpha^{(0)}$ on vertex weights in $\Izero$:

$\alpha^{(0)}(i_1 j_1) = \alpha^{(0)}(i_1 j_2) = 1$;

$\alpha^{(0)}(i_2 j_2) = \tau$ and $\alpha^{(0)}(i_2 j_3) = 0$;

$\alpha^{(0)}(i_3 j_3) = 1$.

The intuition is that the mistake is made by vertex~$i_2$.
Instead of sending its weight to $j_3$, its weight
is sent to $j_2$.

However, the resulting payload density on $\Ione$ is only affected slightly:

$\rho^{(1)}(j_1) = 1$, $\rho^{(1)}(j_2) = 1 + \tau$
and $\rho^{(1)}(j_3) = \tau$.
Hence, the density vector $\rho^{(1)}$
achieves $\tau$-multiplicative error with respect to $\rho^{(1)}_*$.

Consider the payload density for $i_2$ due to the proportional
response~$\alpha^{(1)}$ to $\alpha^{(0)}$.
The payload received by $i_2$ is: $1 \cdot \frac{\tau}{1 + \tau} = \frac{\tau}{1+\tau}$.  Hence,
the payload density from the proportional response
is: $\rho^{(0)}(i_2) = \frac{1}{1+\tau}$.

However,
in the density decomposition,
the correct payload density
should be $\rho^{(0)}_*(i_2)  = \frac{1}{\tau(1 + \tau)}$.

Therefore,
the multiplicative error is at least: $1 - \frac{\rho^{(0)}(i_2)}{\rho^{(0)}_*(i_2)} = 1 - \tau$.

\end{proof}

\ignore{

However, on the negative side, let $\hat{\alpha}^{(\ob)}$ be an $\tau$-Density Vector Approximation Refinement. Its proportional response, $\hat{\alpha}^{(\side)}$, cannot provide a good approximation to the density vector $\rho^{(\ob)}_*$. We present an example in Fig.~\ref{fig:counterexample} using a $C_6$ graph. Suppose $\tau$ represents the error, and $w(j_3)$ is significantly larger than the weight of any other vertex, while $w(i_2)$ is significantly smaller than the weight of any other vertex, specifically, $w(i_2) \ll \tau w(i_1), \tau w(i_3)$. Consequently,  The density vector with respect to this decomposition is as follows: $\rho^{(1)}_*(j_1) = \rho^{(1)}_*(j_2) = \frac{w(i_1)}{w(j_1)+w(j_2)}$, and $\rho^{(1)}_*(j_3) = \frac{w(i_2) + w(i_3)}{w(j_3)}$.

Now, let's consider an approximate solution $\hat{\alpha}^{(0)}$:

$\hat{\alpha}^{(0)}(i_1j_1) = \frac{w(i_1)}{w(j_1)+w(j_2)} \cdot w(j_1)$, $\hat{\alpha}^{(0)}(i_1j_2) = \frac{w(i_1)}{w(j_1)+w(j_2)} \cdot w(j_2)$,
$\hat{\alpha}^{(0)}(i_2j_2) = w(i_2)$, $\hat{\alpha}^{(0)}(i_3j_3) = w(i_3)$.

The approximated density vector is as follows. We will demonstrate that $\hat{\alpha}^{(0)}$ is an $\tau$-Density Vector Approximation Refinement:

$\hat{\rho}^{(1)}(j_1) = \frac{\hat{\alpha}^{(0)}(i_1j_1)}{w(j_1)} = \frac{w(i_1)}{w(j_1)+w(j_2)} = \rho^{(1)}_*(j_1)$, $\hat{\rho}^{(1)}(j_2) = \frac{\hat{\alpha}^{(0)}(i_1j_2)+\hat{\alpha}^{(0)}(i_2j_2)}{w(j_2)} = \frac{w(i_1)}{w(j_1)+w(j_2)} + \frac{w(i_2)}{w(j_2)} \leq (1+ \tau) \rho^{(1)}_*(j_2)$. The inequality follows from the fact that $w(i_2) \ll \tau w(i_1)$. $\rho^{(1)}(j_3) = \frac{\hat{\alpha}^{(0)}(i_3j_3)}{w(j_3)} = \frac{w(i_3)}{w(j_3)} \geq (1-\tau) \rho^{(1)}_*(j_3)$. The inequality follows from the fact that $w(i_2) \ll \tau w(i_3)$. Since $\hat{\rho}^{(1)}(j_2) > \rho^{(1)}_*(j_2)$ and $\rho^{(1)}(j_3) < \rho^{(1)}_*(j_3)$, $\hat{\alpha}^{(0)}$ is an $\tau$-Density Vector Approximation Refinement.

Now, let $\hat{\alpha}^{(1)}$ be the unique proportional response to $\hat{\alpha}^{(0)}$. We will demonstrate that the corresponding approximation $\hat{\rho}^{(0)}$ has a very large error on $i_2$.

$\hat{\rho}^{(0)}(i_2) = \frac{\alpha^{(1)}(i_2j_2)}{w(i_2)} = \frac{1}{w(i_2)} \cdot \frac{\alpha^{(0)}(i_2j_2)}{\hat{\rho}^{(1)}(j_2)} = \frac{1}{\frac{w(i_1)}{w(j_1)+w(j_2)} + \frac{w(i_2)}{w(j_2)}}$. 

And $\rho^{(0)}_*(i_2) = \frac{1}{\rho^{(1)}_*(j_3)} = \frac{w(j_3)}{w(i_2) + w(i_3)}$.

Since $\rho^{(0)}_*(i_2) > \hat{\rho}^{(0)}(i_2)$, the error $\tilde{\tau} = \frac{\rho^{(0)}_*(i_2) - \hat{\rho}^{(0)}(i_2)}{\rho^{(0)}_*(i_2)} = 1 - \frac{\hat{\rho}^{(0)}(i_2)}{\rho^{(0)}_*(i_2)} = 1 - \frac{\frac{1}{\frac{w(i_1)}{w(j_1)+w(j_2)} + \frac{w(i_2)}{w(j_2)}}}{\rho^{(0)}_*(i_2)} \geq 1 - \frac{w(j_1) + w(j_2)}{w(i_1)} \cdot \frac{1}{\rho^{(0)}_*(i_2)} = 1 - \frac{(w(j_1) + w(j_2))(w(i_2) + w(i_3))}{w(i_1) w(j_3)}$. Since $w(j_3)$ can be arbitrarily large, $\tilde{\tau}$ can be arbitrarily close to $1$. Therefore, let $\hat{\alpha}^{(0)}$ be an $\tau$-Density Vector Approximation Refinement, it's proportional response  $\hat{\alpha}^{(1)}$ may have a very large error on the approximation of the density vector. 

Therefore, while $\hat{\alpha}^{(\ob)}$ may be an $\tau$-Density Vector Approximation Refinement, it does not guarantee an accurate approximation to the density vector $\rho^{(\ob)}_*$. 
}

\section{Universally Closest Distribution Refinements Problem}
\label{sec:dist_refine}

In this section,
we show that when an instance is interpreted
as two distributions on two sample spaces imposed
with a neighboring relation,
a refinement pair satisfying Goal~\ref{goal}
is an optimal solution to the
\emph{universally closest distribution
refinements problem}.

\noindent \textbf{Distribution Instance.}
In Definition~\ref{defn:input},
an input is a \emph{distribution instance},
if, in addition, we require that
for each side~$\side \in \B$,
the vertex weights satisfies $\sum_{i \in \Ib} w^{(\side)}(i) = 1$,
i.e., $w^{(\side)}$ is a probability distribution on $\Ib$.

Observe that given a distribution instance
$G = (\Izero, \Ione; \mcal{F}; w)$,
for each side~$\side \in \B$,
a refinement $\alpha^{(\side)}$ of the vertex weights $w^{(\side)}$
is a probability distribution on $\mcal{F}$.
A high-level problem is to find a refinement pair
$\vec{\alpha} = (\alpha^{(0)}, \alpha^{(1)})$
such that the distributions
$\alpha^{(0)}$ and $\alpha^{(1)}$ on $\mathcal{F}$ are as ``close''
to each other as possible.

\noindent \textbf{Divergence Notion.} 
Given two distributions $P$ and $Q$ on the same sample space $\Omega$,
a notion of \emph{divergence} $\mathsf{D}(P \| Q)$
measures how close two distributions are,
where a smaller value typically means that the two distributions are closer.
An example is the total variation distance
$\mathsf{D}_{\textrm{TV}}(P \| Q) := \sup_{E \subseteq \Omega} |P(E) - Q(E)|$.
However, note that in general, a divergence needs not be symmetric,
i.e., it is possible that $\mathsf{D}(P \| Q) \neq \mathsf{D}(Q \| P)$.  An example of a non-symmetric divergence notion
is $\mathsf{D}_{\textnormal{KL}}(P \| Q) = \sum_{\omega \in \Omega} p_\omega \ln \frac{p_\omega}{q_\omega}$.

\begin{definition}[Closest Refinement Pair]
\label{defn:closest_refine}
Given a distribution instance $G = (\Izero, \Ione; \mcal{F}; w)$
and a divergence notion $\mathsf{D}$,
a refinement pair $\vec{\alpha} = (\alpha^{(0)}, \alpha^{(1)})$
(as in Definition~\ref{defn:refinement})
is closest (with respect to $\mathsf{D}$)
if, the pair attains the minimum $\mathsf{D}(\alpha^{(0)}||\alpha^{(1)})$ among all refinement pairs, whose value we denote as 
$\mathsf{D}^*(G)$.
%
\end{definition}

\begin{remark}
Since a divergence is typically continuous
and the collection of refinement pairs forms a compact set,
the minimum is attainable in such a case.

However, note that since $\mathsf{D}$ is not necessarily symmetric,
from Definition~\ref{defn:closest_refine} alone,
it is not clear if a pair 
minimizing $\mathsf{D}(\alpha^{(0)}||\alpha^{(1)})$ would
also minimize $\mathsf{D}(\alpha^{(1)}||\alpha^{(0)})$.
\end{remark}

\begin{definition}[Universally Closest Refinement Pair]
\label{defn:universal_refine}
Given a distribution instance $G = (\Izero, \Ione; \mcal{F}; w)$
and a collection $\mcal{D}$ of divergence notions,
a refinement pair $\vec{\alpha} = (\alpha^{(0)}, \alpha^{(1)})$
is universally closest (with respect to $\mathcal{D}$)
if, for all $\mathsf{D} \in \mcal{D}$, 
$\mathsf{D}(\alpha^{(0)}||\alpha^{(1)}) = \mathsf{D}^*(G)$.
\end{definition}

\subsection{Class of Hockey-Stick Divergences}
\label{sec:hockey-stick}

We first consider the class of hockey-stick divergences.

\begin{definition}[Hockey-Stick Divergence]
Given distributions $P$ and $Q$ on the same
sample space~$\Omega$ and $\gamma \geq 0$,
the (reversed\footnote{In the literature,
the hockey-stick divergence $\widetilde{\mathsf{D}}_\gamma$ is defined
with the roles of $P$ and $Q$ interchanged:
$\widetilde{\mathsf{D}}_\gamma(P \| Q) = 
\mathsf{D}_{\gamma}(Q \| P)$.  We change the convention to make
the notation consistent when we consider power functions later.}) hockey-stick divergence
is defined as 

$\mathsf{D}_{\gamma}(P \| Q) := \sup_{S \subseteq \Omega} Q(S) - \gamma \cdot P(S).$
\end{definition}

\begin{remark}
The hockey-stick divergence is related to the
well-known $(\epsilon, \delta)$-differential privacy inequality.
Specifically, $\mathsf{D}_{e^\epsilon}(P \| Q) \leq \delta$ \emph{iff}
for all subsets $S \subseteq \Omega$,

$Q(S) \leq e^\epsilon \cdot P(S) + \delta.$

To express the other direction of the inequality, i.e., 
$\mathsf{D}_{e^\epsilon}(Q\| P) \leq \delta$,
we could consider another parameter $\widehat{\gamma} \geq 0$
and keep the order of arguments in $\mathsf{D}_{\widehat{\gamma}}(P \|Q )$.
Specifically,
by considering the complement of events~$S$ in the above inequality,
we can deduce that:
$\mathsf{D}_{e^\epsilon}(Q \| P) \leq \delta$
\emph{iff} 
$\mathsf{D}_{e^{-\epsilon}}(P \| Q) \leq 1 - e^{-\epsilon} (1 - \delta)$.
\end{remark}

We consider universal closest distribution refinements
with respect to the class of hockey-stick divergences $\mcal{D}_{\textrm{HS}} := \{\mathsf{D}_\gamma: \gamma \geq 0\}$.

The following lemma shows that the hockey-stick divergence
is closely related to the matching instance defined
in Definition~\ref{defn:matching}.

\begin{lemma}[Hockey-Stick Divergence and Matching]
\label{lemma:hs_matching}
Suppose $G = (\Izero, \Ione; \mcal{F}; w)$
is a distribution instance
and 
$\vec{\alpha} = (\alpha^{(0)}, \alpha^{(1)})$
is a refinement pair.
For $\gamma \geq 0$, consider
the pair of weights $\vec{c} := (c^{(0)} = \gamma, c^{(1)} = 1)$.

Then, the weight of the fractional matching 
$\vec{c} \bullet \vec{\alpha}$ in the 
matching instance $G^{(\vec{c})}$ is exactly $1 - \mathsf{D}_\gamma(\alpha^{(0)} \| \alpha^{(1)})$.

In order words,
finding a closest refinement pair for the divergence $\msf{D}_\gamma$
is equivalent to finding a maximum matching in $G^{(\vec{c})}$.

\end{lemma}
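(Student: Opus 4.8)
The plan is to evaluate the weight of $\vec{c}\bullet\vec{\alpha}$ by a direct coordinatewise computation and then recognize the complementary term as a hockey-stick divergence. First I would record that, since $G$ is a distribution instance, each refinement $\alpha^{(\side)}$ is a probability distribution on $\mcal{F}$; in particular $\sum_{f\in\mcal{F}}\alpha^{(0)}(f)=\sum_{f\in\mcal{F}}\alpha^{(1)}(f)=1$. With $\vec{c}=(c^{(0)},c^{(1)})=(\gamma,1)$ we have $\vec{c}\bullet\vec{\alpha}(f)=\min\{\gamma\,\alpha^{(0)}(f),\,\alpha^{(1)}(f)\}$ for every $f\in\mcal{F}$, so applying the elementary identity $\min\{a,b\}=b-\max\{0,b-a\}$ edge by edge gives
\[
\sum_{f\in\mcal{F}}\vec{c}\bullet\vec{\alpha}(f)
=\sum_{f\in\mcal{F}}\alpha^{(1)}(f)-\sum_{f\in\mcal{F}}\max\{0,\,\alpha^{(1)}(f)-\gamma\,\alpha^{(0)}(f)\}
=1-\sum_{f\in\mcal{F}}\max\{0,\,\alpha^{(1)}(f)-\gamma\,\alpha^{(0)}(f)\}.
\]

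Next I would use that for any family of reals $\{t_f\}_{f\in\mcal{F}}$ one has $\sum_{f}\max\{0,t_f\}=\max_{S\subseteq\mcal{F}}\sum_{f\in S}t_f$, the maximizer being $S=\{f:t_f>0\}$. Taking $t_f=\alpha^{(1)}(f)-\gamma\,\alpha^{(0)}(f)$ turns the last sum into $\sup_{S\subseteq\mcal{F}}\bigl(\alpha^{(1)}(S)-\gamma\,\alpha^{(0)}(S)\bigr)$, which is precisely $\mathsf{D}_\gamma(\alpha^{(0)}\|\alpha^{(1)})$ by the definition of the (reversed) hockey-stick divergence with $P=\alpha^{(0)}$ and $Q=\alpha^{(1)}$. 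Combining this with the previous display establishes that the weight of $\vec{c}\bullet\vec{\alpha}$ equals $1-\mathsf{D}_\gamma(\alpha^{(0)}\|\alpha^{(1)})$, which is the first assertion of the lemma.

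For the ``in other words'' reformulation I would show that the optimal value of the matching instance $G^{(\vec{c})}$ equals $1-\mathsf{D}_\gamma^*(G)$, so that the closest refinement pairs are exactly those whose induced matchings are optimal. One inequality is free: by the remark following the definition of the induced matching, every $\vec{c}\bullet\vec{\alpha}$ is feasible for $G^{(\vec{c})}$, hence its value $1-\mathsf{D}_\gamma(\alpha^{(0)}\|\alpha^{(1)})$ is at most the optimum, giving $1-\mathsf{D}_\gamma^*(G)\le\mathrm{OPT}(G^{(\vec{c})})$. For the reverse, I would take an optimal fractional matching $x^\star$ of $G^{(\vec{c})}$ and pad it into a refinement pair: for $\gamma>0$, set $\alpha^{(0)}(ij)=x^\star_{ij}/\gamma$ and $\alpha^{(1)}(ij)=x^\star_{ij}$, and then distribute, at each non-isolated vertex, the leftover weight (non-negative by feasibility of $x^\star$) arbitrarily among its incident edges; this yields a refinement pair with $\vec{c}\bullet\vec{\alpha}\ge x^\star$ coordinatewise, hence $1-\mathsf{D}_\gamma(\alpha^{(0)}\|\alpha^{(1)})\ge\mathrm{OPT}(G^{(\vec{c})})$. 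The degenerate case $\gamma=0$ is checked separately, where $\mathrm{OPT}(G^{(\vec{c})})=0=1-\mathsf{D}_0(\alpha^{(0)}\|\alpha^{(1)})$ for any refinement pair.

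I expect the only delicate point to be this last padding step: verifying that an arbitrary feasible fractional matching can be completed to a genuine refinement pair, which requires keeping track of under-saturated and isolated vertices and separating out the $\gamma=0$ corner case. Everything else reduces to the one-line identity $\min\{a,b\}=b-\max\{0,b-a\}$ together with the variational description of $\mathsf{D}_\gamma$.
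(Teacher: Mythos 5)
Your computation of the matching weight is correct and is essentially the paper's own proof: the paper splits $\mcal{F}$ into $S=\{f:\alpha^{(1)}(f)\ge\gamma\cdot\alpha^{(0)}(f)\}$ and its complement, which is exactly your identity $\min\{a,b\}=b-\max\{0,b-a\}$ summed over edges, followed by the observation that $\sum_f\max\{0,t_f\}$ realizes the supremum defining $\mathsf{D}_\gamma$. Where you genuinely diverge is on the ``in other words'' clause. The paper's proof of the lemma stops at the identity and justifies the equivalence only later, via Lemma~\ref{lem:matching-opt}: a locally maximin refinement together with its proportional response induces an optimal matching for \emph{every} $\vec{c}$ (proved there by exhibiting a dual solution built from the density decomposition), whence the minimum divergence equals $1-\LP(G^{(\vec{c})})$. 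You instead prove $\LP(G^{(\vec{c})})=1-\mathsf{D}_\gamma^*(G)$ directly, by padding an optimal fractional matching $x^\star$ into a refinement pair with $\vec{c}\bullet\vec{\alpha}\ge x^\star$ coordinatewise. This is a valid, more elementary and self-contained route that avoids the density-decomposition machinery entirely; its only limitation is that it yields the equivalence for one $\gamma$ at a time, whereas the paper's route produces a single pair that is simultaneously optimal for all $\gamma$, which is what the universality results downstream actually require. Your padding step is sound (feasibility makes the leftovers non-negative and non-isolated vertices have an incident edge to absorb them), and your separate treatment of $\gamma=0$ is correct; note only that both your argument and the paper's implicitly assume the refinements have total mass $1$, i.e., that there are no positive-weight isolated vertices.
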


\begin{proof}
Let $S := \{f \in \mcal{F}: \alpha^{(1)}(f) \geq  \gamma \cdot 
\alpha^{(0)}(f)\}$.

Observe that
$\mathsf{D}_\gamma(\alpha^{(0)} \| \alpha^{(1)})
= \sum_{f \in S} (\alpha^{(1)}(f) - \gamma \cdot 
\alpha^{(0)}(f) )$.

Finally, the weight of the matching $\vec{c} \bullet \vec{\alpha}$ is:

\begin{align*}
 & \sum_{f \in \mcal{F}} \min\{\gamma \cdot \alpha^{(0)}(f),
\alpha^{(1)}(f)\} 
=  \sum_{f \in S} \gamma \cdot \alpha^{(0)}(f)
+ \sum_{f \in \mcal{F} \setminus S} \alpha^{(1)}(f) \\
= & 1 - \sum_{f \in S} (\alpha^{(1)}(f) - \gamma \cdot 
\alpha^{(0)}(f) ) = 1 - \mathsf{D}_\gamma(\alpha^{(0)} \| \alpha^{(1)}),
\end{align*}

as required.
\end{proof}

\begin{theorem}[Locally Maximin Pair is Closest for 
$\mcal{D}_{\textnormal{HS}}$]
\label{th:local_hs}
Suppose in a distribution instance $G$, 
$\vec{\alpha} = (\alpha^{(0)}, \alpha^{(1)})$ is
a distribution refinement pair.
Then, we have the following.

\begin{compactitem}

\item Suppose the pair $\vec{\alpha}$
is locally maximin and 
and proportional response to each other.
Then,
$\vec{\alpha}$ is universally closest
with respect to the class 
$\mcal{D}_{\textnormal{HS}}$ of hockey-stick divergences.

\item Suppose for some $\side \in \B$ and $0 \leq \tau < 1$,
${\alpha}^{(\side)}$ achieves $\tau$-multiplicative 
error and ${\alpha}^{(\ob)}$ is the proportional response to 
${\alpha}^{(\side)}$.

Then, for any $\mathsf{D} \in \mcal{D}_{\textnormal{HS}}$,
$\mathsf{D}(\alpha^{(0)} \| \alpha^{(1)})
\leq  \frac{1-\tau}{1 + \tau} \cdot \mathsf{D}^*(G) + \frac{2 \tau}{1+\tau}$.
\end{compactitem}
\end{theorem}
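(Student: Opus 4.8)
The plan is to carry everything across the ``matching dictionary'' supplied by Lemma~\ref{lemma:hs_matching}. Fix $\gamma \geq 0$ and set $\vec{c} := (c^{(0)} = \gamma,\ c^{(1)} = 1)$. Lemma~\ref{lemma:hs_matching} says that for every refinement pair $\vec{\beta} = (\beta^{(0)}, \beta^{(1)})$ the weight of the feasible fractional matching $\vec{c} \bullet \vec{\beta}$ in $G^{(\vec{c})}$ is exactly $1 - \mathsf{D}_\gamma(\beta^{(0)} \| \beta^{(1)})$. Since $0 \le \mathsf{D}_\gamma(P \| Q) = \sup_{S} \bigl(Q(S) - \gamma P(S)\bigr) \le Q(\Omega) = 1$ for any distributions (take $S = \emptyset$ and $S = \Omega$), this identity makes precise that minimizing $\mathsf{D}_\gamma$ over refinement pairs is the same as maximizing the weight among matchings of the form $\vec{c} \bullet \vec{\beta}$.

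\emph{First bullet.} The hypothesis is exactly Goal~\ref{goal} for the pair $\vec{\alpha}$, so Theorem~\ref{th:main_matching} gives that $\vec{c} \bullet \vec{\alpha}$ is an optimal fractional matching for $G^{(\vec{c})}$, i.e.\ it has weight $\LP(G^{(\vec{c})})$. Hence $\mathsf{D}_\gamma(\alpha^{(0)} \| \alpha^{(1)}) = 1 - \LP(G^{(\vec{c})})$; and since every other pair $\vec{\beta}$ yields a feasible, hence no-heavier, matching, $\mathsf{D}_\gamma(\beta^{(0)} \| \beta^{(1)}) = 1 - (\text{weight of } \vec{c} \bullet \vec{\beta}) \ge 1 - \LP(G^{(\vec{c})})$. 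Thus $\vec{\alpha}$ attains $\mathsf{D}^*_\gamma(G) = 1 - \LP(G^{(\vec{c})})$. As $\gamma \ge 0$ was arbitrary, $\vec{\alpha}$ is universally closest with respect to $\mcal{D}_{\textnormal{HS}}$.

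\emph{Second bullet.} Invoke Theorem~\ref{th:approx-flow} with the two side labels interchanged, so that $\alpha^{(\side)}$ (the $\tau$-multiplicative-error refinement) takes the role of the approximate refinement there and $\alpha^{(\ob)}$ the role of its proportional response; this says that $\vec{\alpha}$ is $\tfrac{1-\tau}{1+\tau}$-universal for matching in $G$, i.e.\ for every $\vec{c}$ the weight of $\vec{c} \bullet \vec{\alpha}$ is at least $\tfrac{1-\tau}{1+\tau}\,\LP(G^{(\vec{c})})$. Take again $\vec{c} = (\gamma, 1)$; applying Lemma~\ref{lemma:hs_matching} to the left-hand side, and the identity $\LP(G^{(\vec{c})}) = 1 - \mathsf{D}^*_\gamma(G)$ from the first part (valid because a pair satisfying Goal~\ref{goal} exists, by Fact~\ref{fact:exist_maximin} and Lemma~\ref{lem:local_maximin_sym}) to the right-hand side, we obtain
\[
1 - \mathsf{D}_\gamma(\alpha^{(0)} \| \alpha^{(1)}) \ \ge\ \tfrac{1-\tau}{1+\tau}\bigl(1 - \mathsf{D}^*_\gamma(G)\bigr).
\]
Rearranging and using $1 - \tfrac{1-\tau}{1+\tau} = \tfrac{2\tau}{1+\tau}$ gives $\mathsf{D}_\gamma(\alpha^{(0)} \| \alpha^{(1)}) \le \tfrac{1-\tau}{1+\tau}\,\mathsf{D}^*_\gamma(G) + \tfrac{2\tau}{1+\tau}$; as $\gamma$ ranges over $[0,\infty)$ this is the claimed bound for every $\mathsf{D} \in \mcal{D}_{\textnormal{HS}}$.

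I expect no genuine obstacle: both parts are just a translation of already-established matching facts through Lemma~\ref{lemma:hs_matching}. The points that need care are (i) the bookkeeping of which side carries the $\tau$-error when citing Theorem~\ref{th:approx-flow} (it is phrased for the ``$\ob$''-side refinement, but here it is applied to the ``$\side$''-side refinement), (ii) checking $\mathsf{D}^*_\gamma(G) \le 1$ so that $1 - \mathsf{D}^*_\gamma(G) \ge 0$ and the universality inequality is used in the correct direction, and (iii) the one-line algebraic simplification $1 - \tfrac{1-\tau}{1+\tau}(1-d) = \tfrac{1-\tau}{1+\tau}\,d + \tfrac{2\tau}{1+\tau}$.
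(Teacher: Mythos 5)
Your proposal is correct and follows essentially the same route as the paper: the first bullet is the combination of the exact universal-matching result (Lemma~\ref{lem:matching-opt}, restated as Theorem~\ref{th:main_matching}) with the translation Lemma~\ref{lemma:hs_matching}, and the second bullet is Theorem~\ref{th:approx-flow} plus the rearrangement $(1-\mathsf{D}) \geq \frac{1-\tau}{1+\tau}(1-\mathsf{D}^*)$. You merely spell out details the paper leaves implicit (the side-labeling when citing Theorem~\ref{th:approx-flow}, the bounds $0 \le \mathsf{D}_\gamma \le 1$, and the identity $\mathsf{D}^*_\gamma(G) = 1 - \LP(G^{(\vec{c})})$), all of which check out.
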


\begin{proof}
The first statement is a direct consequence of Lemmas~\ref{lem:matching-opt} and~\ref{lemma:hs_matching}.
The second statement follows from Lemma~\ref{th:approx-flow};
specifically, we have
$(1 - \msf{D}) \geq (\frac{1-\tau}{1 + \tau}) \cdot (1 - \msf{D}^*)$,
which is equivalent to the required result after rearranging.
\end{proof}

\subsection{Class of Data Processing Divergences}
\label{sec:dp_div}

Extending Theorem~\ref{th:local_hs},
we show that a locally maximin pair is actually the closest
with respect to a wide class of divergences.
In fact, for any reasonable usage of divergence,
it is intuitive that taking a partial view of 
a pair of objects should not increase the divergence between them.
This is formally captured by the \emph{data processing inequality}.

\begin{definition}[Data Processing Inequality]
\label{defn:dpi}
	A divergence notion $\mathsf{D}$ satisfies the data processing inequality if given any two 
	distributions $P$ and $Q$ on the sample space~$\Omega$
	and any function $\varphi: \Omega \to \widehat{\Omega}$,
	the induced distributions $\varphi(P)$ and $\varphi(Q)$ on $\widehat{\Omega}$ satisfy
	the monotone property:

	\[
	\mathsf{D}(\varphi(P) \| \varphi(Q)) \leq \mathsf{D}(P \| Q).
	\]
	
	We use $\mcal{D}_{\textnormal{DPI}}$ to denote
	the class of divergences satisfying the data processing inequality.
\end{definition}

It is known that all divergences in 
$\mcal{D}_{\textnormal{DPI}}$ between two distributions $P$ and $Q$ can be recovered
from the power function between them.
Recall that in this work, we focus on finite sample spaces;
see Figure~\ref{fig:power-func} for an example.
It is known that power functions have
a fractional knapsack interpretation~\cite{kadane1968discrete}.

\begin{definition}[Power Function]
\label{defn:power}
Suppose $P$ and $Q$ are distributions on the same sample space~$\Omega$.

\begin{compactitem}
\item 
For a discrete sample space~$\Omega$, the power function $\Pow(P \| Q): [0,1] \rightarrow [0,1]$ can be
defined in terms of the \emph{fractional knapsack problem}.

Given a collection $\Omega$ of items,
suppose $\omega \in \Omega$ has weight $P(\omega)$ and value $Q(\omega)$.
Then, given $x \in [0,1]$,
$\Pow(P \| Q)(x)$ is the maximum value attained with weight capacity
constraint~$x$, where items may be taken fractionally.

\item For a continuous sample space~$\Omega$,
given $x \in [0,1]$, $\Pow(P \| Q)(x)$ is the supremum
of $Q(S)$ over measurable subsets $S \subseteq \Omega$
satisfying $P(S) = x$.

\end{compactitem}
\end{definition}

\begin{figure}[H]
	\centering
	\begin{subfigure}[b]{0.45\textwidth}
	\includegraphics[width = \textwidth]{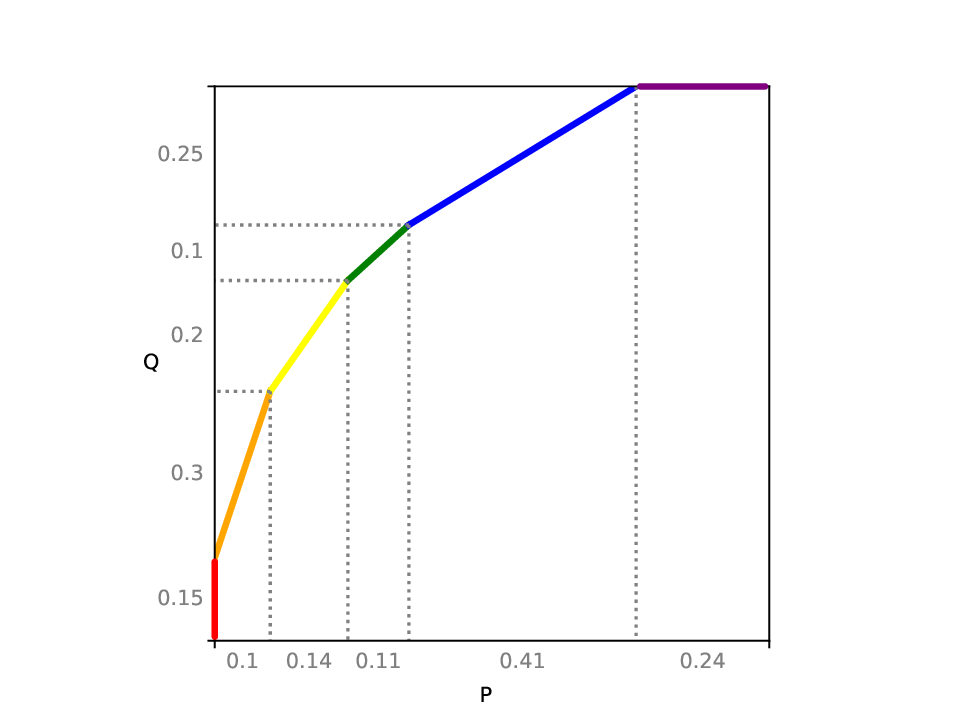}
	\caption{$g = \Pow(P||Q)$ }
	\end{subfigure}
	\hfill
	\begin{subfigure}{0.45\textwidth}
	\includegraphics[width = \textwidth]{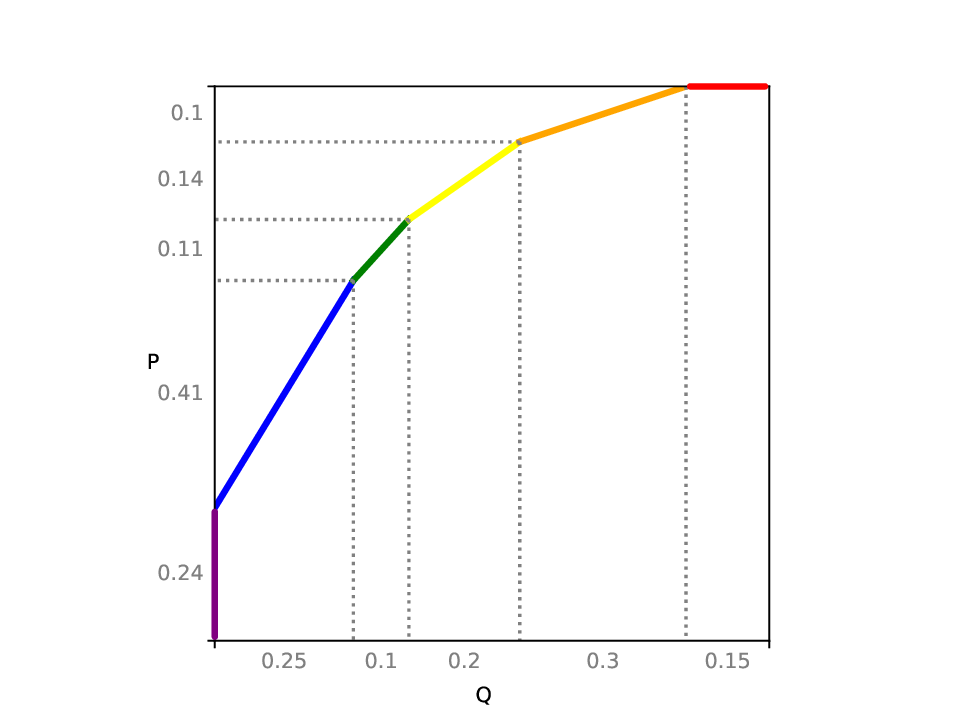}
	\caption{$\widehat{g} = \Pow(Q \| P) = \mf{R}(\Pow(P \|Q))$ }
	\label{fig:reflect}
	\end{subfigure}
	\caption{Consider a sample space $\Omega$ with
	6 elements indicated with different rainbow colors.
	Two distributions on $\Omega$ are given by the arrays:
	$P = [0, 0.1, 0.14, 0.11, 0.41, 0.24]$ and 
	$Q = [0.15, 0.3, 0.2, 0.1, 0.25, 0]$,
		where the elements~$\omega \in \Omega$
		are sorted in increasing order of
		the ratio $\frac{Q(\omega)}{P(\omega)}$.
		The left figure shows the power curve  $g = \Pow(P||Q)$, and the right figure shows $\widehat{g} = \Pow(Q \| P)$ which	is the reflection of $g$ about the line $y = 1 - x$.}
	\label{fig:power-func}
\end{figure}

In the literature (see~\cite{dong2022gaussian} for instance),
the tradeoff function $\mathsf{T}(P \| Q) := 1 - \Pow(P \| Q)$,
which has essentially equivalent mathematical properties,
has also been considered.
However, power functions have the same notation interpretation as divergences
in the sense that ``smaller'' means ``closer'' distributions.
In fact, Vadhan and Zhang~\cite{DBLP:conf/stoc/Vadhan023}
defined an abstract concept of 
\emph{generalized probability distance} (with the tradeoff function as the main example) such that
the notation agrees with this interpretation. 
We recap some well-known properties of power functions.

\begin{fact}[{Valid power functions~\cite[Proposition 2.2]{dong2022gaussian}}]
\label{fact:power}
Suppose $g : [0,1] \rightarrow [0,1]$ is a function.  Then, there exist
distributions $X$ and $Y$ such that $\Pow(X \| Y) = g$ \emph{iff}
$g$ is continuous, concave and $g(x) \geq x$ for all $x \in [0,1]$.
\end{fact}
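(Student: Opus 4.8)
The plan is to prove the two implications of the equivalence separately, using throughout the fractional-knapsack description of the power function from \cite{kadane1968discrete}: sort the items $\omega \in \Omega$ in non-increasing order of likelihood ratio $Q(\omega)/P(\omega)$, placing the zero-$P$-weight items (ratio $+\infty$) first, and let $\Pow(P\|Q)(x)$ be the $Q$-value of the greedy fractional selection that exactly uses up weight capacity $x$ along this order. This can also be written as $\Pow(P\|Q)(x) = \max \{ \int_\Omega s \, dQ : 0 \le s \le 1,\ \int_\Omega s \, dP = x \}$, the maximum over fractional selections $s$ (in the continuous case this agrees with the supremum of $Q(S)$ over $P(S)=x$ by the bathtub principle).

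\emph{Necessity.} Assume $g = \Pow(P\|Q)$. For $g(x) \ge x$, take the constant selection $s \equiv x$, which has $P$-weight $x \sum_\omega P(\omega) = x$ and $Q$-value $x \sum_\omega Q(\omega) = x$. Concavity is the standard fact that the fractional-knapsack value is concave in the capacity: if $s_1, s_2$ are optimal selections for capacities $x_1, x_2$, then $\tfrac12(s_1 + s_2)$ is feasible for capacity $\tfrac12(x_1 + x_2)$ with value $\tfrac12(g(x_1) + g(x_2))$, so $g$ is midpoint concave; combined with measurability (or with the explicit piecewise-linear, non-increasing-slope shape of the greedy curve) this yields concavity. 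A concave function on $[0,1]$ is continuous on the open interval, so it remains to check the endpoints: $g(1)$ and its left limit both equal $1$ (one may take everything), while $g(0)$ equals the total $Q$-mass of the zero-$P$-weight items, and $g(x) - g(0)$ is bounded by the greedy value harvested from finite-ratio items under capacity $x$, which tends to $0$ as $x \to 0^+$.

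\emph{Sufficiency.} Conversely, let $g : [0,1] \to [0,1]$ be continuous, concave, with $g(x) \ge x$. I would first observe that these hypotheses force $g$ to be non-decreasing with $g(1) = 1$: the right derivative $g'_+$ is non-increasing on $[0,1)$, so if $g$ were decreasing at some $x_0$ it would be strictly decreasing on $[x_0, 1]$, giving $g(x_0) > g(1) \ge 1$ and contradicting $g \le 1$; hence $1 \le g(1) \le 1$ and $g'_+ \ge 0$. Set $a := g(0) \in [0,1]$ and $h := g'_+$, a non-negative non-increasing function on $[0,1)$ with $\int_0^x h(t)\, dt = g(x) - a$ by absolute continuity of concave functions. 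Now realize $g$ on the sample space $\Omega = \{\ast\} \cup [0,1]$: let $P$ be Lebesgue measure on $[0,1]$ with $P(\{\ast\}) = 0$, and let $Q$ have density $h$ on $[0,1]$ together with the atom $Q(\{\ast\}) = a$; both are probability measures since $P(\Omega) = 1$ and $Q(\Omega) = a + (1-a) = 1$. The likelihood ratio equals $+\infty$ at $\ast$ and equals the non-increasing $h(x)$ on $[0,1]$, so the greedy order is $\ast$ followed by $[0,1]$ in increasing order; filling capacity $x$ therefore selects $\{\ast\} \cup [0,x]$, with $Q$-value $a + (g(x) - a) = g(x)$. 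Hence $\Pow(P\|Q) = g$, recovering the statement of \cite{dong2022gaussian}.

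\emph{Main obstacle.} The substantive work is the sufficiency direction, where two points need care: (i) the possible atom $a = g(0) > 0$ must be carried by an item of zero $P$-weight — this is exactly why a power function need not satisfy $g(0) = 0$; and (ii) $g$ may fail to be differentiable at countably many points, handled by using the right derivative and the fundamental theorem of calculus for concave (hence locally absolutely continuous) functions. If one wishes to stay within finite sample spaces, as elsewhere in this paper, an extra approximation step is needed: approximate $g$ from below by piecewise-linear concave functions $g_n \uparrow g$ with $g_n(x) \ge x$, realize each $g_n$ by a finite pair via the construction above with $h$ a step function, and pass to the limit; the continuous construction given here is cleaner and already suffices for the existence claim.
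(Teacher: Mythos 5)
The paper does not actually prove this statement; it is recorded as a Fact with a pointer to \cite[Proposition~2.2]{dong2022gaussian}, so there is no in-paper argument to compare against. Your proof is correct and is essentially the standard argument from that reference (stated there for trade-off functions $1-g$): necessity via feasibility of the uniform fractional selection, concavity of the knapsack value in the capacity, and the endpoint checks; sufficiency via the explicit pair $P=\mathrm{Leb}[0,1]$ plus a $P$-null point, and $Q$ with density $g'_+$ plus an atom of mass $g(0)$, with Neyman--Pearson/Hardy--Littlewood giving $\Pow(P\|Q)(x)=g(0)+\int_0^x g'_+ = g(x)$. You correctly isolate and handle the two genuinely delicate points: the atom of $P$-measure zero needed to carry $g(0)>0$, and the use of the right derivative together with the FTC for concave functions when $g$ is not differentiable everywhere. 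Two cosmetic remarks: in the necessity part, the claim that the left limit at $x=1$ equals $g(1)$ follows most directly from monotonicity of the knapsack value in the capacity (sandwiched between $g(x)\le 1$ and the concavity lower bound), rather than from ``one may take everything,'' which only gives $g(1)=1$; and your closing observation is right that the statement as written allows continuous sample spaces, so the finite-support approximation is optional. The value your write-up adds over the paper is simply a self-contained proof in place of a citation.
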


\noindent \textbf{Partial Order on Power Functions.}
We consider a partial order on 
power functions.
\ignore{
non-negative functions
with domain $[0,1]$ (which include functions that are not power functions).
}
Given such functions $g_1$ and $g_2$,
we denote $g_1 \preceq g_2$
\emph{iff} for all $x \in [0,1]$, $g_1(x) \leq g_2(x)$.

We next state formally how a divergence
satisfying Definition~\ref{defn:dpi}
can be recovered from the power function.

\begin{fact}[Recovering Divergence From Power Function~{\cite[Proposition B.1]{dong2022gaussian}}]
\label{fact:div_pow}
For any divergence $\mathsf{D} \in \mcal{D}_{\textnormal{DPI}}$,
there exists a functional
$\ell_\mathsf{D}: [0,1]^{[0,1]} \rightarrow \R$ 
such that for any distributions $P$ and $Q$ on
the same sample space, 
$\mathsf{D}(P \| Q) = \ell_\mathsf{D}(\Pow(P \|Q))$.

Moreover, if $g_1 \preceq g_2$, then 
$\ell_\mathsf{D}(g_1) \leq \ell_\mathsf{D}(g_2)$.
\end{fact}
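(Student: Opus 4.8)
The plan is to build the functional $\ell_\mathsf{D}$ by hand and then show that both assertions collapse to a single comparison lemma. By Fact~\ref{fact:power}, a function $g\colon[0,1]\to[0,1]$ is a power function precisely when it is continuous, concave and satisfies $g(x)\ge x$; so for each such $g$ fix one realizing pair $(X_g,Y_g)$ with $\Pow(X_g\|Y_g)=g$ (and set $\ell_\mathsf{D}(g)$ arbitrarily for $g$ not of this form, since it is never evaluated there), and put $\ell_\mathsf{D}(g):=\mathsf{D}(X_g\|Y_g)$. Then the identity $\mathsf{D}(P\|Q)=\ell_\mathsf{D}(\Pow(P\|Q))$ and the monotonicity $g_1\preceq g_2\Rightarrow\ell_\mathsf{D}(g_1)\le\ell_\mathsf{D}(g_2)$ both follow from the single statement: \emph{if $\Pow(P'\|Q')\preceq\Pow(P\|Q)$, then $\mathsf{D}(P'\|Q')\le\mathsf{D}(P\|Q)$.} Indeed, applying this with $\Pow(P'\|Q')=\Pow(P\|Q)=g$ in both directions gives $\mathsf{D}(P\|Q)=\mathsf{D}(X_g\|Y_g)=\ell_\mathsf{D}(g)$, which is well-definedness, while applying it to the fixed realizations of $g_1\preceq g_2$ gives the monotonicity.

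For the comparison lemma the tool is Blackwell's theorem on comparison of dichotomies. For finite $\Omega$, the fractional-knapsack reading of Definition~\ref{defn:power} (cf.~\cite{kadane1968discrete}) shows that $\Pow(P\|Q)$ is the upper boundary of the convex region $\{(P(S),Q(S)): S\text{ a fractional subset of }\Omega\}$, so $\Pow(P'\|Q')\preceq\Pow(P\|Q)$ says the region of $(P',Q')$ lies inside that of $(P,Q)$, i.e.\ that $(P,Q)$ is Blackwell-more-informative than $(P',Q')$. The finite, constructive randomization criterion (Blackwell--Sherman--Stein, provable here by LP duality / a separating-hyperplane argument on these regions) then produces a Markov kernel (``garbling'') $K$ with $K_*P=P'$ and $K_*Q=Q'$. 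Write $K=\sum_k\lambda_k K_k$ as a convex combination of deterministic channels $K_k$ (the vertices of the row-stochastic matrix polytope), and realize the mixture on the augmented finite space $\Omega\times\{1,\dots,m\}$: equip it with the product laws $P\otimes\lambda$ and $Q\otimes\lambda$ and let $\psi(\omega,k):=K_k(\omega)$, a genuine function whose pushforwards are $P'$ and $Q'$ (since $\sum_k\lambda_k (K_k)_*P=K_*P$, and likewise for $Q$). Definition~\ref{defn:dpi} gives $\mathsf{D}(P'\|Q')=\mathsf{D}(\psi_*(P\otimes\lambda)\|\psi_*(Q\otimes\lambda))\le\mathsf{D}(P\otimes\lambda\|Q\otimes\lambda)$, and the coordinate projection $\Omega\times\{1,\dots,m\}\to\Omega$ together with the standard invariance of a divergence under appending an independent coordinate identifies $\mathsf{D}(P\otimes\lambda\|Q\otimes\lambda)=\mathsf{D}(P\|Q)$, completing the lemma.

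The main obstacle is precisely this comparison lemma, and inside it the Blackwell garbling step: turning a pointwise inequality of power curves into an actual channel needs a real separation/duality argument, after which one must descend from a randomized channel to a deterministic map to fit Definition~\ref{defn:dpi}. A second, subtler point is the final identification $\mathsf{D}(P\otimes\lambda\|Q\otimes\lambda)=\mathsf{D}(P\|Q)$: data processing under deterministic maps alone does not force it for an arbitrary functional that is monotone under such maps (\eg ``size of the support'' is monotone under deterministic maps yet is not a function of the power curve), so the proof genuinely relies on the stronger reading --- standard in the differential-privacy literature and built into the notion of divergence used in~\cite{dong2022gaussian} --- that the data-processing inequality also holds for randomized channels, equivalently that $\mathsf{D}$ is unchanged by appending independent randomness; I would state this hypothesis explicitly. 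Everything else (the explicit construction of $\ell_\mathsf{D}$, the knapsack description of the achievable region, the vertex decomposition of $K$, and the bookkeeping on $\Omega\times\{1,\dots,m\}$) is routine once these two points are settled.
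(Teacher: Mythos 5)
The paper does not actually prove this statement: it is imported verbatim as a citation to \cite[Proposition B.1]{dong2022gaussian}, so there is no in-paper argument to compare against. Your proposal reconstructs essentially the argument that the cited source uses: reduce both claims to the comparison lemma ``$\Pow(P'\|Q')\preceq\Pow(P\|Q)$ implies $\mathsf{D}(P'\|Q')\le\mathsf{D}(P\|Q)$,'' prove the curve-dominance-to-garbling direction via the Blackwell--Sherman--Stein theorem for dichotomies, and then push the garbling through the data processing inequality. The construction of $\ell_{\mathsf{D}}$ by fixing a realizing pair per valid power function (using Fact~\ref{fact:power}) is fine, the decomposition of the stochastic kernel into deterministic channels is correct (the row-stochastic matrices form a product of simplices whose vertices are deterministic maps), and the bookkeeping on $\Omega\times\{1,\dots,m\}$ is sound. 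The Blackwell step is invoked rather than proved, but you correctly identify it as the crux and indicate the right tool (separation/LP duality on the testing regions), which is an acceptable level of detail for a fact of this kind.

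Your most valuable observation is the one about the hypothesis: Definition~\ref{defn:dpi} in the paper only requires monotonicity under \emph{deterministic} post-processing, and as you note this is strictly too weak for the Fact as stated --- your support-size counterexample is valid, and the final identification $\mathsf{D}(P\otimes\lambda\|Q\otimes\lambda)=\mathsf{D}(P\|Q)$ genuinely needs invariance under appending independent randomness (equivalently, DPI for randomized channels), since deterministic DPI only yields the inequality $\mathsf{D}(P\|Q)\le\mathsf{D}(P\otimes\lambda\|Q\otimes\lambda)$ via the coordinate projection, which is the wrong direction. This is a defect in the paper's formalization of $\mcal{D}_{\textnormal{DPI}}$ rather than in your argument; the cited reference works with the stronger (randomized) notion, and all divergences the paper actually uses satisfy it. Stating the strengthened hypothesis explicitly, as you propose, is the right fix.
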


The following is the main result of this section,
which states that a locally maximin pair essentially
solves the closest distribution refinement problem optimally
in the most general sense.

\begin{theorem}[Locally Maximin Pair Gives Minimal Power Function]
\label{th:local_power}
Suppose in a distribution instance $G$, 
$\vec{\alpha} = (\alpha^{(0)}, \alpha^{(1)})$ is
a distribution refinement pair
that
is locally maximin 
and proportional response to each other.
Then, the power function~$g_\sdag$  between any other distribution refinement pair of $G$ satisfies
$\Pow(\alpha^{(0)} \| \alpha^{(1)}) \preceq g_\sdag$.
Consequently, any such locally maximin pair gives rise
to the same power function, and
we denote $\Pow^*(G) = \Pow(\alpha^{(0)} \| \alpha^{(1)})$.

From Fact~\ref{fact:div_pow},
this implies that 
$\vec{\alpha}$ is a universally closest
refinement pair with respect to $\mcal{D}_{\textnormal{DPI}}$.
\end{theorem}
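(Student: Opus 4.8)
The plan is to reduce to the hockey-stick case already handled in Theorem~\ref{th:local_hs}, exploiting the classical fact that a power function --- being continuous, concave and nondecreasing on $[0,1]$ --- is completely recovered from the family $\{\msf{D}_\gamma\}_{\gamma \geq 0}$ of hockey-stick divergences by (concave) Legendre duality. Fix the locally maximin pair $\vec{\alpha}$, let $\vec{\beta} = (\beta^{(0)}, \beta^{(1)})$ be an arbitrary distribution refinement pair of $G$, and write $g_\alpha := \Pow(\alpha^{(0)} \| \alpha^{(1)})$ and $g_\beta := \Pow(\beta^{(0)} \| \beta^{(1)})$. Our goal is $g_\alpha \preceq g_\beta$, after which the $\mcal{D}_{\textnormal{DPI}}$ statement drops out of Fact~\ref{fact:div_pow}.

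First I would record the duality between hockey-stick divergences and power functions. For an arbitrary refinement pair $\vec{\eta}$ and $\gamma \geq 0$, grouping subsets $S \subseteq \mcal{F}$ according to their mass $x := \eta^{(0)}(S)$ and invoking the fractional-knapsack description in Definition~\ref{defn:power}, we get
\[
\msf{D}_\gamma(\eta^{(0)} \| \eta^{(1)}) \;=\; \sup_{S \subseteq \mcal{F}} \bigl(\eta^{(1)}(S) - \gamma \cdot \eta^{(0)}(S)\bigr) \;=\; \sup_{x \in [0,1]} \bigl(\Pow(\eta^{(0)} \| \eta^{(1)})(x) - \gamma x\bigr),
\]
so $\gamma \mapsto \msf{D}_\gamma(\eta^{(0)} \| \eta^{(1)})$ is exactly the concave conjugate of $\Pow(\eta^{(0)} \| \eta^{(1)})$. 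The ``weak'' direction $\Pow(\eta^{(0)} \| \eta^{(1)})(x) \leq \gamma x + \msf{D}_\gamma(\eta^{(0)} \| \eta^{(1)})$ is immediate from this display for every $\gamma \geq 0$ and $x \in [0,1]$. For the converse, Fact~\ref{fact:power} says every power function is continuous and concave; since it is also nondecreasing (more knapsack capacity can only help), restricting the conjugate to non-negative multipliers loses nothing, and Fenchel--Moreau biconjugation yields
\[
\Pow(\eta^{(0)} \| \eta^{(1)})(x) \;=\; \inf_{\gamma \geq 0} \bigl(\gamma x + \msf{D}_\gamma(\eta^{(0)} \| \eta^{(1)})\bigr), \qquad x \in [0,1].
\]
(Alternatively one may simply cite that $\{\msf{D}_\gamma\}_{\gamma \geq 0}$ determines the power/tradeoff function, as in~\cite{dong2022gaussian,DBLP:conf/stoc/Vadhan023}.)

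Granting this, the theorem is short. By Theorem~\ref{th:local_hs} the locally maximin pair $\vec{\alpha}$ is universally closest for $\mcal{D}_{\textnormal{HS}}$, i.e.\ $\msf{D}_\gamma(\alpha^{(0)} \| \alpha^{(1)}) \leq \msf{D}_\gamma(\beta^{(0)} \| \beta^{(1)})$ for every $\gamma \geq 0$. Hence, for every $x \in [0,1]$,
\[
g_\alpha(x) \;=\; \inf_{\gamma \geq 0} \bigl(\gamma x + \msf{D}_\gamma(\alpha^{(0)} \| \alpha^{(1)})\bigr) \;\leq\; \inf_{\gamma \geq 0} \bigl(\gamma x + \msf{D}_\gamma(\beta^{(0)} \| \beta^{(1)})\bigr) \;=\; g_\beta(x),
\]
applying the biconjugation identity to $g_\alpha$ and to $g_\beta$. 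Thus $g_\alpha \preceq g_\beta$, which is the asserted minimality of $\Pow^*(G) := g_\alpha$; running the same argument with $\vec{\beta}$ also locally maximin shows the power function is independent of the chosen locally maximin pair. Finally, for any $\msf{D} \in \mcal{D}_{\textnormal{DPI}}$, Fact~\ref{fact:div_pow} supplies a $\preceq$-monotone functional $\ell_\msf{D}$ with $\msf{D}(P \| Q) = \ell_\msf{D}(\Pow(P \| Q))$, so $\msf{D}(\alpha^{(0)} \| \alpha^{(1)}) = \ell_\msf{D}(g_\alpha) \leq \ell_\msf{D}(g_\beta) = \msf{D}(\beta^{(0)} \| \beta^{(1)})$ for every refinement pair $\vec{\beta}$; that is, $\vec{\alpha}$ is universally closest with respect to $\mcal{D}_{\textnormal{DPI}}$.

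The main obstacle is the ``strong-duality'' (biconjugation) direction: carefully justifying that $\Pow(\eta^{(0)} \| \eta^{(1)})$ is recovered as the lower envelope $x \mapsto \inf_{\gamma \geq 0}(\gamma x + \msf{D}_\gamma(\eta^{(0)} \| \eta^{(1)}))$, including the behaviour at the endpoints $x \in \{0,1\}$ (where $\msf{D}_\gamma$ degenerates, and where a mass of $\eta^{(1)}$ outside the support of $\eta^{(0)}$ makes $\Pow(\eta^{(0)}\|\eta^{(1)})(0)$ possibly positive) and the point that non-negative multipliers already suffice, which uses monotonicity of the power function. The remaining ingredients --- the knapsack reformulation of $\msf{D}_\gamma$, the appeal to Theorem~\ref{th:local_hs}, and the passage to $\mcal{D}_{\textnormal{DPI}}$ via Fact~\ref{fact:div_pow} --- are routine bookkeeping.
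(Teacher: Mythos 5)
Your proposal is correct and takes essentially the same route as the paper: the paper also reduces to Theorem~\ref{th:local_hs} via the duality between power functions and the family $\{\msf{D}_\gamma\}_{\gamma\ge 0}$, only it phrases the recovery step geometrically (pick $\gamma\in\partial g_\sdag(x)$ at a hypothetical violating point $x$ and compare $y$-intercepts via Fact~\ref{fact:hs_pow}) rather than as an explicit Fenchel biconjugation identity. The ``obstacle'' you flag is exactly what the paper's subgradient/supporting-line argument handles, so there is no gap.
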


\begin{remark}
\label{rem:approx_div}
Observe that given any $\mathsf{D} \in \mcal{D}_{\textnormal{DPI}}$
and any strictly increasing function $\phi: \R \rightarrow \R$, we still have
$\phi(\mathsf{D}( \cdot \| \cdot)) \in \mcal{D}_{\textnormal{DPI}}$.
Since any deviation from the correct answer can be
arbitrarily magnified by $\phi$, it is difficult to give any meaningful approximation
guarantee on the value of a general divergence in  $\mcal{D}_{\textnormal{DPI}}$.
\end{remark}

We prove Theorem~\ref{th:local_power} by analyzing the relationship
between hockey-stick divergence and power function,
and extend the result from Theorem~\ref{th:local_hs}.
Similar to convex functions, we also consider the notion of subgradient
for power functions.

\begin{definition}[Subgradient]
Suppose $g: [0,1] \rightarrow [0,1]$ is a power function
and $x \in [0,1]$.
Then, the subgradient of $g$ at $x$
is defined as the collection $\partial g(x)$ of real numbers satisfying: 

$\gamma \in \partial g(x)$ \emph{iff} for all $y \in [0,1]$, $g(x) + \gamma \cdot (y - x) \geq g(y)$.

In other words, the line segment with slope~$\gamma$ touching the curve $g$ at $x$ never goes below the curve.
\end{definition}

The following is a special case
of Fact~\ref{fact:div_pow},
and describes explicitly how a hockey-stick divergence
is recovered from a power function. See Figure~\ref{fig:recover} for an illustration.

\begin{fact}[Recovering Hockey-Stick Divergence From Power Function]
\label{fact:hs_pow}
Suppose $P$ and $Q$ are distributions on the same sample space~$\Omega$, where
$g = \Pow(P \| Q)$ is the power function between them.
Then, for any $\gamma \geq 0$,
there exists $x \in [0,1]$ such that $\gamma \in \partial g(x)$
and $\mathsf{D}_\gamma(P \| Q) = g(x) - \gamma x$.

In other words, one can find a line segment with slope $\gamma$
touching the curve $g$ and the $y$-intercept
of the line will give $\mathsf{D}_\gamma(P \| Q)$.
\end{fact}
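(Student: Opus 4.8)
The plan is to reduce the statement to the single identity
$\mathsf{D}_\gamma(P\|Q) = \max_{x \in [0,1]}\big(g(x) - \gamma x\big)$
and then read off the subgradient condition from the location of the maximizer. First I would rewrite $\mathsf{D}_\gamma(P\|Q) = \sup_{S \subseteq \Omega}\big(Q(S) - \gamma P(S)\big)$ using $Q(S) - \gamma P(S) = \sum_{\omega \in S}\big(Q(\omega) - \gamma P(\omega)\big)$, so that the supremum is attained by the concrete set $S^\star := \{\omega \in \Omega : Q(\omega) > \gamma P(\omega)\}$. Setting $x^\star := P(S^\star)$, the fractional-knapsack definition of $\Pow(P\|Q)$ with capacity $x^\star$ gives $g(x^\star) \ge Q(S^\star)$, hence $g(x^\star) - \gamma x^\star \ge \mathsf{D}_\gamma(P\|Q)$.

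For the reverse inequality I would show $g(x) - \gamma x \le \mathsf{D}_\gamma(P\|Q)$ for every $x \in [0,1]$. The key observation is that the fractional relaxation does not help: for any $t \in [0,1]^{\Omega}$ with $\sum_\omega t_\omega P(\omega) = x$ we have $\sum_\omega t_\omega\big(Q(\omega) - \gamma P(\omega)\big) \le \sum_{\omega : Q(\omega) > \gamma P(\omega)} \big(Q(\omega) - \gamma P(\omega)\big) = \mathsf{D}_\gamma(P\|Q)$, since the right-hand side is exactly the value obtained by setting $t_\omega = 1$ on the profitable coordinates and $0$ elsewhere. Taking the supremum over such $t$, and using that the supremum defining $g(x)$ is attained because $\Omega$ is finite, yields $g(x) - \gamma x \le \mathsf{D}_\gamma(P\|Q)$. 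Combining the two directions establishes the identity; since $g$ is continuous on the compact interval $[0,1]$ (Fact~\ref{fact:power}), the maximum is attained at some $x^\star$, and there $\mathsf{D}_\gamma(P\|Q) = g(x^\star) - \gamma x^\star$.

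It remains to show $\gamma \in \partial g(x^\star)$. Since $g$ is concave (Fact~\ref{fact:power}), the function $h(x) := g(x) - \gamma x$ is concave with maximizer $x^\star$ on $[0,1]$. If $x^\star$ is interior, the one-sided derivatives of $g$ satisfy $g'_-(x^\star) \ge \gamma \ge g'_+(x^\star)$, which is precisely $\gamma \in \partial g(x^\star) = [\,g'_+(x^\star),\, g'_-(x^\star)\,]$; the boundary cases $x^\star = 0$ and $x^\star = 1$ are handled identically, using the one-sided inequality characterizing a boundary maximum together with $\partial g(0) = [\,g'_+(0),\, +\infty)$ and $\partial g(1) = (-\infty,\, g'_-(1)\,]$. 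Finally, the geometric picture — a line of slope $\gamma$ supporting the curve $g$ whose $y$-intercept equals $g(x^\star) - \gamma x^\star$ — is merely the restatement $g(y) \le g(x^\star) + \gamma(y - x^\star)$ of the subgradient inequality, evaluated at $y = 0$.

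The step I expect to be the most delicate is the first one: matching the combinatorial supremum over genuine subsets $S$ in the definition of $\mathsf{D}_\gamma$ with the fractional-knapsack definition of the power function, i.e.\ verifying carefully that allowing fractional membership never beats taking exactly the profitable elements, and that the relevant suprema are attained. Everything downstream is a standard concavity/subgradient argument.
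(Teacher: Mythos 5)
Your proof is correct and follows essentially the same route as the paper's: both hinge on the set $S^\star=\{\omega\in\Omega: Q(\omega)>\gamma\cdot P(\omega)\}$ and the fractional-knapsack characterization of $g$, with the touching point being $x^\star=P(S^\star)$. If anything, your two-inequality argument makes explicit the step the paper only asserts (that the fractional relaxation cannot beat taking exactly the profitable elements, so $g(x^\star)-\gamma x^\star=\mathsf{D}_\gamma(P\|Q)$ is the maximum of $g(x)-\gamma x$), and since the paper's definition of $\gamma\in\partial g(x)$ is literally equivalent to $x$ maximizing $g(x)-\gamma x$, your final subgradient step is immediate without even needing the one-sided-derivative discussion.
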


\begin{proof}
The result should be standard, but we include a short proof for completeness.
One can imagine a line segment with slope~$\gamma$ starting high above
and gradually being lowered until touching the curve~$g$ at some $x \in [0,1]$.
Observe that the line segment may touch the curve at more than 1 point,
in which case we take $x$ to be the infimum.

We partition the sample space~$\Omega$
into $S_1 := \{\omega \in \Omega: Q(\omega) > \gamma \cdot P(\omega)\}$
and $S_2 := \Omega \setminus S_1$.
Observe that $P(S_1) = x$ and $g(x) = Q(S_1)$.

Then, it follows that 
$\mathsf{D}_\gamma(P \| Q) = Q(S_1) - \gamma \cdot P(S_1)
= g(x) - \gamma x$, as required.
\end{proof}

\begin{figure}[H]
	\centering
	\includegraphics[width=0.5\textwidth]{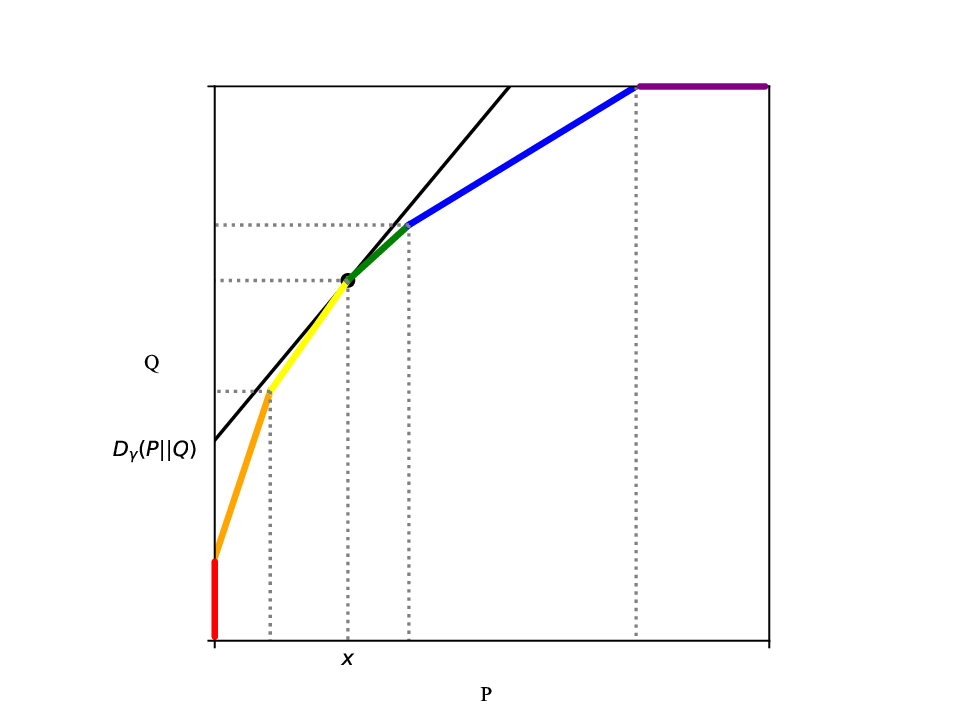}
	
	\caption{
	Using the same example in Figure~\ref{fig:power-func},
	consider a line segment
	with slope $\gamma = 1.2$ touching
	the curve $g$ at $x$.
	In this case, the red, orange, and yellow elements~$\omega$ to the left of point $x$ satisfy
	$\frac{Q(\omega)}{P(\omega)} > 1.2$.
	}
\label{fig:recover}
\end{figure}

\begin{proofof}{Theorem~\ref{th:local_power}}
Suppose $g = \Pow(\alpha^{(0)} \| \alpha^{(1)})$
is the power function between the
given locally maximin pair $\vec{\alpha}$.
For the sake of contradiction,
suppose some other refinement pair 
$\vec{\alpha}_\sdag$ produces the power function $g_\sdag$
such that for some $x \in [0,1]$,
$g(x) > g_\sdag(x)$.

Fix some $\gamma \in \partial g_\sdag(x)$
and consider the line segment $\ell_\sdag$ with slope $\gamma$ touching
curve~$g_\sdag$ at $x$.  By Fact~\ref{fact:hs_pow},
the hockey-stick divergence $\mathsf{D}_\gamma(\vec{\alpha}_\sdag)$
is the $y$-intercept of $\ell_\sdag$.

However, since $g(x) > g_\sdag(x)$,
to find a line segment with slope $\gamma$ touching curve $g$,
we must move $\ell_\sdag$ strictly upwards.
This implies that $\mathsf{D}_\gamma(\vec{\alpha})
> \mathsf{D}_\gamma(\vec{\alpha}_\sdag)$,
contradicting Theorem~\ref{th:local_hs}
which states that 
$\vec{\alpha}$ is universally closest
with respect to $\mcal{D}_{\textnormal{HS}}$.
\end{proofof}

\subsection{Approximation to Minimal Power Function}
\label{sec:approx_power}

Even though in Remark~\ref{rem:approx_div},
we mention that an approximate refinement
does not necessarily give any guarantee to the value of a general divergence notion,
we show that it is possible to give some guarantee
for the resulting power function.

\noindent \textbf{Notation.}  In order to describe 
an approximation notion for power functions,
we introduce some precise notation that facilitates
the description of power function transformation.

\noindent \emph{Power Curve.}
When we plot a power function $g$ in the $xy$-plane,
if $g(0) > 0$, we assume that there
is a vertical line segment from $(0,0)$ to $(0, g(0))$.
We also use $g$ to denote the power curve.

\noindent \textbf{Reflection.}
We use $\mf{R}(u,v) := (1-v, 1-u)$
to denote the reflection transformation of the
line $y = 1 - x$.
Observe that for any two distributions $P$ and $Q$,
$\Pow(Q \| P) = \mf{R}(\Pow(P \|Q))$;
see Figure~\ref{fig:reflect}.
Moreover, $g_1 \preceq g_2$ \emph{iff}
$\mf{R}(g_1) \preceq \mf{R}(g_2)$.

\noindent \emph{Convention.}
Observe that a power function~$g: [0,1] \to [0,1]$ is almost injective
except that multiple inputs might map to 1.
We use the convention that $g^{-1}(y) = \inf \{x \in [0,1]: g(x) = y\}$,
where the infimum is needed only for the case $y=1$.
Then, a reflected power function
can be expressed as $\mf{R}(g)(x) = 1 - g^{-1}(1-x)$.

\noindent \textbf{Approximation to Power Functions.}
Given a power function $g$ and $\gamma \geq 1$,
our goal is to define a notion of $\gamma$-approximation
to~$g$ that corresponds to a \emph{stretching} transformation $\mf{S}_\gamma$,
where $\gamma = 1$ is the identity transformation,
and $\gamma_1 < \gamma_2$ implies that
for all power functions~$g$, $\mf{S}_{\gamma_1}(g) \preceq \mf{S}_{\gamma_2}(g)$.

Since we have shown that the optimal power function
for the closest distribution refinement problem
can be achieved by the symmetric density decomposition,
ideally, the stretching transformation should also be symmetric,
i.e., $\mf{R} \circ \mf{S}_\gamma = \mf{S}_\gamma \circ \mf{R}$.

\noindent \textbf{Orientation of Power Curves.}
As aforementioned, a power curve
starts at $(0,0)$ and ends at $(1,1)$.
Given $\side \in \{0,1\}$,
we use $\side$ to denote the orientation of the curve
starting at $(\side, \side)$.
For $\side = 0$, the motion is clockwise;
for $\side = 1$, the motion is anti-clockwise.
In either case, the curve is like an arc moving
around the center $(1,0)$.  Since we need to describe
stretching with respect to the two curve orientations,
instead of using the terms vertical vs horizontal
(which would be exchanged after reflection),
we will use the terms \emph{radial} vs \emph{tangential}
in the sense as follows.

\begin{compactitem}
\item \emph{Radial Axis.}
For the curve orientation $\side = 0$ that
starts at $(0,0)$,
the radial direction refers to the $x$-axis
in the increasing direction.
For the curve orientation $\side = 1$
that starts at $(1,1)$,
the radial direction refers to the $y$-axis
in the decreasing direction.

\item \emph{Tangential Axis.}
For the curve orientation $\side = 0$,
the tangential direction refers to the
$y$-axis in the increasing direction.
For the curve orientation~$\side = 1$.
the tangential direction
refers to the $x$-axis in the decreasing direction.

\end{compactitem}

We define radial and tangential stretching as follows;
see Figure~\ref{fig:stretch}.

\begin{definition}[Power Curve Stretching]
\label{defn:curve_stretching}
Given $\gamma \geq 1$, orientation $\side \in \{0,1\}$
and direction $\msf{d} \in \{\msf{tan}, \msf{rad}\}$,
we define a stretching transformation 
$\mf{S}^{(\side, \msf{d})}_\gamma$.
We describe the case $\side = 0$
and use symmetry $\mf{S}^{(1, \msf{d})}_\gamma := \mf{R} \circ \mf{S}^{(0, \msf{d})}_\gamma \circ \mf{R}$ to define the case $\side = 1$.
Two directions of stretching can be defined for an
oriented power curve~$g$ as follows.

	\begin{itemize}
		\item 	\emph{Radial Stretching.}
		For $x \in [0,1]$, $\mf{S}^{(0, \msf{rad})}_\gamma(g)(x) := g(\min\{\gamma x, 1\})$; see Figure~\ref{fig:0rad}.
		
		In the orientation $\side = 0$, we can view this as pushing
		the curve horizontally towards the line $x=0$.

		\item 	\emph{Tangential Stretching.}
		For $x \in [0,1]$, $\mf{S}^{(0, \msf{tan})}_\gamma(g)(x) := 
		\min\{ \gamma \cdot (g(x) - g(0)) + g(0),  1\}$;
		see Figure~\ref{fig:0tan}.
		
		In the orientation $\side = 0$, we can view this as pushing
		the curve vertically towards the line $y=1$.

	\end{itemize}

From Fact~\ref{fact:power},
it is easy to check that the result of stretching a power function 
is still a power function.

\end{definition}

\begin{figure}[!ht]
	\centering
	\begin{subfigure}{0.45\textwidth}
		\includegraphics[width = \textwidth]{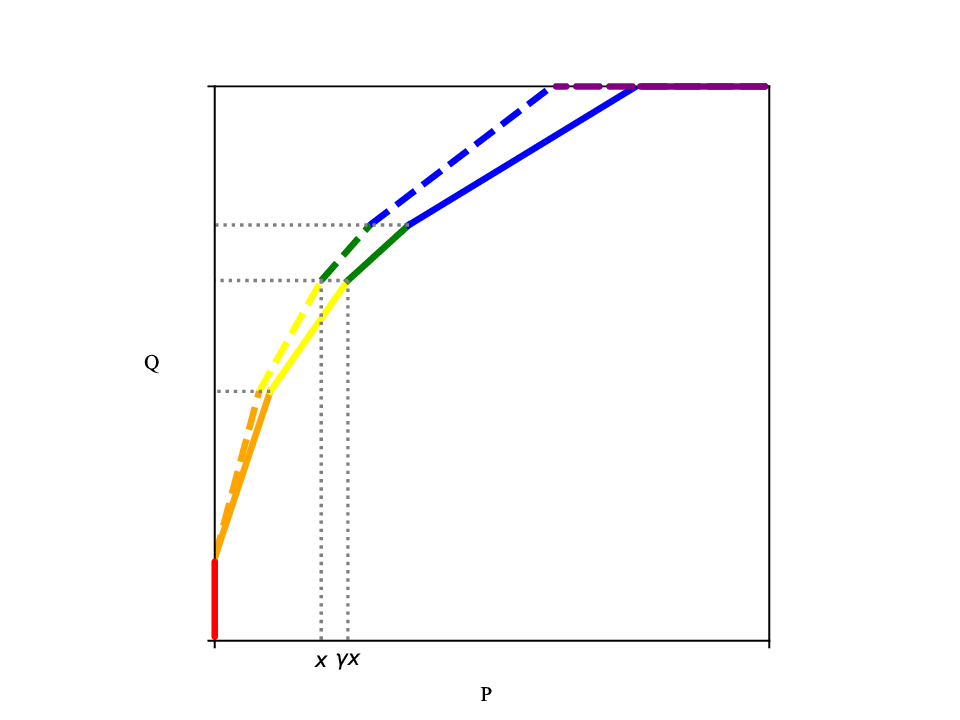}
	\begin{tikzpicture}[overlay, remember picture]
		\draw[-stealth, ultra thick, black] (2.6,0.5) -- (1.6,0.5);
	\end{tikzpicture}
		\caption{$\mf{S}^{(0, \msf{rad})}_\gamma(g)(x)$}
		\label{fig:0rad}
		\vspace{1cm}
	\end{subfigure}	
	\hfill
	\begin{subfigure}[b]{0.45\textwidth}
		\includegraphics[width = \textwidth]{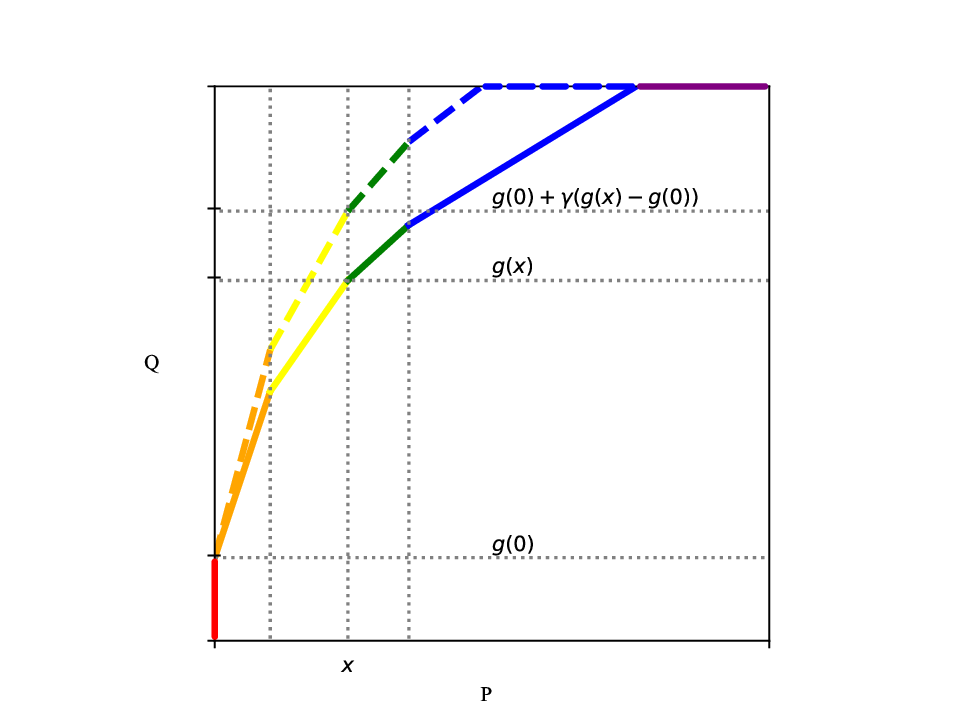}
		 \begin{tikzpicture}[overlay, remember picture]
			\draw [-stealth, ultra thick, black] (1,1) -- (1,2);
		\end{tikzpicture}
		\caption{$\mf{S}^{(0, \msf{tan})}_\gamma(g)(x)$}
		\label{fig:0tan}
		\vspace{1cm}
	\end{subfigure}
	\vspace{1cm}
	
	\begin{subfigure}{0.45\textwidth}
		\includegraphics[width = \textwidth]{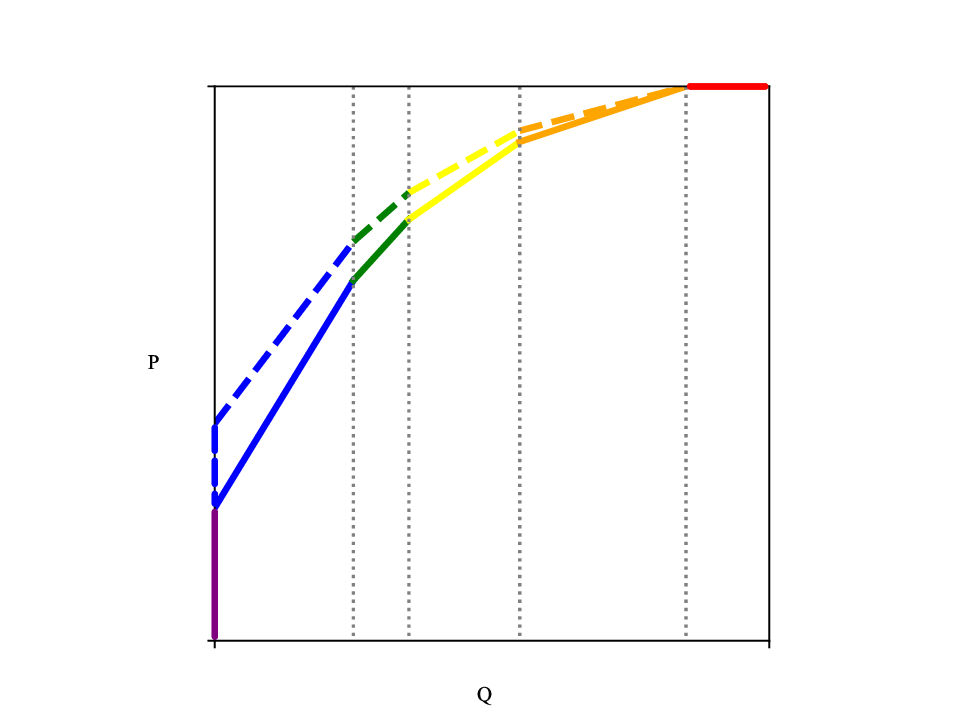}
		\begin{tikzpicture}[overlay, remember picture]
			\draw[-stealth, ultra thick, black] (6.3,4.4) -- (6.3,5.4);
		\end{tikzpicture}
		\caption{$\mf{S}^{(1, \msf{rad})}_\gamma(\widehat{g})(x)$}
		\label{fig:1rad}
	\end{subfigure}
	\hfill
		\begin{subfigure}[b]{0.45\textwidth}
		\includegraphics[width = \textwidth]{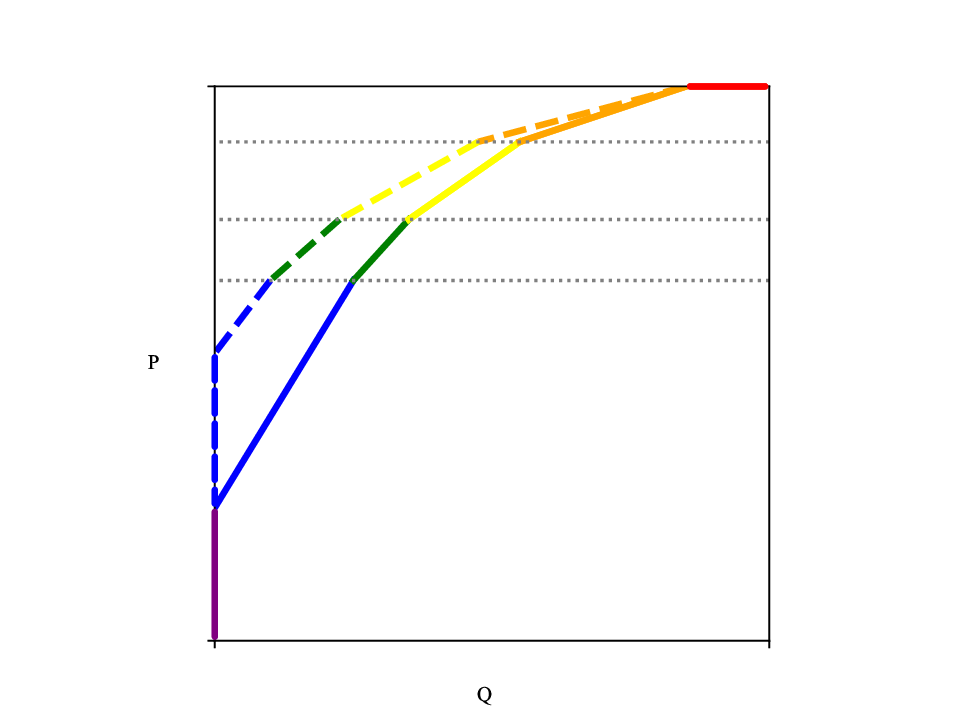}
		\begin{tikzpicture}[overlay, remember picture]
			\draw[stealth-, ultra thick, black] (5,5.8) -- (6,5.8);
			
		\end{tikzpicture}
		\caption{$\mf{S}^{(1, \msf{tan})}_\gamma(\widehat{g})(x)$}
		\label{fig:1tan}
	\end{subfigure}
	\caption{We use the same example
	in Figure~\ref{fig:power-func} with the power
	curves $g$ and $\widehat{g} = \mf{R}(g)$,
	with stretching factor $\gamma = 1.25$.
	In each subfigure, the location of the black arrow
	denotes the orientation, where an arrow
	next to the point $(\side, \side)$ indicates
	orientation~$\side \in \{0,1\}$.
	The direction of the arrow indicates the direction of
			the stretching.  The solid curve indicates the original
			power function, and the dotted curve indicates
			the stretched power function.
			Observe that (a) and (c) are reflections of each other,
			and (b) and (d) are reflections of each other.
				}
	\label{fig:stretch}
\end{figure}

Using the geometric interpretation,
one can verify the expressions of
the stretching transformations
for the orientation $\side = 1$.

\begin{fact}[Stretching Transformations for Orientation $\side = 1$]
\label{fact:stretch_side}
For any power function~$g$ and $x \in [0,1]$,
we have the following expressions.

\begin{compactitem}

\item $\mf{S}^{(1, \msf{rad})}_\gamma(g)(x) = \frac{1}{\gamma} \cdot g(x) + 1 - \frac{1}{\gamma}$;
see Figure~\ref{fig:1rad}.

\item $\mf{S}^{(1, \msf{tan})}_\gamma(g)(x) 
= g(\frac{1}{\gamma} \cdot x + (1 -  \frac{1}{\gamma}) \cdot g^{-1}(1))$;
see Figure~\ref{fig:1tan}.

\end{compactitem}

\end{fact}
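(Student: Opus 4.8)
The plan is to unfold the definition $\mf{S}^{(1,\msf{d})}_\gamma = \mf{R}\circ\mf{S}^{(0,\msf{d})}_\gamma\circ\mf{R}$ and track what the three maps do at the level of \emph{points} of the power curve rather than at the level of function values. First I would record the elementary identities for the reflection: since $\mf{R}$ sends the point $(u,v)$ to $(1-v,1-u)$, in functional form $\mf{R}(g)(x) = 1 - g^{-1}(1-x)$ (with the convention $g^{-1}(y) = \inf\{x : g(x)=y\}$), and consequently $\mf{R}(g)(0) = 1 - g^{-1}(1)$ and $\mf{R}(g)^{-1}(y) = 1 - g(1-y)$. I would likewise rephrase the two orientation-$0$ stretchings of Definition~\ref{defn:curve_stretching} pointwise: $\mf{S}^{(0,\msf{rad})}_\gamma$ sends a curve point $(u,v)$ to $(u/\gamma,v)$ (and pads the flat segment at height $1$), while $\mf{S}^{(0,\msf{tan})}_\gamma$ sends $(u,v)$ to $(u,\,\gamma(v-c_0)+c_0)$ (and clips at $1$), where $c_0$ is the $y$-intercept of the curve being stretched.

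For the radial case I would set $h:=\mf{R}(g)$ and chase a point $(x,g(x))$ of $g$ through the composition: $\mf{R}$ takes it to $(1-g(x),1-x)$, then $\mf{S}^{(0,\msf{rad})}_\gamma$ takes it to $\big((1-g(x))/\gamma,\,1-x\big)$, and the outer $\mf{R}$ takes it to $\big(x,\,1-(1-g(x))/\gamma\big)$. The $x$-coordinate is unchanged, so the image curve is the graph of $x\mapsto 1-(1-g(x))/\gamma = \tfrac1\gamma g(x) + 1 - \tfrac1\gamma$, which is the first claimed formula. For the tangential case the same chase (now using that the intercept of $h$ is $h(0)=1-g^{-1}(1)$) takes $(x,g(x))$ to $(1-g(x),1-x)$, then to $\big(1-g(x),\,\gamma(1-x)+(1-\gamma)(1-g^{-1}(1))\big)$, and finally to a point with $y$-coordinate $g(x)$ and $x$-coordinate $\gamma x + (1-\gamma)g^{-1}(1)$. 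Since the latter is an affine strictly increasing function of $x$, inverting it and reading the image curve as a function gives $\mf{S}^{(1,\msf{tan})}_\gamma(g)(x) = g\big(\tfrac1\gamma x + (1-\tfrac1\gamma)g^{-1}(1)\big)$, the second claimed formula.

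As an independent check I would redo both computations purely algebraically starting from $\mf{R}(g)(x)=1-g^{-1}(1-x)$: compute the intermediate function $\mf{S}^{(0,\msf{d})}_\gamma(\mf{R}(g))$, invert it using $\mf{R}(g)^{-1}(y)=1-g(1-y)$, and apply $\mf{R}$ once more; this reproduces the two displayed expressions and simultaneously pins down the boundary behaviour (the values at $x=0$ and at $x=g^{-1}(1)$ or $x=1$). The only delicate part — and the step I expect to be the most error-prone rather than conceptually hard — is the bookkeeping around the $\inf$-convention for $g^{-1}(1)$ together with the two $\min\{\cdot,1\}$ clippings: one must check that the transformed curve really is the graph of the claimed function on all of $[0,1]$, including the flat pieces at height $1$, and (invoking Fact~\ref{fact:power}) that it remains a valid power function. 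Parts (c) and (d) of Figure~\ref{fig:stretch}, being the reflections of (a) and (b), provide the geometric template that the answer should match.
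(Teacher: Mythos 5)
Your proposal is correct and is exactly the verification the paper has in mind: the paper offers no explicit proof beyond the remark that the expressions can be checked ``using the geometric interpretation,'' and your point-chasing of $(x,g(x))$ through $\mf{R}\circ\mf{S}^{(0,\msf{d})}_\gamma\circ\mf{R}$ carries that out, with both computations landing on the stated formulas. The boundary bookkeeping you flag (the $\min\{\cdot,1\}$ clipping turning into the vertical segment at $x=0$ after reflection, and the flat piece for $x\geq g^{-1}(1)$ in the tangential case) does check out under the paper's conventions, so no gap remains.
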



\noindent \textbf{Relating Power Curve Stretching
to Density Perturbation in Fractional Knapsack Instance.}
Recall that the power function $g = \Pow(P \| Q)$
between distributions $P$ and $Q$ on sample space
has a fractional knapsack interpretation as
in Definition~\ref{defn:power},
where each item $\omega \in \Omega$ has weight
$P(\omega)$ and value~$Q(\omega)$.

In the curve orientation $\iota = 0$
where we start from $(0,0)$,
given $x \in [0,1]$,
recall that $g(x)$ is the maximum achievable
value by (fractionally) including items with total weight at most~$x$.

On the other hand, the curve orientation~$\iota = 1$ (starting
at $(1,1)$)
can be interpreted as the \emph{reversed} knapsack
problem.
Given $z \in [0,1]$, the quantity $1 - g(1-z)$
can be interpreted as the minimum achievable
value by (fractionally) including items with total weight 
at least~$z$.

\noindent \emph{Density Perturbation and Power Curve Stretching.}
Suppose for the items in $\Omega$,
the weights $P$ stay the same, but
the values $Q$ are perturbed to some new $\widehat{Q}$
(which is still a distribution)
within a multiplicative factor of $\gamma \geq 1$.\footnote{Because
of the application to the closest distribution refinements problem,
we include the extra condition that if $P(\omega) = 0$,
then the weight $\widehat{Q}(\omega) = Q(\omega)$ stays the same.}
Intuitively, for the fractional knapsack problem,
the optimal value can increase by a factor of at most~$\gamma$,
while the optimal value for the reversed
knapsack problem can decrease by a factor
of at most~$\gamma$.
When this is interpreted by the power curve, note that even though stretching happens in the values ($y$-direction)
in both cases, because of reflection,
tangential stretching happens for the orientation $\side =0$ (original knapsack),
while radial stretching happens for orientation $\side = 1$ (reversed knapsack).
This intuition is formalized in the following lemma.

\begin{lemma}
	\label{lem:approx-knapsack}
	Suppose $P$ and $Q$ are distributions on sample space~$\Omega$
	with the power function $g := \Pow(P \| Q)$.
	Moreover, $\widehat{Q}$ is a perturbed distribution of $Q$ for which
	there exist $\gamma_1, \gamma_2 \geq 1$
	such that for all $\omega \in \Omega$ satisfying
	$P(\omega) > 0$,  $\frac{1}{\gamma_2} \cdot Q(\omega) \leq \widehat{Q}(\omega) 
	\leq \gamma_1 \cdot Q(\omega)$;
	if $P(\omega) = 0$, then $\widehat{Q}(\omega) = Q(\omega)$.
	
	Then, the power function between $P$ and $\widehat{Q}$
	satisfies:
	
	$$\widehat{g} := \Pow(P \| \widehat{Q})
	\preceq \min\{\mf{S}^{(0, \msf{tan})}_{\gamma_1}(g),
				\mf{S}^{(1, \msf{rad})}_{\gamma_2}(g)\}.$$
	
\end{lemma}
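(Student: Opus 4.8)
The plan is to run everything through the fractional knapsack interpretation of Definition~\ref{defn:power}, and to establish the two bounds $\widehat{g} \preceq \mf{S}^{(0, \msf{tan})}_{\gamma_1}(g)$ and $\widehat{g} \preceq \mf{S}^{(1, \msf{rad})}_{\gamma_2}(g)$ separately, then take their pointwise minimum. Write $\Omega_+ := \{\omega \in \Omega : P(\omega) > 0\}$ and $\Omega_0 := \Omega \setminus \Omega_+$. The hypothesis forces $\widehat{Q}$ to agree with $Q$ on $\Omega_0$, so the two knapsack instances $(P,Q)$ and $(P,\widehat{Q})$ have exactly the same zero-weight items, and since an optimal forward-knapsack solution always takes all free (zero-weight) items fully, $\widehat{g}(0) = g(0) = Q(\Omega_0)$. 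The only genuine care needed in the whole argument is this bookkeeping with $\Omega_0$; everything else is the elementary fact that scaling all item values in a knapsack instance by a factor lying in $[\tfrac{1}{\gamma_2}, \gamma_1]$ scales the optimal value of the forward (maximization) problem by at most $\gamma_1$ and that of the reversed (minimization) problem by at least $\tfrac{1}{\gamma_2}$.

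For the first bound, fix $x \in [0,1]$ and let $(t_\omega)_{\omega \in \Omega}$ be an optimal fractional solution for $(P,\widehat{Q})$ with capacity $x$, so $\sum_\omega t_\omega P(\omega) \le x$ and $\widehat{g}(x) = \sum_\omega t_\omega \widehat{Q}(\omega)$; we may take $t_\omega = 1$ for $\omega \in \Omega_0$. Then $\widehat{g}(x) = Q(\Omega_0) + \sum_{\omega \in \Omega_+} t_\omega \widehat{Q}(\omega) \le g(0) + \gamma_1 \sum_{\omega \in \Omega_+} t_\omega Q(\omega)$, using $\widehat{Q}(\omega) \le \gamma_1 Q(\omega)$ on $\Omega_+$. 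Now taking all of $\Omega_0$ together with the same fractions $(t_\omega)_{\omega \in \Omega_+}$ is a feasible solution for $(P,Q)$ with weight $\le x$, so $g(0) + \sum_{\omega \in \Omega_+} t_\omega Q(\omega) \le g(x)$, i.e. $\sum_{\omega \in \Omega_+} t_\omega Q(\omega) \le g(x) - g(0)$. Combining and using $\widehat{g}(x) \le 1$ gives $\widehat{g}(x) \le \min\{\gamma_1(g(x) - g(0)) + g(0),\, 1\} = \mf{S}^{(0, \msf{tan})}_{\gamma_1}(g)(x)$ by Definition~\ref{defn:curve_stretching}.

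For the second bound I would pass to the reversed knapsack. Substituting $s_\omega := 1 - t_\omega$ and using that $P$ and $\widehat{Q}$ (resp.\ $Q$) are distributions shows that $1 - \widehat{g}(x)$ is the minimum value of a fractional solution of total weight at least $z := 1 - x$ for the instance $(P,\widehat{Q})$, and likewise $1 - g(x)$ is that minimum for $(P,Q)$. A value-minimizing solution never benefits from zero-weight items, and $P(\Omega_+) = 1 \ge z$, so we may take an optimizer $(s_\omega)$ for $(P,\widehat{Q})$ supported on $\Omega_+$. Then $1 - \widehat{g}(x) = \sum_{\omega \in \Omega_+} s_\omega \widehat{Q}(\omega) \ge \tfrac{1}{\gamma_2} \sum_{\omega \in \Omega_+} s_\omega Q(\omega) \ge \tfrac{1}{\gamma_2}(1 - g(x))$, the last step because $(s_\omega)_{\omega \in \Omega_+}$ is feasible for the reversed knapsack of $(P,Q)$ with weight $\ge z$. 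Rearranging yields $\widehat{g}(x) \le \tfrac{1}{\gamma_2} g(x) + 1 - \tfrac{1}{\gamma_2} = \mf{S}^{(1, \msf{rad})}_{\gamma_2}(g)(x)$ by Fact~\ref{fact:stretch_side}. Taking the pointwise minimum of the two bounds over all $x \in [0,1]$ gives the claim.

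In summary, there is no deep obstacle here: the argument is two symmetric applications of "rescaling item values rescales the knapsack optimum". The step most likely to trip one up, and the one I would write out most carefully, is the handling of $\Omega_0$ — checking $\widehat{g}(0) = g(0)$, that the forward optimum includes $\Omega_0$ fully while the reversed optimum excludes it, and that restricting the reversed optimum to $\Omega_+$ remains feasible for the $(P,Q)$ instance — since this is exactly where the side condition "$\widehat{Q}(\omega) = Q(\omega)$ whenever $P(\omega) = 0$" is used.
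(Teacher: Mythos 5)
Your proposal is correct and follows essentially the same route as the paper's proof: both parts run through the fractional knapsack interpretation, using the forward problem with value scaling by $\gamma_1$ for the tangential bound and the reversed problem with value scaling by $\gamma_2$ for the radial bound, with the same careful bookkeeping of the zero-weight items via $g(0)$. The only difference is presentational — you argue directly where the paper argues by contradiction.
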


\begin{proof}
Without loss of generality,
items~$\omega$ with $P(\omega) = Q(\omega) = 0$ may
first be removed.
We consider how the new
power function $\widehat{g} = \Pow(P \| \widehat{Q})$
compared with the original~$g$.
Since we want to bound the worst case when
$P$ and $\widehat{Q}$ become even further apart,
intuitively this happens if
the items $\omega$ with higher densities $\frac{Q(\omega)}{P(\omega)}$
have their values $\widehat{Q}(\omega)$ increased further,
while the opposite holds for items with lower densities.

Using the interpretation of the power function
via the fractional knapsack problem as in Definition~\ref{defn:power},
observe that the fractional knapsack problem
can be solved optimally by using the densities of items
as the greedy heuristic.  Hence,
in the orientation $\side = 0$ starting at $(0,0)$,
we show that
the maximum value achievable in the knapsack problem
increases by a factor of at most~$\gamma_1$;
this corresponds to the stretching transformation
$\mf{S}^{(0, \msf{tan})}_{\gamma_1}$.
On the other hand,
when we consider the other orientation $\side = 1$ starting at $(1,1)$,
this can be interpreted as the reversed knapsack problem,
in which we show that the minimum value achievable decreases by a factor
of at most~$\gamma_2$;
this corresponds to the stretching
transformation $\mf{S}^{(1, \msf{rad})}_{\gamma_2}$.

\noindent (1) We prove the result formally for the
orientation $\side = 0$.
Since $\widehat{Q}(\omega) = Q(\omega)$ stays the same
if $P(\omega) = 0$,
$\widehat{g}(0) = g(0) = \sum_{\omega: P(\omega) = 0} Q(\omega)$.

Suppose for the sake of contradiction,
there exists $0 < x \leq 1$
such that

$\widehat{g}(x) > 
\mf{S}^{(0, \msf{tan})}_{\gamma_1}(g)(x) = 
\min\{ g(0) + \gamma_1 \cdot (g(x) - g(0)) ,  1\}$.

If $\mf{S}^{(0, \msf{tan})}_{\gamma_1}(g)(x) = 1$,
we immediately get a contradiction, because
a power function cannot take values strictly greater than 1.
Hence, we may assume
$\mf{S}^{(0, \msf{tan})}_{\gamma_1}(g)(x) = g(0) + \gamma_1 \cdot (g(x) - g(0))$
from now on.

Recall that $\widehat{g}(x)$ is the optimal value achieved
in the fractional knapsack problem (with item weights
$P$ and values $\widehat{Q}$) where the total weight constraint at most~$x > 0$.
Observe that in the knapsack solution, one can always include items~$\omega$ with 
zero weight $P(\omega) = 0$ (whose aggregate value is $g(0)$).
Suppose $S$ is the corresponding collection
of (fractional) items with non-zero weights
whose aggregate value (with respect to $\widehat{Q}$) attains 
$\widehat{g}(x) - g(0)$.

Next, observe that if $P(\omega) > 0$,
then $Q(\omega) \geq \frac{1}{\gamma_1} \cdot \widehat{Q}(\omega)$.
Hence, in the original instance of the knapsack problem
with values~$Q$, the solution $S$ together with zero-weighted items
gives an aggregate value of
at least $g(0) + \frac{1}{\gamma_1} \cdot (\widehat{g}(x) - g(0))
> g(x)$, which gives the required contradiction.

\noindent (2) We next consider the other curve orientation
$\side = 1$ starting at $(1,1)$.
Recall that with respect to the
original values~$Q$ in the reversed knapsack problem,
for $z \in [0,1]$,
$1 - g(1-z)$ is the minimum value achieved 
by fractionally selecting items with aggregate weight at least~$z$.

Let $N := \{\omega \in \Omega: Q(\omega) = 0\}$. Observe that
$Q(\omega) = 0$ \emph{iff} $\widehat{Q}(\omega) = 0$.
Hence, it follows that for $0 \leq z \leq z_0 := P(N)$,
$g(1-z) = \widehat{g}(1-z) = 1$,
because in the reversed knapsack problem,
when the required aggregate weight is less than $P(N)$,
it is possible to fractionally select items in $N$ to achieve zero value.

Next, we consider
the stretched curve $\mf{S}^{(1, \msf{rad})}_{\gamma_2}(g)$.
For $x \in [0, 1]$, from Fact~\ref{fact:stretch_side}, we have
$\mf{S}^{(1, \msf{rad})}_{\gamma_2}(g)(x) = \frac{1}{\gamma_2} \cdot g(x) + 1 - \frac{1}{\gamma_2}$.

Observe that for  $x \in [1 - z_0, 1]$,
$\mf{S}^{(1, \msf{rad})}_{\gamma_2}(g)(x) = \widehat{g}(x) = 1$.
For the sake of contradiction,
suppose that there exists some $x \in [0, 1 - z_0]$
such that $\widehat{g}(x) > \frac{1}{\gamma_2} \cdot g(x) + 1 - \frac{1}{\gamma_2}$.
This implies that in the reversed knapsack problem with values~$\widehat{Q}$
and weight requirement~$z = 1 - x \geq z_0$,
there is a fractional collection $S$ of items with 
aggregate value $1 - \widehat{g}(x) < \frac{1}{\gamma_2} \cdot (1 - g(x))$.

Since in the reversed knapsack problem,
one would never need to pick any item~$\omega$ with $P(\omega) = 0$,
it follows that any selected item~$\omega$ must
satisfy $Q(\omega) \leq \gamma_2 \cdot \widehat{Q}(\omega)$.
Therefore, it follows that in the reversed knapsack problem
with original values~$Q$ and weight requirement $z$, the collection~$S$
is a feasible solution with aggregate value
at most $\gamma_2 \cdot (1 - \widehat{g}(x)) < 1 - g(x) = 1 - g(1-z)$,
thereby reaching a contradiction.

Combining both parts, we have the required result.
\end{proof}

Lemma~\ref{lem:approx-knapsack} suggests an approximation notion
for power curves that involves both radial and tangential stretching
transformations. Hence, it would not readily give a symmetric approximation notion.
The next lemma shows that radial and tangential stretching transformations are
comparable, which means we can pick one of them to define an approximation
notion on power curves.

\begin{lemma}[Comparing Radial and Tangential Stretchings]
\label{lemma:compare_stretch}
For each orientation~$\side \in \{0,1\}$, $\gamma \geq 1$
and any power function~$g$,

$\mf{S}^{(\side, \msf{rad})}_\gamma(g) \preceq \mf{S}^{(\side, \msf{tan})}_\gamma(g).$
\end{lemma}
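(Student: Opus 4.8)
The plan is to reduce the statement to the single orientation $\side = 0$ and then to a one-line concavity inequality. First I would exploit the reflection symmetry built into Definition~\ref{defn:curve_stretching}: since $\mf{S}^{(1,\msf{d})}_\gamma = \mf{R}\circ\mf{S}^{(0,\msf{d})}_\gamma\circ\mf{R}$ for $\msf{d}\in\{\msf{rad},\msf{tan}\}$, since $\mf{R}$ sends power functions to power functions (because $\mf{R}(\Pow(P\|Q)) = \Pow(Q\|P)$), and since $g_1\preceq g_2$ \emph{iff} $\mf{R}(g_1)\preceq\mf{R}(g_2)$, it suffices to prove $\mf{S}^{(0,\msf{rad})}_\gamma(g)\preceq\mf{S}^{(0,\msf{tan})}_\gamma(g)$ for every power function $g$: applying that inequality to $\mf{R}(g)$ and reflecting both sides recovers the $\side=1$ case.

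Fixing $x\in[0,1]$, I would note that $\mf{S}^{(0,\msf{rad})}_\gamma(g)(x) = g(\min\{\gamma x,1\})\le 1$ since a power function maps into $[0,1]$, so it is enough to prove
$$g(\min\{\gamma x,1\}) \;\le\; \gamma\cdot\bigl(g(x)-g(0)\bigr) + g(0).$$
Along the way I would record that every power function is non-decreasing: by Fact~\ref{fact:power}, $g$ is concave with $g(t)\ge t$, hence $g(1)=1$ (the codomain is $[0,1]$), and a concave $g$ that were strictly decreasing at some $t_0<1$ would stay so on $[t_0,1]$, forcing $g(1)<g(t_0)\le 1 = g(1)$; in particular $g(x)\ge g(0)$, so the right-hand side above is itself at least $g(x)\ge 0$ and the $\min$ with $1$ in the definition of $\mf{S}^{(0,\msf{tan})}_\gamma$ is the only clamping that can bite there.

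Then I would split on the clamp. In the case $\gamma x < 1$ the target reads $g(\gamma x)\le \gamma(g(x)-g(0))+g(0)$; writing $\phi(t):=g(t)-g(0)$, which is concave with $\phi(0)=0$, concavity gives $\phi(\lambda s)\ge\lambda\,\phi(s)$ for all $s,\lambda\in[0,1]$ (via the convex combination $\lambda s+(1-\lambda)\cdot 0$), and taking $s=\gamma x$ and $\lambda=\tfrac{1}{\gamma}$ yields $\phi(x)\ge\tfrac{1}{\gamma}\phi(\gamma x)$, which is exactly the desired inequality. In the case $\gamma x\ge 1$ the left-hand side equals $g(1)=1$, while concavity of $g$ along the chord from $(0,g(0))$ to $(1,1)$ gives $g(x)\ge(1-x)g(0)+x$, so $\gamma(g(x)-g(0))+g(0)\ge\gamma x(1-g(0))+g(0)\ge(1-g(0))+g(0)=1$, using $\gamma x\ge 1$ and $g(0)\le 1$. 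Combining the two cases finishes the orientation-$0$ claim, and by the reduction above, the lemma.

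The only real subtlety — and the step I would be most careful about — is that the truncation $\min\{\cdot,1\}$ sits \emph{on the argument} for radial stretching but \emph{on the value} for tangential stretching, which is precisely why the argument must branch on whether $\gamma x<1$; the branch $\gamma x\ge 1$ is the one needing the chord bound through the fixed endpoint $(1,1)$ rather than the homogeneity-type bound $\phi(\gamma x)\le\gamma\phi(x)$ used in the other branch. Beyond bookkeeping with these clamps I do not anticipate any genuine obstacle.
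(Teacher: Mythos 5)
Your proof is correct and takes essentially the same approach as the paper: both reduce to orientation $\side=0$ by reflection and then derive $g(\gamma x)\le \gamma(g(x)-g(0))+g(0)$ from concavity via the convex combination $x=\tfrac{1}{\gamma}(\gamma x)+(1-\tfrac{1}{\gamma})\cdot 0$. The only difference is cosmetic: the paper avoids your case split on $\gamma x\ge 1$ by extending $g$ to $[0,+\infty)$ with constant value $1$ (which stays concave), whereas you handle that branch with a separate chord bound through $(1,1)$.
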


\begin{proof}
To simplify the notation, 
we extend the domain of a power function to $g:[0, +\infty) \to [0,1]$,
where $g(x) = 1$ for $x \geq 1$.  Observe that $g$ is still concave.

We show that result for $\side = 0$,
and use the expressions in Definition~\ref{defn:curve_stretching}.
It suffices to show that for all $x \in [0,1]$,

\begin{equation} \label{eq:compare}
g(\gamma x) \leq \gamma \cdot (g(x) - g(0)) + g(0).
\end{equation}

This is because we still have $g(x) \leq 1$ for all $x \geq 0$.
Hence, when the right-hand side of (\ref{eq:compare}) is larger than 1,
the left-hand side is still at most 1.

Rearranging (\ref{eq:compare}) is equivalent to:

$$\frac{1}{\gamma} \cdot g(\gamma x) + (1 - \frac{1}{\gamma}) \cdot g(0)
\leq g(\frac{1}{\gamma} \cdot \gamma x + (1 - \frac{1}{\gamma}) \cdot 0),$$

which holds because $g$ is concave.

The result for the orientation $\side = 1$ follows immediately by symmetry.
\end{proof}

From Lemma~\ref{lemma:compare_stretch},
it follows that using tangential stretching
will lead to a more convenient notion of power curve approximation;
see Figure~\ref{fig:approx}.

\begin{definition}[$\gamma$-Approximation for Power Function]
\label{defn:approx_power}
Given $\gamma \geq 1$ and a power function~$g$,
the $\gamma$-approximation of $g$ is defined as:

$\mf{S}_\gamma(g) := \min \{\mf{S}^{(0, \msf{tan})}_\gamma(g),
\mf{S}^{(1, \msf{tan})}_\gamma(g)\}.$
\end{definition}

From Fact~\ref{fact:power},
it is easy to check that 
 applying the operator $\mf{S}_\gamma$
to a power function~$g$
will return a power function 
$\mf{S}_\gamma(g)$,
because the minimum of two concave functions is still concave.

\begin{figure}[H]
	\centering
	\begin{subfigure}[b]{0.45\textwidth}
		\includegraphics[width = \textwidth]{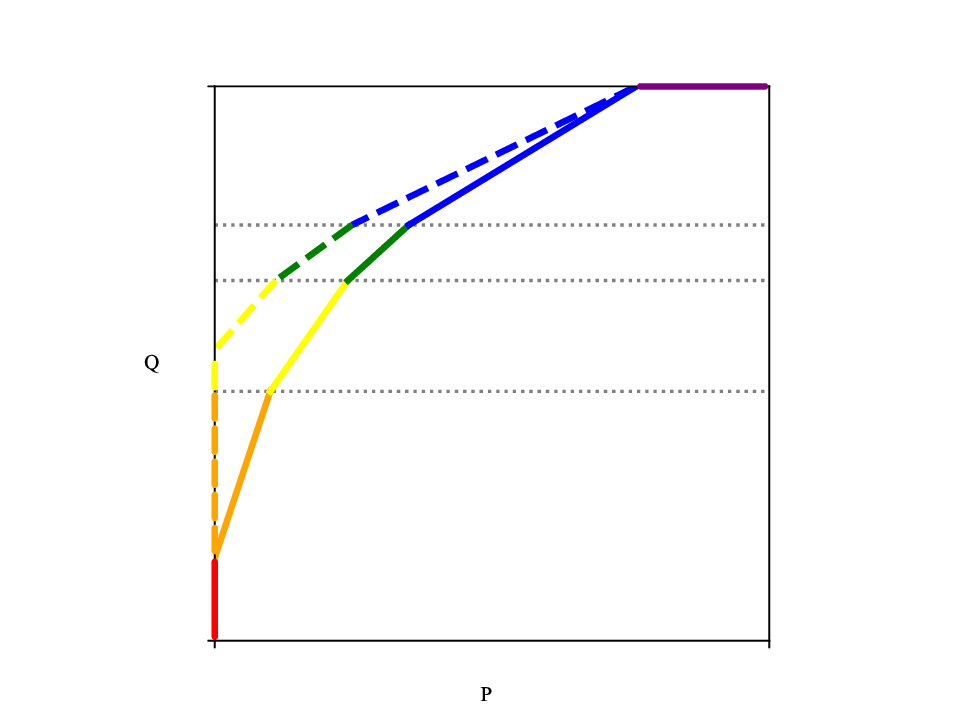}
		\caption{$\mf{S}^{(1, \msf{tan})}_\gamma(g)$}
	\end{subfigure}
	\hfill
	\begin{subfigure}{0.45\textwidth}
		\includegraphics[width = \textwidth]{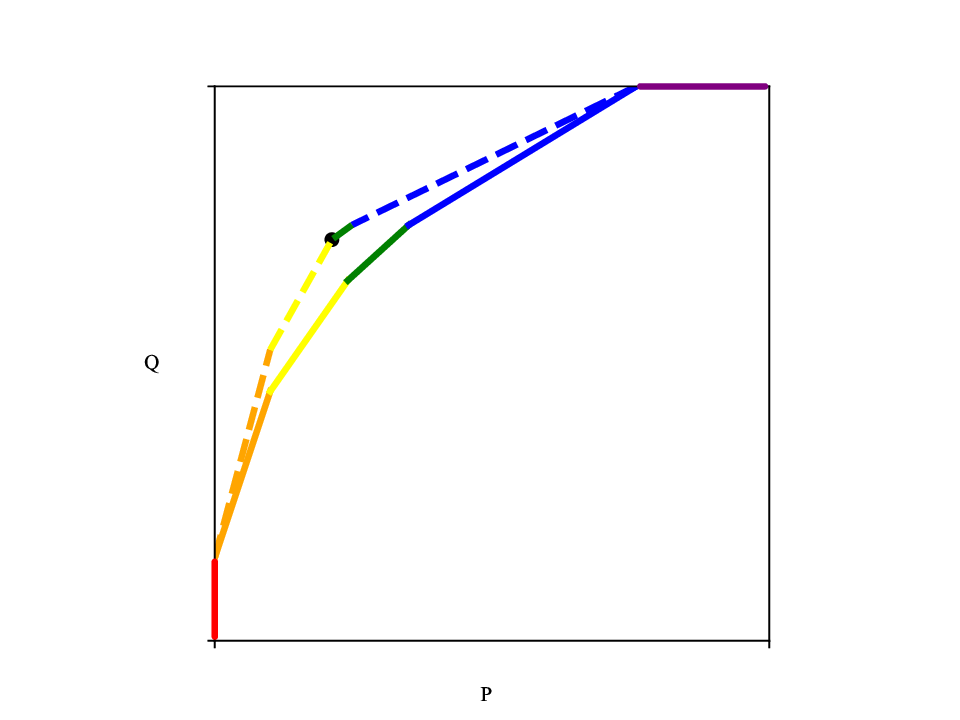}
		\caption{$\mf{S}_\gamma(g)$}
	\end{subfigure}
	\caption{We consider the same $g = \Pow(P||Q)$
	from Figure~\ref{fig:power-func}.
		The dotted curve in left figure shows $\mf{S}^{(1, \msf{tan})}_\gamma(g)$. Taking the minimum with
		$\mf{S}^{(0, \msf{tan})}$ from Figure~\ref{fig:0tan},
		the dotted curve on the right figure shows
		shows $\mf{S}_\gamma(g)$.}
	\label{fig:approx}
\end{figure}

Now we have all the technical tools to describe
the approximation result for power function.

\begin{theorem}[Approximation to Minimal Power Function]
\label{th:min_power}
Suppose in a distribution instance $G$, 
for $0 \leq \tau \leq \frac{1}{2}$,
$\vec{\alpha} = (\alpha^{(0)}, \alpha^{(1)})$ is
a distribution refinement pair
such that
$\alpha^{(1)}$ achieves $\tau$-multiplicative error,
and $\alpha^{(0)}$ is the proportional response to $\alpha^{(1)}$.
Then, the corresponding power function
satisfies

$$\Pow(\alpha^{(0)} \| \alpha^{(1)}) \preceq \mf{S}_{1+2\tau}(\Pow^*(G)).$$
\end{theorem}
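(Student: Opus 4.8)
The plan is to reduce the claim to a single application of Lemma~\ref{lem:approx-knapsack}, via the observation that proportional response collapses the edge-level power function onto the vertex side~$\Izero$. Concretely, since $\alpha^{(0)}$ is the proportional response to $\alpha^{(1)}$, every edge $ij\in\mcal{F}$ with $\alpha^{(0)}(ij)>0$ has $\frac{\alpha^{(1)}(ij)}{\alpha^{(0)}(ij)}=\rho^{(0)}(i)$ in the fractional-knapsack interpretation of $\Pow(\alpha^{(0)}\|\alpha^{(1)})$ (Definition~\ref{defn:power}), a density depending only on the endpoint $i\in\Izero$; edges with $\alpha^{(0)}(ij)=0$ have $\alpha^{(1)}(ij)=0$ as well and may be discarded. (A non-isolated $i$ has $\rho^{(0)}_*(i)>0$, so $\tau<1$ forces $\rho^{(0)}(i)>0$ and the proportional response is well defined; isolated vertices have no incident edges and play no role.) Since merging items of equal density leaves a power function unchanged, $\Pow(\alpha^{(0)}\|\alpha^{(1)})$ equals the power function of the coarse-grained instance on item set~$\Izero$ in which item~$i$ has weight $w(i)$ and value $p^{(0)}(i)=\rho^{(0)}(i)\,w(i)$. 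Running the same argument on a locally maximin pair $\vec{\beta}=(\beta^{(0)},\beta^{(1)})$ that is a proportional response to each other --- which exists by Fact~\ref{fact:exist_maximin} together with Lemma~\ref{lem:local_maximin_sym} and realizes $\Pow(\beta^{(0)}\|\beta^{(1)})=\Pow^*(G)$ by Theorem~\ref{th:local_power} --- and invoking Fact~\ref{fact:maximin_decomp} to see that $\beta^{(1)}$ induces exactly the decomposition density $\rho^{(0)}_*$ on~$\Izero$, shows that $\Pow^*(G)$ equals the power function of the coarse-grained instance on~$\Izero$ in which item~$i$ has weight $w(i)$ and value $p^{(0)}_*(i)=\rho^{(0)}_*(i)\,w(i)$.

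With this in hand, I would define distributions $P,Q,\widehat{Q}$ on~$\Izero$ by $P(i):=w(i)$, $Q(i):=p^{(0)}_*(i)$, $\widehat{Q}(i):=p^{(0)}(i)$; each sums to $1$ because $G$ is a distribution instance and $\sum_i p^{(0)}_*(i)=\sum_i p^{(0)}(i)=\sum_{j\in\Ione}w(j)=1$. Then $\Pow(P\|Q)=\Pow^*(G)$ and $\Pow(P\|\widehat{Q})=\Pow(\alpha^{(0)}\|\alpha^{(1)})$. The $\tau$-multiplicative error of $\alpha^{(1)}$ means $(1-\tau)\rho^{(0)}_*(i)\le\rho^{(0)}(i)\le(1+\tau)\rho^{(0)}_*(i)$ for all $i$, hence $(1-\tau)Q(i)\le\widehat{Q}(i)\le(1+\tau)Q(i)$ after multiplying by $w(i)>0$; the exceptional ``$P(i)=0$'' clause of Lemma~\ref{lem:approx-knapsack} never triggers since all vertex weights are positive. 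Applying Lemma~\ref{lem:approx-knapsack} with $\gamma_1=1+\tau$ and $\gamma_2=\frac{1}{1-\tau}$ gives
\[
\Pow(\alpha^{(0)}\|\alpha^{(1)})\;\preceq\;\min\bigl\{\,\mf{S}^{(0,\,\msf{tan})}_{1+\tau}(\Pow^*(G)),\ \mf{S}^{(1,\,\msf{rad})}_{1/(1-\tau)}(\Pow^*(G))\,\bigr\}.
\]

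It then remains to dominate both terms on the right by $\mf{S}_{1+2\tau}(\Pow^*(G))=\min\{\mf{S}^{(0,\,\msf{tan})}_{1+2\tau}(\Pow^*(G)),\mf{S}^{(1,\,\msf{tan})}_{1+2\tau}(\Pow^*(G))\}$. For the first term, $\gamma\mapsto\mf{S}^{(0,\,\msf{tan})}_\gamma(g)$ is non-decreasing in~$\gamma$ with respect to $\preceq$ (immediate from Definition~\ref{defn:curve_stretching}, using $g(x)\ge g(0)$ for any power function), and $1+\tau\le1+2\tau$, so $\mf{S}^{(0,\,\msf{tan})}_{1+\tau}(\Pow^*(G))\preceq\mf{S}^{(0,\,\msf{tan})}_{1+2\tau}(\Pow^*(G))$. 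For the second term, Lemma~\ref{lemma:compare_stretch} gives $\mf{S}^{(1,\,\msf{rad})}_{1/(1-\tau)}(\Pow^*(G))\preceq\mf{S}^{(1,\,\msf{tan})}_{1/(1-\tau)}(\Pow^*(G))$, and the elementary bound $\frac{1}{1-\tau}\le1+2\tau$ --- equivalent to $\tau(1-2\tau)\ge0$, which is exactly where the hypothesis $\tau\le\frac12$ is used --- together with the $\preceq$-monotonicity of $\mf{S}^{(1,\,\msf{tan})}_\gamma$ in~$\gamma$ yields $\mf{S}^{(1,\,\msf{tan})}_{1/(1-\tau)}(\Pow^*(G))\preceq\mf{S}^{(1,\,\msf{tan})}_{1+2\tau}(\Pow^*(G))$. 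Taking the minimum of the two bounds gives $\Pow(\alpha^{(0)}\|\alpha^{(1)})\preceq\mf{S}_{1+2\tau}(\Pow^*(G))$, as required.

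I expect the main obstacle to be the reduction in the first paragraph: recognizing that proportional response makes all edges at a vertex $i\in\Izero$ share the common knapsack density $\rho^{(0)}(i)$, so that $\Pow(\alpha^{(0)}\|\alpha^{(1)})$ and $\Pow^*(G)$ are both power functions of knapsack instances with the same weight distribution~$w$ over~$\Izero$ and values differing by a $(1\pm\tau)$ factor --- which is precisely the shape Lemma~\ref{lem:approx-knapsack} is built for. After that reduction the remaining work is bookkeeping with the monotonicity of the stretching transformations and the numerical inequality $\frac{1}{1-\tau}\le1+2\tau$. The degenerate cases (isolated vertices, edges of zero weight, vertices receiving zero payload) are minor and are dispatched as noted, since a distribution instance has only positive vertex weights.
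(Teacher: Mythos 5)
Your proposal is correct and follows essentially the same route as the paper's proof: reduce both $\Pow(\alpha^{(0)}\|\alpha^{(1)})$ and $\Pow^*(G)$ to fractional knapsack instances on $\Izero$ with common weights $w$ and values $p^{(0)}$ versus $p^{(0)}_*$, apply Lemma~\ref{lem:approx-knapsack} with $\gamma_1 = 1+\tau$ and $\gamma_2 = \tfrac{1}{1-\tau}$, and finish via Lemma~\ref{lemma:compare_stretch} and the bound $\tfrac{1}{1-\tau}\le 1+2\tau$. Your write-up is in fact slightly more explicit than the paper's about the final monotonicity bookkeeping and the degenerate cases.
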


\begin{proof}
Consider the density decomposition
as in Definition~\ref{def:decomposition}
with $\Izero$ as the ground set,
in which element $i \in \Izero$
has density $\rho_*(i)$.

In view of Fact~\ref{fact:maximin_decomp}
that relates the density decomposition
to locally maximin refinements
and the interpretation
of power functions in Definition~\ref{defn:power}
as the fractional knapsack problem,
the power function $g^* := \Pow^*(G)$ is related to
the following instance of the fractional knapsack
problem.

Given some locally maximin
refinement pair $(\alpha^{(0)}_*, \alpha^{(1)}_*)$
that are proportional responses to each other,
the items in the knapsack problem 
are pairs in $\mcal{F}$ such that
$(i,j) \in \mcal{F}$
has weight
$\alpha^{(0)}_*(ij)$ and value~$\alpha^{(1)}_*(ij)$.
However,
if we fix some~$i \in \Izero$,
then any $(i,j) \in \mcal{F}$ with
$\alpha^{(0)}_*(ij) > 0$
must have value-to-weight density
$\rho_*(i) = \frac{\alpha^{(1)}_*(ij)}{\alpha^{(0)}_*(ij)}$.
Since we may select items fractionally,
it is equivalent to consolidate 
all items with the same density as a single item (with the
corresponding aggregated weight).

Therefore, we can interpret
$\Pow^*(G)$ via an alternative instance of the fractional knapsack problem
in which $\Izero$ is the collection
of items, where each $i \in \Izero$ has weight $w(i)$
and value-to-weight density $\rho_*(i)$.
Recall that for $x \in [0,1]$,
$\Pow^*(G)(x)$ is the maximum achieved value by
fractionally picking items whose aggregate weight is at most $x$.


Next,  we consider a perturbation of the item values (but keeping the same item weights). If $\alpha^{(1)}$ is a refinement on 
vertex weights on $\Ione$ and has $\tau$-multiplicative error,
this means that this refinement will induce
a density vector $\rho$ for $\Izero$
such that for each $i \in \Izero$,
$(1-\tau) \rho_*(i) \leq \rho(i) \leq (1 +\tau) \rho_*(i)$.

Moreover, if $\alpha^{(0)}$ is the proportional response to
$\alpha^{(1)}$, then
the power function $g := \Pow(\alpha^{(0)} \| \alpha^{(1)})$
corresponds to an instance
of the fractional knapsack problem
such that each item $i \in \Izero$ still has weight~$w(i)$,
but its value is $\rho(i) \cdot w(i)$.

Hence, we can apply Lemma~\ref{lem:approx-knapsack}
with $\gamma_1 = 1 + \tau$
and $\gamma_2 = \frac{1}{1 - \tau} \leq 1 + 2 \tau$
for $\tau \in [0, \frac{1}{2}]$.
Then,
we have
$g \preceq
\min\{\mf{S}^{(0, \msf{tan})}_{\gamma_1}(g^*),
				\mf{S}^{(1, \msf{rad})}_{\gamma_2}(g^*)\}.$
				
Finally, applying Lemma~\ref{lemma:compare_stretch}
to compare tangential and radial stretching
and using Definition~\ref{defn:approx_power}
for power curve approximation gives the required result.
\end{proof}

\section{Characterizing Market Equilibrium via Local Maximin Condition}
\label{sec:market}

In this section, 
we show how an instance in Definition~\ref{defn:input}
can be interpreted as a special symmetric case
of a linear Fisher market, which is itself a special case
of a linear Arrow-Debreu market.
We show that in Fisher markets,
a market equilibrium can be equivalently
characterized by a simpler local maximin condition.
Therefore, existing algorithms for finding approximate market equilibria
may be used to find approximate refinements as in Definition~\ref{defn:approx_refine}.

\subsection{Linear Arrow-Debreu and Fisher Markets}

\begin{definition}[Linear Arrow-Debreu Market]
\label{defn:arrow-debreu}
There is a collection $\mcal{I}$ of agents,
where each~$i \in \mcal{I}$ has one unit of divisible
good of type~$i$.
From agent~$i$'s perspective,
per unit of another good~$j$ has value~$w_i(j)$.
\ignore{
A typical assumption is that
for any non-empty proper  subset $\emptyset \neq S \subsetneq \mcal{I}$,
there exist $i \in S$ and $j \notin S$ such that $w_i(j) > 0$.
}

An allocation~$\vec{x}$ specifies how each good is \emph{completely} allocated
to the agents. Specifically, $x_{j \to i} \geq 0$
is the fraction of good~$j$ assigned to agent~$i$, i.e.,
for each $j$, $\sum_{i \in \mcal{I}} x_{j \to i} = 1$.

Under allocation $\vec{x}$, the utility of agent~$i$ is $u_i(\vec{x}) := \sum_{j \in \mcal{I}} w_i(j) \cdot
x_{j \to i}$.

\noindent \textbf{Special Bipartite Case.}
A market is \emph{bipartite} if the agents can be partitioned into two sides
such that $w_i(j) > 0$ implies that $i$ and $j$ are from different sides.
Moreover, an allocation~$\vec{x}$ is bipartite (with
respect to the same bipartition) if
$x_{i \to j} > 0$ implies that $i$ and $j$ are from different sides.
\end{definition}

Equilibrium for Arrow-Debreu markets is defined by introducing
the notion of good \emph{prices}, which gives an evaluation for each 
good.  An intuitive view
is that there is some platform that buys the good from each agent
at this price, after which each agent can use this earned money
to buy other goods from the platform at their corresponding prices.

\begin{definition}[Arrow-Debreu Market Equilibrium]
\label{defn:arrow-debreu-equi}
Given an instance of a market in Definition~\ref{defn:arrow-debreu},
an allocation $\vec{x}$ is an equilibrium if there
exist positive good prices $\vec{p} \in \R^{\mcal{I}}$
such that the following holds.

\begin{compactitem}
\item Agent~$i$'s good is considered to worth money~$p_i$ per unit, and
agent~$i$ must use all this amount to buy goods to realize the
allocation $\vec{x}$, i.e.,
for each $i$, we have $p_i = \sum_{j \in \mcal{I}} x_{j \to i} \cdot p_j$.

\item Agent~$i$ is going to spend money
on goods that have the best value-to-price ratio,
i.e., $x_{j \to i} > 0$
only if $j \in \arg \max_{k \in \mcal{I}} \frac{w_i(k)}{p_k}$.
\end{compactitem}
\end{definition}

With the introduction of money, observe that an agent~$i$ receiving
good~$j$ in an equilibrium allocation~$\vec{x}$ does not mean that agent~$j$ will receive good~$i$.
In other words, it is possible that
$x_{j \to i} > 0$, but $x_{i \to j} = 0$.

\noindent \textbf{Non-Uniqueness for Equilibrium Prices.}
Note that equilibrium prices are not unique, 
because any positive scalar multiple of an equilibrium price
vector is still an equilibrium.
However, equilibrium prices are not unique even after normalization
with some scalar.  An extreme example is when there are two independent
ecosystems, where equilibrium prices in each system may be scaled
arbitrarily.  This example can be modified to an irreducible market,
where an agent from one system has non-zero, but minimal, valuations
for goods from the other system.
On the other hand, the utilities of agents are unique at equilibrium.
Therefore, an equilibrium characterization based on a local maximin condition
in terms of utilities is more direct without going through the notion of prices.

A Fisher market can be viewed as a special case
of bipartite Arrow-Debreu markets,
where the view of one side is \emph{objective}
in the sense defined as follows.

\begin{definition}[Fisher Market]
A Fisher market can be viewed as a special case of
Definition~\ref{defn:arrow-debreu} in which
the agents are partitioned into buyers
$\mcal{B}$ and sellers $\mcal{S}$ such that
the valuations of agents satisfy the following.

\begin{compactitem}

\item Each buyer has non-zero valuations
only for sellers' goods.
The valuation of buyer~$i$ on
per unit of seller~$j$'s goods can be
any $w_i(j) \geq 0$.
In this case, we say that the view of the buyers may be
\emph{subjective}.

\item Each seller has non-zero valuations
only for buyers' goods.  Moreover,
for each buyer~$i \in \mcal{B}$,
there exists $B_i$ (\emph{aka} budget) such that
if buyer~$i$ has a positive valuation $w_i(j) > 0$
on seller~$j$'s good,
then seller~$j$ derives value $w_j(i) = B_i$
from per unit of buyer~$i$'s good.

As we shall see, if a buyer~$i$ has zero $w_i(j) = 0$ interest
on good~$j$, then in an equilibrium, buyer~$i$ will never
receive any positive fraction of good~$j$.
Hence, the value $w_j(i)$ of buyer~$i$'s good from the perspective
of seller~$j$ is irrelevant.
The convention is that we assume either $w_j(i) = B_i$ (in which case
the view of the sellers is \emph{objective}),
or $w_j(i) = 0$ (i.e., sellers' view is partially objective).

\end{compactitem}

As before, an allocation~$\vec{x}$ specifies how each agent
distributes its good completely to other agents.
For buyer~$i$ and seller~$j$,
the notation $b_{i \to j} := B_i \cdot x_{i \to j}$
is also used.
\end{definition}

\begin{remark}[Fisher Market Equilibrium]
\label{rem:fisher_equi}
Because the Fisher market has a special structure,
the prices $\vec{p}$ for an equilibrium allocation $\vec{x}$ in
Definition~\ref{defn:arrow-debreu-equi}
are chosen to have the following special form in the literature.

\begin{compactitem}

\item As aforementioned, equilibrium prices are not unique in an Arrow-Debreu
equilibrium.  However, it is possible to choose,
for each buyer~$i$, $p_i = B_i$.

This means that the good for each buyer~$i$ becomes a
form of currency from the perspective of sellers.
Hence, sellers view that every buyer's good has the same value-to-price ratio.

Observe that when viewed as an Arrow-Debreu market,
the equilibrium prices for buyers' goods need 
not be proportional to the budgets.

\item For each seller~$j$, the price
of its good is naturally induced by
the amount it receives from the buyers: $p_j = \sum_{i \in \mcal{B}} b_{i \to j}$,
where $b_{i \to j} = B_i \cdot x_{i \to j}$.
Since we can view that seller~$j$ has utility $B_i$ for buyer~$i$'s total budget,
sometimes it is more convenient to view $p_j$ as the utility of
seller~$j$.

In the literature, for Fisher markets, $\vec{b}$ typically refers to 
the actual amount $b_{i \to j}$ sent from a buyer~$i$ to a seller~$j$,
and $\vec{x}$ refers to the fraction $x_{j \to i}$
of good~$j$ allocated to buyer~$i$.  However,
we will continue to use the notation $x_{i \to j} = \frac{b_{i \to j}}{B_i}$.

Even though equilibrium allocations may not be unique,
it is known that the equilibrium utilities are unique.
After we fix $p_i = B_i$ for all buyers~$i$ as aforementioned,
for each seller~$j$, $p_j$ represents both the corresponding price and utility,
which is therefore unique under equilibrium.

\item In the literature, an equilibrium allocation in a Fisher market
also satisfies a proportional response condition.  Specifically,
the allocation of sellers' goods is a proportional response to the allocation
of the buyers' budgets.  This is captured by the equation that
for each buyer~$i$ and seller~$j$:

$B_i \cdot x_{i \to j} = b_{i \to j} = x_{j \to i} \cdot p_j$.

Contrary to the more general Arrow-Debreu equilibrium,
it must hold that $x_{i \to j} > 0$ \emph{iff} $x_{j \to i} > 0$.
\end{compactitem}

The above discussion implies that it is straightforward
to check whether a budget allocation $\vec{b}$ from buyers to sellers
is an equilibrium.  The price $p_j$ of seller~$j$'s good
is simply the total amount of money received by~$j$.
Then, it suffices to check that for every buyer~$i$,
$b_{i \to j} > 0$ only if $j$ attains the minimum
price to value ratio $\frac{p_j}{w_i(j)}$.
The formal argument is given in Fact~\ref{fact:fisher_local}.

On the other hand, given an allocation $\vec{x}$ of
sellers' goods to buyers, it is possible to check whether it is
an equilibrium by first deriving
a budget allocation $b_{i \to j} := \frac{w_i(j) \cdot x_{j \to i}}{u_i(\vec{x})} \cdot B_i$ by proportional response, and then check whether
the resulting $\vec{b}$ is an equilibrium. (It is known
that if the derived $\vec{b}$ is an equilibrium,
then the proportional response to $\vec{b}$ will return the original $\vec{x}$.)

One of our discoveries is that it is possible to directly check whether $\vec{x}$
is an equilibrium by considering a similar local maximin condition at every seller.
\end{remark}

We next describe how an input instance in Definition~\ref{defn:input}
can be interpreted as a symmetric instance of Fisher markets,
in which both buyers and sellers have (partially) objective views.
Hence, there is no distinction between buyers and sellers, as both groups of agents have exactly the same role.

\begin{definition}[Symmetric Fisher Market]
\label{defn:sym_fisher}
Given an instance $(\Izero, \Ione; \mcal{F}; w)$,
for side~$\side \in \B$,
an agent~$i \in \Ib$ has a non-zero valuation
on some good $j \in \Iob$ only if $\{i, j\} \in \mcal{F}$,
in which case agent~$i$'s valuation on per unit of good~$j$ is
$w^{(\ob)}(j)$.
\end{definition}

\noindent \textbf{Equilibrium Characterization for Symmetric Markets.}
Observe that as mentioned in Remark~\ref{rem:fisher_equi},
the description of the Fisher equilibrium is not symmetric between the two sides,
because the vertex weights on one side act as budgets of buyers.
We will instead give an alternative equilibrium characterization
based on the local maximin condition that is symmetric for agents from both sides.

\subsection{Characterizing Fisher Equilibrium via Local Maximin Conditions}
\label{sec:local_equi}

As seen in Sections~\ref{sec:sym_decomp},
\ref{sec:univeral_matching} and~\ref{sec:dist_refine},
 the local maximin condition is a key concept in the various scenarios.
Hence, we explore if a similar notion is relevant
in markets.  However, since the value of a good may be subjective
among agents, instead of considering refinements (that specify
how the objective weight of an item is distributed),
we adapt the local maximin condition to allocations (that
specify how an item is fractionally distributed).

\begin{definition}[Local Maximin Condition in Arrow-Debreu Markets]
\label{defn:local_AD}
In an Arrow-Debreu market as in Definition~\ref{defn:arrow-debreu},
an allocation~$\vec{x}$ achieves the local maximin condition
at~$i \in \mcal{I}$ if, $x_{i \to j} > 0$ implies that

$j \in \arg \min_{k \in \mcal{I}} \frac{u_k(\vec{x})}{w_k(i) \cdot w_i(k)}$,

where we use the convention that for any $x \in \R$, $\frac{x}{0} = +\infty$.

Since $u_k(\vec{x})$ is the utility of agent~$k$,
the quantity~$\frac{u_k(\vec{x})}{w_k(i)}$ measures
the utility of~$k$ in terms of the number of units of good~$i$.
Finally, from agent~$i$'s perspective, the value of good~$k$
is $w_i(k)$; hence,
the quantity $\frac{u_k(\vec{x})}{w_k(i) \cdot w_i(k)}$
measures the ``utility density'' of agent~$k$ from $i$'s perspective.
The local maximin condition means that agent~$i$ will allocate non-zero
fraction of its good only to agents~$j$ that achieve minimum utility density
from its perspective.

If an allocation $\vec{x}$ achieves the local maximin condition
at every $i \in \mcal{I}$,
we simply say that $\vec{x}$ is locally maximin.

\noindent \textbf{Special Bipartite Case.}
In a bipartite market
with bipartition $\mcal{I} = \Izero \cupdot \Ione$
of agents, for side~$\side \in \B$,
we use 
$\vec{x}^{(\side)} = x_{\Ib \to \Iob}$
to denote the complete (fractional) allocation of
items from $\Ib$ to $\Iob$.  In this case,
$\vec{x}^{(\side)}$ is locally maximin if it is locally maximin
at every agent in $\Ib$.
\end{definition}

Observe that the Fisher market is a special case
of bipartite Arrow-Debreu markets.  Moreover,
as in Remark~\ref{rem:fisher_equi},
the local maximin condition on buyer budget allocation
is equivalent to the market equilibrium condition.

\begin{fact}[Fisher Equilibrium Equivalent to Locally Maximin Buyers]
\label{fact:fisher_local}
In a Fisher market,
a buyer budget allocation $\vec{b}$ is locally maximin 
\emph{iff} it corresponds to a market equilibrium.
\end{fact}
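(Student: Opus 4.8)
The plan is to prove both implications by unwinding the two definitions and exploiting a single bookkeeping identity. Given a buyer budget allocation $\vec{b}$ (so $b_{i \to j} \geq 0$ and $\sum_{j \in \mcal{S}} b_{i \to j} = B_i$ for every buyer $i \in \mcal{B}$), put $p_j := \sum_{i \in \mcal{B}} b_{i \to j}$ for sellers and $p_i := B_i$ for buyers, and let $\vec{x}$ be the proportional response, $x_{j \to i} := b_{i \to j}/p_j$ and $x_{i \to j} := b_{i \to j}/B_i$. The key observation is that a seller's utility under $\vec{x}$ equals its price: $u_j(\vec{x}) = \sum_{i \in \mcal{B}} w_j(i)\, x_{i \to j} = \sum_{i \in \mcal{B}} B_i\, x_{i \to j} = \sum_{i \in \mcal{B}} b_{i \to j} = p_j$. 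The middle equality holds because any term with $x_{i \to j} > 0$ (equivalently $b_{i \to j} > 0$) forces $w_i(j) > 0$ — a consequence of the convention $\frac{x}{0} = +\infty$ together with the local maximin hypothesis in one direction, and of buyer-optimality in the other — and on an edge of $\mcal{F}$ the seller's view is objective, so $w_j(i) = B_i$. With $u_j(\vec{x}) = p_j$ and $w_j(i) = B_i$, the utility density of seller $j$ from buyer $i$'s viewpoint (Definition~\ref{defn:local_AD}) collapses to $\frac{u_j(\vec{x})}{w_j(i)\, w_i(j)} = \frac{p_j}{B_i\, w_i(j)}$; since $B_i$ is a fixed positive constant, minimizing this over $j$ is the same as maximizing the value-to-price ratio $\frac{w_i(j)}{p_j}$. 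Hence the local maximin condition at buyer $i$ (namely $b_{i \to j} > 0 \Rightarrow j$ attains the minimum density) is \emph{verbatim} the buyer-optimality clause of Definition~\ref{defn:arrow-debreu-equi}.

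For the ($\Leftarrow$) direction, if $\vec{b}$ corresponds to an equilibrium then by Remark~\ref{rem:fisher_equi} we may assume the equilibrium prices take the normalized form above, so the identity applies as stated and buyer-optimality at every buyer \emph{is} the statement that $\vec{b}$ is locally maximin. For the ($\Rightarrow$) direction, assuming $\vec{b}$ locally maximin, I would check each clause of Definition~\ref{defn:arrow-debreu-equi} against the prices $\vec{p}$ and allocation $\vec{x}$ built above. Money conservation holds by construction: $\sum_j x_{j \to i} p_j = \sum_j b_{i \to j} = B_i = p_i$ and $\sum_i x_{i \to j} p_i = \sum_i b_{i \to j} = p_j$. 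Buyer-optimality is the local maximin hypothesis read through the identity. Seller-optimality is automatic: for any seller $j$, every buyer $i$ in its neighbourhood has $w_j(i)/p_i = B_i/B_i = 1$ while buyers outside have ratio $0$, so the $\arg\max$ is exactly that neighbourhood, which contains every $i$ with $x_{j \to i} > 0$. And market clearance $\sum_i x_{j \to i} = p_j/p_j = 1$ holds as long as $p_j > 0$.

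The only real obstacle is the degenerate possibility $p_j = 0$ for some seller $j$, which makes the proportional response and market clearance ill-defined. I would eliminate it directly from the local maximin hypothesis: if $p_j = 0$ and some buyer $i \sim j$ exists, then seller $j$ has utility density $0$ from $i$'s viewpoint, the smallest value possible; since buyer $i$ must place its positive budget on some seller $k$ with $b_{i \to k} > 0$, local maximin forces $k$ to also have density $0$, hence $p_k = 0$, contradicting $p_k \geq b_{i \to k} > 0$. So every seller incident to an edge of $\mcal{F}$ receives a strictly positive price, and sellers with no interested buyer cannot appear in any valid allocation and are excluded (as with isolated vertices in Definition~\ref{defn:input}); running the same argument on equilibrium prices shows positivity is not an extra hypothesis for the ($\Leftarrow$) direction either. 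Everything else is a routine translation between Definitions~\ref{defn:arrow-debreu-equi} and~\ref{defn:local_AD}, so I expect the write-up to be short.
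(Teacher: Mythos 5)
Your proposal is correct and follows essentially the same route as the paper's proof: the key step in both is the identity that a seller's utility equals its price $p_j = \sum_i b_{i \to j}$, after which the utility density $\frac{p_j}{B_i \cdot w_i(j)}$ is minimized exactly when the value-to-price ratio $\frac{w_i(j)}{p_j}$ is maximized, since $B_i$ is a fixed constant across sellers. Your write-up is more thorough than the paper's (explicitly verifying money conservation, market clearance, and ruling out $p_j = 0$ under the local maximin hypothesis), but the core argument is identical.
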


\begin{proof}
Observe that given a buyer budget allocation $\vec{b}$,
the utility of a seller~$j$ is exactly the 
price $p_j = \sum_{i \in \mcal{B}} b_{i \to j}$.
(Moreover, the good allocation $\vec{x}$ is naturally induced
by proportional response $x_{j \to i} := \frac{b_{i \to j}}{p_j}$.)

For the local maximin condition at buyer~$i$,
$b_{i \to j} > 0$ implies that seller~$j$
achieves the minimum utility density $\frac{p_j}{B_i \cdot w_i(j)}$
among all sellers.

However, since $B_i$ is the same for all sellers,
this is equivalent to $j$ achieving the maximum $\frac{w_i(j)}{p_j}$,
which is the value-to-price ratio of good~$j$ from buyer~$i$'s perspective,
as specified in the market equilibrium condition.
\end{proof}

We will show that the local maximin condition on the allocation of the goods
is also equivalent to a market equilibrium, which has not been
investigated before.  We have briefly mentioned
the concept of proportional response in Remark~\ref{rem:fisher_equi},
which we now formally describe in a bipartite market.

\begin{definition}[Proportional Response in Bipartite Markets]
	\label{defn:PR-bipartite-arrow-debreu}
Suppose in a bipartite Arrow-Debreu market with bipartition $\mcal{I} = \Izero \cupdot \Ione$
of agents, for side~$\side \in \B$,
$\vec{x}^{(\side)}$ is an allocation on items from $\Ib$ to $\Iob$.
Then, a proportional response to $\vec{x}^{(\side)}$ is an allocation $\vec{x}^{(\ob)}$ of items from $\mcal{I}^{(\ob)}$
to $\Ib$ satisfying the following:
	\begin{itemize}
		\item If $j \in \mcal{I}^{(\ob)}$ receives a positive utility $u_j(\vec{x}^{(\side)}) = \sum_{i \in \Ib} w_j(i) \cdot x_{i \to j} > 0$ from $\vec{x}^{(\side)}$, 
		then for each $i \in \Ib$, 
		\[
		x^{(\ob)}_{j \to i} = \frac{w_j(i) \cdot x^{(\side)}_{i \to j} }{u_j(\vec{x}^{(\side)})}.
		\]
		\item If $j \in \mcal{I}^{(\ob)}$ receives zero utility from $\vec{x}^{(\side)}$, then in $x^{(\ob)}$, $j$ may distribute its item arbitrarily among $\Ib$,
		i.e., we just need $\sum_{i \in \Ib} x^{(\ob)}_{j \to i} = 1$.
	\end{itemize}
\end{definition}

The following lemma is a generalization 
of Lemma~\ref{lem:local_maximin_sym}.
To avoid degenerate cases,
we assume that every agent~$i$
has another \emph{mutually interested} agent~$j$,
i.e., $w_i(j) \cdot w_j(i) > 0$.

\begin{lemma}[Proportional Response to Locally Maximin Allocation]
\label{lem:PR-maximin-Bipartite}
Consider a bipartite Arrow-Debreu market
with bipartition $\mcal{I} = \Izero \cupdot \Ione$ such that 
every agent has at least one mutually interested agent on the other side.
%
%
%

For side~$\side \in \B$, let
$\vec{x}^{(\side)}$ be an allocation of items from 
$\Ib$ to $\Iob$ that satisfies the local maximin condition. 
Suppose the allocation $\vec{x}^{(\ob)}$ on items in $\Iob$ is the (unique) proportional response to $\vec{x}^{(\side)}$. Then, the following holds:
	
	\begin{itemize}
		\item The allocation $\vec{x}^{(\ob)}$ also satisfies the local maximin condition
		at every agent in $\Iob$.
		\item We get back $\vec{x}^{(\side)}$ as the proportional response to $\vec{x}^{(\ob)}$. 
	\end{itemize}
	
\end{lemma}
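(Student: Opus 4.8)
The plan is to lift the argument of Lemma~\ref{lem:local_maximin_sym} one level of abstraction, replacing the payload density $\rho$ by the \emph{utility density} $\frac{u_k(\vec{x})}{w_k(i)\cdot w_i(k)}$ from Definition~\ref{defn:local_AD}. The only genuinely new feature is that valuations are subjective, so the symmetric product $w_i(j)\cdot w_j(i)$ has to be carried through each computation; once that is done, everything reduces to the same ``inversion'' phenomenon as in the refinement setting. Before anything else I would establish the market analogue of Remark~\ref{remark:prop_resp}, namely that every $j\in\Iob$ receives positive utility $u_j(\vec{x}^{(\side)})>0$, so that $\vec{x}^{(\ob)}$ is indeed uniquely determined and all the denominators below are nonzero. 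If $Z:=\{j\in\Iob: u_j(\vec{x}^{(\side)})=0\}$ were non-empty, I would pick $k\in Z$ with a mutually interested partner $i_k\in\Ib$ (which exists by hypothesis); then $k$ has utility density $0$ from $i_k$'s viewpoint, so the minimum utility density over $i_k$'s neighbours is $0$, and local maximinity forces $i_k$ to allocate its entire good to density-$0$ agents. Any recipient $j$ is then a \emph{finite}-density minimizer, so $w_j(i_k)>0$ and $x^{(\side)}_{i_k\to j}>0$, whence $u_j(\vec{x}^{(\side)})\ge w_j(i_k)\cdot x^{(\side)}_{i_k\to j}>0$, contradicting $j\in Z$.

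Next I would prove the key identity. Fix $i\in\Ib$. Local maximinity of $\vec{x}^{(\side)}$ implies that every $j$ with $x^{(\side)}_{i\to j}>0$ shares one common finite positive utility density $m_i:=\frac{u_j(\vec{x}^{(\side)})}{w_j(i)\cdot w_i(j)}$, and in particular $w_j(i)\cdot w_i(j)>0$ for all such $j$. Using the definition of proportional response together with $\sum_j x^{(\side)}_{i\to j}=1$,
\[
u_i(\vec{x}^{(\ob)})=\sum_{j:\,x^{(\side)}_{i\to j}>0}w_i(j)\cdot\frac{w_j(i)\cdot x^{(\side)}_{i\to j}}{u_j(\vec{x}^{(\side)})}=\frac{1}{m_i}\sum_{j:\,x^{(\side)}_{i\to j}>0}x^{(\side)}_{i\to j}=\frac{1}{m_i}.
\]
Hence $\frac{u_i(\vec{x}^{(\ob)})}{w_i(j)\cdot w_j(i)}=\frac{1}{u_j(\vec{x}^{(\side)})}$ whenever $x^{(\side)}_{i\to j}>0$; this is the analogue of $\rho^{(0)}(i)=1/\rho^{(1)}(j)$ in Lemma~\ref{lem:local_maximin_sym}.

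For the first bullet, I would fix $j\in\Iob$ with $x^{(\ob)}_{j\to i}>0$; since $u_j(\vec{x}^{(\side)})>0$ this forces $x^{(\side)}_{i\to j}>0$, so by the identity the utility density of $i$ (under $\vec{x}^{(\ob)}$, from $j$'s perspective) equals $1/u_j(\vec{x}^{(\side)})$. For a competitor $i'$: if $w_{i'}(j)\cdot w_j(i')=0$, its density is $+\infty$ and there is nothing to check; otherwise $i'\in\Ib$ is a mutually interested neighbour of $j$, and choosing any $j'$ with $x^{(\side)}_{i'\to j'}>0$, the identity gives $\frac{u_{i'}(\vec{x}^{(\ob)})}{w_{i'}(j')\cdot w_{j'}(i')}=\frac{1}{u_{j'}(\vec{x}^{(\side)})}$ while local maximinity of $\vec{x}^{(\side)}$ at $i'$ gives $\frac{u_{j'}(\vec{x}^{(\side)})}{w_{j'}(i')\cdot w_{i'}(j')}\le\frac{u_j(\vec{x}^{(\side)})}{w_j(i')\cdot w_{i'}(j)}$. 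Combining these two relations yields $\frac{u_{i'}(\vec{x}^{(\ob)})}{w_{i'}(j)\cdot w_j(i')}\ge\frac{1}{u_j(\vec{x}^{(\side)})}$, so $i$ attains the minimum and $\vec{x}^{(\ob)}$ is locally maximin at $j$.

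For the second bullet, since $u_i(\vec{x}^{(\ob)})=1/m_i>0$ for every $i\in\Ib$, the proportional response $\vec{y}$ to $\vec{x}^{(\ob)}$ is well-defined; on the support I would substitute $x^{(\ob)}_{j\to i}=\frac{w_j(i)\cdot x^{(\side)}_{i\to j}}{u_j(\vec{x}^{(\side)})}=\frac{x^{(\side)}_{i\to j}}{m_i\cdot w_i(j)}$ and $u_i(\vec{x}^{(\ob)})=1/m_i$ into $y_{i\to j}=\frac{w_i(j)\cdot x^{(\ob)}_{j\to i}}{u_i(\vec{x}^{(\ob)})}$ to get $y_{i\to j}=x^{(\side)}_{i\to j}$, while both sides clearly vanish off the support. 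The step I expect to be the main obstacle is not any single computation but the degenerate-case bookkeeping: ensuring that every quantity appearing in a denominator on the support --- $u_j(\vec{x}^{(\side)})$, $u_i(\vec{x}^{(\ob)})$, $m_i$, and $w_i(j)\cdot w_j(i)$ --- is genuinely positive. This is precisely where the hypothesis that every agent has a mutually interested partner is used, and once it is in place the remaining algebra is a faithful copy of the proof of Lemma~\ref{lem:local_maximin_sym}.
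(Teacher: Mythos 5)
Your proposal is correct and follows essentially the same route as the paper's proof: establish that every agent on the receiving side has positive utility, derive the inversion identity $u_i(\vec{x}^{(\ob)}) = 1/m_i$ (your $m_i$ is the paper's $\lambda_i$), which gives $u_i\cdot u_j = w_i(j)\cdot w_j(i)$ on the support, and then verify both bullets by the same algebra. The only difference is that you spell out the positivity of $u_j(\vec{x}^{(\side)})$ via the contradiction with the zero-utility set $Z$, a step the paper asserts without detail.
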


\begin{proof}
Since $\vec{x}^{(\side)}$ is locally maximin and
every agent has a mutually interested agent
on the other side, it follows that under  $\vec{x}^{(\side)}$,
every agent in $\Iob$ has positive utility.
Therefore, the proportional response
$\vec{x}^{(\ob)}$ to $\vec{x}^{(\side)}$ is unique.
When there is no risk of ambiguity,
we omit the superscript and just write $\vec{x}$.
When $k$ and $l$ are agents from different sides,
we denote $\phi_l(k) := \frac{u_k(\vec{x})}{w_k(l) \cdot w_l(k)}$.

Define $\phi_i(k) = \frac{u_k(\vec{x})}{w_k(i) \cdot w_i(k)}$,
which is the ``utility density'' score of $k$
from agent~$i$'s perspective.

For $i \in \Ib$,
since $\vec{x}^{(\side)}$
is locally maximin at~$i$
and agent~$i$ has a mutually interested agent in $\Iob$,
it follows that
$\lambda_i := \min_{k \in \Iob} \phi_i(k)$ is positive and finite.
Moreover, for $i \in \Ib$ and $j \in \Iob$,
proportional response means that
$x^{(\ob)} _{j \to i} > 0$ \emph{iff} $x^{(\side)}_{i \to j} > 0$,
in which case we must have
$w_i(j) \cdot w_j(i) > 0$ and

$$x^{(\ob)} _{j \to i} = \frac{w_j(i) \cdot x^{(\side)}_{i \to j}}{u_j(\vec{x})} 
= \frac{x^{(\side)}_{i \to j}}{\phi_i(j) \cdot w_i(j)} = \frac{x^{(\side)}_{i \to j}}{\lambda_i \cdot w_i(j)}.$$

For any agent~$k$, let $\Gamma(k)$ denote
the collection of agents~$l$ on the other side
such that both $x_{k \to l}$ and $x_{l \to k}$ are positive.

For $i \in \Ib$,
its utility from $x^{(\ob)}$ is:

$$u_i(\vec{x}) = \sum_{l \in \Gamma(i)} w_i(l) \cdot x_{l \to i}
= \sum_{l \in \Gamma(i)} w_i(l) \cdot \frac{x_{i \to l}}{\lambda_i \cdot w_i(l)}
= \frac{1}{\lambda_i}.
$$

Therefore, $x_{i \to j} > 0$ and $x_{j \to i} > 0$
implies that $u_i(\vec{x}) \cdot u_j(\vec{x}) = w_i(j) \cdot w_j(i)$.

\noindent (1) We prove the first statement. 
For $j \in \Iob$ and $k \in \Ib$, we have
	
	$$\phi_j(k) = \frac{u_k(\vec{x})}{w_k(j) \cdot w_j(k)} = \frac{1}{\lambda_k \cdot w_k(j) \cdot w_j(k)}.$$

Hence, $x^{(\ob)}_{j \to i} > 0$
implies that $\phi_j(i) = \frac{1}{u_j(\vec{x})}$.
It suffices to show that
$x^{(\ob)}_{j \to i} = 0$ implies that
$\phi_j(i) \geq \frac{1}{u_j(\vec{x})}$.

	
	
Since $x_{j \to i} = 0$, we have $x_{i \to j} = 0$. Then,
because $\vec{x}^{(\side)}$
is locally maximin at~$i$,
we have  $\frac{u_j(\vec{x})}{w_i(j) \cdot w_j(i)} = \phi_i(j) \geq \lambda_i = \frac{1}{u_i(\vec{x})}$, i.e., $u_i(\vec{x}) \cdot u_j(\vec{x}) \geq
w_i(j) \cdot w_j(i)$. Therefore, $\phi_j(i) = \frac{u_i(\vec{x})}{w_i(j) \cdot w_j(i)} \geq \frac{1}{u_j(\vec{x})}$, as required.
	
\noindent (2)	
We prove the second statement. Suppose $\vec{z}^{(\side)}$ is the proportional response to $\vec{x}^{(\ob)}$. From Definition~\ref{defn:PR-bipartite-arrow-debreu} of proportional response, we have:
	
	$$z_{i \to j} = \frac{w_i(j) \cdot x_{j \to i}}{u_i(\vec{x})}.$$

	If $x_{i \to j} = 0$, then $x_{j \to i} = 0$, then $z_{i \to j} = 0$. 
	
	If $x_{i \to j} > 0$, then
	$x_{j \to i} > 0$ and $u_i(\vec{x}) \cdot u_j(\vec{x}) = w_i(j) \cdot w_j(i)$.
	Recalling that $x_{j \to i} = \frac{w_j(i) \cdot x_{i \to j}}{u_j(\vec{x})}$,
	we have:
	
	$$z_{i \to j} = \frac{w_i(j)}{u_i(\vec{x})} \cdot
	\frac{w_j(i)}{u_j(\vec{x})} \cdot x_{i \to j} = x_{i \to j},$$
	
	as required.
	
	This completes the proof. 
\end{proof}

With Lemma~\ref{lem:PR-maximin-Bipartite},
we can get an alternative characterization
of an equilibrium goods allocation in terms of the
local maximin condition, without the need
to consider good prices.

\begin{theorem}[Fisher Equilibrium Equivalent to Locally Maximin Sellers]
	\label{th:fisher_local_seller}
	In a Fisher market in which
	every buyer has a non-zero utility from at least one good,
	an allocation $\vec{x}$ of goods to buyers 
	corresponds to a market equilibrium
		\emph{iff} $\vec{x}$ 
	is locally maximin (with respect to sellers), i.e., for each
	seller~$j \in \mcal{S}$ and buyer~$i \in \mcal{B}$,
	$x_{j \to i} > 0$ only if
	
	$$i \in \arg \min_{k \in \mcal{B}} \frac{u_k(\vec{x})}{w_j(k) \cdot w_k(j)}.$$
\end{theorem}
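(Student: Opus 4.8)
The plan is to reduce the statement to the already-established equivalence between Fisher equilibria and locally maximin \emph{budget} allocations (Fact~\ref{fact:fisher_local}), using the symmetry of proportional response from Lemma~\ref{lem:PR-maximin-Bipartite}. The bridge is Remark~\ref{rem:fisher_equi}: an equilibrium allocation $\vec{x}$ of sellers' goods to buyers and the corresponding equilibrium budget allocation $\vec{b}$ are proportional responses of each other, $B_i \cdot x_{i \to j} = b_{i \to j} = x_{j \to i} \cdot p_j$ with $p_j = \sum_i b_{i \to j}$ the utility of seller $j$. Viewing the Fisher market as the bipartite Arrow-Debreu market in which buyer $i$'s good is valued at $B_i$ by every seller, the budget allocation $\vec{b}$ is exactly the buyer-side allocation $\vec{x}^{(\mcal{B})}$ (with $x^{(\mcal{B})}_{i\to j} = b_{i\to j}/B_i$) and $\vec{x}$ is exactly the seller-side allocation $\vec{x}^{(\mcal{S})}$. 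First I would check two routine identifications: (i) substituting $w_j(i)=B_i$ into Definition~\ref{defn:PR-bipartite-arrow-debreu} shows the Arrow-Debreu proportional response coincides with the Fisher relation $b_{i\to j}=x_{j\to i}p_j$; and (ii) the Arrow-Debreu local maximin condition at a buyer $i$, namely $x^{(\mcal{B})}_{i\to j}>0 \Rightarrow j\in\arg\min_k \tfrac{p_j}{B_i\,w_i(j)}$, is equivalent (since $B_i$ is constant across sellers) to $j$ maximizing $w_i(j)/p_j$, i.e.\ to $\vec{b}$ being an equilibrium, so Fact~\ref{fact:fisher_local} reads exactly as ``$\vec{b}$ is an equilibrium iff $\vec{x}^{(\mcal{B})}$ is locally maximin.''

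For the forward direction, suppose $\vec{x}$ is an equilibrium allocation of goods to buyers, with prices $\vec{p}$ (and $p_i=B_i$). Setting $b_{i\to j}:=x_{j\to i}p_j$ gives a valid budget allocation whose unique proportional response is $\vec{x}$, and $\vec{b}$ is locally maximin at every buyer by Fact~\ref{fact:fisher_local}. Applying Lemma~\ref{lem:PR-maximin-Bipartite} with the buyer side as $\side$, the unique proportional response $\vec{x}^{(\mcal{S})}$ is locally maximin at every seller, and since proportional responses are unique we have $\vec{x}^{(\mcal{S})}=\vec{x}$. Unravelling the Arrow-Debreu local maximin condition at a seller $j$ (with $u_j=p_j$, $w_j(k)=B_k$) yields precisely $x_{j\to i}>0 \Rightarrow i\in\arg\min_{k\in\mcal{B}}\tfrac{u_k(\vec{x})}{w_j(k)\cdot w_k(j)}$, as claimed.

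For the converse, suppose $\vec{x}$ is locally maximin with respect to the sellers. Applying Lemma~\ref{lem:PR-maximin-Bipartite} with the seller side as $\side$, the unique proportional response $\vec{x}^{(\mcal{B})}=\vec{b}$ (explicitly $b_{i\to j}=\tfrac{w_i(j)\,x_{j\to i}}{u_i(\vec{x})}\,B_i$) is locally maximin at every buyer, and taking its proportional response returns $\vec{x}$. By Fact~\ref{fact:fisher_local}, $\vec{b}$ is an equilibrium budget allocation, and by Remark~\ref{rem:fisher_equi} the goods allocation it induces by proportional response --- namely $\vec{x}$ --- is the associated equilibrium allocation. Hence $\vec{x}$ is a market equilibrium.

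One point requires care. Lemma~\ref{lem:PR-maximin-Bipartite} assumes every agent has a mutually interested agent on the other side. In a Fisher market $w_i(j)>0$ forces $w_j(i)=B_i>0$, so the hypothesis that every buyer has positive utility from some good makes every buyer mutually interested with some seller; goods no buyer desires contribute $0$ to every $u_i$ and have a vacuous seller-side local maximin condition, so they may be discarded, after which every seller too has a mutually interested buyer. I expect this bookkeeping around degenerate agents, together with verifying that the Fisher-market and Arrow-Debreu notions of proportional response and of local maximin genuinely coincide under the identification $p_i=B_i$, to be the only delicate steps; everything else is a direct concatenation of Fact~\ref{fact:fisher_local} and Lemma~\ref{lem:PR-maximin-Bipartite}.
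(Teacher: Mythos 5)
Your proposal is correct and follows essentially the same route as the paper: both directions reduce to Fact~\ref{fact:fisher_local} via Lemma~\ref{lem:PR-maximin-Bipartite}, using that proportional response preserves the local maximin condition and is an involution. Your extra bookkeeping (verifying the Arrow--Debreu and Fisher notions coincide under $w_j(i)=B_i$, and discarding undesired goods so the mutual-interest hypothesis of the lemma holds) is a welcome bit of care that the paper's proof leaves implicit.
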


\begin{proof}
We apply Lemma~\ref{lem:PR-maximin-Bipartite}
and start from the locally maximin goods allocation~$\vec{x}$ to buyers.
Then, the proportional response $\vec{b}$
to $\vec{x}$ is the buyers' budget allocation to the sellers that
is locally maximin.  Moreover, $\vec{b}$ and $\vec{x}$
are proportional response to each other.
From Fact~\ref{fact:fisher_local},
the budget allocation~$\vec{b}$ corresponds to an equilibrium,
and hence, $\vec{x}$ is also an equilibrium.

Conversely, suppose a goods allocation $\vec{x}$ is an equilibrium.
This means that there exists an equilibrium budget allocation~$\vec{b}$
such that $\vec{x}$ is the proportional response to $\vec{b}$.
Fact~\ref{fact:fisher_local} implies that
$\vec{b}$ is locally maximin, and applying Lemma~\ref{lem:PR-maximin-Bipartite}
again, we conclude that $\vec{x}$ is also locally maximin.
\end{proof}

\noindent \textbf{Counter Example for General Arrow-Debreu Markets.}
Contrary to the Fisher market,
we show that the local maximin condition is not 
a sufficient condition for market equilibrium in
general Arrow-Debreu markets.
Our counterexample is a bipartite market, but
the views of both sides are subjective.

\begin{theorem}[Local Maximin Allocation
Does Not Imply Equilibrium in Bipartite Arrow-Debreu Markets]
\label{th:counterexample}
There exists a bipartite Arrow-Debreu market
with bipartition $\mcal{I} = \Izero \cupdot \Ione$
and an allocation $\vec{x}$ of agents' items such that the following holds.

\begin{compactitem}

\item The allocation $\vec{x}$ consists of a pair $(\vec{x}^{(0)},
 \vec{x}^{(1)})$ that are proportional responses to each other by 
agents from both sides, and both are locally maximin.

\item The allocation~$\vec{x}$ is not a market equilibrium.
\end{compactitem}

\end{theorem}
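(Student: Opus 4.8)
The plan is to produce an explicit, small counterexample. I would work with the bipartite market on $\Izero = \{1,2\}$ and $\Ione = \{3,4\}$ in which all four cross pairs are present, and choose the valuations
\[
w_1(3)=w_3(1)=w_2(4)=w_4(2)=1,\qquad w_1(4)=1,\ w_4(1)=\tfrac12,\ w_2(3)=3,\ w_3(2)=\tfrac14 ,
\]
so that the views of both sides are genuinely subjective (buyers disagree on good $3$, sellers disagree on good $1$) and this is not a Fisher market for either labelling of the sides. The candidate allocation $\vec{x}$ is the ``matching'' one: $x_{1\to 3}=x_{3\to 1}=x_{2\to 4}=x_{4\to 2}=1$ and every other fraction is $0$; then every agent's utility under $\vec{x}$ equals $1$, and the pair $(\vec{x}^{(0)},\vec{x}^{(1)})$ is bipartite.

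Then I would carry out three verifications. First, $\vec{x}^{(0)}$ and $\vec{x}^{(1)}$ are proportional responses of each other: since $u_j=1$ for every $j$ and the matched valuations are $1$, the formula $x_{j\to i}=w_j(i)\,x_{i\to j}/u_j$ holds on all four used edges and gives back the other allocation. Second, $\vec{x}$ is locally maximin: at agent $1$, since $x_{1\to3}>0$ we need $3\in\arg\min_{k\in\{3,4\}}\frac{u_k}{w_k(1)w_1(k)}$, and the two candidate values are $\frac{u_3}{w_3(1)w_1(3)}=1$ versus $\frac{u_4}{w_4(1)w_1(4)}=\frac{1}{(1/2)\cdot 1}=2$, so the minimiser is $3$; the checks at agents $2,3,4$ are analogous and all come out strictly in favour of the chosen recipient. (Equivalently, by Lemma~\ref{lem:PR-maximin-Bipartite} it suffices to note that the multiplicative duality $u_i u_j=w_i(j)w_j(i)$ holds on used edges and that $w_1(4)w_4(1)<w_1(3)w_3(1)$ and $w_2(3)w_3(2)<w_2(4)w_4(2)$.) Third, $\vec{x}$ admits no supporting price vector: the money-balance condition $p_i=\sum_j x_{j\to i}p_j$ forces $p_1=p_3=:s$ and $p_2=p_4=:t$ because each of agents $1,3$ receives only from the other, and likewise $2,4$; then best value-to-price optimality at agent $1$ (who buys only good $3$) requires $\frac{w_1(3)}{p_3}\ge\frac{w_1(4)}{p_4}$, i.e. $t\ge s$, while optimality at agent $2$ requires $\frac{w_2(4)}{p_4}\ge\frac{w_2(3)}{p_3}$, i.e. $s\ge 3t$; together $t\ge s\ge 3t$ forces $t\le 0$, contradicting positivity of prices. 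Hence $\vec{x}$ is a locally maximin proportional-response pair but not a market equilibrium.

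The step needing the most care is choosing the valuations so that local maximin genuinely holds while the price system is infeasible. In a Fisher market this is impossible precisely because a buyer's good plays the role of currency (every seller values it at the common budget $B_i$), which is exactly what pins prices down given the allocation; the counterexample must deliberately break this symmetry, so the matched valuations are set equal to $1$ (keeping the ``local'' balance $u_iu_j=w_i(j)w_j(i)$ on used edges satisfiable) while the unused cross-valuations are made incompatible --- here $w_2(3)$ is too large relative to $w_1(3)$ and $w_2(4)$ for any globally consistent multiplicative price assignment. Once the instance is fixed, everything else is routine arithmetic, and one should double-check only that all valuations are strictly positive (so the graph is indeed $K_{2,2}$) and that the strictness in the local-maximin checks leaves the example robust to small perturbations.
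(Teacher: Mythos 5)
Your construction is correct: I verified that all four utilities equal $1$ under your matching allocation, that the two directed allocations are proportional responses of each other, that the local maximin condition holds with a \emph{strict} minimizer at each of the four agents, and that the price constraints $t\ge s$ (agent $1$) and $s\ge 3t$ (agent $2$) are jointly infeasible for positive prices. The paper proves the theorem with the same high-level strategy --- an explicit $2{+}2$ bipartite non-Fisher market and a perfect-matching allocation --- but its instance is the symmetric cycle with $w_i(i+1)=2$, $w_i(i-1)=1$ (indices mod $4$), so that $w_i(j)\,w_j(i)=2$ on every cross pair. There, local maximin holds only because of exact \emph{ties} in the utility-density scores, and non-equilibrium is argued via symmetry: the unique equilibrium is the other matching, under which every agent has utility $2$, whereas the chosen matching gives one side utility $1$. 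Your version buys two things the paper's does not: the local maximin condition is satisfied with strict inequalities, so the example is robust to small perturbations of the valuations (as you note), and the failure of equilibrium is certified directly by exhibiting an infeasible linear system for the supporting prices rather than by appealing to uniqueness of the equilibrium. The paper's version is slightly more economical to verify because a single symmetry observation handles all four local maximin checks at once. One cosmetic caveat: since the paper's Definition~\ref{defn:local_AD} asks for a minimizer over \emph{all} agents $k\in\mcal{I}$ with the convention $\frac{x}{0}=+\infty$, you should state explicitly that within-side valuations are zero so that same-side agents contribute $+\infty$ and drop out of the $\arg\min$; you implicitly assume this when restricting the candidates to the opposite side.
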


\begin{proof}
There are 4 agents in total, where $\Izero := \{0, 2\}$ and
$\Ione := \{1, 3\}$.  When we perform addition or subtraction on the agent indices,
we use mod~4.

The valuations of the agents on goods are defined as follows.
For each agent~$i$, $w_i(i+1) = 2$ and $w_i(i-1) = 1$.
All other valuations are zero.  It can be checked that
these valuations correspond to a bipartite market,
and every agent has a non-zero utility on the goods
of both agents on the other side.  Also, the views of both sides are subjective,
and so this is not a Fisher market.

By symmetry, the equilibrium allocation $\vec{x}^*$ is unique, and each
agent~$i$ should receive 1 unit of good~$i+1$,
i.e., $x^*_{i+1 \to i} = 1$, and all goods have
some common positive price.  Moreover, every agent has utility 2 in the equilibrium.

We next construct a locally maximin $\vec{x}$
that is not an equilibrium.  It corresponds to agents~0 and 1 exchanging
their goods, and agents~2 and 3 exchanging their goods.
This clearly satisfies proportional response.
Moreover, each agent in $\Izero$ has utility 2,
while each agent in $\Ione$ has utility 1.  Hence, $\vec{x}$ is not an equilibrium.

However, observe that if $i$ and $j$ are from different sides,
then it must be the case that $w_i(j) \cdot w_j(i) = 2$.  
Then, from the perspective of agent~$i$,
it follows that both agents~$j$ on the other side will have the same
utility density $\frac{u_j(\vec{x})}{w_i(j) \cdot w_j(i)} = 
\frac{u_j(\vec{x})}{2}$.  Hence, $\vec{x}$ is locally maximin and we
have our counterexample.
\end{proof}

\ignore{
\hubert{This subsection is not needed anymore.}

As seen in Sections~\ref{sec:sym_decomp},
\ref{sec:univeral_matching} and~\ref{sec:dist_refine}
local maximin condition is a key concept in the various scenarios.
Hence, we explore if a similar notion is relevant
in markets.  However, since the value of a good may be subjective
among agents, instead of considering refinements (that specify
how the objective weight of an item is distributed),
we adapt the local maximin condition to allocations (that
specify how an item is fractionally distributed).

\begin{definition}[Local Maximin Condition in Arrow-Debreu Markets]
	\label{defn:local_AD}
	In an Arrow-Debreu market as in Definition~\ref{defn:arrow-debreu},
	an allocation~$\vec{x}$ achieves the local maximin condition
	at~$i \in \mcal{I}$ if, $x_{i \to j} > 0$ implies that
	
	$j \in \arg \min_{k \in \mcal{I}} \frac{u_k(\vec{x})}{w_k(i) \cdot w_i(k)}$,
	
	where we use the convention that for any $x \in \R$, $\frac{x}{0} = +\infty$. 
	
	\quan{I think here maybe we need an additional requirement that $w_i(j) > 0$....Otherwise, if there exists $i$ such that $\forall i, j$, $w_i(j) \cdot w_j(i) = 0$., then Lemma~\ref{lem:PR-maximin-Bipartite} does not hold. }
	
	Since $u_k(\vec{x})$ is the utility of agent~$k$,
	the quantity~$\frac{u_k(\vec{x})}{w_k(i)}$ measures
	the utility of~$k$ in terms of the number of units of good~$i$.
	Finally, from agent~$i$'s perspective, the value of good~$k$
	is $w_i(k)$; hence,
	the quantity $\frac{u_k(\vec{x})}{w_k(i) \cdot w_i(k)}$
	measures the ``utility density'' of agent~$k$ from $i$'s perspective.
	The local maximin condition means that agent~$i$ will allocate non-zero
	fraction of its good only to agents~$j$ that achieve minimum utility density
	from its perspective.
	
	If an allocation $\vec{x}$ achieves the local maximin condition
	at all $i \in \mcal{I}$,
	we simply say that $\vec{x}$ is locally maximin.
\end{definition}

As later we will see,  locally maximin is closely related to proportional response (in bipartite markets).

\begin{definition}[Allocation for Bipartite Arrow-Debreu Market]
	\label{defn:allocation-bipartite-arrow-debreu}
	Consider a bipartite Arrow-Debreu market where  $\Izero \cup \Ione$ is the collection of agents such that for any $i_1, i_2 \in \Ib$, where $\side \in \{0, 1\}$,  $w_{i_1}(i_2) = w_{i_2}(i_1) = 0$. 
	An allocation $\vec{x}^{(0)}$ specifies how each good is completely allocated to the agents from $\Izero$ to $\Ione$. It holds that $\sum_{j \in \Ione} x^{(0)}_{i \to j} = 1$ for any $i \in \Izero$. 
	An allocation $\vec{x}^{(1)}$ specifies how each good is completely allocated to the agents from $\Ione$ to $\Izero$. It holds that $\sum_{i \in \Izero} x^{(1)}_{j \to i} = 1$ for any $j \in \Ione$. 
\end{definition}

Similar as before, for bipartite Arrow-Debreu market, for $\side \in \mathbb{B}$, given $\vec{x}^{(\side)}$, we can define its proportional response $\vec{x}^{(\ob)}$. 
Since an instance of Fisher market is a special case of bipartite Arrow-Debreu market, the definition of proportional response below can also be applied for Fisher market. 
Note that given allocation $\vec{x}^{(\side)}$, the utility of any agent $j \in \Iob$ is well-defined. So, we use $u^{(\ob)}_j(\vec{x}^{(\side)})$ to denote the utility of agent $j \in \Iob$ if $\vec{x}^{(\side)}$ is given. When there is no risk of ambiguity, the superscripts $(\side)$ and $(\ob)$ may be omitted.

\begin{definition}[Proportional Response in Bipartite Arrow-Debreu Market]
	\label{defn:PR-bipartite-arrow-debreu}
	Suppose for side $\side \in \mathbb{B}$, $\vec{x}^{(\side)}$ is an allocation on items in $\mcal{I}^{(\side)}$. Then, a proportional response to $\vec{x}^{(\side)}$ is an allocation $\vec{x}^{(\ob)}$ of items on the other side $\mcal{I}^{(\ob)}$ satisfying the following:
	\begin{itemize}
		\item If $j \in \mcal{I}^{(\ob)}$ receives a positive utility $u_j(\vec{x}) > 0$ from $\vec{x}^{(\side)}$, then for each $i \in \Ib, j\in \Iob$ such that $w_i(j), w_j(i) > 0$, 
		\[
		x^{(\ob)}_{j \to i} = \frac{w_j(i) x^{(\side)}_{i \to j} }{\sum w_j(i') x^{(\side)}_{i' \to j}} = \frac{w_j(i) x^{(\side)}_{i \to j} }{u_j(\vec{x})}.
		\]
		\item If a non-isolated vertex $j \in \mcal{I}^{(\ob)}$ receives zero utility from $\vec{x}^{(\side)}$, then in $x^{(\ob)}$, $j$ may distribute its item arbitrarily. 
	\end{itemize}
\end{definition}

Observe that similar to Remark~\ref{remark:prop_resp}, in a bipartite Arrow-Debreu market, if for any $\side \in \mathbb{B}$, it holds that $\forall i \in \Ib$, $\exists j \in \Iob$ such that $w_i(j) \cdot w_j(i) > 0$, then given $\vec{x}^{(\side)}$, its proportional response is unique.

The following lemma relates locally maximin and proportional response for such bipartite Arrow-Debreu market. Since Fisher market is also a special case of such Arrow-Debreu market, the result also applies to Fisher market. 


\begin{lemma}[Proportional Response to Locally Maximin Allocation]\label{lem:PR-maximin-Bipartite}
	Consider a bipartite Arrow-Debreu market such that for any $\side \in \mathbb{B}$, it holds that $\forall i \in \Ib$, $\exists j \in \Iob$ such that $w_i(j) \cdot w_j(i) > 0$. 
	
	Let $\vec{x}^{(\side)}$ be an allocation of items in $\Ib$ that satisfies the local maximin condition. Suppose the allocation $\vec{x}^{(\ob)}$ on items in $\Iob$ is the (unique) proportional response to $\vec{x}^{(\side)}$. Then, the following holds:
	
	\begin{itemize}
		\item The refinement $\vec{x}^{(\ob)}$ also satisfies the local maximin condition.
		\item We get back $\vec{x}^{(\side)}$ as the proportional response to $\vec{x}^{(\ob)}$. 
	\end{itemize}
	
\end{lemma}

\begin{proof}
	Define $\phi_i(k) = \frac{u_k(\vec{x})}{w_k(i) \cdot w_i(k)}$. Let $j \in \Iob$, $i \in \Ib$, if $w_i(j) w_j(i) > 0$, by proportional response, 
	
	$$x_{j \to i} = \frac{w_j(i) x_{i \to j} }{\sum w_j(i') x_{i' \to j}} = \frac{x_{i \to j}}{\phi_i(j) w_i(j)}.$$

	Let $\lambda_i = \min_{k' \in \Iob} \phi_i(k')$.
	By locally maximin, $\forall k \in \Iob$ such that $x_{i \to k} >0$, $\lambda_i = \phi_i(k)$. 
	
	Since $\forall j \in \Iob$, $\exists i \in \Ib$ such that $w_i(j) \cdot w_j(i) > 0$, any $j \in \Iob$ receives positive utility $u_j(x)$. 
	Since $\forall i \in \Ib$, $\exists j \in \Iob$ such that $w_i(j) \cdot w_j(i) > 0$, for any $i \in \Ib$, there exists $j \in \Iob$ such that $\phi_i(j)$ is finite and positive. Therefore, for any $i \in \Ib$, $\lambda_i$ is finite and positive. In addition, $x_{i \to j} > 0 \implies \phi_i(j) \text{ is finite } \implies w_i(j) w_j(i) > 0$. Then by proportional response, $\forall i \in \Ib, j \in \Iob$, it holds that $x_{i \to j} > 0 \iff x_{j \to i} > 0$. 

	For any $j \in \Iob$, it holds that:
	
	$$\phi_j(i) = \frac{\sum_{j' \in \Iob} w_i(j') x_{j' \to i}}{w_i(j)w_j(i)} = \frac{\sum_{j' \in \Iob} w_i(j') \frac{x_{i \to j'}}{\phi_i(j') w_i(j')} }{w_i(j)w_j(i)} = \frac{1}{\lambda_i w_i(j)w_j(i)}$$.
	
	We prove the first statement. 
	We first show that for any $j \in \Iob$, there exists $\mu_j$ such that for any $i$ such that $x_{j \to i}>0$, $\phi_j(i) = \mu_j$. 
	Let $\mu_j = \frac{1}{u_j(x)} = \frac{1}{\sum_{i \in \Ib} w_j(i) x_{i \to j}}$. 
	Since $x_{j \to i} > 0$, therefore, $x_{i \to j} > 0$, and therefore, $\phi_i(j) = \lambda_i$. Therefore, $\phi_j(i) = \frac{1}{\lambda_i w_i(j)w_j(i)} = \frac{1}{\phi_i(j) w_i(j)w_j(i)} = \frac{1}{u_j(x)} = \mu_j$. 
	
	
	Next, we show that for any $i \in \Ib$ such that $x_{j \to i} = 0$, $\phi_j(i) \geq \mu_j$. Since $x_{j \to i} = 0$, $x_{i \to j} = 0$, then by locally maximin condition for $x^{(\side)}$ of $i$, $\phi_i(j) \geq \lambda_i$. Therefore, $\phi_j(i) = \frac{1}{\lambda_i w_i(j)w_j(i)} \geq \frac{1}{\phi_i(j) w_i(j)w_j(i)} = \frac{1}{u_j(x)} = \mu_j$. 
	
	The first statement is proved. We then prove the second statement. We let $\vec{x'}^{(\side)}$ be the proportional response of $\vec{x}^{(\ob)}$. It holds that, 
	
	$$x'_{i \to j} = \frac{w_i(j) x_{j \to i}}{\sum_{j' \in \Iob} w_i(j') x_{j' \to i}}.$$
	
	If $x_{i \to j} = 0$, then $x_{j \to i} = 0$, then $x'_{i \to j} = 0$. 
	
	If $x_{i \to j} > 0$, then $\phi_i(j) = \lambda_i$. In addition, $w_i(j) w_j(i) > 0$. 
	
	\begin{align*}
		x'_{i \to j} &= \frac{w_i(j) x_{j \to i}}{\sum_{j' \in \Iob} w_i(j') x_{j' \to i}} \\
		&= \frac{x_{j \to i}}{w_j(i)\phi_j(i)} \\
		&= \frac{1}{w_j(i)\phi_j(i)} \cdot \frac{x_{i \to j}}{\phi_i(j)w_i(j)}.
	\end{align*}

	Note that since $\phi_j(i) = \frac{1}{\lambda_i w_i(j)w_j(i)} = \frac{1}{\phi_i(j) w_i(j)w_j(i)}$. 
	
	Therefore, 
	
	$$x'_{i \to j} = x_{i \to j}$$. 
	
	That finishes the proof. 
\end{proof}

The following Corollary is a direct implication of the above lemma. 

\begin{fact}[Fisher Equilibrium Equivalent to Locally Maximin Sellers]
	\label{fact:fisher_local_seller}
	In a Fisher market,
	a seller allocation $x^{(1)}$ is locally maximin 
	\emph{iff} it corresponds to a market equilibrium.
\end{fact}
\begin{proof}
	Let $x^{(0)}$ be the proportional response of $x^{(1)}$, let $\vec{b}$ be the corresponding buyer budget allocation such that $b_{i \to j} = B_i \cdot x^{(0)}_{i \to j}$. Then by Lemma~\ref{lem:PR-maximin-Bipartite}, $\vec{b}$ is also locally maximin. In addition, proportional response of $\vec{b}$ is $x^{(1)}$: 
	
	$$\frac{b_{i \to j}}{\sum_{i' \in \Ione} b_{i' \to j}} = \frac{b_{i \to j}}{p_j} = x_{j \to i}.$$ 
	
	The remaining argument is the same as Fact~\ref{fact:fisher_local}. 
\end{proof}

\noindent \textbf{Counter example for general Arrow-Debreu. } However, unlike Fisher market, local maximin does not imply equilibrium for general Arrow-Debreu instance. Consider an instance with $\mcal{I} = \{1, 2, 3, 4\}$.
$w_1(2) = w_2(3) = w_3(4) = w_4(1) = 2$, $w_1(4) = w_4(3) = w_3(2) = w_2(1) = 1$, $w_i(j) = 0$ otherwise. Consider allocation: $x_{1 \to 2} = x_{2 \to 3} = x_{3 \to 4} = x_{4 \to 1} = 1$, $x_{i \to j} = 0$ otherwise. Then by symmetry, local maximin condition is satisfied. However, this allocation is not an equilibrium. Because by symmetry, every item should have the same price $p$. Then $x_{2 \to 1}$ should be non-negative because $\frac{w_1(2)}{p} > \frac{w_1(4)}{p}$. 

}

\section{Achieving Approximate Refinements with Multiplicative Error in Distributed Settings}
\label{sec:mult_error}

In this section,
we show how approximate refinements with 
multiplicative error as in 
Definition~\ref{defn:approx_refine} can be computed
via the iterative proportional response 
process~\cite{wu2007proportional,zhang2011proportional,birnbaum2011distributed}
that has been proposed for approximating
market equilibrium in distributed settings.

In Section~\ref{sec:sym_decomp},
we establish the connection
between symmetric density decompositions and locally maximin refinements,
while in Section~\ref{sec:market},
we show that the local maximin condition is equivalent
to the market equilibrium condition.
Therefore, existing algorithms to compute approximate market equilibrium
with multiplicative error and their analysis~\cite{zhang2011proportional,birnbaum2011distributed} can be readily applied
to achieve the same approximation notion for refinements.
To the best of our knowledge,
this is the first approach that achieves multiplicative error
to approximate the densities of vertices in the density decomposition.

\noindent \textbf{Iterative Proportional Response Process.}
This process has been extensively analyzed
in the context of computing an approximate equilibrium
for Fisher markets~\cite{zhang2011proportional,birnbaum2011distributed} through interaction between
agents in distributed settings.
Since we will only consider refinements $\alpha^{(\side)}$ in which
every (non-isolated) node on the receiving side~$\ob$ has a non-zero payload,
we use $\msf{PR}(\alpha^{(\side)})$ to denote the unique
proportional response to $\alpha^{(\side)}$
as defined in Definition~\ref{def:PR-refinement}.
In a distributed setting,
each node in $\Ib$ is responsible
for maintaining $\alpha^{(\side)}(f)$ for
edges $f \in \mcal{F}$ incident on~$i$;
it is clear that $\msf{PR}(\cdot)$ takes one round
of communication and a node needs to communicate
only with its neighbors in $\mcal{F}$.

Following the convention in~\cite{zhang2011proportional,birnbaum2011distributed},
we also assume that the input is a distribution instance, i.e.,
the vertex weights on each side sum to 1.
We paraphrase the iterative process for our input instance as follows.

\begin{definition}[Iterative Proportional Response Process]
\label{defn:pr-process}
Given an instance $(\Izero, \Ione; \mathcal{F}; w)$,
suppose for side~$\side \in \B$,
$\alpha^{(\side)}_0$ is a refinement
on vertex weights in~$\Ib$ such the every non-isolated
vertex in $\Iob$ receives non-zero payload from $\alpha^{(\side)}_0$.

Starting from $\alpha^{(\side)}_0$,
a sequence of refinements is produced as follows.
For $t \geq 1$, assuming that $\alpha^{(\side)}_{t-1}$ is already computed,
the following refinements are computed in the $t$-th iteration:
\begin{compactitem}
\item $\alpha^{(\ob)}_{t-1} \gets \msf{PR}(\alpha^{(\side)}_{t-1})$;
\item $\alpha^{(\side)}_{t} \gets \msf{PR}(\alpha^{(\ob)}_{t-1})$.
\end{compactitem}
\end{definition}

\begin{remark}[Comparison of Terminology in Fisher Markets]
Even though our instance can be interpreted as a symmetric Fisher market
where both sides play equal roles, the iterative process in
Definition~\ref{defn:pr-process} is not symmetric between
the two sides, because the process is initiated at
some refinement on vertex weights on one side~$\side$.

For Fisher market, the initiating side~$\side$ corresponds to the buyers,
while side~$\ob$ corresponds to the sellers.
After iteration~$t$,
the refinement $\alpha^{(\side)}_t$ on vertex weights on side~$\side$ corresponds to the budget allocation
from buyers to sellers,
and the resulting payloads $p^{(\ob)}_t$ received by
the other side~$\ob$ correspond to the prices of sellers' goods in the Fisher market,
but in a symmetric market, it is more convenient
to interpret the payloads as utilities.

After getting the proportional response 
$\alpha^{(\ob)}_t \gets \msf{PR}(\alpha^{(\side)}_t)$,
the payloads received by vertices on side~$\side$
from $\alpha^{(\ob)}_t$ correspond to the buyers' utilities.

For the rest of the description,
we will interpret the problem instance
in terms of density decomposition and locally maximin refinements. 
Hence, we will phrase the results in terms of payloads, instead of prices and utilities.
Observe that the multiplicative error in Definition~\ref{defn:approx_refine}
is stated for the density $\rho(i) = \frac{p(i)}{w(i)}$, for which
it is equivalent to consider payload $p(i)$ as far as multiplicative error is concerned. 
\end{remark}

\noindent \textbf{Notation Recap.}
Given a distribution instance $(\Izero, \Ione; \mathcal{F}; w)$,
suppose we fix some pair $\vec{\alpha}_* = (\alpha^{(0)}_*, \alpha^{(1)}_*)$
of locally maximin refinements that are
proportional response to each other.
Observe that even though $\vec{\alpha}_*$ 
may not be unique, by Fact~\ref{fact:maximin_decomp},
the induced payload vectors $(p^{(0)}_*, p^{(1)}_*)$ are unique.
Given some side~$\side \in \B$,
we denote $n := |\Ib|$ and $\overline{n} := |\Iob|$.

The following fact is paraphrased from Section 5.1 in~\cite{birnbaum2011distributed}.
We denote $w_{\min} := \min_{i \in \Ib} w^{(\side)}(i)$
and $\overline{w}_{\min} := \min_{j \in \Iob} w^{(\ob)}(j)$.
Their result considers a parameter $\overline{u}_{\min} := \min_{i \in \Ib \sim j \in \Iob}
\frac{w^{(\ob)}(j)}{\sum_{k: i \sim k} w^{(\ob)}(k)} \geq 
\overline{w}_{\min}$.

\begin{fact}[Convergence Rate for Payloads~\cite{birnbaum2011distributed}]
\label{fact:convergence}
Suppose the iterative process in Definition~\ref{defn:pr-process}
is initiated on side~$\side$ with refinement $\alpha^{(\side)}_0$.
Suppose after iteration~$t$,
the refinement $\alpha^{(\side)}_t$ on vertex weights
on side~$\side$ induces the payload vector~$p^{(\ob)}_t$
on side~$\ob$, and denote 
$\overline{\eta}_t := \max_{j \in \Iob} \frac{|p^{(\ob)}_*(j) - p^{(\ob)}_t(j)|}{p^{(\ob)}_*(j)}$.
Then, we have:

$$\overline{\eta}_t^2 \leq \frac{16 n}{\overline{u}_{\min}} \cdot 
\frac{\msf{D}_{\textrm{KL}}(\alpha^{(\side)}_*  \| \alpha^{(\side)}_0)}{t}.$$

\end{fact}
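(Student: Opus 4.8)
The plan is to port, via the dictionary of Section~\ref{sec:market}, the gradient-descent analysis of proportional-response dynamics for linear Fisher markets due to Birnbaum, Devanur and Xiao~\cite{birnbaum2011distributed} to our symmetric instance. Under this dictionary the refinement $\alpha^{(\side)}$ on the initiating side~$\side$ is the buyers' budget allocation, the induced payload $p^{(\ob)}$ on side~$\ob$ is the vector of good prices (a probability distribution on $\Iob$), and the equilibrium refinement pair $\vec{\alpha}_* = (\alpha^{(0)}_*, \alpha^{(1)}_*)$ --- which exists and is locally maximin by Facts~\ref{fact:exist_maximin} and~\ref{fact:maximin_decomp} --- corresponds to a Fisher market equilibrium by Fact~\ref{fact:fisher_local}, so that $p^{(\ob)}_*$ is the (unique) equilibrium price vector. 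One first checks directly from Definitions~\ref{def:PR-refinement} and~\ref{defn:pr-process} that one full round of the iterative process (the two applications $\alpha^{(\side)}_t = \msf{PR}(\msf{PR}(\alpha^{(\side)}_{t-1}))$) coincides exactly with one proportional-response step of a Fisher market.

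Next I would introduce the Shmyrev-type potential over the polytope $\Delta$ of refinements on side~$\side$ (a product of simplices, one scaled by $w^{(\side)}(i)$ for each non-isolated $i\in\Ib$),
$$\Phi(\alpha^{(\side)}) := \sum_{j\in\Iob} p^{(\ob)}(j)\ln p^{(\ob)}(j) \;-\; \sum_{f=\{i,j\}\in\mcal{F}} \alpha^{(\side)}(f)\,\ln w^{(\ob)}(j),$$
where $p^{(\ob)}$ is the payload induced by $\alpha^{(\side)}$. This $\Phi$ is convex, its unique minimizer over $\Delta$ is $\alpha^{(\side)}_*$, and --- the structural core of the Birnbaum--Devanur--Xiao reduction --- $\Phi$ is $1$-smooth relative to the negative-entropy mirror map, i.e.\ $\Phi(y) \le \Phi(x) + \langle \nabla\Phi(x),\, y - x\rangle + \msf{D}_{\textrm{KL}}(y \| x)$ for all $x,y\in\Delta$. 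Consequently one proportional-response round is precisely the mirror-descent (multiplicative-weights) step on $\Phi$ over $\Delta$ with step size $1$, and $\Phi(\alpha^{(\side)}_t)$ is non-increasing. Feeding relative smoothness into the standard mirror-descent one-step inequality gives, for every $s \ge 1$,
$$\Phi(\alpha^{(\side)}_s) - \Phi(\alpha^{(\side)}_*) \;\le\; \msf{D}_{\textrm{KL}}(\alpha^{(\side)}_* \| \alpha^{(\side)}_{s-1}) - \msf{D}_{\textrm{KL}}(\alpha^{(\side)}_* \| \alpha^{(\side)}_{s}).$$
Summing over $s = 1,\dots,t$, telescoping, dropping the nonnegative last term, and using monotonicity of $\Phi(\alpha^{(\side)}_s)$ yields
$$\Phi(\alpha^{(\side)}_t) - \Phi(\alpha^{(\side)}_*) \;\le\; \frac{\msf{D}_{\textrm{KL}}(\alpha^{(\side)}_* \| \alpha^{(\side)}_0)}{t}.$$

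It then remains to convert this optimality gap into the relative payload error $\overline{\eta}_t$. Here I would first use the KKT conditions at $\alpha^{(\side)}_*$ (equivalently, the local maximin characterization of Fact~\ref{fact:fisher_local}) to obtain the restricted-strong-convexity estimate $\Phi(\alpha^{(\side)}_t) - \Phi(\alpha^{(\side)}_*) \ge \msf{D}_{\textrm{KL}}(p^{(\ob)}_t \| p^{(\ob)}_*)$, where $p^{(\ob)}_t$ is the payload induced by $\alpha^{(\side)}_t$. Then, combining a Pinsker-type lower bound with a lower bound on the equilibrium prices $p^{(\ob)}_*(j)$ expressed through the market parameter $\overline{u}_{\min}$ (which dominates $\overline{w}_{\min}$) and the count $n = |\Ib|$, one arrives at $\msf{D}_{\textrm{KL}}(p^{(\ob)}_t \| p^{(\ob)}_*) \ge \frac{\overline{u}_{\min}}{16\,n}\,\overline{\eta}_t^2$, with $\overline{\eta}_t$ as in the statement. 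Chaining the three displayed inequalities gives the claimed bound.

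The routine parts are identifying $\Phi$, recognizing $\msf{PR}\circ\msf{PR}$ as its unit-step gradient-descent iteration, and the telescoping $O(1/t)$ estimate. The main obstacle is the last step: establishing the relative $1$-smoothness of $\Phi$ with respect to the entropic mirror map (which is exactly what licenses the unit step size and where the linear-utility structure of the market is genuinely used), together with extracting the precise constant $\frac{\overline{u}_{\min}}{16n}$ --- this requires a lower bound on equilibrium prices in terms of $\overline{u}_{\min}$ and a careful entropy-versus-$\ell_2$ comparison for the relative error, rather than a black-box Pinsker estimate.
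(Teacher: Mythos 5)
The paper does not actually prove this Fact --- it is imported as a black box from Section~5.1 of Birnbaum--Devanur--Xiao~\cite{birnbaum2011distributed} --- and your sketch is a faithful outline of exactly that argument: the Shmyrev potential, the identification of $\msf{PR}\circ\msf{PR}$ with a unit-step mirror-descent (KL-Bregman) iteration, the telescoping $O(1/t)$ bound on the objective gap, and the conversion of that gap into relative payload error via a lower bound on equilibrium prices in terms of $\overline{u}_{\min}$. The two steps you explicitly defer (the relative $1$-smoothness of $\Phi$ with respect to the entropic mirror map, and the extraction of the constant $\overline{u}_{\min}/(16n)$) are precisely the lemmas supplied by that reference, so your reconstruction matches the intended proof.
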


\noindent \textbf{Initial Refinement.}
As stated in Fact~\ref{fact:convergence},
the convergence rate depends on the choice of
the initial refinement $\alpha^{(\side)}_0$ on the vertex weights from side~$\side$.
Observe that if $\alpha^{(\side)}_0(ij) = 0$ for some $i \sim j$,
then for all $t \geq 1$, $\alpha^{(\side)}_t(ij)$ stays zero.
Moreover, if $\alpha^{(\side)}_*(ij) > 0$,
then $\msf{D}_{\textrm{KL}}(\alpha^{(\side)}_*  \| \alpha^{(\side)}_0) = + \infty$,
and the convergence result becomes degenerate.
In the literature~\cite{zhang2011proportional,birnbaum2011distributed},
a conservative choice is to distribute that weight $w^{(\side)}(i)$ evenly among
among all $j$'s on the other side.
Specifically, for all $i \in \Ib \sim j \in \Iob$,
we have:

\begin{equation}
\label{eq:init}
\alpha^{(\side)}_0(ij) \gets \frac{w^{(\side)}(i)}{\overline{n}}.
\end{equation}

\noindent \textbf{Technical Remark.}
Observe that in (\ref{eq:init}),
we did not specify $\alpha^{(\side)}_0(ij)$ when
$i$ and $j$ are not neighbors in $\mcal{F}$.
In order for $\alpha^{(\side)}_0$ to be a distribution,
we could define $\alpha^{(\side)}_0(ij) = \frac{w^{(\side)}(i)}{\overline{n}}$
for such cases.  Hence, 
we refer to $\alpha^{(\side)}_0$ as a super-refinement,
because the support
of $\alpha^{(\side)}_0$ may be a proper superset of $\mcal{F}$.

However, note that in Definition~\ref{def:PR-refinement}
of proportional response to $\alpha^{(\side)}_0$,
any weights on pairs outside $\mcal{F}$ will be ignored.
Hence, for $t \geq 1$, 
$\alpha^{(\side)}_t$ will have support contained in $\mcal{F}$.

Note that in $\alpha^{(\side)}_0(ij)$,
we could have distributed the weight $w(i)$ evenly only
among neighbors of $i$.  However,
this would lead to a weaker result in the later
Lemma~\ref{lemma:div_other}.

\ignore{

However, actually  it suffices to distribute the weight among the neighbors~$j$ of $i$
in $\mcal{F}$.
Specifically, for $i \in \Ib \sim j \in \Iob$,
we denote its degree $\Delta^{(\side)}_i := |\{j \in \Iob: i \sim j\}|$.
Then, we can pick an initial refinement:

\begin{equation}
\label{eq:init}
\alpha^{(\side)}_0(ij) \gets \frac{w^{(\side)}(i)}{\Delta^{(\side)}_i}.
\end{equation}

We denote the maximum degree on each side
as $\Delta_{\max} := \max_{i \in \Ib} \Delta^{(\side)}_i$
and $\overline{\Delta}_{\max} := 
\max_{j \in \Iob} \Delta^{(\ob)}_j$.
}

The following fact is proved in~\cite[Lemma 13]{birnbaum2011distributed}.

\begin{fact}[Divergence between Initial and Optimal Refinements]
\label{fact:div_initial}
Using the initial (super-)refinement~$\alpha^{(\side)}_0$ in (\ref{eq:init}),
for any refinement $\alpha$ (not necessarily locally maximin) on vertex weights on side~$\side$,
the divergence between the two refinements has the following bound:

$$\msf{D}_{\textrm{KL}}(\alpha  \| \alpha^{(\side)}_0) \leq \ln (n \cdot \overline{n}).$$
\end{fact}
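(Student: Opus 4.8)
The plan is to reduce the inequality to a per-vertex Shannon entropy estimate. First I would record the right setup: in a distribution instance any refinement $\alpha$ of $w^{(\side)}$ is automatically a probability distribution on $\mcal{F}$, since the edge weights are non-negative and sum to $\sum_{i \in \Ib} w^{(\side)}(i) = 1$; I regard it as a distribution on the full product set $\Ib \times \Iob$ that happens to be supported on $\mcal{F}$. Likewise the initial super-refinement $\alpha^{(\side)}_0$ of (\ref{eq:init}) is, by the technical remark preceding the statement, the distribution on $\Ib \times \Iob$ assigning mass $\frac{w^{(\side)}(i)}{\overline{n}}$ to every pair $(i,j)$, where $\overline{n} = |\Iob|$. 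With two distributions on the same ground set, $\msf{D}_{\textrm{KL}}(\alpha \| \alpha^{(\side)}_0)$ is well defined (and since $\alpha$ vanishes outside $\mcal{F}$, the sum below is unchanged if one instead restricts to $\mcal{F}$, so the two viewpoints agree).

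Next I would expand the definition of KL divergence and group the terms by the endpoint $i \in \Ib$:
$$\msf{D}_{\textrm{KL}}(\alpha \| \alpha^{(\side)}_0) = \sum_{i \in \Ib} \sum_{j \in \Iob} \alpha(ij)\, \ln \frac{\alpha(ij)\,\overline{n}}{w^{(\side)}(i)}.$$
For a fixed non-isolated $i$, set $q^{(i)}_j := \frac{\alpha(ij)}{w^{(\side)}(i)}$; because $\alpha$ refines $w^{(\side)}$, the vector $q^{(i)} = (q^{(i)}_j)_{j \in \Iob}$ is a probability distribution on the $\overline{n}$-element set $\Iob$. The inner sum then equals $w^{(\side)}(i)\bigl(\ln \overline{n} + \sum_{j} q^{(i)}_j \ln q^{(i)}_j\bigr) = w^{(\side)}(i)\bigl(\ln \overline{n} - H(q^{(i)})\bigr)$, where $H(\cdot)$ denotes Shannon entropy. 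Dropping the non-positive term $-H(q^{(i)})$ yields the per-vertex bound $w^{(\side)}(i)\,\ln \overline{n}$, and for an isolated $i$ the inner sum is $0$.

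Finally I would sum over $i \in \Ib$ and use $\sum_{i \in \Ib} w^{(\side)}(i) = 1$ to obtain
$$\msf{D}_{\textrm{KL}}(\alpha \| \alpha^{(\side)}_0) \le \ln \overline{n} \le \ln(n \cdot \overline{n}),$$
which is the claimed inequality. I do not expect a genuine obstacle: the computation is elementary. The one point deserving care is the bookkeeping that justifies treating $\alpha^{(\side)}_0$ as a distribution over the complete bipartite product $\Ib \times \Iob$ rather than over $\mcal{F}$ — this is exactly the super-refinement convention, and it is what makes the normalizing constant the uniform value $\overline{n}$ instead of a per-vertex degree; as a byproduct the argument even gives the slightly stronger bound $\ln \overline{n}$, so the factor $n$ in the statement is slack (left, presumably, for uniformity with the companion statement in Lemma~\ref{lemma:div_other}).
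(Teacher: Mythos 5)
Your argument is correct. The paper does not prove this fact itself --- it cites Lemma~13 of Birnbaum et al.~\cite{birnbaum2011distributed} --- so there is no in-paper proof to compare against, but your self-contained computation is exactly the standard one: write $\msf{D}_{\textrm{KL}}(\alpha \| \alpha^{(\side)}_0) = \sum_{i} w^{(\side)}(i)\bigl(\ln \overline{n} - H(q^{(i)})\bigr)$ with $q^{(i)}$ the conditional distribution of $\alpha$ at vertex $i$, drop the non-negative entropy, and use $\sum_i w^{(\side)}(i)=1$. You correctly use the super-refinement convention (uniform mass $w^{(\side)}(i)/\overline{n}$ over all of $\Iob$, not just neighbors) so that $\alpha \ll \alpha^{(\side)}_0$ and the normalizer is $\overline{n}$ rather than a degree, and your handling of isolated vertices is fine. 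Your observation that the argument in fact yields the sharper bound $\ln \overline{n}$ is also right for this paper's normalization ($\sum_i w^{(\side)}(i) = 1$); the stated $\ln(n\cdot\overline{n})$ is simply the weaker bound inherited from the cited reference, and since it is only fed into the $O(\cdot)$ convergence bound of Theorem~\ref{th:both_sides}, nothing downstream is affected.
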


\noindent \textbf{Achieving Multiplicative Error for Payloads on Both Sides.}
Observe that Facts~\ref{fact:convergence} and~\ref{fact:div_initial}
together can give a bound on the multiplicative error for the payloads
on side~$\ob$, when the process is initiated on the other side~$\side$.
Observe that multiplicative error on one side's payloads
is sufficient for the applications to
universal maximum matching in Theorem~\ref{th:approx-flow}
and universal closest refinements
in Theorem~\ref{th:min_power}.

However, it is interesting to see if the iterative process can achieve
multiplicative error for both sides' payloads simultaneously.
A careful study of the proofs in~\cite{zhang2011proportional,birnbaum2011distributed} reveals
that it is also possible to derive an upper bound
for the multiplicative error for payloads on side~$\side$.  However,
since our instance is symmetric on both sides,
a better and simpler bound can be achieved by deriving
a variant of Fact~\ref{fact:div_initial}
for the proportional response $\alpha^{(\ob)}_0$ to $\alpha^{(\side)}_0$.
Then, it suffices to consider initiating the process
on side~$\ob$ with the refinement~$\alpha^{(\ob)}_0$.

\begin{lemma}[Divergence for the Other Side]
\label{lemma:div_other}
Suppose the (super-)refinement $\alpha^{(\side)}_0$
on vertex weights on side~$\side$ is defined as in~(\ref{eq:init}),
and $\alpha^{(\ob)}_0 = \msf{PR}(\alpha^{(\side)}_0)$ is its proportional response.
Then, for any refinement~$\overline{\alpha}$ of vertex weights on side~$\ob$,
we have:

$$\msf{D}_{\textrm{KL}}(\overline{\alpha}  \| \alpha^{(\ob)}_0) \leq \ln \frac{\overline{n}}{u_{\min}},$$

where $u_{\min} := 
\min_{i \in \Ib \sim j \in \Iob} \frac{w^{(\side)}(i)}{\sum_{k: k \sim j} w^{(\side)}(k) }.$

\end{lemma}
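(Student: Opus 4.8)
The plan is to compute the proportional response $\alpha^{(\ob)}_0 = \msf{PR}(\alpha^{(\side)}_0)$ in closed form and then estimate the Kullback--Leibler divergence edge by edge. For $j \in \Iob$ write $W_j := \sum_{k:\, k \sim j} w^{(\side)}(k)$ for the total weight of the neighbours of $j$ on side $\side$; note $W_j > 0$ for every non-isolated $j$, and isolated vertices on side $\ob$ carry no incident edges and so may be discarded. Since $\alpha^{(\side)}_0(ij) = w^{(\side)}(i)/\overline{n}$ for all $i \sim j$, the payload of $j$ is $p^{(\ob)}_0(j) = \sum_{i \sim j} w^{(\side)}(i)/\overline{n} = W_j/\overline{n} > 0$, so $\alpha^{(\ob)}_0$ is the unique proportional response and, by Definition~\ref{def:PR-refinement},
\[
\alpha^{(\ob)}_0(ij) \;=\; \frac{\alpha^{(\side)}_0(ij)}{p^{(\ob)}_0(j)}\, w^{(\ob)}(j) \;=\; \frac{w^{(\side)}(i)}{W_j}\, w^{(\ob)}(j) \;\geq\; u_{\min}\cdot w^{(\ob)}(j) \qquad \text{for all } i \sim j,
\]
where the final inequality is exactly the definition of $u_{\min}$.

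Next I would expand the divergence. By Definition~\ref{defn:refinement}, any refinement $\overline{\alpha}$ of $w^{(\ob)}$ factors as $\overline{\alpha}(ij) = \beta_{ij}\, w^{(\ob)}(j)$, where for each $j$ the vector $(\beta_{ij})_{i \sim j}$ is a probability distribution over the neighbours of $j$. Plugging the lower bound on $\alpha^{(\ob)}_0$ into $\msf{D}_{\textrm{KL}}(P\|Q) = \sum p\ln(p/q)$ and using monotonicity of $\ln$,
\[
\msf{D}_{\textrm{KL}}(\overline{\alpha}\,\|\,\alpha^{(\ob)}_0) \;=\; \sum_{ij \in \mcal{F}} \overline{\alpha}(ij)\ln\frac{\overline{\alpha}(ij)}{\alpha^{(\ob)}_0(ij)} \;\leq\; \sum_{j \in \Iob} w^{(\ob)}(j) \sum_{i \sim j} \beta_{ij}\bigl(\ln \beta_{ij} - \ln u_{\min}\bigr).
\]
For each fixed $j$ the quantity $\sum_{i \sim j}\beta_{ij}\ln\beta_{ij}$ is the negative entropy of a probability distribution, hence nonpositive, while $\sum_{i \sim j}\beta_{ij}(-\ln u_{\min}) = -\ln u_{\min}$ because the $\beta_{ij}$ sum to $1$. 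Summing over $j$ and using $\sum_{j}w^{(\ob)}(j) = 1$ yields $\msf{D}_{\textrm{KL}}(\overline{\alpha}\,\|\,\alpha^{(\ob)}_0) \leq -\ln u_{\min} \leq \ln(\overline{n}/u_{\min})$, the last step being $\overline{n} \geq 1$.

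There is no essentially hard step here; the proof is a short computation that runs parallel to Fact~\ref{fact:div_initial}. The point requiring the most care is keeping straight that $\alpha^{(\side)}_0$ is a \emph{super}-refinement, spreading $w^{(\side)}(i)$ uniformly over all of $\Iob$, whereas its proportional response $\alpha^{(\ob)}_0$ is supported only on $\mcal{F}$; it is precisely this uniform spreading that makes the denominator appearing in $\alpha^{(\ob)}_0(ij)$ equal to the full neighbour-weight $W_j$, and hence makes $\alpha^{(\ob)}_0(ij)$ bounded below by $u_{\min}\,w^{(\ob)}(j)$. Had $w^{(\side)}(i)$ instead been spread only over the neighbours of $i$, the corresponding lower bound would involve the vertex degrees and be strictly weaker, which is the reason for the choice of initial refinement flagged in the technical remark preceding the lemma.
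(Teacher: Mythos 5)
Your proof is correct and follows the same overall strategy as the paper's: compute $\alpha^{(\ob)}_0$ in closed form, observe that the uniform spreading over all of $\Iob$ makes the denominators cancel so that $\alpha^{(\ob)}_0(ij) \geq u_{\min}\cdot w^{(\ob)}(j)$, and then bound the KL divergence term by term. The only difference is in the final bookkeeping, and it is in your favor: the paper drops the joint entropy term $-\sum_{i\sim j}\overline{\alpha}(ij)\ln\overline{\alpha}(ij)\geq 0$ all at once and must then pay the entropy of the marginal $w^{(\ob)}$, which it bounds by $\ln\overline{n}$; you instead factor $\overline{\alpha}(ij)=\beta_{ij}\,w^{(\ob)}(j)$ and drop only the per-vertex conditional entropies $-\sum_{i\sim j}\beta_{ij}\ln\beta_{ij}\geq 0$, so the marginal entropy term never appears. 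Your argument therefore proves the strictly stronger bound $\msf{D}_{\textrm{KL}}(\overline{\alpha}\,\|\,\alpha^{(\ob)}_0)\leq \ln\frac{1}{u_{\min}}$, which of course implies the stated $\ln\frac{\overline{n}}{u_{\min}}$ since $\overline{n}\geq 1$. Nothing is missing.
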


\begin{proof}
By definition
of proportional response, for $i \in \Ib \sim j \in \Iob$,

$${\alpha}^{(\ob)}_0(ij) = 
\frac{{\alpha}^{(\side)}_0(ij)}{\sum_{k: k \sim j} {\alpha}^{(\side)}_0(kj)} \cdot w^{(\ob)}(j)
= 
\frac{w^{(\side)}(i)/\overline{n}}{\sum_{k: k \sim j} w^{(\side)}(k)/\overline{n}} \cdot w^{(\ob)}(j) \geq u_{\min} \cdot w^{(\ob)}(j).$$ 

Next, we give a bound on the divergence.
The first inequality below follows because entropy
of a distribution is non-negative.

\begin{align*}
\msf{D}_{\textrm{KL}}(\overline{\alpha}  \| \alpha^{(\ob)}_0) 
&= \sum_{i \sim j} \overline{\alpha}(ij)
 \ln \frac{\overline{\alpha}(ij)}{\alpha^{(\ob)}_0(ij)}
\leq
\sum_{i \sim j} \overline{\alpha}(ij)
 \ln \frac{1}{\alpha^{(\ob)}_0(ij)} \\
& \leq
\sum_{i \sim j} \overline{\alpha}(ij)
 \ln \frac{1}{u_{\min} \cdot w^{(\ob)}(j)} 
= \sum_{i \sim j} \overline{\alpha}(ij)
 \ln \frac{1}{u_{\min}} 
+
\sum_{j \in \Iob} w^{(\ob)}(j)
 \ln \frac{1}{w^{(\ob)}(j)} 
\\
& \leq
 \ln \frac{1}{u_{\min}} 
+
 \ln \overline{n},
\end{align*}
as required.
\end{proof}

Facts~\ref{fact:convergence},~\ref{fact:div_initial}
and Lemma~\ref{lemma:div_other} readily give the main conclusion
of this section.

\begin{theorem}[Multiplicative Error for Both Sides' Payloads]
\label{th:both_sides}
Suppose the iterative process in Definition~\ref{defn:pr-process}
is initiated on side~$\side$ with (super-)refinement $\alpha^{(\side)}_0$
as in (\ref{eq:init}).
Suppose after iteration~$t$,
the refinement $\alpha^{(\side)}_t$ on vertex weights
on side~$\side$ induces the payload vector~$p^{(\ob)}_t$
on side~$\ob$
and the refinement $\alpha^{(\ob)}_t \gets \msf{PR}(\alpha^{(\side)}_t)$
on side~$\ob$ induces the payload vector
~$p^{(\side)}_t$ on side~$\side$.

Denote 
${\eta}_t := \max_{i \in \Ib} \frac{|p^{(\side)}_*(i) - p^{(\side)}_t(i)|}{p^{(\side)}_*(i)}$
and
$\overline{\eta}_t := \max_{j \in \Iob} \frac{|p^{(\ob)}_*(j) - p^{(\ob)}_t(j)|}{p^{(\ob)}_*(j)}$.
Then, we have:

$${\eta}_t^2 \leq \frac{1}{t} \cdot \frac{16 \overline{n}}{u_{\min}} \cdot 
\ln \frac{\overline{n}}{u_{\min}};$$

$$\overline{\eta}_t^2 \leq \frac{1}{t} \cdot \frac{16 n}{\overline{u}_{\min}} \cdot 
\ln (n \cdot \overline{n}).$$

In other words, for $0 < \tau \leq \frac{1}{2}$, 
to achieve $\tau$-multiplicative error, we have the following:

\begin{compactitem}

\item $T \geq 
\frac{1}{\tau^2} \cdot \frac{16 \overline{n}}{u_{\min}} \cdot 
\ln \frac{\overline{n}}{u_{\min}} \implies
\eta_T \leq \tau$;

\item $T \geq \frac{1}{\tau^2} \cdot \frac{16 n}{\overline{u}_{\min}} \cdot 
\ln (n \cdot \overline{n})   \implies 
\overline{\eta}_T \leq \tau$.
\end{compactitem}

\end{theorem}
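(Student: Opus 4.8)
The plan is to obtain both displayed inequalities by feeding the two divergence bounds of Fact~\ref{fact:div_initial} and Lemma~\ref{lemma:div_other} into the black-box convergence guarantee of Fact~\ref{fact:convergence}, and to get the side-$\side$ bound by observing that the process of Definition~\ref{defn:pr-process}, viewed from iteration~$1$ onward, is itself an iterative proportional response process initiated on the opposite side, so Fact~\ref{fact:convergence} applies again with the roles of $\side$ and $\ob$ interchanged. Throughout I fix, as in the Notation Recap, a locally maximin pair $\vec{\alpha}_* = (\alpha^{(\side)}_*, \alpha^{(\ob)}_*)$ that are proportional responses to each other; such a pair exists by Fact~\ref{fact:exist_maximin} and Lemma~\ref{lem:local_maximin_sym}, and by Fact~\ref{fact:maximin_decomp} the induced payloads $p^{(\side)}_*$ and $p^{(\ob)}_*$ are uniquely determined.

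For the bound on $\overline{\eta}_t$ I would apply Fact~\ref{fact:convergence} verbatim to the process initiated on side~$\side$ with the super-refinement $\alpha^{(\side)}_0$ of~(\ref{eq:init}), giving $\overline{\eta}_t^2 \le \frac{16n}{\overline{u}_{\min}}\cdot\frac{\msf{D}_{\textrm{KL}}(\alpha^{(\side)}_*\,\|\,\alpha^{(\side)}_0)}{t}$; then Fact~\ref{fact:div_initial}, applied with $\alpha := \alpha^{(\side)}_*$, bounds $\msf{D}_{\textrm{KL}}(\alpha^{(\side)}_*\,\|\,\alpha^{(\side)}_0) \le \ln(n\cdot\overline{n})$, and substituting yields the second displayed inequality of the theorem.

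For the bound on $\eta_t$, the key step is to recognize that the iterates of Definition~\ref{defn:pr-process} with $t \ge 1$ coincide with those of the iterative proportional response process initiated on side~$\ob$ with initial refinement $\alpha^{(\ob)}_0 := \msf{PR}(\alpha^{(\side)}_0)$: unrolling the recurrence, the side-$\ob$ iterates of this re-indexed process are exactly $\alpha^{(\ob)}_0, \alpha^{(\ob)}_1, \dots$, and $\alpha^{(\ob)}_s$ induces the payload $p^{(\side)}_s$ on side~$\side$. I would first check that $\alpha^{(\ob)}_0$ is an admissible starting refinement, i.e.\ every non-isolated $i \in \Ib$ receives positive payload from it; this holds because $\alpha^{(\side)}_0(ij) = w^{(\side)}(i)/\overline{n} > 0$ on every edge of $\mcal{F}$, so its proportional response is strictly positive on every edge. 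Applying Fact~\ref{fact:convergence} with $\side$ and $\ob$ swapped — so the ``$n$'' there becomes $\overline{n}$, the ``$\overline{u}_{\min}$'' there becomes $\min_{i \in \Ib \sim j \in \Iob}\frac{w^{(\side)}(i)}{\sum_{k:\,k\sim j} w^{(\side)}(k)} = u_{\min}$, and the error quantity becomes $\eta_t$ — and then invoking Lemma~\ref{lemma:div_other} with $\overline{\alpha} := \alpha^{(\ob)}_*$ to bound $\msf{D}_{\textrm{KL}}(\alpha^{(\ob)}_*\,\|\,\alpha^{(\ob)}_0) \le \ln\frac{\overline{n}}{u_{\min}}$ gives the first displayed inequality.

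The ``in other words'' clauses are then pure algebra: for $0 < \tau \le \frac{1}{2}$, plugging $T \ge \frac{1}{\tau^2}\cdot\frac{16\overline{n}}{u_{\min}}\ln\frac{\overline{n}}{u_{\min}}$ into the $\eta_t^2$ bound forces $\eta_T^2 \le \tau^2$, hence $\eta_T \le \tau$, and symmetrically for $\overline{\eta}_T$; since $\rho(i) = p(i)/w(i)$, these are exactly $\tau$-multiplicative errors on the density vectors, so $\alpha^{(\side)}_T$ and $\alpha^{(\ob)}_T$ are $\tau$-approximate refinements in the sense of Definition~\ref{defn:approx_refine}. I expect the only real obstacle to be this bookkeeping — pinning down the index shift so that Fact~\ref{fact:convergence} applies on side~$\ob$ with initial refinement $\alpha^{(\ob)}_0$, and keeping straight which instance constant ($n$ versus $\overline{n}$, $\overline{u}_{\min}$ versus $u_{\min}$) plays which role after the swap; there is no analytic content beyond the three cited facts.
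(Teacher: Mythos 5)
Your proposal is correct and is essentially the paper's argument: the paper likewise obtains the $\overline{\eta}_t$ bound by plugging Fact~\ref{fact:div_initial} into Fact~\ref{fact:convergence}, and obtains the $\eta_t$ bound by re-viewing the process as initiated on side~$\ob$ with $\alpha^{(\ob)}_0 = \msf{PR}(\alpha^{(\side)}_0)$ and invoking Lemma~\ref{lemma:div_other}. Your explicit verification of the index shift and of the admissibility of $\alpha^{(\ob)}_0$ is bookkeeping the paper leaves implicit, and it checks out.
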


\begin{remark}
As aforementioned, using the proofs in~\cite{zhang2011proportional,birnbaum2011distributed},
to achieve $\eta_T \leq \tau$,
we would need
$T \geq \Omega(\frac{1}{\tau^2} \cdot \frac{n}{\overline{u}_{\min} \cdot w^2_{\min}} 
\cdot  \log(n \cdot \overline{n}))
\geq \Omega(\frac{1}{\tau^2} \cdot \frac{n^3}{\overline{u}_{\min}} 
\cdot  \log(n \cdot \overline{n}))$.

\end{remark}

\section{Review: Achieving Absolute Error for Density Vectors in Previous Works}
\label{sec:additive_error}

For completeness,
we review how absolute error can be achieved
in previous works~\cite{DBLP:conf/www/DanischCS17,harb2022faster}.
In these approaches, the hypergraph interpretation of the instances is used,
and the two sides do not play the same role.
In previous presentations, the hyperedges can have arbitrarily positive weights,
but the nodes have uniform weights.  
We show that these approaches can easily be adapted such that nodes can have arbitrary
positive weights as well.

\noindent \textbf{Simplified Notation.}
Since the hypergraph interpretation is used,
we will denote hypergraph $H = (V, E)$,
where edges $E = \Izero$ and nodes $V = \Ione$ may both
have arbitrary
positive weights $w : V \cup E \to \R_{>0}$.
Recall that for $e \in E$ and $i \in V$,
$\{e, i\} \in \mcal{F}$ represents a pair neighbors
in the instance \emph{iff} $i \in e$ holds in the hypergraph $H$.
Moreover, we use $\alpha: \mcal{F} \to \R_{\geq 0}$
to denote a refinement of edge weights,
where $\alpha_{e \to i}$ is the payload node~$i$ received from
edge~$e$.  Consequently,
for each node~$i \in V$,
we denote its totally received payload $p_i = \sum_{e \in E: i \in e} \alpha_{e \to i}$
and its density $\rho_i = \frac{p_i}{w_i}$.
\ignore{
Moreover, since we no longer need to use
the superscript~$\side$ to indicate the side,
we will use the superscript to indicate
an optimal solution such as $\alpha_*$ or $\rho_*$ in
this section.
}

\noindent \textbf{Quadratic Program.}
Both works~\cite{DBLP:conf/www/DanischCS17,harb2022faster}
considered the same convex program which can be readily extended
to hypergraphs with general edge and node weights as follows.
The feasible set $\mcal{K}(H) :=
\left\{\alpha \in \mathbb{R}^{|\mathcal{F}|}_{\geq 0} \, \middle| \, \forall e \in E, \\ \sum_{i \in V: i \in e} \alpha_{e \to i} = w_e \right\}$
consists of refinements of hyperedge weights.
The objective function $Q$ is given in the following quadratic program.

\begin{align*}
	\mathsf{CP}(H): \quad \min \quad  Q(\alpha) & = \sum_{i \in V}\frac{p_i^2}{w_i} = \sum_{i \in V} w_i \cdot \rho_i^2 \\
	\text{s.t.} 
	 \quad p_i & = \sum_{e \in E: i \in e} \alpha_{e \to i}
	= w_i \cdot \rho_i, \quad \forall i \in V \\
		\alpha & \in \mcal{K}(H) \\
\end{align*}

A variant of the following fact was given
as early as in~\cite{fujishige1980lexicographically},
and a proof for uniform node weights
is given in~\cite[Corollary 4.4]{DBLP:conf/www/DanischCS17},
which is easily extended to general weights.

\begin{fact}
Every optimal solution~$\alpha_*$
to $\mathsf{CP}(H)$ induces the
density vector $\rho_*$ on nodes in the density decomposition
in Definition~\ref{def:decomposition}.
\end{fact}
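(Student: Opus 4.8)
The plan is to reduce this fact to the connection between the local maximin condition and the density decomposition already established in Fact~\ref{fact:maximin_decomp}, following the approach of Danisch et al.~\cite{DBLP:conf/www/DanischCS17} but carried out for general (rather than uniform) node weights. Concretely, I would show that a refinement $\alpha \in \mcal{K}(H)$ of the hyperedge weights is an optimal solution to $\mathsf{CP}(H)$ \emph{iff} it satisfies the local maximin condition of Definition~\ref{defn:maximin}, with the edge side $E = \Izero$ playing the role of the refinement side and the node side $V = \Ione$ playing the role of the other side. Granting this equivalence, Fact~\ref{fact:maximin_decomp} immediately yields that every optimal $\alpha_*$ induces the density vector $\rho_*$ of the decomposition in Definition~\ref{def:decomposition}, and in particular that this density vector is the same for all optimal solutions. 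Isolated nodes, which receive zero payload and sit at the end of the decomposition, are assumed removed, exactly as in Section~\ref{sec:sym_decomp}.

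For the equivalence I would use first-order optimality. The objective $Q(\alpha) = \sum_{i \in V} \frac{p_i^2}{w_i}$ is a convex quadratic in $\alpha$ (the identities $p_i = \sum_{e \ni i} \alpha_{e \to i}$ are linear substitutions, not genuine constraints), and the feasible region $\mcal{K}(H)$ is a product, over hyperedges $e$, of the scaled simplex $\{(\alpha_{e \to i})_{i \in e} : \alpha_{e \to i} \geq 0,\ \sum_{i \in e} \alpha_{e \to i} = w_e\}$. Hence $\alpha_*$ is optimal \emph{iff} $\langle \nabla Q(\alpha_*), d\rangle \geq 0$ for every feasible direction $d$ at $\alpha_*$, and the cone of feasible directions is generated by the directions that, for a single hyperedge $e$ with two nodes $i, j \in e$, shift an infinitesimal amount of payload from $i$ to $j$; such a generator is feasible exactly when $\alpha_{e \to i} > 0$. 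Computing $\frac{\partial Q}{\partial \alpha_{e \to i}} = \frac{2 p_i}{w_i} = 2\rho_i$, the inequality $\langle \nabla Q(\alpha_*), d\rangle \geq 0$ along this generator reads $\rho_j - \rho_i \geq 0$. Therefore $\alpha_*$ is optimal \emph{iff} for every hyperedge $e$ and every $i \in e$ with $\alpha_{e \to i} > 0$ we have $\rho_i = \min_{k \in e} \rho_k$, which is verbatim the local maximin condition of Definition~\ref{defn:maximin}.

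Combining the two steps: every optimal $\alpha_*$ is locally maximin, so by Fact~\ref{fact:maximin_decomp} the payload densities it induces on $V$ coincide with the density vector $\rho_*$ produced by the density decomposition, as claimed. Existence of an optimum is automatic since $Q$ is continuous and $\mcal{K}(H)$ is compact; alternatively it follows from Fact~\ref{fact:exist_maximin}. The only point requiring genuine care in the extension to general node weights is the gradient computation: the weight $w_i$ sits in the denominator of $Q$, and one must verify that it is exactly this normalization that makes the first-order condition involve the density $\rho_i = p_i/w_i$ appearing in Definition~\ref{defn:maximin} rather than the raw payload $p_i$; the remaining bookkeeping (hyperedges of size one, nodes contained in no hyperedge) is routine.
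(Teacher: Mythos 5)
Your proof is correct and follows the route the paper intends: the paper does not write out an argument but defers to~\cite[Corollary 4.4]{DBLP:conf/www/DanischCS17}, whose content is precisely your first-order-optimality equivalence between optimal solutions of $\mathsf{CP}(H)$ and the local maximin condition (the gradient $\nabla Q(\alpha)_{e \to i} = 2p_i/w_i = 2\rho_i$ together with the simplex-product structure of $\mcal{K}(H)$ is exactly what makes the extension to general node weights go through), after which Fact~\ref{fact:maximin_decomp} finishes the argument exactly as you say.
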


As mentioned in Definition~\ref{defn:error},
absolute error is defined with respect to a norm.
We consider the following norm for density vectors on nodes in~$V$.

\begin{definition}[Norm $\| \cdot \|_w$]
\label{defn:norm_w}
We define a norm $\| \cdot \|_w$
on the space of density vectors in~$V$ as follows.
For $\rho \in \R^V$,
	$\|\rho\|_w^2 = \sum_{i \in V} w_i \cdot \rho_i^2$. 
\end{definition}

	%
	%

\begin{remark}
Comparing with the standard Euclidean norm,
we have: $w_{\min} \cdot \| \rho \|_2^2 \leq \| \rho \|_w \leq
w_{\max} \cdot \| \rho \|_2^2$.
Hence, $\epsilon$-absolute error with respect to norm $\| \cdot \|_w$ implies 
$\frac{\epsilon}{\sqrt{w_{\min}}}$-absolute error with respect to the
Euclidean norm.

Furthermore, observe that given density vector $\rho$ and
the corresponding payload vector~$p$,
we have $\| \rho \|^2_w = \sum_{i \in V} \frac{p_i^2}{w_i}$.  Therefore,
we would need to define another norm if we wish to express our results
in terms of the payload vector.  For the rest of the section,
we will focus on approximating the density vector~$\rho_*$.
\end{remark}

The following lemma
is generalized from~\cite{DBLP:conf/www/DanischCS17}.
It shows that to get an absolute error for $\rho$,
it suffices to consider an additive error for the objective
function in $\mathsf{CP}(H)$.

\begin{lemma}[Additive Error for Quadratic Program]
\label{lem:error-QP}
Suppose
an edge refinement $\alpha \in \mcal{K}(H)$ induces
the density vector $\rho \in \R^V$.
Then, comparing with the optimal~$\rho_*$ induced by an optimal~$\alpha_*$,
we have:

$\|\rho - \rho_* \|_w^2 \leq Q(\alpha) - Q(\alpha_*)$.
\end{lemma}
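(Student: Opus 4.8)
The key observation is that $Q$ is a convex quadratic function of the edge refinement $\alpha$, and that $\|\rho - \rho_*\|_w^2$ is precisely the second-order term in the Taylor expansion of $Q$ around $\alpha_*$; the claim then follows from the first-order optimality of $\alpha_*$ over the convex feasible set $\mcal{K}(H)$.

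First I would record that $\mcal{K}(H)$ is convex, being the intersection of the affine subspace $\{\alpha : \sum_{i \in e} \alpha_{e \to i} = w_e \ \forall e\}$ with the nonnegative orthant $\R^{|\mcal{F}|}_{\geq 0}$. Hence for $t \in [0,1]$ the path $\alpha_t := (1-t)\alpha_* + t\alpha$ stays in $\mcal{K}(H)$. Writing $p_*, p$ for the payload vectors induced by $\alpha_*, \alpha$, linearity of the payload map yields that $\alpha_t$ induces payload $p_t = p_* + t(p - p_*)$ and hence density $\rho_t = \rho_* + t(\rho - \rho_*)$. Therefore $\phi(t) := Q(\alpha_t) = \sum_{i \in V} \frac{(p_{*,i} + t(p_i - p_{*,i}))^2}{w_i}$ is a univariate quadratic in $t$, whose coefficient of $t^2$ equals $\sum_{i \in V} \frac{(p_i - p_{*,i})^2}{w_i} = \sum_{i \in V} w_i(\rho_i - \rho_{*,i})^2 = \|\rho - \rho_*\|_w^2$ by Definition~\ref{defn:norm_w} and the relation $p_i = w_i \rho_i$.

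Next, since $\alpha_*$ minimizes $Q$ over $\mcal{K}(H)$ and each $\alpha_t$ lies in $\mcal{K}(H)$, the function $\phi$ attains its minimum over $[0,1]$ at $t = 0$, so $\phi'(0) \geq 0$. Writing $\phi(t) = at^2 + bt + c$ with $a = \|\rho - \rho_*\|_w^2$, $b = \phi'(0) \geq 0$, and $c = \phi(0) = Q(\alpha_*)$, we conclude $Q(\alpha) - Q(\alpha_*) = \phi(1) - \phi(0) = a + b \geq a = \|\rho - \rho_*\|_w^2$, as required. (Equivalently, one can phrase this via the multivariate optimality condition $\langle \nabla Q(\alpha_*), \alpha - \alpha_* \rangle \geq 0$ together with the exact expansion $Q(\alpha) = Q(\alpha_*) + \langle \nabla Q(\alpha_*), \alpha - \alpha_*\rangle + \|\rho - \rho_*\|_w^2$; the constant Hessian of $Q$ makes both formulations identical.)

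The argument is essentially routine once the quadratic structure is made explicit. The only points requiring a little care are verifying that the coefficient of $t^2$ in $\phi$ is exactly $\|\rho - \rho_*\|_w^2$ — which just uses the definition of the norm $\|\cdot\|_w$ and $p_i = w_i\rho_i$ — and that $\alpha_t \in \mcal{K}(H)$, so that invoking $\phi'(0)\geq 0$ is legitimate. I do not anticipate any deeper obstacle.
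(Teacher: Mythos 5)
Your proposal is correct and follows essentially the same argument as the paper: both restrict $Q$ to the segment $\alpha_* + t(\alpha - \alpha_*) \subseteq \mcal{K}(H)$, observe that the resulting univariate quadratic has leading coefficient $\|\rho - \rho_*\|_w^2$ by linearity of the payload map, and combine this with the first-order optimality condition $\phi'(0) \geq 0$ to conclude. No gaps.
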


\begin{proof}
Denote $\epsilon := \|\rho - \rho_*\|_w$.
Since $\mcal{K}(H)$ is convex, 
we denote $\alpha(\theta) := \alpha_* + \theta(\alpha - \alpha_*) \in \mcal{K}(H)$.
We define a function $\gamma : [0,1] \to \R$
by $\gamma(\theta) := Q(\alpha(\theta))
= 
\sum_{i \in V} w_i \cdot \left( \rho_*(i) + \theta (\rho_i - \rho_*(i)) \right)^2
$, where the last equality follows because
the density vector is a linear function of the hyperedge refinement.

Observe that since $\alpha_*$ is an optimal solution, we have $\gamma'(0) \geq 0$. 
Moreover, 
$\gamma(\theta)$ is a second-degree polynomial in~$\theta$,
it can be easily verified that $\gamma''(\theta) = 2 \epsilon^2$ for $\theta \in [0,1]$.

Integrating this twice over $\theta \in [0,1]$, we have
$\gamma(1) - \gamma(0) = Q(\alpha) - Q(\alpha_*) = \epsilon^2 + \gamma'(0) \geq \epsilon^2$,
as required.
\end{proof}

\noindent \textbf{First-Order Iterative Methods to Solve Convex Program.}
Both approaches~\cite{DBLP:conf/www/DanischCS17,harb2022faster}
consider iterative gradient methods to tackle $\mathsf{CP}(H)$.
The difference is that the standard Frank-Wolfe method~\cite{DBLP:conf/icml/Jaggi13}
is used in~\cite{DBLP:conf/www/DanischCS17},
while the projected gradient descent with Nesterov momentum~\cite{Nesterov1983AMF} method 
(such as accelerated FISTA~\cite{beck2009fast})
is used in~\cite{harb2022faster}.
For completeness, we recap both methods.
In first-order methods,
the convergence rate depends
on a smoothness parameter $L_Q := \sup_{x \neq y \in \mcal{K}(H)} \frac{\|\nabla Q(x) - \nabla Q(y) \|_2}{\|x - y\|_2}$.
If $Q$ is twice differentiable,
this is equivalent to the supremum of the spectral norm $\|\nabla^2 Q(\alpha)\|$ of the Hessian
over $\alpha \in \mcal{K}(H)$.

\begin{algorithm}[H]
	\SetAlgoLined
	\SetKwInput{Input}{Input}
	\SetKwInput{Output}{Output}
	\SetKwFunction{FrankWolfe}{Frank-Wolfe}
	\SetKwProg{Function}{function}{:}{}
	\Input{Objective function $Q$, feasible set $\mcal{K}(H)$, number~$T$ of iterations.}
	\Output{$\alpha_{T}$}
	
		Set initial $\alpha_{0} \in \mcal{K}(H)$ arbitrarily\;
		\For{$t = 1$ to $T$}{
			$\gamma_t \gets \frac{2}{{t+2}}$\;
			$\hat{\alpha} \gets \arg\min_{\alpha \in \mcal{K}(H)} \langle \alpha, \nabla Q(\alpha_{t-1}) \rangle$\;
			$\alpha_{t} \gets (1 - \gamma_t) \cdot \alpha_{t-1} + \gamma_t \cdot \hat{\alpha}$\;
		}
		\Return $\alpha_{T}$\;
		
	\caption{Frank-Wolfe Algorithm}
	\label{algo:FW-basic}
\end{algorithm}

\begin{algorithm}[H]
	\SetKwInput{Input}{Input}
	\SetKwInput{Output}{Output}
	\SetKwInput{Initialization}{Initialization}
	\Input{Objective function $Q$, feasible set $\mcal{K}(H)$, number~$T$ of iterations.}
	Set learning rate $\gamma \gets \frac{1}{L_Q}$\;
	Set initial $\alpha_{0} = \beta_0 \in \mcal{K}(H)$ arbitrarily\;
	\For{$t = 1$ to $T$}{\
		\Comment{Projection operator $\prod_{\mcal{K}(H)}$ to feasible set}
		\(\alpha_{t} = \prod_{\mcal{K}(H)}(\beta_{t-1} - \gamma \cdot \nabla Q(\beta_{t-1}))\)\; 
		$\beta_t = \alpha_t + \frac{t-1}{t+2} \cdot (\alpha_t - \alpha_{t-1})$;
	}
	\Output{\(\alpha_{T}\)}
	\caption{Accelerated FISTA}
	\label{algo:FISTA}
\end{algorithm}

\begin{fact}[Running Time]
\label{fact:time}
When applied to $\mathsf{CP}(H)$,
each iteration of Algorithm~\ref{algo:FW-basic}
takes $O(|\mcal{F}|)$ time
and each iteration of Algorithm~\ref{algo:FISTA}
takes $O(|\mcal{F}| \log r_H)$ time,
where $|\mcal{F}| := \sum_{e \in E} |e|$
and $r_H := \max_{e \in E} |e|$ 
is the maximum number of nodes contained in a hyperedge.
\end{fact}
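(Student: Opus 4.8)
The plan is to analyze the per‑iteration cost of each algorithm by exploiting the Cartesian‑product structure of the feasible set together with the very simple form of $\nabla Q$. First I would observe that $\mcal{K}(H) = \prod_{e \in E} \Delta_e$, where $\Delta_e := \{(\alpha_{e \to i})_{i \in e} : \alpha_{e \to i} \geq 0, \ \sum_{i \in e} \alpha_{e \to i} = w_e\}$ is a scaled probability simplex. This separability is the key structural fact: both the linear‑minimization oracle used inside Frank‑Wolfe and the Euclidean projection used inside accelerated FISTA decompose into one independent subproblem per hyperedge. Next I would record that $\frac{\partial Q}{\partial \alpha_{e \to i}} = \frac{2 p_i}{w_i} = 2\rho_i$, so that a single sweep over the incidences computes every payload $p_i$ and hence every coordinate of $\nabla Q(\alpha)$ in $O(\sum_{e \in E}|e|) = O(|\mcal{F}|)$ time; in particular the gradient depends on $\alpha$ only through the node densities, which will be reused in both algorithms.

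For Algorithm~\ref{algo:FW-basic}, the step $\hat\alpha \gets \arg\min_{\alpha \in \mcal{K}(H)} \langle \alpha, \nabla Q(\alpha_{t-1})\rangle$ splits, for each $e \in E$, into minimizing the linear function $\sum_{i \in e} 2\rho_i\,\alpha_{e \to i}$ over $\Delta_e$. A linear function on a scaled simplex attains its minimum at a vertex, so the optimum assigns the whole mass $w_e$ to some $i^\star \in \arg\min_{i \in e} \rho_i$; locating $i^\star$ costs $O(|e|)$. Summing over hyperedges, the oracle costs $O(|\mcal{F}|)$, and the convex‑combination update $\alpha_t \gets (1-\gamma_t)\alpha_{t-1} + \gamma_t\hat\alpha$ only touches the $O(|\mcal{F}|)$ stored coordinates. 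Hence one Frank‑Wolfe iteration is $O(|\mcal{F}|)$.

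For Algorithm~\ref{algo:FISTA}, the gradient step $\beta_{t-1} - \gamma\nabla Q(\beta_{t-1})$ and the momentum update $\beta_t = \alpha_t + \frac{t-1}{t+2}(\alpha_t - \alpha_{t-1})$ each cost $O(|\mcal{F}|)$. The only nontrivial operation is the projection $\prod_{\mcal{K}(H)}$, which by the product structure reduces to projecting, for each $e$, the vector $(\beta_{e \to i})_{i \in e}$ onto the scaled simplex $\Delta_e$. Using the standard sorting‑based algorithm for Euclidean projection onto a simplex, this costs $O(|e|\log|e|) \le O(|e|\log r_H)$, and summing over $e \in E$ gives $O(|\mcal{F}|\log r_H)$; together with the $O(|\mcal{F}|)$ for the remaining steps, one iteration is $O(|\mcal{F}|\log r_H)$.

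The hard part — really the only part requiring care — is pinning down the cost of the simplex projection and matching it to the claimed $\log r_H$ factor. I would confirm that the sort dominates (the threshold search being linear once the coordinates are sorted), remark that a linear‑time‑selection variant removes the logarithm at the cost of a more elaborate routine, and check the degenerate cases: hyperedges of size one, where $\Delta_e$ is a single point and the projection is free, and instances with $r_H$ small, which are absorbed into the usual convention that $O(|\mcal{F}|\log r_H)$ is read as $O\!\left(|\mcal{F}|(1+\log r_H)\right)$. Everything else is a routine accounting of linear‑time vector operations over the $O(|\mcal{F}|)$ stored edge variables.
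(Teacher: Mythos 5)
Your proposal is correct and follows essentially the same route as the paper's proof: compute $\nabla Q(\alpha)_{e\to i}=\frac{2p_i}{w_i}$ in one $O(|\mcal{F}|)$ sweep, exploit the per-hyperedge product structure of $\mcal{K}(H)$ so that the Frank--Wolfe linear oracle reduces to picking a minimum-density node per hyperedge and the FISTA projection reduces to a sorting-based simplex projection costing $O(|e|\log|e|)$ per hyperedge. Your write-up is merely more explicit than the paper's (which cites the simplex-projection literature and states the per-edge cost directly), so there is nothing to fix.
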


\begin{proof}
Observe that for $i \in e$, $\nabla Q(\alpha)_{e \to i} = \frac{2 p_i}{w_i}$.  Hence, each iteration
of Algorithm~\ref{algo:FW-basic} takes $O(|\mcal{F}|)$ time.

We next analyze the running time for each iteration of Algorithm~\ref{algo:FISTA}, which is dominated by the projection step.
Observe that given a vector \(y \in \mathbb{R}^{|\mathcal{F}|}_+\), we can project it on $\mcal{K}(H)$ by finding \(\alpha\) that minimizes \(\|\alpha - y\|_2^2\) subject to \(\alpha \in \mcal{K}(H)\). This is known as the simplex projection and it has a simple closed form solution; see \cite{wang2013projection} for the basic algorithm, and \cite{iutzeler2018distributed} for a recent distributed variant of the algorithm. In any case, the projection will take $O(|e| \log |e|)$ time for each 
$e \in E$ to do the projection.
Summing over $e \in E$, and hence overall the projection step takes time $O(|\mcal{F}| \log r_H)$. 
\end{proof}

\begin{fact}[Convergence Rates of First-Order Methods]
\label{fact:conv_first}
Suppose $\alpha_*$ is an optimal solution to $\mathsf{CP}(H)$.
Then, after $T \geq 1$ iterations for each of the following
algorithms, we have the corresponding upper
bound on the additive error $Q(\alpha_T) - Q(\alpha_*)$:

\begin{compactitem}

\item \emph{Frank-Wolfe} (Algorithm~\ref{algo:FW-basic})~\textnormal{\cite{DBLP:conf/icml/Jaggi13}}:

$$\frac{1}{T+2} \cdot \mathsf{Diam}(\mcal{K}(H))^2 \cdot L_Q.$$

\item \emph{Accelerated FISTA} (Algorithm~\ref{algo:FISTA})~\textnormal{\cite{beck2009fast}}:

$$\frac{2}{T^2} \cdot \mathsf{Diam}(\mcal{K}(H))^2 \cdot L_Q.$$
\end{compactitem}
\end{fact}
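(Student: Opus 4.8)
The plan is to recognize that Fact~\ref{fact:conv_first} is not a new result but an instantiation of two classical convergence guarantees — Jaggi's analysis of Frank–Wolfe and the Beck–Teboulle analysis of accelerated proximal gradient (FISTA) — applied to the specific program $\mathsf{CP}(H)$. So the only real work is to check that $\mathsf{CP}(H)$ satisfies the hypotheses of those theorems and to translate their generic problem constants into $\mathsf{Diam}(\mcal{K}(H))$ and $L_Q$. First I would record the structural facts about $\mathsf{CP}(H)$: the feasible set $\mcal{K}(H)$ is a Cartesian product over $e \in E$ of the scaled simplices $\{\alpha_{e\to\cdot}\ge 0:\ \sum_{i\in e}\alpha_{e\to i}=w_e\}$, hence a nonempty compact convex polytope with finite Euclidean diameter $\mathsf{Diam}(\mcal{K}(H))$; and the objective $Q(\alpha)=\sum_{i\in V} p_i(\alpha)^2/w_i$ is the composition of the linear map $\alpha\mapsto(p_i(\alpha))_{i\in V}$ with the convex quadratic $(p_i)\mapsto\sum_i p_i^2/w_i$, so $Q$ is a convex quadratic form whose Hessian $\nabla^2 Q$ is a constant positive semidefinite matrix. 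Consequently $L_Q=\sup_{x\ne y}\|\nabla Q(x)-\nabla Q(y)\|_2/\|x-y\|_2=\|\nabla^2 Q\|$ is finite and attained everywhere.

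For the Frank–Wolfe bound I would invoke the standard descent argument: for an $L_Q$-smooth convex function on a compact convex set, the curvature constant $C_Q$ (in Jaggi's sense) satisfies $C_Q\le L_Q\cdot\mathsf{Diam}(\mcal{K}(H))^2$, and with step schedule $\gamma_t=\tfrac{2}{t+2}$ and an exact linear-minimization oracle, an induction on $t$ using the progress inequality $Q(\alpha_t)\le Q(\alpha_{t-1})-\gamma_t\langle\nabla Q(\alpha_{t-1}),\alpha_{t-1}-\hat\alpha\rangle+\tfrac12\gamma_t^2 C_Q$ together with the gap inequality $\langle\nabla Q(\alpha_{t-1}),\alpha_{t-1}-\hat\alpha\rangle\ge Q(\alpha_{t-1})-Q(\alpha_*)$ yields $Q(\alpha_T)-Q(\alpha_*)=O(C_Q/T)$; substituting the bound on $C_Q$ gives the claimed $O(\mathsf{Diam}(\mcal{K}(H))^2 L_Q/T)$. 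The exactness of the oracle here is precisely the content of Fact~\ref{fact:time}: minimizing $\langle\alpha,\nabla Q(\alpha_{t-1})\rangle$ over $\mcal{K}(H)$ decouples per hyperedge, and for each $e$ one simply puts all of $w_e$ on the node $i\in e$ with smallest coordinate $2p_i/w_i$ of the gradient.

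For the accelerated FISTA bound I would rewrite $\mathsf{CP}(H)$ as the composite minimization of $Q+\iota_{\mcal{K}(H)}$, where $\iota_{\mcal{K}(H)}$ is the indicator of the feasible set; since $Q$ is $L_Q$-smooth convex and $\iota_{\mcal{K}(H)}$ is convex with a tractable proximal operator — namely Euclidean projection onto $\mcal{K}(H)$, which by Fact~\ref{fact:time} decomposes into per-hyperedge simplex projections computable in closed form~\cite{wang2013projection} — Algorithm~\ref{algo:FISTA} is exactly the accelerated proximal-gradient scheme of Beck–Teboulle, whose analysis gives $Q(\alpha_T)-Q(\alpha_*)\le \frac{2L_Q\|\alpha_0-\alpha_*\|_2^2}{(T+1)^2}$. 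Bounding $\|\alpha_0-\alpha_*\|_2\le\mathsf{Diam}(\mcal{K}(H))$ and relaxing $(T+1)^2\ge T^2$ then yields the stated $O(\mathsf{Diam}(\mcal{K}(H))^2 L_Q/T^2)$.

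The step I expect to need the most care — rather than a genuine obstacle — is matching the exact constants in the stated forms (the leading factor, and whether the denominator reads $T$, $T+1$, or $T+2$): the off-the-shelf bounds carry slightly different constants than written, so one must either cite the precise propositions that produce those exact constants or sharpen the curvature/diameter estimates by hand, which is easy here because $Q$ is a fixed quadratic with a constant Hessian. The other point that deserves an explicit sentence is that both oracles are exact — the linear-minimization oracle and the projection described above — so no additional oracle-error term appears in either bound; once that is noted, Fact~\ref{fact:conv_first} follows as a clean instantiation of the cited analyses rather than requiring a new convergence proof.
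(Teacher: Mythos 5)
Your proposal is correct and matches the paper's treatment: the paper states this as a Fact with citations to Jaggi and Beck--Teboulle and gives no independent proof, relying exactly on the instantiation you describe (compact polytope feasible set, constant-Hessian quadratic objective, exact linear-minimization and projection oracles, and $C_Q \le O(1)\cdot L_Q\cdot\mathsf{Diam}(\mcal{K}(H))^2$). Your caveat about the leading constants is also apt, since the stated Frank--Wolfe bound corresponds to the sharper convention $C_Q \le \tfrac{1}{2}\,\mathsf{Diam}(\mcal{K}(H))^2 L_Q$ rather than the looser $\mathsf{Diam}^2 L_Q$ you quote.
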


We next give upper bounds on 
$\mathsf{Diam}(\mcal{K}(H))$ and $L_Q$.

\begin{fact}[Upper Bound on Diameter]
\label{fact:diam}
We have $\mathsf{Diam}(\mcal{K}(H))^2 \leq 2 \sum_{e \in E} w_e^2$.
\end{fact}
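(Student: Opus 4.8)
The plan is to exploit the product structure of the feasible set $\mcal{K}(H)$. First I would observe that the linear constraints defining $\mcal{K}(H)$ decouple across hyperedges: for each $e \in E$ the coordinates $(\alpha_{e \to i})_{i \in e}$ are only required to lie in the scaled simplex $\Delta_e := \{x \in \R_{\geq 0}^e : \sum_{i \in e} x_i = w_e\}$ of nonnegative vectors indexed by the nodes of $e$ whose coordinates sum to $w_e$, with no interaction between distinct hyperedges. Hence $\mcal{K}(H) = \prod_{e \in E} \Delta_e$ as a Cartesian product, and for any $\alpha, \beta \in \mcal{K}(H)$ the squared Euclidean distance splits as $\|\alpha - \beta\|_2^2 = \sum_{e \in E} \|\alpha_e - \beta_e\|_2^2$, where $\alpha_e, \beta_e$ denote the restrictions to the coordinates incident to $e$. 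Taking the supremum coordinate block by coordinate block gives $\mathsf{Diam}(\mcal{K}(H))^2 = \sum_{e \in E} \mathsf{Diam}(\Delta_e)^2$, so it suffices to show $\mathsf{Diam}(\Delta_e)^2 \leq 2 w_e^2$ for every $e \in E$.

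For a single scaled simplex I would argue directly, without passing to vertices. Fix $\alpha_e, \beta_e \in \Delta_e$ and set $v := \alpha_e - \beta_e$. Every coordinate of $\alpha_e$ and of $\beta_e$ lies in $[0, w_e]$, so $\|v\|_\infty \leq w_e$; and since $\sum_{i \in e}\alpha_{e\to i} = \sum_{i \in e}\beta_{e\to i} = w_e$, the triangle inequality gives $\|v\|_1 \leq \|\alpha_e\|_1 + \|\beta_e\|_1 = 2 w_e$ (equivalently, the positive and negative parts of $v$ have equal total mass, each at most $w_e$). Combining these via the elementary bound $\|v\|_2^2 \leq \|v\|_\infty \cdot \|v\|_1$ yields $\|\alpha_e - \beta_e\|_2^2 \leq w_e \cdot 2 w_e = 2 w_e^2$, hence $\mathsf{Diam}(\Delta_e)^2 \leq 2 w_e^2$. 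Summing over $e \in E$ gives $\mathsf{Diam}(\mcal{K}(H))^2 \leq 2 \sum_{e \in E} w_e^2$, as claimed; the bound is in fact tight whenever every hyperedge contains at least two nodes, since one may move all of $w_e$ from one node of $e$ to another.

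I do not expect a genuine obstacle here. The only point requiring a moment's care is the block decomposition $\mathsf{Diam}(\mcal{K}(H))^2 = \sum_e \mathsf{Diam}(\Delta_e)^2$, which follows because the supremum of $\sum_e \|\alpha_e - \beta_e\|_2^2$ over a product domain is the sum of the per-block suprema. If one prefers a vertex-based phrasing, note that $(\alpha_e,\beta_e) \mapsto \|\alpha_e - \beta_e\|_2$ is convex on the polytope $\Delta_e \times \Delta_e$, so its maximum is attained at a pair of vertices of $\Delta_e$ — namely two distributions that each place all mass $w_e$ on a single node — giving distance $w_e\sqrt{2}$ at once; I would use whichever of the two phrasings is shorter in the final write-up.
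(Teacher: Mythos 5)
Your proof is correct and follows essentially the same route as the paper's: both decompose the squared distance hyperedge by hyperedge and bound each block by $2w_e^2$ using the nonnegativity of the coordinates and the constraint $\sum_{i \in e}\alpha_{e \to i} = w_e$. The only cosmetic difference is the elementary inequality used per block — you invoke $\|v\|_2^2 \leq \|v\|_\infty \cdot \|v\|_1$ on the difference, while the paper uses $(a-b)^2 \leq a^2 + b^2$ for $a, b \geq 0$ followed by $\sum_i a_i^2 \leq \bigl(\sum_i a_i\bigr)^2$ — and your observation that the bound is tight is a nice bonus not present in the paper.
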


\begin{proof}
	For any $\alpha, \beta \in \mcal{K}(H)$, 
	
	$\|\alpha - \beta\|^2 = \sum_{e \in E} \sum_{i \in V} (\alpha_{e \to i} - \beta_{e \to i})^2 \leq 
	\sum_{e \in E} \sum_{i \in V} (\alpha_{e \to i}^2 + \beta_{e \to i}^2)$ 
	
	$\leq \sum_{e \in E} \{\left( \sum_{i \in V} \alpha_{e \to i} \right)^2 + \left( \sum_{i \in V} \beta_{e \to i} \right)^2 \}= 2 \sum_{e \in E} w_e^2$. 
\end{proof}

\begin{fact}[Upper Bound on $L_Q$]
\label{fact:L_Q}
We have $L_Q \leq 2 \Delta_w(H)$,
where $\Delta_w(H) := \max_{i \in V} \frac{d_i}{w_i}$ 
and $d_i := |\{e \in E: i \in e\}|$ is the number
of hyperedges containing~$i$.
\end{fact}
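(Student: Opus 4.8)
The plan is to exploit that $Q$ is a quadratic form in $\alpha$, so its Hessian is a constant matrix and $L_Q$ equals the spectral norm $\|\nabla^2 Q\|$ of that Hessian (as already noted in the text right before the statement). First I would record the gradient: as observed in the proof of Fact~\ref{fact:time}, for each pair $\{e,i\} \in \mcal{F}$ we have $\nabla Q(\alpha)_{e \to i} = \frac{2 p_i}{w_i} = \frac{2}{w_i} \sum_{e': i \in e'} \alpha_{e' \to i}$, which is linear in $\alpha$. Differentiating once more, the Hessian entry indexed by $(\{e,i\}, \{e',i'\})$ equals $\frac{2}{w_i}$ when $i = i'$ (and then necessarily $i \in e$ and $i \in e'$), and $0$ otherwise.

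Next I would compute the associated quadratic form directly. For any $v \in \R^{\mcal{F}}$, writing $q_i := \sum_{e : i \in e} v_{e \to i}$, a short rearrangement gives
\[
v^\top (\nabla^2 Q)\, v = \sum_{i \in V} \frac{2}{w_i}\, q_i^2 .
\]
Applying Cauchy--Schwarz to $q_i = \sum_{e: i\in e} v_{e\to i}$ against the all-ones vector over the $d_i$ hyperedges containing $i$ yields $q_i^2 \le d_i \sum_{e: i\in e} v_{e\to i}^2$, hence
\[
v^\top (\nabla^2 Q)\, v \;\le\; \sum_{i \in V} \frac{2 d_i}{w_i} \sum_{e : i \in e} v_{e \to i}^2 \;\le\; 2\,\Delta_w(H) \sum_{i \in V}\sum_{e: i\in e} v_{e\to i}^2 \;=\; 2\,\Delta_w(H)\, \|v\|_2^2 .
\]
Since $Q$ is convex, $\nabla^2 Q$ is positive semidefinite, so its spectral norm is exactly $\sup_{v \ne 0} \frac{v^\top (\nabla^2 Q) v}{\|v\|_2^2}$; the bound above therefore gives $L_Q = \|\nabla^2 Q\| \le 2\,\Delta_w(H)$, as claimed.

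I do not anticipate a genuine obstacle here — the argument is a one-line Cauchy--Schwarz after the Hessian is written down. The only points requiring care are bookkeeping ones: keeping the coordinate set $\mcal{F}$ (unordered pairs $\{e,i\}$ with $i \in e$) straight when expanding the double sum, and invoking positive semidefiniteness of $\nabla^2 Q$ (which follows since $Q = \sum_i p_i^2/w_i$ is manifestly a sum of squares of linear functionals) so that the spectral norm coincides with the supremum of the Rayleigh quotient rather than requiring a two-sided bound.
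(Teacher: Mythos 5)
Your proposal is correct and follows essentially the same argument as the paper: both reduce to the single Cauchy--Schwarz bound $\bigl(\sum_{e:\, i \in e} v_{e\to i}\bigr)^2 \le d_i \sum_{e:\, i\in e} v_{e\to i}^2$ followed by $\frac{d_i}{w_i} \le \Delta_w(H)$. The only (cosmetic) difference is that you bound the Rayleigh quotient of the constant Hessian, whereas the paper bounds $\|\nabla Q(\alpha) - \nabla Q(\beta)\|_2 / \|\alpha - \beta\|_2$ directly; for the PSD matrix $\nabla^2 Q$ these coincide, so the two computations are interchangeable.
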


\begin{proof}
For any $\alpha, \beta \in \mathbb{R}^{|\mathcal{F}|}_+$, 
for $i \in e$,
denote $\nabla Q(\alpha)_{e \to i} = \frac{2 p_i}{w_i}$
and $\nabla Q(\beta)_{e \to i} = \frac{2 q_i}{w_i}$.
Then, we have:
	
	\begin{align*}
		\|\nabla Q(\alpha) - \nabla Q(\beta)\|^2 &= \sum_{i \in V} \sum_{e \in E: i \in e}\left(\frac{2 p_i}{w_i} - \frac{2q_i}{w_i}\right)^2  =
		\sum_{i \in V} \frac{4 d_i}{{w_i}^2} \cdot (p_i - q_i)^2 \\
		&= \sum_{i \in V} \frac{4 d_i}{{w_i}^2} \left(\sum_{e \in E: i \in e} (\alpha_{e \to i} - \beta_{e \to i}) \right)^2 \\ 
		&\leq \sum_{i \in V} \frac{4 d_i^2}{{w_i}^2} \sum_{e \in E: i \in e} (\alpha_{e \to i} - \beta_{e \to i})^2 \\
		&\leq (2\Delta_w(H))^2 \sum_{i \in V} \sum_{e \in E} (\alpha_{e \to i} - \beta_{e \to i})^2 \\&= (2\Delta_w(H))^2 \cdot \|\alpha - \beta\|^2, 
	\end{align*}
	where the first inequality follows from Cauchy-Schwarz inequality. 
\end{proof}

From Lemma~\ref{lem:error-QP},
Facts~\ref{fact:conv_first},~\ref{fact:diam}
and~\ref{fact:L_Q},
we have the following bounds on the absolute error.

\begin{corollary}[Absolute Error on Density Vector]
\label{cor:abs_error}
Suppose $\rho_*$ is density vector induced
by an optimal solution to $\mathsf{CP}(H)$.
Then, after $T \geq 1$ iterations for each of the following
algorithms, we have the corresponding upper
bound on the absolute error 
$\|\rho_T - \rho_* \|_w$ on the density vector:

\begin{compactitem}

\item \emph{Frank-Wolfe} (Algorithm~\ref{algo:FW-basic}):

$$2\sqrt{\frac{\Delta_w(H) \sum_{e \in E} w_e^2}{T+2}}.$$

\item \emph{Accelerated FISTA} (Algorithm~\ref{algo:FISTA}):

$$\frac{\sqrt{8 \Delta_w(H) \sum_{e \in E} w_e^2 }}{T}.$$
\end{compactitem}
\end{corollary}

\ignore{


For any $i \in \Ib$, we let $d(i)$ to denote its degree. In other word, $d(i) = |\{j \in \Iob: i \sim j\}|$. Let $\Delta^{(1)}(G) = \max_{j \in \Ione} \frac{d_i}{w_i}$, $\delta^{(0)}(G) = \max_{i \in \Izero} d(i)$. 



\subsection{Frank-Wolfe Based Algorithm}
We summarize the Frank-Wolfe algorithm described in~\cite{DBLP:conf/innovations/ZhouZCS24}, which can be applied to  solve convex program problems when the objective function is twice differentiable and the feasible set is compact. 
It can be applied to $CP(G)$. 
The algorithm is an iterative method similar to gradient descent, where we denote $\nabla Q$ as the gradient of $Q$. 


The following fact is proved in [\cite{DBLP:conf/www/DanischCS17},Lemma 4.5], which states that line (4) in Algorithm~\ref{algo:FW-basic} can be realized with time complexity $O(|\mcal{F}|)$.


\begin{fact}\label{fact:FW}
	There exists an algorithm to implement
	line (4) in Algorithm~\ref{algo:FW-basic} with time complexity $O(|\mcal{F}|)$.
\end{fact}

After establishing the time complexity of line (4) in Algorithm.~\ref{algo:FW-basic} in Fact~\ref{fact:FW}, we can use previous convergence analysis of the Frank-Wolfe algorithm~\cite{DBLP:conf/icml/Jaggi13}. 

The convergence rate of Algorithm~\ref{algo:FW-basic} can be described by a constant  $$C_Q := \frac{1}{2}\mathsf{Diam}(\mcal{K}(H))^2 \sup_{\alpha \in \mcal{K}(H)} \|\nabla^2 Q(\alpha)\|_2,$$ where $\nabla^2 Q$ is the Hessian and $\|\cdot\|_2$ is the spectral norm of a matrix. 

\begin{fact}[Convergence Rate of Frank-Wolfe~\cite{DBLP:conf/icml/Jaggi13}]\label{fact:FW-convergence}
	Suppose $\alpha_* \in \mcal{K}(H)$ is an optimal solution. Then, for all $t \geq 1$, we have $Q(\alpha_{t}) - Q(\alpha_*) \leq \frac{2C_Q}{t+2}$.
\end{fact}

The following Lemma gives an upper bound of $C_Q$. 

\begin{lemma}[Bounding $C_Q$]\label{lem:bound_C_Q}
	For an instance $G = (\Izero, \Ione; \mcal{F}; w)$, we have the corresponding $C_{Q_G} \leq 2\Delta_w(H) \sum_{e \in E} w_e^2$.
\end{lemma}
\begin{proof}
	Let's firstly get an upper bound of $\|\nabla^2 Q(\alpha)\|_2$ for any $\alpha$. Let $M = \|\nabla^2 Q(\alpha)\|_2$, then for any $f_1 = \{i_1, j_1\}$, $f_2 = \{i_2, j_2\} \in \mcal{F}$, $M_{f_1, f_2} = \frac{2}{w^{(1)}(j_1)}$ if $j_1 = j_2$, $M_{f_1, f_2} = 0$ otherwise. Let $A = M^T M$. Then for any $f_1 = \{i_1, j_1\}$, $f_2 = \{i_2, j_2\} \in \mcal{F}$, $A_{f_1, f_2} = (\frac{2}{w^{(1)}(j_1)})^2 d(j_1)$ if $j_1 = j_2$, $M_{f_1, f_2} = 0$ otherwise. By the definition of the spectrum norm, it is the square root of the maximum eigenvalue of $A$, which is $2\Delta_w(H)$.

	Proved. 
\end{proof}

Combining Fact~\ref{fact:FW}, Fact~\ref{fact:FW-convergence}, Lemma~\ref{lem:bound_C_Q} and Lemma~\ref{lem:error-FW}, we have the following theorem. 

\begin{theorem}\label{theorem:add-1}
	For instance \(G = (\Izero, \Ione; \mcal{F}; w)\), let $\rho_*$ be the optimal density vector. For any $t$ iterations of Algorithm~\ref{algo:FW-basic}, each needing \(O(|\mcal{F}|)\) time, let $\rho_t$ be the density vector induced by $\alpha_t$, then $\rho_t$ satisfies that $\|\rho_t - \rho_*\|_w \leq 2\sqrt{\frac{\Delta_w(H) \sum_{e \in E} w_e^2}{t+2}}$. 
\end{theorem}


\subsection{FISTA Based Algorithm}

Let \(h(\alpha)\) be an indicator function where \(h(\alpha) = 0\) if $\alpha \in \mcal{K}(H)$, and \(+\infty\) otherwise. 
Then $CP(G)$ can be rewritten as minimizing the unconstrained objective \(Q(\alpha) + h(\alpha)\) for $\alpha \in \mathbb{R}^{|\mathcal{F}|}_+$. 

We now present the algorithm for solving the unconstrained optimization problem of minimizing 
\[
Q(\alpha) + h(\alpha)
\]
where $Q$ is a convex function and $h$ has an easily computable proximal mapping. This type of problem can be effectively tackled using proximal gradient methods.

The algorithm, referred to as the basic proximal gradient method. 
At each iteration $t$, the algorithm makes an initial guess for the minimizer $\alpha_t$. It then calculates the gradient of $f$ and adjusts the guess to the direction against the gradient. However, to ensure feasibility, the algorithm employs the proximal mapping by projecting the updated guess onto a feasible solution space. The proximal mapping is defined as 
\[
\text{prox}_h(y) = \min_{x \in \mathcal{D}(G)} \|x - y\|_2^2.
\]

While the basic proximal gradient method is effective, we will employ an even faster version known as the accelerated proximal gradient method. This variant incorporates Nesterov-like momentum terms~\cite{Nesterov1983AMF} in the projection step, resulting in improved theoretical and practical performance. The accelerated proximal gradient method is also referred to as the proximal gradient method with extrapolation or FISTA~\cite{beck2009fast}. We outline the algorithm in Algorithm~\ref{algo:FISTA}. The idea of using FISTA to solve $CP(G)$ is firstly proposed in~\cite{harb2022faster}.

Furthermore, we have a well-known result pertaining to the FISTA algorithm. 

\begin{fact}[\cite{beck2009fast}]\label{fact:convergence-FISTA}
	Let \(\alpha_*\) be the minimizer of \(Q\). Suppose that the learning rate \( \gamma \leq \frac{1}{L(Q)}\) where \(L(Q)\) is the Lipschitz constant of \(\nabla Q\). Then after \(t\) iterations, \(Q(\alpha_{t}) - Q(\alpha_*) \leq \frac{2\|\alpha_{0} - \alpha_*\|^2}{\gamma t^2} \).
\end{fact}

\begin{lemma}\label{lem:lipschitz-constant}
	The Lipschitz constant of \(\nabla f\) is at most \(2\Delta_w(H)\). 
\end{lemma}
\begin{proof}

\end{proof}

\begin{fact}\label{fact:imple_FISTA}
	There exists an algorithm to implement
	line (2) in Algorithm~\ref{algo:FISTA} with time complexity $O\left(|\mcal{F}| \log \left(\delta^{(0)}(G)\right) \right)$.	
\end{fact} 

Combining Fact~\ref{fact:convergence-FISTA}, Fact~\ref{fact:imple_FISTA} and Lemma~\ref{lem:lipschitz-constant}, we have the following theorem. 

\begin{theorem}\label{theorem:add-2}
	For instance \(G = (\Izero, \Ione; \mcal{F}; w)\), let $\rho_*$ be the optimal density vector. For any $t$ iterations of Algorithm~\ref{algo:FISTA}, each needing \(O\left(|\mcal{F}| \log \left(\delta^{(0)}(G)\right) \right)\) time, let $\rho_t$ be the density vector induced by $\alpha_t$, then $\rho_t$ satisfies that $\|\rho_t - \rho_*\|_w \leq \frac{\sqrt{8 \Delta_w(H) \sum_{e \in E} w_e^2 }}{t}$. 
\end{theorem}

\ignore{
		\begin{minipage}{0.45\textwidth}
		\For{\(t \in [1, T]\)}{
			\(\alpha_{t} = \text{prox}_h(\alpha_{t-1} - \gamma \nabla Q(\alpha_{t-1}))\)\;
		}
		\Output{\(\alpha_{T}\)}
		
	\end{minipage}
	\hfill
		
	it holds that 
	
	Algorithm~\ref{algo:FW2} takes \(O\left(\frac{\Delta_w(H) \sum_{e \in E} {w_e}^2}{{w^{(1)}_{\min}}^2\varepsilon^2}\right)\) iterations, each needing \(O(|\mcal{F}|)\) time, to compute an \(\epsilon\)-approximate density vector \(\hat{\rho}\) satisfying \(\|\hat{\rho} - \rho_*\|_2 \leq \varepsilon\).
	\begin{algorithm}[H]
		\SetKwInput{Input}{Input}
		\SetKwInput{Output}{Output}
		\SetKwInput{Initialization}{Initialization}
		\SetAlgoLined
		\SetKwProg{Function}{function}{:}{}
		\Input{An instance $(\Izero, \Ione; \mcal{F}; w), T \in \mathbb{Z}^+$}
		\Output{$(\alpha_{T}, \rho_{T})$}
			\For{each $i\in \Izero \sim j \in \Ione$ in parallel}{
				$\alpha_0(ij) = \frac{w_e}{d(i)}$\;
			}
			\For{each $j \in \Ione$ in parallel}{
				$rho_0(j) = \frac{\sum_{e \in E: i \in e} \alpha_0(ij)}{w_i}$\;
			}
			\For{$t = 1$ to $T$}{
				$\gamma_t \gets \frac{2}{t+2}$\;
				\For{each $i \in \Ione$ in parallel}{
					$x \gets \arg\min_{j:i \sim j} \rho_{t-1}(j)$\;
					\For{each $j$ such that $i \sim j$}{
						$\hat{\alpha}(ij) \gets w_e$ if $j = x$ and $0$ otherwise\;
					}
					$\alpha_{t} \gets (1 - \gamma_t) \cdot \alpha_{t-1} + \gamma_t \cdot \hat{\alpha}$\;
				}
				\For{each $j \in \Ione$ in parallel}{
					$\rho_t(j) = \frac{\sum_{e \in E: i \in e} \alpha_t(ij)}{w_i}$\;
				}
			}
			\Return $(\alpha_{T}, \rho_{T})$\;
		\caption{Frank-Wolfe Based Algorithm}
		\label{algo:FW2}
	\end{algorithm}

		Since $\varepsilon := \|\rho^{(1)} - \rho^{(1)}_*\|_2$, $\epsilon^2 = \sum_{i \in V} \frac{1}{{w_i}^2}\left(\sum_{e \in E: i \in e} \alpha_{e \to i} - \alpha^{*(0)}(ij)\right)^2$. 
	
	\begin{align*}
		Q(\alpha) - Q(\alpha_*) &= \sum_{i \in V} \frac{1}{w_i}\left( \left(\sum_{e \in E: i \in e} \alpha_{e \to i}\right)^2 - \left(\sum_{e \in E: i \in e} \alpha_*(ij)\right)^2 \right)\\&= \sum_{i \in V} \frac{1}{w_i}\left( \sum_{e \in E: i \in e} \alpha_{e \to i} + \alpha_*(ij) \right)\left( \sum_{e \in E: i \in e} \alpha_{e \to i} -  \alpha_*(ij) \right)\\&\geq \sum_{i \in V} w_i \cdot \frac{1}{{w_i}^2}\left(\sum_{e \in E: i \in e} \alpha_{e \to i} - \alpha^{*}(ij)\right)^2 \\&\geq w^{(1)}_{\min} \sum_{i \in V}  \cdot \frac{1}{{w_i}^2}\left(\sum_{e \in E: i \in e} \alpha_{e \to i} - \alpha^{*}(ij)\right)^2 \\& \geq w^{(1)}_{\min} \varepsilon^2
	\end{align*}	

Since $\varepsilon := \|\rho^{(1)} - \rho^{(1)}_*\|_2$, $\epsilon^2 = \sum_{i \in V} \frac{1}{w_i}\left(\sum_{e \in E: i \in e} \alpha_{e \to i} - \alpha^{*(0)}(ij)\right)^2$. 

\begin{align*}
	Q(\alpha) - Q(\alpha_*) &= \sum_{i \in V} \frac{1}{w_i}\left( \left(\sum_{e \in E: i \in e} \alpha_{e \to i}\right)^2 - \left(\sum_{e \in E: i \in e} \alpha_*(ij)\right)^2 \right)\\&= \sum_{i \in V} \frac{1}{w_i}\left( \sum_{e \in E: i \in e} \alpha_{e \to i} + \alpha_*(ij) \right)\left( \sum_{e \in E: i \in e} \alpha_{e \to i} -  \alpha_*(ij) \right)\\&\geq \sum_{i \in V} \frac{1}{w_i}\left(\sum_{e \in E: i \in e} \alpha_{e \to i} - \alpha^{*}(ij)\right)^2 \\&=\varepsilon^2
\end{align*}	

\begin{lemma}[Error in $r$ implies Error in $f$]
	Suppose $\alpha \in D$ induces $r$ such that $\varepsilon := \|r - r_*\|_2$, where $r_* := r_G$ is induced by an optimal $\alpha_*$. Then, $Q(\alpha) - Q(\alpha_*) \geq \varepsilon^2$.
\end{lemma}

\begin{algorithm}[h]
\SetAlgoLined
\caption{Frank-Wolfe Based Algorithm}
\SetKwFunction{FrankWolfe}{FrankWolfe}
\SetKwInOut{Input}{Input}
\SetKwInOut{Output}{Output}
\Input{An instance $(\Izero, \Ione; \mathcal{F}; w)$}
\Output{$(\alpha^{(b)}, \rho^{(\ob)})$}
\BlankLine
\For{each $i \in \Ib$ in parallel}{
    \For{each $j \in \Iob$ such that $i \sim j$}{
    $\alpha^{(b)}_0(ij) \gets \frac{w^{(b)}(i)}{\sum_{j': i \sim j'} 1}$\;
    }
}
\BlankLine
\For{each $j \in \Iob$ in parallel}{
    $\rho^{(\ob)}_0(j) \gets \frac{\sum_{i \in \Ib} \alpha^{(b)}_0(ij)}{w^{(\ob)}(j)}$\;
}
\BlankLine
\For{$t = 1$ to $T$}{
    $\gamma_t \gets \frac{2}{t+2}$\;
    \BlankLine
    \For{each $i \in \Ib$ in parallel}{
        $x \gets \arg \min_{j \in \Iob}\rho^{(\ob)}_{t-1}(j)$\;
        \BlankLine
        \For{each $j$ such that $i\sim j$}{
            $\hat{\alpha}^{(b)}(ij) \gets w^{(b)}(i)$ if $j = x$, and $0$ otherwise\;
        }
        \BlankLine
        $\alpha^{(b)}_t \gets (1 - \gamma_t) \cdot \alpha^{(b)}_{t-1} + \gamma_t \cdot \hat{\alpha}^{(b)}$\;
    }
    \BlankLine
    \For{each $j \in \Iob$ in parallel}{
        $\rho^{(\ob)}_t(j) \gets \frac{\sum_{i \in \Ib} \alpha^{(b)}_t(ij)}{w^{(\ob)}(j)}$\;
    }
}
\BlankLine
\Return $(\alpha^{(b)}_t, \rho^{(\ob)}_t)$\;
\end{algorithm}
}
}

\bibliographystyle{alpha}
\bibliography{ref}

\end{document}